\title{Initiality for Typed Syntax and Semantics}
\author{Benedikt Ahrens}
\begin{document}


\frontmatter 

 
\KOMAoptions{BCOR=0mm,twoside=off}
\recalctypearea

\begin{titlepage}
 
\centering

{\Large UNIVERSIT\'E NICE SOPHIA ANTIPOLIS   --   UFR Sciences}\\
\vspace*{0.5cm}
\'Ecole Doctorale Sciences Fondamentales et Appliqu\'ees\\
\vspace*{1.5cm}
{\Large\bf TH\`ESE}\\
\vspace*{0.3cm}
pour obtenir le titre de\\
\vspace*{0.1cm}
~~~{\Large\bf Docteur en Sciences}\\
\vspace*{0.1cm}
Sp\'ecialit\'e 
{\Large {\sc Math\'ematiques}}\\
\vspace*{0.8cm}
pr\'esent\'ee et soutenue par\\
\vspace*{0.1cm}
{\large\bf Benedikt AHRENS}\\
\vspace*{1.0cm}
{\LARGE\bf Initiality for Typed Syntax and Semantics}\\
\vspace*{1.0cm}
Th\`ese dirig\'ee par {\bf Andr\'e HIRSCHOWITZ}\\
\vspace*{0.2cm}
soutenue le 23 mai 2012\\

\vspace*{2.3cm}
Membres du jury :\\
\vspace*{0.4cm}
\begin{tabular}{lllll}
M. & Pierre--Louis & CURIEN & & Rapporteur et Examinateur\\
M. & Andr\'e & HIRSCHOWITZ & & Directeur de th\`ese\\
M. & Marco & MAGGESI & & Examinateur \\
M. & Laurent &  REGNIER & & Rapporteur et Examinateur\\
M. & Carlos & SIMPSON & & Examinateur\\
\end{tabular}

\vfill
Laboratoire Jean-Alexandre Dieudonn\'e, Universit\'e de Nice, Parc Valrose, 06108 NICE
\end{titlepage}
\KOMAoptions{BCOR=9mm,twoside=on}
\recalctypearea
\cleardoublepage


 \pagestyle{empty}


 \section*{Abstract}

In this thesis we give an algebraic characterization of the syntax and semantics 
of simply--typed languages.
More precisely, we characterize \emph{simply--typed binding syntax equipped with reduction rules}
via a \emph{universal property}, namely as the initial object of some category.

We specify a language by a \emph{2--signature} $(\Sigma,A)$, that is, a signature on two levels: 
the \emph{syntactic} level $\Sigma$ specifies the sorts and terms of the language, and 
associates a sort to each term.
The \emph{semantic} level $A$ specifies, through \emph{inequations}, reduction rules
on the terms of the language.
To any given 2--signature $(\Sigma,A)$ we associate a category of ``models'' of $(\Sigma,A)$. 
We prove that this category has an initial object, which integrates the
terms freely generated by $\Sigma$ and the reduction relation --- on those terms --- 
generated by $A$.
We call this object the \emph{programming language generated by $(\Sigma, A)$}.

Initiality provides an \emph{iteration principle}
which allows to specify translations on the syntax, possibly to a language over different sorts. 
Furthermore, translations specified via the iteration principle are by construction \emph{type--safe} 
and \emph{faithful with respect to reduction}.

To illustrate our results, we consider two examples extensively: firstly, we specify a double negation translation from 
classical to intuitionistic propositional logic via the category--theoretic iteration principle.
Secondly, we specify a translation from $\PCF$ to the untyped lambda calculus which is faithful with
respect to reduction in the source and target languages.

In a second part, we formalize some of our initiality theorems in the proof assistant \textsf{Coq}.
The implementation yields a machinery which, when given 
a 2--signature, returns an implementation of its associated abstract syntax together with certified substitution operation,
iteration operator and a reduction relation generated by the specified reduction rules.

\cleardoublepage

 \section*{R\'esum\'e}

Dans cette th\`ese, on donne une caract\'erisation alg\'ebrique de la syntaxe et de la s\'emantique des langages
simplement typ\'es.
Plus pr\'ecisement, on caract\'erise la syntaxe simplement typ\'ee avec liaison de variables, 
\'equip\'ee des r\`egles de r\'eduction, via une propri\'et\'e universelle, \`a savoir 
comme l'objet initial d'une cat\'egorie.

Nous sp\'ecifions un langage par une 2--signature $(\Sigma,A)$, c'est--\`a--dire, une
signature \`a deux niveaux: le niveau \emph{syntaxique} $\Sigma$ sp\'ecifie les types et les termes
du langage, et associe un type \`a chaque terme.
Le niveau \emph{s\'emantique} $A$ sp\'ecifie, via des \emph{in\'equations}, des r\`egles de
r\'eduction sur les termes du langage.
A chaque 2--signature $(\Sigma,A)$ donn\'ee on associe une cat\'egorie 
des \guillemotleft mod\`eles\guillemotright\ de $(\Sigma, A)$. 
Nous d\'emontrons que cette cat\'egorie admet un objet initial, qui int\`egre
les termes librement engendr\'es par $\Sigma$ et la relation de r\'eduction --- sur ces termes ---
 engendr\'ee par $A$.
Nous appelons cet objet le \emph{langage engendr\'e par $(\Sigma,A)$}.

Initialit\'e fournit un \emph{principe d'it\'eration} qui permet de sp\'ecifier
des traductions sur la syntaxe, possiblement vers un langage sur des types diff\'erents.
De plus, les traductions qui sont sp\'ecifi\'ees via ce principe d'it\'eration sont
\emph{fid\`eles relativement au typage et la r\'eduction}.

Afin d'illustrer nos r\'esultats, nous considerons deux exemples en d\'etail: premi\`erement, nous specifions une traduction de la logique classique \`a 
la logique intuitioniste propositionnelle via le principe d'it\'eration cat\'egorique.
Deuxi\`emement, nous specifions une traduction de $\PCF$ au lambda calcul non--typ\'e qui est fid\`ele
par rapport aux r\'eductions aux langages source et but.

Dans une deuxi\`eme partie, nous formalisons quelques uns de nos th\'eor\`emes d'initialit\'e dans 
l'assistant de preuves \textsf{Coq}.
L'impl\'ementation apporte un m\'ecanisme qui, \'etant donn\'ee une 2--signature, 
rend une impl\'ementation de sa syntaxe associ\'ee, \'equip\'ee d'une op\'eration de
substitution certifi\'ee, d'un op\'erateur d'it\'eration et d'une relation de r\'eduction
engendr\'ee par les r\`egles de reduction specifi\'ees.


\cleardoublepage

\pagestyle{plain}

\chapter*{Hello and Thank You, ...}

\begin{packitem}

\item Andr\'e, for all your time and energy spent in working with me, and for advice on any subject

\item Laurent and Pierre--Louis, for carefully reading this thesis and suggesting many improvements

\item Carlos, for help and advice in various situations throughout my doctorate

\item GGhh and Marco, for fun talk about science and stuff, and for receiving me in Florence

\item Ingrid, for smoothing my path to Nice and further

\item Jean--Marc and Julien, for tech support and guitar and linux talk 

\item LJAD and EDSFA administration crew, for making coping with administrative stuff a pleasure

\item my Erasmus friends:
Charline, Chiara, Daniela, Karo, Kerstin, No\'emie, Oph\'elia, Sarah, Susanna, Tomke,
GGhh, Henry, Marco, Martin, Nils, for going through the Erasmus experience with me

\item my Florentine flatmates: Silvia \& Carlo and Marzia \& Antonio, and my office mates in stanza T1: Giulia, Loredana, John and Simone,
   for making me feel at home during my stay in Florence

\item 
Amel, Audrey, Cindy, Ioana \& Pierre, Irene \& Marco, Julie \& S\'ebastien, Laura \& Benjamin, Monica, Nahla, Nancy, Olivia \& Joan, Salima \& Paul Eric, Sara, Silvia, St\'ephanie, 
Vanessa,
Ahed, Amine, Benjamin, Brahim, Brice, Giovanni, Hamad, Hugo, Luca, Marc, Nicolas, Osman, Rapha\"el, R\'emy, Sarrage, Tolgahan, Tom V and Xavier,
for giving me a nice time in Nice

\item Julianna, for patiently answering  my questions, and for writing a paper with me


\item Debian and upstream, for providing the best operating system and tools, and, in particular, the Fossil SCM community

\item Tobias and Michael, for news from Bayreuth and technical assistance
\item Krissi, Nicki and ``Volker'', for the fun time spent together

\item Anne--Laure, for bearing with me, and family, for receiving me with such warmth

\item Rike \& Uwe, Vroni \& Matze + Max, Feli \& Clemi

\end{packitem}

\chapter*{R\'esum\'e Long} 

Dans cette th\`ese, on donne une caract\'erisation alg\'ebrique de la syntaxe et de la s\'emantique des langages
simplement typ\'es.
Plus pr\'ecis\'ement, on caract\'erise la syntaxe simplement typ\'ee avec liaison de variables, 
\'equip\'ee des r\`egles de r\'eduction, via une propri\'et\'e universelle, \`a savoir 
comme l'objet initial d'une cat\'egorie.

\section*{S\'emantique Initiale}

La \emph{S\'emantique Initiale} caract\'erise les termes d'un langage associ\'es \`a une \emph{signature $S$}
 comme l'objet initial d'une cat\'egorie --- dont on appelera les objets les \emph{S\'emantiques de $S$} ---,
ce qui fournit une d\'efinition concise de haut niveau de la syntaxe abstraite associ\'ee a $S$.
Plus pr\'ecisement, les ingr\'edients suivants sont utilis\'es:

\begin{description}
 \item [Signature] Une \emph{signature} sp\'ecifie, de fa\c con abstraite et concise, la syntaxe et la s\'emantique
                  d'un langage.
 \item [Cat\'egorie de Repr\'esentations] A chaque signature $S$, on associe une  cat\'egorie de \guillemotleft models\guillemotright\ de cette signature,
                      que l'on appelera des \emph{repr\'esentations de $S$}.
 \item [Initialit\'e] Dans cette cat\'egorie de repr\'esentations de $S$, on exhibe l'objet initial, le \emph{langage g\'en\'er\'e par $S$}.
\end{description}
Les motivations pour la S\'emantique Initiale sont doubles:
premi\`erement, la S\'emantique Initiale fournit une d\'efinition cat\'egorique --- via une propri\'et\'e universelle --- 
          de la syntaxe et de la s\'emantique engendr\'ees librement par une signature.
Deuxi\`emement, l'initialit\'e donne lieu \`a un \emph{op\'erateur d'it\'eration} qui permet de sp\'ecifier 
       de fa\c con \'economique et conviviale des morphismes --- \emph{traductions} --- de l'objet initial vers des autres langages.

Selon la \guillemotleft richesse\guillemotright\ du langage qu'on veut sp\'ecifier, on a besoin d'une notion 
de signature adapt\'ee et, en cons\'equence, d'une repr\'esentation de cette signature. Les caract\'eristiques que l'on consid\`ere dans
cette th\`ese sont:

\begin{description}
 \item [Liaison de Variables] On consid\`ere des constructions liantes au \emph{niveau des termes}, tels que l'abstraction lambda.
 \item [Typage] On consid\`ere des syst\`emes de types \emph{simples}, tels que le lambda calcul simplement typ\'e et, 
            via l'isomorphisme de Curry--Howard, la logique propositionelle (cf.\ \autoref{ex:logic_trans}).
 \item [R\'eduction] On consid\`ere de la s\'emantique sous forme de r\`egles de r\'eduction sur des termes, telles que la r\'eduction b\^eta,
                \[ \lambda x.M (N) \leadsto M [x:= N] \enspace . \]
\end{description}
Pour l'int\'egration de chacune des caract\'eristiques ci--dessus, les notions de signature et de repr\'esentation
 n\'ecessitent d'\^etre adapt\'ees pour tenir compte de la quantit\'e croissante d'information qui doit \^etre
fournie pour sp\'ecifier un langage.

Un de nos buts, c'est d'utiliser la S\'emantique Initiale pour traiter la question suivante:
nous voudrons traduire d'un langage \`a un autre --- possiblement sur des ensembles de types diff\'erents ---, 
en utilisant une construction universelle cat\'egorique.
Cette construction devrait prendre en compte le plus de \guillemotleft structure\guillemotright\ possible.
Par cela nous entendons que la traduction consid\'er\'ee devrait, par construction, \^etre compatible, par exemple,
avec le typage et r\'eduction aux langages source et but.

\section*{Contributions}

Dans cette th\`ese, nous donnons, via une propri\'et\'e universelle, une caract\'erisation alg\'ebrique
de la syntaxe simplement typ\'ee \'equip\'ee d'une s\'emantique sous forme de r\`egles de r\'eduction.
Plus pr\'ecis\'ement, \'etant donn\'ee une \emph{signature} --- qui sp\'ecifie les types et les termes
d'un langage --- et des \emph{in\'equations} sur cette signature --- qui sp\'ecifient des r\`egles de r\'eduction ---,
nous caract\'erisons les termes du langage associ\'e \`a cette signature, \'equip\'es des r\`egles de r\'eduction
selon les in\'equations donn\'ees, comme l'objet initial d'une cat\'egorie des \guillemotleft mod\`eles\guillemotright.

Notre point de d\'epart est un travail sur l'initialit\'e de la syntaxe non--typ\'ee effectu\'e par 
Hirschowitz et Maggesi \cite{DBLP:conf/wollic/HirschowitzM07},
et sur son extension sur la syntaxe simplement typ\'ee par Zsid\'o \cite{ju_phd}.
Dans un premier temps nous \'etendons le th\'eor\`eme de Zsid\'o \cite[Chap.~6]{ju_phd}
pour tenir compte des variations des types (cf.\ \autoref{sec:compilation}).
Puis, nous int\`egrons des r\`egles de r\'eduction dans le r\'esultat d'initialit\'e purement syntaxique
d'Hirschowitz et Maggesi \cite{DBLP:conf/wollic/HirschowitzM07}, cf.\ \autoref{sec:prop_arities}.
Finalement nous obtenons notre th\'eor\`eme principal, qui tient compte des variations des types ainsi que
des r\`egles de r\'eduction, en combinant les deux r\'esultats susmentionn\'es, cf.\ \autoref{chap:comp_types_sem}.

De plus, pour le cas non--typ\'e, nous fournissons une preuve formalis\'ee dans l'assistant de preuves \textsf{Coq}
de notre r\'esultat, ce qui donne un m\'ecanisme qui, \'etant donn\'ee une signature pour des termes et un 
ensemble d'in\'equations, produit la syntaxe abstraite associ\'ee a cette signature, \'equip\'ee de la
relation de r\'eduction engendr\'ee par les in\'equations.
Pour  le cas simplement typ\'e, nous formalisons l'instance de notre r\'esultat principal (cf.\ \autoref{thm:init_w_ineq_typed})
pour la signature du langage de programmation $\PCF$ \cite{Plotkin1977223}.

Nous d\'ecrivons maintenant nos contributions en d\'etail:

\subsection*{Une variante du th\'eor\`eme de Zsid\'o}

Dans sa th\`ese, \cite[Chap.~6]{ju_phd}, Zsid\'o d\'emontre un th\'eor\`eme d'initialit\'e pour la syntaxe 
abstraite associ\'ee a une signature simplement typ\'ee.
Pourtant, les mod\`eles qu'elle consid\`ere, dont la syntaxe abstraite est initiale, 
sont tous des mod\`eles sur le m\^eme ensemble de types.
Ainsi, le principe d'it\'eration obtenu par initialit\'e ne permet pas la sp\'ecification d'une traduction
vers un langage sur un ensemble diff\'erent de types.
Nous adaptons son th\'eor\`eme en introduissant des \emph{signatures typ\'ees}.
Une signature typ\'ee $(S,\Sigma)$ sp\'ecifie un ensemble de \emph{types} via une signature alg\'ebrique $S$, ainsi 
qu'un ensemble de \emph{termes} simplement typ\'es sur ces types via une signature de termes $\Sigma$ sur $S$.

Une repr\'esentation $R$ d'une telle signature typ\'ee est alors donn\'ee par une repr\'esentation de sa signature $S$ pour
les types dans un ensemble $T = T_R$ ainsi qu'une repr\'esentation de $\Sigma$ dans une monade --- aussi appel\'ee $R$ ---
sur la cat\'egorie $\TS{T}$.
Un morphisme de repr\'esentations $P\to R$ est constitu\'e d'un morphisme $f$ entre les repr\'esentations de $S$ sous-jacentes,
et d'un morphisme de repr\'esentations de $\Sigma$ qui est compatible dans un sens appropri\'e avec la 
\guillemotleft traduction des types\guillemotright\ $f$.
Nous d\'emontrons que la cat\'egorie des repr\'esentations de $(S,\Sigma)$ ainsi d\'efinie admet un objet initial,
qui int\`egre les types librement engendr\'es par $S$ et les termes librement engendr\'es par $\Sigma$, typ\'es
sur les types de $S$.
Notre d\'efinition de morphismes assure que, pour toute traduction sp\'ecifi\'ee par le principe d'it\'eration,
la traduction des termes est compatible avec la traduction des types par rapport au typage des langages source et but.

\subsection*{Syntaxe non--typ\'ee et r\`egles de r\'eduction}

Pour int\'egrer des r\`egles de r\'eduction \`a nos r\'esultats d'initialit\'e,
nous d\'efinissons la notion de \emph{2--signature}.
Une 2--signature $(\Sigma,A)$ est donn\'ee par une (1--)signature $\Sigma$ qui sp\'ecifie les termes d'un langage,
et un ensemble  $A$ d'\emph{in\'equations} sur $\Sigma$. 
Intuitivement, chaque in\'equation sp\'ecifie une r\`egle de r\'eduction, par exemple la r\`egle b\^eta.

Les \emph{mod\`eles} --- ou \emph{repr\'esentations} --- d'une telle 2--signature sont construits \`a partir des \emph{monades relative} et
des \emph{modules sur des monades relatives}:
\'etant donn\'ee une 1--signature $\Sigma$,
nous d\'efinissons une  repr\'esentation de $\Sigma$ comme \'etant donn\'ee par une monade relative sur le
foncteur appropri\'e $\Delta : \Set \to \PO$ (cf.\ \autoref{def:delta}),
accompagn\'ee d'un morphisme de modules (sur des monades relatives) appropri\'e
pour chacune des arit\'es de $\Sigma$.
Etant donn\'e un ensemble $A$ d'in\'equations sur $\Sigma$, nous d\'efinissons un pr\'edicat de satisfaction pour les mod\`eles de $\Sigma$;
nous appelons \emph{repr\'esentation de $(\Sigma,A)$} chaque repr\'esentation de $\Sigma$ qui satisfait chacune des in\'equations de $A$.
Ce pr\'edicat sp\'ecifie une sous--cat\'egorie pleine de la cat\'egorie des repr\'esentations de $\Sigma$.
Nous appelons cette sous--cat\'egorie la \emph{cat\'egorie des repr\'esentations de $(\Sigma,A)$}.
Nous d\'emontrons que cette cat\'egorie admet un objet initial, qui est construit en \'equipant la repr\'esentation initiale de $\Sigma$
--- donn\'ee par les termes librement engendr\'es par $\Sigma$ --- d'une relation de r\'eduction appropri\'ee engendr\'ee par les 
in\'equations de $A$.

Avec ce th\'eor\`eme d'initialit\'e de $(\Sigma,A)$ nous obtenons un nouveau principe d'it\'eration, et chaque traduction 
qui est sp\'ecifi\'ee via ce principe est, par construction, compatible avec la relation de r\'eduction aux langages 
source et but.

\subsection*{Th\'eor\`eme principal: Syst\`emes de types simples et r\'eductions}

Finalement, nous combinons les deux th\'eor\`emes susmentionn\'es pour obtenir un r\'esultat d'initialit\'e
qui tient compte de notre exemple principal, une traduction de $\PCF$ vers le lambda calcul
non--typ\'e.
Plus pr\'ecis\'ement, nous d\'efinissons une \emph{2--signature} comme \'etant donn\'ee par une signature
typ\'ee $(S,\Sigma)$, accompagn\'ee d'un ensemble $A$ d'in\'equations sur $(S,\Sigma)$ qui 
sp\'ecifie des r\`egles de r\'eduction.

Nous d\'efinissons une cat\'egorie de repr\'esentations $(S,\Sigma)$ et nous d\'emontrons que cette cat\'egorie
admet un objet initial.
Cette repr\'esentation initiale int\`egre les types et les termes librement engendr\'es par $(S,\Sigma)$,
les termes \'etant \'equip\'es d'une relation de r\'eduction engendr\'ee par les in\'equations de $A$.

\subsection*{Une impl\'ementation sur machine pour la sp\'ecification de syntaxe et s\'emantique}

Les th\'eor\`emes susmentionn\'es sont faits pour \^etre impl\'ement\'es dans un assistant de preuves.
Une telle impl\'ementation permet la sp\'ecification de syntaxe et r\`egles de r\'eduction via des 2--signatures,
fournissant un m\'ecanisme fortement automatis\'e pour produire de la syntaxe \'equip\'ee d'une substitution 
certifi\'ee et d'un principe d'it\'eration.

Nous d\'emontrons le th\'eor\`eme pour syntaxe non--typ\'ee avec r\`egles de r\'eduction d\'ecrit en haut
dans l'assistant de preuves \textsf{Coq} \cite{coq}.
Comme illustration, nous d\'ecrivons comment obtenir le lambda calcul avec r\'eduction b\^eta via initialit\'e.

De plus, nous formalisons une instance du th\'eor\`eme principal, \'egalement en \textsf{Coq}.
Plus pr\'ecisement, nous d\'efinissons la cat\'egorie des repr\'esentations de la signature typ\'ee de $\PCF$ avec des r\'eductions et
nous d\'emontrons que cette cat\'egorie admet un objet initial.
Apr\`es, nous donnons une repr\'esentation de cette signature dans la monade relative du lambda calcul avec r\'eduction b\^eta
$\ULCB$, ce qui fournit une traduction de $\PCF$ vers $\LC$.
Des instructions sur comment obtenir le code source complet de notre biblioth\`eque \textsf{Coq} sont disponible sur

\begin{center}
 \url{http://math.unice.fr/laboratoire/logiciels}.
\end{center}

\chapter*{Introduction}

\section*{Motivation: Traductions de \texorpdfstring{$\PCF$}{PCF} vers \texorpdfstring{$\ULC$}{ULC}} \label{sec:trans_pcf_ulc_fr}

Comme exemple introductif, on consid\`ere des traductions de \PCF, introduit par Plotkin \cite{Plotkin1977223},
vers le lambda calcul de Church \cite{Church_1936}.
Une description d\'etaill\'ee des deux langages est donn\'ee dans 
 \autoref{chap:syntax_semantics_pcf_ulc}. 
Ces deux langages sont paradigmatiques au sens o\`u $\PCF$ peut \^etre vu comme un langage de haut niveau, \'equip\'e
d'un syst\`eme de types, tandis que le lambda calcul repr\'esente un langage non--typ\'e de bas niveau.

Nous sp\'ecifions une application $f$ de l'ensemble de termes de \PCF~vers le lambda calcul comme dans \autoref{fig:trans_pcf_ulc} (cf.\ \cite{DBLP:conf/lics/Phoa93}),
avec une fonction $g$ des constantes de $\PCF$ vers des lambda termes, e.g., $g(\mathbf{T}) := \lambda x y.x$, et
constantes du lambda calcul, e.g.,
\begin{align*}
{\Theta} &:= \bigl(\lambda x. \lambda y.(y (x x y))\bigr)\bigl(\lambda x.\lambda y.(y(x x y))\bigr) \quad \text{ (Turing fixed point combinator) and}\\
  \Omega &:= (\lambda x . xx)(\lambda x.xx) \enspace .
\end{align*} 
Bien entendu, diff\'erentes traductions existent; par exemple, on pourrait 
traduire $\PCFFix$ vers un combinateur de point fixe diff\'erent.

Dans cette th\`ese on pr\'esente un cadre cat\'egorique pour la sp\'ecification des tels traductions d'un langage vers un autre.
Les challenges sont:
\begin{itemize}
 \item les ensembles de types diff\'erents des langages source et but 
        et
 \item int\'egrer la compatibilit\'e de telles traductions avec la structure --- substitution et r\'eduction --- 
      des langages source et but.
\end{itemize}
Nous d\'efinissons une cat\'egorie dans laquelle les langages comme $\PCF$ et $\LC$ sont des objets, et
dans laquelle une traduction comme d\'ecrite plus haut est un morphisme $f: \PCF\to\LC$.
Plus pr\'ecis\'ement, dans la cat\'egorie qu'on construit, la traduction $f$ est un \emph{morphisme initial} $f:\PCF\to\LC$, c'est--\`a--dire,
sa source $\PCF$ est l'objet initial.
Il y a plusieurs traductions possibles de $\PCF$ vers $\LC$, et le morphisme 
 $f: \PCF\to \LC$
ne peut pas \^etre initial dans une cat\'egorie o\`u les objets ne sont \guillemotleft que\guillemotright\ des langages
--- autrement on aurait $f = f'$ pour toute traduction $f' : \PCF \to \LC$.
Donc les objets dans la cat\'egorie qu'on construit sont des langages avec de la structure de plus, 
qui permet de distinguer des morphismes initiaux $f, f':\PCF\to\LC$,

\[
 \begin{xy}
  \xymatrix @C=3pc @R=0.5pc{
     {} &     (\LC, \psi) \\
     (\PCF, \phi) \ar[ru]^{f} \ar[rd]_{f'} & {} \\
     {} & (\LC, \psi').
   }
 \end{xy}
\]
Dans cette cat\'egorie, initialit\'e de $(\PCF, \phi)$ donne le principe d'it\'eration suivant: sp\'ecifier
 une traduction it\'erative $f: \PCF\to\LC$ est \'equivalent \`a sp\'ecifier la ``structure additionelle'' $\psi$ du
lambda calcul $\LC$.

Une question naturelle est si --- ou mieux, dans quel sens --- la traduction $f$ sp\'ecifi\'ee dans \autoref{fig:trans_pcf_ulc}
est compatible
avec les r\'eductions respectives des langages source et but.
Phoa \cite{DBLP:conf/lics/Phoa93} repond \`a cette question; en particulier,
la traduction $f$  est \emph{fid\`ele} 
au sens que
\[  t\twoheadrightarrow_{\PCF} t' \quad \text{implique} \quad f(t) \twoheadrightarrow_{\beta} f(t') \enspace . \]
Dans cette th\`ese nous fournissons un cadre cat\'egorique qui permet de sp\'ecifier, via une propri\'et\'e universelle, 
de telles traductions fid\`eles entre des langages avec liaison sur des ensembles de types diff\'erents.

\subsection*{Exemple: Axiomes de Peano}\label{subsec:informal_intro_fr}

On introduit la notion de \emph{signature} et \emph{repr\'esentation} 
\`a l'exemple des nombres naturels; 
on donne la signature des nombres naturels ainsi que la cat\'egorie des 
repr\'esentations associ\'ee.
Comme \emph{signature}, nous consid\'erons l'application suivante:
\[ \mathcal{N} := \{z \mapsto 0 \enspace ,\quad s \mapsto 1\} \enspace . \] 
Les nombres naturels sont construit \`a partir de deux constructeurs, notamment
un operateur d'arit\'e 0, disons, $z$, --- la constante z\'ero ---
ainsi qu'un operateur unaire, disons, $s$ ---  la fonction successeur.

Une \emph{repr\'esentation} de la signature $\mathcal{N}$ est donn\'ee par un triplet $(X,Z,S)$ d'un ensemble $X$ 
avec une constante $Z \in X$ et une op\'eration unaire
$S:X\to X$. 
Un morphisme vers un autre triplet $(X_0,Z_0, S_0)$ est donn\'e par une application $f : X \to X_0$
telle que
\begin{equation*} 
 f(Z) = Z_0 \quad \text{ et } \quad \comp{S}{f} = \comp{f}{S_0}  \enspace . 
\end{equation*}
Cette cat\'egorie admet un \emph{objet initial} $(\mathbb{N}, \ZERO, \SUCC)$ donn\'e par les nombres naturels $\mathbb{N}$
\'equip\'es de la constante $\ZERO = 0$ et de l'application successeur $\SUCC : \mathbb{N}\to\mathbb{N}$.

\subsection*{Liaison des Variables} \label{subsec:adding_variables_fr}

Les techniques suivantes sont utilis\'ees fr\'equemment pour mod\'eliser la liaison des variables:

\begin{packitem}
 \item 
       Syntaxe nominelle utilisant l'abstraction nomm\'ee ($\mathbb{A}$ \'etant un ensemble d'\emph{atomes}), e.g.,
         \[  \lambda : [\mathbb{A}]  T \to T\]
 \item 
       Higher--Order Abstract Syntax (HOAS), e.g.,
          \[ \lambda : (T \to T) \to T\] et sa variante \emph{faible}, e.g.,
          \[\lambda : (\mathbb{A} \to T) \to T\] 
 \item Nested Data Types comme present\'es par \cite{BirdMeertens98:Nested}, e.g., 
          \[ \lambda : T(X + 1) \to T(X) \]
\end{packitem}
L'encodage via nested data types est diff\'erent des autres techniques au sens qu'ici, l'ensemble des termes $T$
est param\'etris\'e par un \emph{contexte}.
Donc $T(X)$ d\'enote l'ensemble des termes du langage $T$ avec des variables libres dans l'ensemble $X$.
L'ensemble $X + 1$ correspond \`a un \emph{contexte \'elargi} d'une variable libre additionelle, qui
sera li\'ee par le constructeur lambda.

\begin{exemple*}
Nous repr\'esentons le lambda calcul comme un \emph{nested data type}:
 consid\'erons le type inductif $\LC:\Set\to\Set$:
\begin{lstlisting}
Inductive ULC (V : Type) : Type :=
  | Var : V -> ULC V
  | Abs : ULC (option V) -> ULC V
  | App : ULC V -> ULC V -> ULC V.
\end{lstlisting}
\end{exemple*}

\noindent
Pour la syntaxe avec liaison, les arit\'es doivent donner de l'information sur les liaisons du constructeur associ\'e.
Nous sp\'ecifions les arit\'es avec des listes de nombres naturels.
La longueur d'une liste sp\'ecifie le nombre d'arguments d'un constructeur, 
et sa composante $i$ donne le nombre de variables que le constructeur lie dans l'argument $i$.
La signature $\Lambda$ de $\LC$ est donn\'ee par 
\[ \Lambda := \{ \app : [0,0]\enspace , \quad \abs : [1] \} \enspace . \]
L'application $ V \mapsto \LC(V)$ est functorielle: pour $f : V \to W$,
l'application $\LC(f) : \LC(V) \to \LC(W)$ \emph{renomme} chaque variable libre $v\in V$ d'un terme par $f(v)$,
ce qui donne un terme avec des variables libres dans  $W$. 
Alors, la signature $\Lambda$ doit \^etre repr\'esent\'ee dans des functeurs $F:\Set\to \Set$ au lieu des ensembles,
et on consid\`ere des transformations naturelles au lieu des applications.


\subsection*{Substitution}

Nous souhaitons int\'egrer le plus de structures possible dans notre cat\'egorie de \guillemotleft mod\`eles\guillemotright.
Une de ces structures est la \emph{substitution sans capture} des variables libres. 
Pour cela, nous ne consid\'erons pas des functeurs simples $F:\Set\to\Set$, 
mais des \emph{monades} sur la cat\'egorie $\Set$ des ensembles. 
Une monade est un functeur
\'equip\'e de structure additionelle, que l'on explique en utilisant l'exemple du lambda calcul.
L'application $ V\mapsto \LC(V)$ vient avec une op\'eration de substitution simultan\'ee sans capture:
soient $V$ and $W$ deux ensembles (de variables) et $f$ une application $f : V\to \LC(W)$.
Etant donn\'e un lambda terme $t\in \LC(V)$, on remplace chaque variable libre $v\in V$
dans $t$ par son image sous $f$, ce qui donne un terme $t' \in \LC(W)$.
De plus, nous consid\'erons le constructeur $\Var_V$ comme une application ``variable--comme--terme'', index\'ee par un ensemble de variables $V$, 
\[\Var_V : V\to \LC(V) \enspace . \]
Altenkirch et Reus \cite{alt_reus} observent que la structure de monade
capture ces deux op\'erations et leurs propri\'et\'es:
substitution et variable--comme--termes font de
$\LC$ une monade sur la cat\'egorie des ensembles.

La  structure de monade de $\LC$ devrait \^etre compatible dans un sens avec les constructeurs $\Abs$ et $\App$ de $\LC$:
\emph{substitution distribue sur les constructeurs}.
Pour capturer cette distributivit\'e, 
Hirschowitz et Maggesi \cite{DBLP:conf/wollic/HirschowitzM07} consid\`erent des \emph{modules sur une monade} (cf.\ \autoref{def:module}) 
--- qui g\'en\'eralisent la substitution monadique ---, et des morphismes de modules --- qui sont des transformations naturelles
qui sont compatibles avec la substitution de modules.
En effet, les applications
\begin{align*}\LC &: V\mapsto \LC(V) \enspace , \\ 
              \LC' &: V\mapsto \LC(V+1) \text{ et } \\ 
              \LC\times\LC &: V \mapsto \LC(V)\times\LC(V) 
\end{align*} sont
des application sous--jacentes de tels modules (cf.\ \autorefs{ex:ulc_taut_mod}, \ref{ex:ulc_const_mod}), et
les constructeurs $\Abs$ et $\App$ sont des morphismes de modules (cf.\  \autorefs{ex:ulc_mod_mor}, \ref{ex:ulc_mod_mor_kl}).

\subsection*{Types} 

Des \emph{syst\`emes de types} existent avec des caract\'eristiques vari\'ees, de la syntaxe simplement typ\'ee
\`a la syntaxe avec des types d\'ependents, polymorphisme etc.
Par syntaxe simplement typ\'ee  nous entendons une syntaxe non--polymorphe dont l'ensemble de types est ind\'ependent 
de l'ensemble des termes, c'est--\`a--dire les constructeurs de types ne prennent que des types comme arguments.

Dans des syst\`emes de types plus sophistiqu\'es, les types peuvent d\'ependre des termes, ce qui am\`ene a des d\'efinitions plus 
complexes d'arit\'es et de signature.
Ce travail--ci ne traite que les langages simplement typ\'es, comme le lambda calcul simplement typ\'e ou $\PCF$.
Nous appellerons l'ensemble de types sous--jacent les \emph{types objet}.

Le but du typage est de \emph{classifier} les termes selon des crit\`eres.
Par exemple, on pourrait se demander si un terme est de type fonction, et ainsi peut \^etre appliqu\'e
\`a un autre terme.
Une fois qu'une telle classification est mis en place, on peut utiliser l'information de typage pour filtrer
les termes selon leurs types, pour ne choisir que les termes avec le type d\'esir\'e.

Une fa\c con d'ajouter des types serait de les int\'egrer dans les termes comme dans \guillemotleft $\lambda x:\NN.x + 4$\guillemotright.
Par contre, pour les \emph{syst\`emes de types simples} on peut s\'eparer les univers des types et des termes et
consid\'erer le typage comme une application des termes vers les types, ainsi donnant une structure simple math\'ematique au typage.

Comment peut--on assurer que nos termes sont bien typ\'es ? Bien qu'on s\'epare les types des termes, on voudrait maintenir
une int\'egration forte du typage dans le processus de construction des termes, pour \'eviter de construire des termes mal typ\'es.
La s\'eparation des termes et des types semble contredire ce but.
La r\'eponse est de ne pas consid\'erer qu'\emph{un} ensemble de termes avec une application de typage vers l'ensemble, disons, $T$ de types,
mais \emph{une famille d'ensembles}, index\'ee par l'ensemble $T$ de types objet.
Les constructeurs de termes peuvent ainsi choisir quels termes ils accepteront comme argument.
Nous consid\'erons aussi les variables libres comme \'etant \'equip\'ees d'un type objet.
Autrement dit, nous ne consid\'erons pas des termes sur \emph{un} ensemble de variables,
mais sur une famille d'ensembles de variables, index\'ee par l'ensemble des types objet.
Encore autrement dit, nous consid\'erons un contexte comme donn\'e par une famille $(V_t)_{t\in T}$ 
d'ensembles, d'o\`u  $V_t := V(t)$ est l'ensemble de variables de type $t$.
Nous illustrons notre point de vue \`a l'aide de l'exemple du lambda calcul simplement typ\'e $\SLC$: 

\begin{exemple*}
 Soit
\[\TLCTYPE ::= \enspace * \enspace \mid \enspace \TLCTYPE \TLCar \TLCTYPE\] 
l'ensemble de types du lambda calcul simplement typ\'e. 
L'ensemble des lambda termes avec des variables libres dans $V$ est donn\'e par la famille inductive suivante:
\begin{lstlisting}
Inductive TLC (V : T -> Type) : T -> Type :=
  | Var : forall t, V t -> TLC V t
  | Abs : forall s t TLC (V + s) t -> TLC V (s ~> t)
  | App : forall s t, TLC V (s ~> t) -> TLC V s -> TLC V t.
\end{lstlisting}
d'o\`u $V + s := V + \lbrace {*s} \rbrace$ 
est l'extension du contexte par une variable de type $s\in \TLCTYPE$ --- la variable qui sera
li\'ee par le constructeur $\Abs{(s,t)}$. 
Les variables $s$ et $t$ prennent des valeurs dans l'ensemble $\TLCTYPE$ des types.
La signature du lambda calcul simplement typ\'e est donn\'ee dans \autorefs{ex:tlc_sig} et \ref{ex:tlc_sig_higher_order}. 
Le paragraphe pr\'ecedent sur les monades et modules s'applique au lambda calcul simplement typ\'e quand on remplace les ensembles par
des familles d'ensembles index\'ees par $\TLCTYPE$:
le lambda calcul simplement typ\'e peut \^etre \'equip\'e d'une structure de monade (cf.\ \autoref{ex:tlc_syntax_monadic})
\[ \SLC : \TS{\TLCTYPE} \to \TS{\TLCTYPE} \enspace . \]
Les constructeurs de $\SLC$ sont des morphismes de modules (cf.\ \autorefs{ex:slc_modules}, \ref{ex:deriv_mod_slc}, \ref{ex:fibre_mod_slc}).
\end{exemple*}
Cette m\'ethode de d\'efinir pr\'ecis\'ement les termes bien typ\'es en les organisant dans une famille d'ensembles parametris\'ee par
 les types objet s'appelle \emph{typage intrins\`eque} \cite{dep_syn} --- l'oppos\'e du \emph{typage extrins\`eque}, o\`u d'abord on d\'efinit un ensemble 
de termes \emph{bruts}, qui est filtr\'e apr\`es via un pr\'edicat de typage.
Le typage intrins\`eque d\'el\`egue le typage objet au syst\`eme de type du m\'eta langage, comme \textsf{Coq} dans \autoref{ex:slc_def}. 
Ainsi, le syst\`eme de types m\'eta (e.g.\ \textsf{Coq}) trie les termes mal typ\'es automatiquement: \'ecrire un tel terme donne 
une erreur de type au niveau m\'eta.

De plus, l'encodage intrins\`eque vient avec un principe de r\'ecursion plus conviviale; une application vers un autre
syst\`eme de types peut \^etre donn\'ee en sp\'ecifiant son image sur les termes bien typ\'es.
En utilisant le typage extrins\`eque, une application sur les termes serait sp\'ecifi\'ee sur l'ensemble des termes \emph{bruts},
y compris les termes mal typ\'es, ou seulement sur les termes bien typ\'es en donnant un argument propositionel de plus
qui exprime le fait que le terme soit bien typ\'e.
Benton et al.\ donnent une explication d\'etaill\'ee du typage intrins\`eque \cite{dep_syn}.

\subsection*{R\'eductions}

La \emph{s\'emantique} d'un langage de programmation d\'ecrit comment des logiciels de ce langage
sont \emph{\'evalu\'es}.
Pour les langages fonctionnels comme on les consid\`ere dans cette th\`ese, l'\'evaluation
est faite par des \emph{r\'eductions}.
Par exemple, l'\'evaluation du terme $7 + 5$ d'un langage arithm\'etique vers sa valeur $12$ est faite en une s\'erie
de r\'eductions, dont la forme pr\'ecise d\'epend de la s\'emantique du langage.
Des \emph{r\`egles} typiques, qui sp\'ecifient comment des termes r\'eduisent, sont donn\'ees dans \autoref{subsec:semantic_ulc_pcf}
pour les langages du lambda calcul et $\PCF$.

Etant donn\'e un ensemble $A$ de r\`egles de r\'eduction, on peut consid\'erer la relation engendr\'ee par ces r\`egles.
Plus pr\'ecis\'ement, suivant
Barendregt et Barendsen \cite{barendregt_barendsen}, nous consid\'erons plusieurs \emph{cl\^otures} de ces r\`egles:

\begin{description}
 \item [Propagation dans des sous--termes] Une relation $R$ est appel\'e \emph{compatible} si elle est close sous propagation dans des sous--termes, i.e.\
           si pour tout constructeur $f$ d'arit\'e $n$ et tout $i \leq n$,
           \[ M \leadsto_R N \Rightarrow   f(x_1,\ldots, x_{i_1}, M, x_{i+1}, \ldots, x_n) \leadsto_R f(x_1,\ldots, x_{i_1},N, x_{i+1}, \ldots, x_n) \enspace . \]
\item [R\'eduction] Une relation $R$ est une \emph{relation de r\'eduction} si elle est compatible, r\'eflexive et transitive. 
\item [Equivalence] Une relation $R$ est une \emph{congruence} si elle est une relation d'\'equivalence compatible.
\end{description}
A l'ensemble $A$ de r\`egles nous associons trois relations engendr\'ees par $A$, qui sont les relations les plus petites contenant $A$ et \'etant 
une relation compatible, une relation de r\'eduction et une relation d'\'equivalence, respectivement.
Nous \'ecrivons ces relations, dans cet ordre, par
$\to_A$, $\twoheadrightarrow_A$ and $=_A$, respectivement.

Dans cette th\`ese nous consid\'erons la \emph{relation de r\'eduction} engendr\'ee par un ensemble de r\`egles.
Par rapport \`a la congruence, il lui manque une r\`egle de \emph{symmetrie}, ce qui, bien qu'ad\'equat pour le 
raisonnement math\'ematique, donne lieu a une relation trop grossi\`ere du point de vue du \emph{calcul}.
Comme l'\'ecrit Girard \cite{proofs_and_types}, tandis que la congruence engendr\'ee par $A$ accentue le point de vue \emph{statique}
des math\'ematiques, la relation de r\'eduction associ\'ee \`a $A$ accentue le point de vue \emph{dynamique} du calcul.

Afin de tenir compte des r\'eductions, nous consid\'erons des foncteurs et monades dont le codomaine n'est pas la cat\'egorie des
(familles d') ensembles, mais des (familles d') \emph{ensembles pr\'eordonn\'es}.
La d\'efinition de monade demande du foncteur sous--jacent d'\^etre un endofoncteur, mais nous ne voudrons pas consid\'erer des 
contextes pr\'eordonn\'es --- quelle serait la signification de ce pr\'eordre ?
La restriction \`a des endofoncteurs a \'et\'e abolie par Altenkirch et al.\ \cite{DBLP:conf/fossacs/AltenkirchCU10}
en introduisant les monades \emph{relatives}.
Une monade relative est donn\'ee par un foncteur --- pas n\'ecessairement endo --- accompagn\'e de deux op\'erations tr\`es similaires
aux op\'erations monadiques variables--comme--termes et substitution.
Nous consid\'erons ainsi, par exemple, le lambda calcul comme une monade relative qui associe, \`a chaque \emph{ensemble} $X$
de variables, un \emph{ensemble pr\'eordonn\'e de lambda termes} $(\LC(X),\twoheadrightarrow_{\beta})$, 
o\`u le pr\'eordre sur $\LC(X)$ est donn\'e par la relation de r\'eduction $\twoheadrightarrow_{\beta}$
engendr\'ee par la r\`egle b\^eta de \autoref{eq:beta}, cf.\ \autoref{ex:ulcbeta}.

\chapter*{Conclusions et Travaux Ult\'erieurs}

Nous r\'esumons les contributions de cette th\`ese et abordons des travaux ult\'erieurs.

\section*{Contributions}

Nous avons d\'emontr\'e un r\'esultat d'initialit\'e pour de la syntaxe simplement typ\'ee, 
\'equip\'ee des r\`egles de r\'eduction.
Le principe d'it\'eration cat\'egorique obtenu par la propri\'et\'e universelle d'intialit\'e
est suffisamment g\'en\'eral pour permettre la sp\'ecification de traductions de la repr\'esentation des termes
vers des langages typ\'es sur des ensembles \emph{diff\'erents} des types.

Nous avons caract\'eris\'e la syntaxe liante avec des r\'eductions --- par exemple, le lambda calcul
avec la r\'eduction b\^eta --- comme une monade relative sur le foncteur $\Delta$ (cf.\ \autoref{ex:ulcbeta}),
ce qui n'encode pas seulement des propri\'et\'es de commutativit\'e de la substitution,
mais \'egalement sa monotonicit\'e dans l'argument d'ordre premier.
Une autre propri\'et\'e de monotonicit\'e pour l'argument d'ordre sup\'erieur peut \^etre assur\'ee 
par un renforcement appropri\'e de la d\'efinition de monade relative dans un contexte 2--cat\'egorique, 
cf.\ \autoref{rem:about_substitution}.
Nous avons \'egalement transf\'er\'e la d\'efinition de \emph{module sur une monade} et plusieurs constructions 
de modules vers des modules sur les monades \emph{relatives}.

Ensuite, nous avons d\'emontr\'e plusieurs th\'eor\`emes dans l'assistant de preuves \textsf{Coq}:
premi\`ere\-ment, nous avons impl\'ement\'e le th\'eor\`eme d'initialit\'e de Zsid\'o \cite[Chap.~6]{ju_phd},
r\'esum\'e dans ce travail pour r\'ef\'erence dans \autoref{sec:sts_ju}.
Deuxi\`emement, nous avons d\'emontr\'e le th\'eor\`eme de \autoref{sec:prop_arities},
fournissant un outil qui, \'etant donn\'ee une 2--signature $(S,A)$, g\'en\`ere la syntaxe associ\'ee a $S$,
\'equip\'ee de la relation de r\'eduction engendr\'ee par les in\'equations de $A$.
Troisi\`emement, nous avons d\'emontr\'e une instance de notre th\'eor\`eme principal, 
\autoref{thm:init_w_ineq_typed} de \autoref{chap:comp_types_sem}, pour la 2--signature particuli\`ere du langage de
programmation $\PCF$, \'equip\'e des r\`egles de r\'eduction comme dans \autoref{eq:pcf_reductions}.
La re\-pr\'e\-sen\-tation de la signature de $\PCF$ dans la monade du lambda calcul non typ\'e avec r\'eduction b\^eta
donne une traduction ex\'ecutable de $\PCF$ vers $\LC$ qui est certifi\'ee d'\^etre compatible 
avec la substitution et la r\'eduction des langages source et but.

\section*{Travaux Ult\'erieurs} 

D\'esormais, nous esp\'erons d\'emontrer et impl\'ementer des th\'eor\`emes d'initialit\'e pour des syst\`emes de types
plus riches. En particulier, on voudrait prendre en compte des types \emph{d\'ependants} et le \emph{polymorphisme},
deux \'etapes importantes vers des logiciels certifi\'es et reutilisation de code, respectivement.

De plus, la mod\'elisation de la s\'emantique devrait \^etre am\'elior\'ee pour permettre le raisonnement sur des propri\'et\'es 
importantes telles que la terminaison.

Comme susmentionn\'e, l'impl\'ementation des r\'esultats d'initialit\'e dans un assistant de preuves peut servir comme un cadre pour 
la recherche sur des langages de programmation et des logiques. Pour cette raison, nous envisageons l'impl\'ementation dans un 
assistant de preuves de \autoref{thm:init_w_ineq_typed} en toute g\'en\'eralit\'e.

On pr\'esente ces aspects en d\'etail:

\begin{description}

\item [Mod\'elisation de r\'eduction plus nuanc\'ee]

Etant donn\'ee une 2--signature (une signature avec un ensemble d'in\'equations), les mod\`eles pour
cette 2--signature \'etaient jusqu'\`a maintenant  princpalement des foncteurs qui associent, \`a chaque 
ensemble \guillemotleft de variables\guillemotright\ un ensemble pr\'eordonn\'e ---
 intuitivement un mod\`ele des \guillemotleft termes\guillemotright\ sur l'ensemble des variables\footnote{%
On ignore le cas typ\'e pour l'instant, qui est analogue.}.
Le pr\'eordre $\leq$ sur un tel mod\`ele correspond \`a la relation de r\'eduction sur ce mod\`ele, 
c'est--\`a--dire le \guillemotleft terme\guillemotright\ $t$ r\'eduit vers $t'$ si et seulement si $t \leq t'$.

La mod\'elisation des r\'eductions via des \emph{pr\'eordres} peut \^etre consid\'er\'ee comme \'etant trop grossi\`ere
\`a plusieurs \'egards:

\begin{itemize}
 \item des r\'eductions diff\'erentes peuvent amener d'un terme vers un autre. Par contre, l'utilisation des pr\'eordres
    pour la mod\'elisation des r\'eductions ne permet pas de distinguer deux r\'eductions de m\^eme source et but.
 \item La r\`egle de reflexivit\'e cod\'ee en dur rend difficile le raisonnement sur la \emph{normalisation} --- en particulier, 
    la \emph{terminaison}.

\end{itemize}

Au lieu de consid\'erer des ensembles \emph{pr\'eordonn\'es} (index\'es par des ensembles de variables libres) comme
des mod\`eles d'une 2--signature, il serait int\'eressant de consid\'erer une structure qui permet
un traitement plus nuanc\'e de r\'eduction, comme par exemples les graphes ou les cat\'egories.
Autrement dit, on pourrait construire des mod\`eles d'une 2--signature \`a partir des monades relatives
vers la cat\'egorie des \emph{graphes} ou (petites) \emph{cat\'egories}.
En utilisant cette nouvelle d\'efinition de mod\`ele, on pourrait envisager de d\'emontrer un th\'eor\`eme d'initialit\'e
analogue \`a celui d\'ej\`a d\'emontr\'e, et d'utiliser la structure de plus obtenue en 
travaillant avec des graphes ou des cat\'egories pour raisonner sur les propri\'et\'es mentionn\'ees plus haut.

\item [In\'equations, Syntaxiquement]
  Fiore et Hur \cite{DBLP:conf/csl/FioreH10} developpent une th\'eorie syntaxique d'\'equations sur 
  une signature d'ordre superieure, ce qui permet de prouver suret\'e et compl\'etude
  par rapport aux mod\`eles de la signature et aux \'equations.
   Des t\'echniques pareilles devraient permettre de pr\'esenter nos in\'equations de fa\c con 
   syntaxique.
   En plus du but \'evident de suret\'e et compl\'etude, une telle pr\'esentation syntaxique
  faciliterait aussi la sp\'ecification des r\'eductions dans l'impl\'ementation en \textsf{Coq}:
  en particulier, il serait possible de sp\'ecifier des r\'eductions sans aucune connaissance
  des concepts cat\'egoriques.

 Un but minimal, ce serait d'avoir un data type --- qui depend de la
 1-signature sous-jacant --- qui permet de sp\'ecifier les demi--\'equations
 habituelles, principalement obtenues par la substitution et en composant
des arit\'es, p.ex.\ $\comp{(\abs\times\id)}{\app}$.
 A un terme de ce data type on pourrait associer une famille de
 morphismes de modules, qui forment le carrier d'une demi--\'equation: les
 propri\'et\'es alg\'ebriques (d'\^etre un morphisme de modules, ce qui
 correspond a la compatibilit\'e entre substitution et meta--substitution
 dans \cite{DBLP:conf/csl/FioreH10}) pourraient \^etre prouv\'ees une fois pour tout par r\'ecurrence.

\item [Syst\`emes de types plus sophistiqu\'es]

Les nouveaux langages de programmation sont \'equip\'es de syst\`emes de types de plus
en plus sophistiqu\'es:
des \emph{types d\'ependants} permettent d'assurer des propri\'et\'es des r\'esultats d'une 
fonction et ainsi la composition fiable des fonctions.
Le \emph{polymorphisme} permet la r\'eutilisation de code dans des situations diverses.
Une caract\'erisation alg\'ebrique de tels syst\`emes de types sophistiqu\'es avec liaison de 
variables par une propri\'et\'e universelle n'existe pas encore.
Nous esp\'erons g\'en\'eraliser nos r\'esultats d'initialit\'e pour prendre en compte ces syst\`emes de types.

\item [Une classe plus large d'arit\'es]

Les th\'eor\`emes d'initialit\'e jusqu'\`a maintenant prennent en compte des arit\'es, c'est--\`a--dire
 des constructeurs de termes, de nature plut\^ot simple: les seules op\'erations consid\'er\'ees 
 sont le produit --- pour des constructeurs qui prennent \emph{plusieurs} arguments ---
 et l'extension de contexte, pour mod\'eliser la liaison de variables.


On devrait tenir compte des constructeurs de termes plus g\'en\'eraux.
Hirschowitz et Maggesi \cite{hirschowitz_maggesi_fics2012}
ont introduit une notion d'arit\'e renforc\'ee qui permet, par exemple, de traiter
un constructeur d'aplatissement $\mu : \comp{T}{T} \to T$.
Finalement, nous esp\'erons trouver un crit\`ere \emph{simple}
tr\`es g\'en\'eral pour des arit\'es et des signatures pour lesquelles 
un mod\`ele initial peut \^etre construit.

\item [Un outil de recherche certifi\'e] 

Les r\'esultats obtenus devraient --- comme on l'a d\'ej\`a fait pour la syntaxe non typ\'ee
avec r\'eductions --- \^etre impl\'ement\'es dans un assistant de preuves tel que \textsf{Coq}. Ainsi,
un th\'eor\`eme d'initialit\'e peut \^etre utilis\'e comme un outil pratique pour faire facilement des exp\'eriences
avec des langages differents.
Changer un langage correspondrait \`a simplement changer sa signature sp\'ecifiante, et toutes les donn\'ees et
propri\'et\'es telles que la substitution certifi\'ee et le principe d'it\'eration, mais \'egalement des r\'eductions, seraient
fournies par le syst\`eme.
Pour cette impl\'ementation sur la machine et pour avoir des r\`egles de r\'eduction appropri\'ees, nous souhaitons aussi obtenir, de fa\c con automatique, une \emph{fonction}
de r\'eduction $r$ en plus de la relation de r\'eduction. Cette fonction de r\'eduction pourrait ainsi \^etre valid\'ee 
par rapport \`a la relation au sens o\`u l'on pourrait d\'emontrer que pour chaque terme $t$, on a $t \leq r(t)$.

\end{description}

\cleardoublepage

\pagestyle{scrheadings}

\microtypesetup{protrusion=false}
 \tableofcontents
\microtypesetup{protrusion=true}

\mainmatter


\chapter{Introduction}


In this thesis we give a characterization, via a universal property, of the syntax and semantics of 
simply--typed languages with variable binding.
More precisely, we characterize the terms and sorts associated to a signature, 
equipped with reduction rules, as the initial object in some category.
Via the iteration principle stemming from initiality, 
translations between languages, possibly over different sets of sorts,
can be specified in a convenient and economic way. 
Furthermore, translations thus specified are ensured to be faithful with respect to 
reduction in the source and target languages, as well as compatible in a suitable sense
with substitution on either side.

\section{Motivation: Translations from \texorpdfstring{$\PCF$}{PCF} to \texorpdfstring{$\ULC$}{ULC}} \label{sec:trans_pcf_ulc}

As an introductory example, consider translations from the programming language \PCF, introduced by Plotkin \cite{Plotkin1977223},
to the untyped lambda calculus $\LC$, invented by Church \cite{Church_1936}.
A detailed account of both languages is given in \autoref{chap:syntax_semantics_pcf_ulc}. 
These two languages are paradigmatic in the sense that \PCF~may be considered a rather high--level language, equipped 
with a type system, whereas the untyped lambda calculus represents a low--level, untyped language.
We specify a map $f$ from the set of \PCF~terms to the set of lambda terms as in \autoref{fig:trans_pcf_ulc} (cf.\ \cite{DBLP:conf/lics/Phoa93}),
\begin{figure}[bt]
\centering
\fbox{%
 \begin{minipage}{7cm}

\vspace{-1ex}

\begin{align*}
           f(\bot_A) &= \Omega \\	
           f(c_A) &= g(c) \\
          f(x_A) &= x \\
         f (s@t) &= f(s)@f(t) \\
         f(\lambda x.M) &= \lambda x.f(M) \\
         f(\PCFFix_A(M)) &= {\Theta}@f(M)
\end{align*}

\vspace{0ex}

\end{minipage}
}
 \caption{Translation from \PCF~to $\ULC$}\label{fig:trans_pcf_ulc}
\end{figure}
with a suitable function $g$ from the set of constants of $\PCF$ to lambda terms, e.g., $g(\mathbf{T}) := \lambda x y.x$, and suitable
constants of the lambda calculus, e.g.,
\begin{align*}
{\Theta} &:= \bigl(\lambda x. \lambda y.(y (x x y))\bigr)\bigl(\lambda x.\lambda y.(y(x x y))\bigr) \quad \text{ (Turing fixed point combinator) and}\\
  \Omega &:= (\lambda x . xx)(\lambda x.xx) \enspace .
\end{align*} 
Of course, different such translations exist; for instance, one may choose to 
translate $\PCFFix$ to a different fixed point combinator or one chooses a different representation $g'$ for the constants of \PCF~in the 
lambda calculus, yielding a different translation $f':\PCF\to\LC$.

In this thesis we present a category--theoretic framework to specify such translations of a language to another.
The challenges are 
\begin{itemize}
 \item the varying sets of sorts in source and target 
        languages\footnote{Here we consider untyped languages to be single--sorted.} and
 \item to capture compatibility of such translations with structure --- such as substitution and reduction --- in the 
       source and target languages.
\end{itemize}
We construct a category in which ``languages such as \PCF~and $\LC$ are objects'', 
and in which the above translation $f: \PCF\to\LC$ is a morphism. 
As it turns out, the preceding sentence is imprecise and needs to be refined:
more precisely, in the category we construct the translation $f$ is an \emph{initial morphism} $f:\PCF\to\LC$, that is,
its source \PCF~is the initial object.
Now, as we have seen, there are several possible translations from \PCF~to the lambda calculus, and the above translation $f: \PCF\to \LC$
cannot be an initial morphism in a 
category where objects are ``just'' languages --- otherwise we would have $f = f'$ for any translation $f':\PCF\to\LC$. 
Thus the objects in the category we construct are not just languages, but languages with additional 
structure, allowing us to distinguish
different initial morphisms $f, f':\PCF\to\LC$,

\[
 \begin{xy}
  \xymatrix @C=3pc @R=0.5pc{
     {} &     (\LC, \psi) \\
     (\PCF, \phi) \ar[ru]^{f} \ar[rd]_{f'} & {} \\
     {} & (\LC, \psi').
   }
 \end{xy}
\]
In this category, initiality of $(\PCF, \phi)$ yields the following iteration principle: specifying
 an iterative translation $f: \PCF\to\LC$ is equivalent to specifying the ``extra structure'' $\psi$ of 
the lambda calculus $\LC$.
We do not yet explain what this additional structure, here denoted $\phi, \psi$ and $\psi'$, looks like,
but refer instead to \autoref{subsec:informal_intro} for an instructive example.

A natural question then is whether --- or better, in what sense --- the translation $f$ specified in \autoref{fig:trans_pcf_ulc}
is compatible
with the respective reductions in the source and target languages.
Phoa \cite{DBLP:conf/lics/Phoa93} gives an answer to this question; in particular,
the translation $f$  is \emph{faithful} 
in the sense that
\[  t\twoheadrightarrow_{\PCF} t' \quad \text{implies} \quad f(t) \twoheadrightarrow_{\beta} f(t') \enspace . \]
In this thesis we provide a category--theoretic framework which allows to specify, via a universal property, 
such faithful translations between languages with variable binding over different sets of sorts.

\section{Initial Semantics} \label{sec:initial_semantics}
\emph{Initial Semantics} characterizes the terms of a language associated to a \emph{signature $S$}
as the initial object in some category --- whose objects we call \emph{Semantics of $S$} ---,
yielding a concise, high--level, definition of the abstract syntax  associated to $S$. 
In more detail, the following ``ingredients'' are used:

\begin{description}
 \item [Signature] A \emph{signature} specifies abstractly and concisely the syntax and semantics
                  of a language.
 \item [Category of Representations] To any signature $S$ we associate a category of ``models'' of that signature,
                      the objects of which we call \emph{representations of $S$}.
 \item [Initiality] In this category of representations of $S$ we exhibit the initial object, the \emph{language generated by $S$}.
\end{description}
The motivation for Initial Semantics are twofold:
firstly, Initial Semantics provides a category--theoretic definition --- via a universal property --- 
          of the syntax and semantics freely generated by a signature.
Secondly, initiality yields an \emph{iteration operator} which allows for an economic and convenient specification
        of morphisms --- \emph{translations} --- from the initial object to other languages.

Depending on the ``richness'' of the language we want to define, we need a suitable notion of signature and, accordingly,
of representation of that signature. The language features we consider in this thesis are the following:

\begin{description}
 \item [Variable binding] We consider binding constructors on the \emph{term level}, such as lambda abstraction.
 \item [Typing] We consider \emph{simple} type systems, such as the simply--typed lambda calculus and, 
            via the Curry--Howard isomorphism, propositional logic (cf.\ \autoref{ex:logic_trans}).
 \item [Reduction] We consider semantics in form of reduction rules on terms, such as beta reduction,
                \[ \lambda x.M (N) \leadsto M [x:= N] \enspace . \]
\end{description}
For the integration of each of the features above, the notions of signature and representation 
have to be adapted to accommodate the increasing amount of information which must be given to uniquely
specify a language.

One of our goals is to use Initial Semantics in order to treat the last question 
of the preceding section:
we would like to translate from one language into another --- possibly over different sets of sorts ---, using
a universal, category--theoretic construction. This construction should take into account as much 
``structure'' as possible. By this we mean that the translations under consideration should
by construction be compatible, for instance, with typing and reduction in the source and target language.
A more in--depth description of those structures is given in
\autorefs{subsec:adding_variables}, \ref{subsec:adding_substitution}, \ref{subsec:adding_types} and
\ref{subsec:adding_semantics}. %

In \autoref{subsec:informal_intro} we explain the notion of \emph{signature} and
\emph{representation} for a simple inductive data type, the natural numbers. The following sections
sketch the changes that have to be made in order to integrate variable binding,
substitution, typing and reduction rules, respectively.
In \autoref{sec:contribution} we summarize the contributions of this thesis,
whereas in \autoref{subsec:overview} we give a section--wise overview of its contents.

\subsection{Example: Peano Axioms}\label{subsec:informal_intro}

We introduce the notion of \emph{signature} and \emph{representation} 
using the example of the natural numbers; in line with the triple structure
mentioned at the beginning of \autoref{sec:initial_semantics},
our goal is to give a signature for the natural 
numbers and to associate to it a category of representations whose initial object is 
given by the natural numbers.

As a suitable \emph{signature}, consider the following map from a two elements set to natural numbers:
\[ \mathcal{N} := \{z \mapsto 0 \enspace ,\quad s \mapsto 1\} \enspace . \] 
Intuitively, it says that the natural numbers are built from two \emph{constructors}, 
namely a 0--ary operator (i.e.\ a constant), say, $z$, 
 --- the zero constant ---
and a unary operator, say, $s$ --- the successor function.

A \emph{representation} of the signature $\mathcal{N}$ is given by a triple $(X,Z,S)$ of a set $X$ 
together with a constant $Z \in X$ and a unary operation
$S:X\to X$. A morphism to another such triple $(X_0,Z_0, S_0)$ is a map $f : X \to X_0$
such that 
\begin{equation} \label{eq:iteration_nat}
 f(Z) = Z_0 \quad \text{ and } \quad \comp{S}{f} = \comp{f}{S_0}  \enspace . 
\end{equation}
This category has an \emph{initial object} $(\mathbb{N}, \ZERO, \SUCC)$ given by the natural numbers $\mathbb{N}$
equipped with the constant $\ZERO = 0$ and the successor function $\SUCC : \mathbb{N}\to\mathbb{N}$.

Initiality of $\mathbb{N}$ gives a way to specify \emph{iterative} functions \cite{vene}
from $\mathbb{N}$ to any set $X$ by equipping $X$ with a constant $Z\in X$ and 
a unary map $S:X\to X$,
i.e.\ making the set $X$ the carrier of an object $(X,Z,S)\in\mathcal{N}$. 
A different choice of $Z'\in X$ and $S':X\to X$ yields a different iterative map $\N \to X$.

Put differently, reading \autoref{eq:iteration_nat} dynamically rather than statically, i.e.\ as
a reduction from left to right rather than as equations, shows that functions on the initial object $\mathbb{N}$
can be defined by \emph{pattern matching}, where the right--hand side of the matching must obey a particular
form.

\begin{remark}[Digression on Natural Numbers Object]
 The very same definition is also used to define a \emph{natural numbers object} in any category $\C$ with a terminal object $\mathbf{1}$;
  just replace $Z\in X$ and $S : X \to X$ by morphisms $z : \mathbf{1} \to X$ and $s : X \to X$ in $\C$.
  More precisely, we call natural numbers object the triple $(\mathbb{N} : \C, \ZERO : \mathbf{1} \to \mathbb{N}, \SUCC : \mathbb{N} \to \mathbb{N})$
  if, for any triple $(X, z, s)$ of an object $X \in \C$ and morphisms $z$ and $s$ as above, there exists a unique morphism $f : \mathbb{N}\to X$ 
such that the following diagrams commute:
\[
 \begin{xy}
  \xymatrix@=3pc{ \mathbf{1} \ar[r]^{\ZERO} \ar [rd]_{z} & \mathbb{N}\ar[d]^{f} \\ {} & X
}
 \end{xy} \qquad  \qquad
\begin{xy}
 \xymatrix @=3pc{ \mathbb{N} \ar[r]^{\SUCC} \ar[d]_{f} & \mathbb{N}\ar[d]^{f} \\ X \ar[r]_{s} & X
}
\end{xy}
\]
For details we refer to Mac Lane and Moerdijk's book \cite{sheaves}.
 
\end{remark}

\subsection{Initial Algebras}

The term ``Initial Algebra'' is best explained using another viewpoint, where a signature is given by a 
\emph{signature functor $\Sigma : \Set\to \Set$}.
The category in question then is the category $\Sigma$--Alg of algebras of the functor $\Sigma$, that is, the category whose objects 
are pairs $(X, f)$ of a set $X$ and a map $f:\Sigma X \to X$. A morphism to another such algebra $(Y,g)$ is given by a map $h : X \to Y$ such 
that 
\[
 \begin{xy}
  \xymatrix{
      \Sigma X \ar[r]^{\Sigma h} \ar[d]_{f} & \Sigma Y \ar[d]^{g}\\
           X \ar[r]_{h} & Y
}
 \end{xy}
\]
commutes.
The example of \autoref{subsec:informal_intro} is equivalently given by the signature functor $\mathcal{N} : X \mapsto 1+X$,
with initial algebra
 \[    1 + \NN \xrightarrow{[\ZERO,\SUCC]} \NN  \enspace . \]
Another example is that of \emph{lists} (of finite length) of a given type $A$:
let $F(X) := 1 + A \times X$.
The initial $F$--algebra is given by the set $[A]$ of lists over $A$,
\[    1 + A\times [A] \xrightarrow{[\text{nil},\text{cons}]} [A]  \enspace . \]

\subsection{Adding Variable Binding} \label{subsec:adding_variables}

When passing to syntax \emph{with variable binding}, the question of how to model binding arises.
The following representations of binding are among the most frequently used:

\begin{packitem}
 \item 
       Nominal syntax using named abstraction ($\mathbb{A}$ being a set of \emph{atoms}), e.g.,
         \[  \lambda : [\mathbb{A}]  T \to T\]
 \item 
       Higher--Order Abstract Syntax (HOAS), e.g.,
          \[ \lambda : (T \to T) \to T\] and its \emph{weak} variant, e.g.,
          \[\lambda : (\mathbb{A} \to T) \to T\] 
 \item Nested Data Types as presented in \cite{BirdMeertens98:Nested}, e.g., 
          \[ \lambda : T(X + 1) \to T(X) \]
\end{packitem}
Note that the encoding via nested data types differs conceptually from the others in that here the set $T$ of terms is
parametrized explicitly by a \emph{context}, i.e.\ a set $X$ of variables possibly appearing freely in the terms of $T(X)$.
Thus $T(X)$ denotes the set of terms of the language $T$ with free variables in the set $X$.
The set $X + 1$ corresponds to an \emph{extended context} with one additional free variable, which
is bound in the abstracted term. It is usually implemented through an inductive data type 
(\texttt{option} in \textsc{Ocaml} or the \texttt{Maybe} monad in \textsc{Haskell}) --- whence the term ``\emph{Nested}''.
It is also known under the name ``Heterogenous data type'' \cite{alt_reus}.

\begin{example}\label{ex:ulc_def}
We represent the untyped lambda calculus as a \emph{nested data type} as done, e.g., by Bird and Paterson \cite{DBLP:journals/jfp/BirdP99}:
 consider the following inductive type $\LC:\Set\to\Set$ of terms of the untyped lambda 
calculus%
\footnote{We use ``Set'' synonymously to ``Type''. Note however, that types behave differently from sets in some aspects.
          In particular, given two (propositionally) equal types $A = B$ and $a:A$, we do not have $a : B$.}%
:
\begin{lstlisting}
Inductive ULC (V : Type) : Type :=
  | Var : V -> ULC V
  | Abs : ULC (option V) -> ULC V
  | App : ULC V -> ULC V -> ULC V.
\end{lstlisting}
\end{example}

\noindent
For syntax with binding, arities need to carry information about the binding behaviour of their associated constructor.
One way to define such arities is using lists of natural numbers. 
The length of a list then indicates the number of arguments of
the constructor, 
and the $i$-th entry denotes the number 
of variables that the constructor binds in the $i$-th argument.
Continuing \autoref{ex:ulc_def}, the signature $\Lambda$ of $\LC$ is given by 
\[ \Lambda := \{ \app : [0,0]\enspace , \quad \abs : [1] \} \enspace . \]
The map $ V \mapsto \LC(V)$ is in fact functorial: given a map $f : V \to W$,
the map $\LC(f) : \LC(V) \to \LC(W)$ \emph{renames} any free variable $v\in V$ in a term by $f(v)$,
yielding a term with free variables in $W$. 
Accordingly, the signature $\Lambda$ should be represented in functors $F:\Set\to \Set$ instead of in sets,
and natural transformations take the place of maps.

\subsection{Adding Substitution}\label{subsec:adding_substitution}

As mentioned at the beginning of \autoref{sec:initial_semantics}, 
we would like to integrate as much structure as possible into our category of ``models''.
One such structure is \emph{(capture--avoiding) substitution} of free variables. 
To account for substitution, we consider not plain functors $F:\Set\to\Set$ as in the preceding paragraph, 
but instead \emph{monads} on the category $\Set$ of sets. 
Monads are functors
equipped with some extra structure, which we explain by the example of the untyped
lambda calculus.
The map $ V\mapsto \LC(V)$ comes with a (capture--avoiding) simultaneous substitution operation:
let $V$ and $W$ be two sets (of variables) and $f$ be a map $f : V\to \LC(W)$.
Given a lambda term $t\in \LC(V)$, we can replace each free variable $v\in V$
in $t$ by its image under $f$, yielding a term $t' \in \LC(W)$.
Furthermore we consider the constructor $\Var_V$ as a ``variable--as--term'' map, indexed by a set of variables $V$, 
\[\Var_V : V\to \LC(V) \enspace . \]
Altenkirch and Reus \cite{alt_reus} observed that the  well--known algebraic structure of monad
captures those two operations and their properties:
substitution and variable--as--term map 
turn $\LC$ into a monad (\autoref{def:endomonad}) on the category of sets.

The monad structure of $\LC$ should be compatible in a suitable sense with the constructors $\Abs$ and $\App$ of $\LC$:
\emph{substitution distributes over constructors}.
To capture this distributivity, 
Hirschowitz and Maggesi \cite{DBLP:conf/wollic/HirschowitzM07} consider \emph{modules over a monad} (cf.\ \autoref{def:module}) 
--- which generalize monadic substitution ---, and morphisms of modules --- which are natural transformations
that are compatible with the module substitution in a suitable sense.
Indeed, the maps 
\begin{align*}\LC &: V\mapsto \LC(V) \enspace , \\ 
              \LC' &: V\mapsto \LC(V+1) \text{ and } \\ 
              \LC\times\LC &: V \mapsto \LC(V)\times\LC(V) 
\end{align*} are
the underlying maps of such modules (cf.\ \autorefs{ex:ulc_taut_mod}, \ref{ex:ulc_const_mod}), and
the constructors $\Abs$ and $\App$ are morphisms of modules (cf.\  \autorefs{ex:ulc_mod_mor}, \ref{ex:ulc_mod_mor_kl}).

\subsection{Adding Types} \label{subsec:adding_types}

\emph{Type systems} exist with varying features, ranging from simply--typed
syntax to syntax with dependent types, kinds, polymorphism, etc.
By simply--typed syntax we mean a non--polymorphic syntax where the set of types is independent 
from the set of terms, i.e.\ type constructors only take types as arguments.
In more sophisticated type systems, types may depend on terms, leading to 
more complex definitions of arities and signatures.
The present work is only concerned with simply--typed languages, such as the simply--typed
$\lambda$--calculus and \PCF. We refer to the underlying set of types of a language as \emph{object types}
or \emph{sorts}.

The goal of typing is to \emph{classify} terms according to some criteria. As an example,
one may ask whether a term is of function type, that is, whether it would make sense to apply it to another term.
Once such a classification of terms is achieved, one can use typing information to \emph{filter} terms according 
to their types, in order to pick out only those terms that have the desired type.
The classification of terms through typing thus has a semantic flavour. However, we still subsume 
typing under the \emph{syntactic} aspect, since it has an impact on the set of terms of the language.

One way to add types would be to make them part of the terms, as in ``$\lambda x:\NN.x + 4$''.
However, for \emph{simple type systems} it is possible to separate the worlds of types and terms and 
consider typing as a map from terms to types, thus giving a simple mathematical structure to typing.
How can we be sure that our terms are well--typed? Despite the separation of types and terms we still want typing to be 
tightly integrated into the process of building terms, in order to avoid constructing ill--typed terms. 
Separation of terms and types seems to contradict this goal.
The answer lies in considering not \emph{one} set of terms with a ``typing map'' to the set, say, $T$, of types, but \emph{a family of sets}, indexed by 
the set $T$ of object types.
Term constructors then can be ``picky'' about what terms they take as arguments, accepting only those 
terms that have the suitable type. 
We also consider free variables to be equipped with an object type. 
Put differently, we do not consider terms over \emph{one} set of variables, 
but over a family of sets of variables, indexed by the set of object types. 
In other words, we consider a context to be given by a family $(V_t)_{t\in T}$ of sets of variables, where $V_t := V(t)$ is the set of
variables of object type $t$.
We illustrate our point of view by means of the example of the simply--typed lambda calculus $\SLC$: 

\begin{example}\label{ex:slc_def}
 Let 
\[\TLCTYPE ::= \enspace * \enspace \mid \enspace \TLCTYPE \TLCar \TLCTYPE\] 
be the set of types of the simply--typed lambda calculus. 
The set family of simply--typed lambda terms with free variables in $V$ is given by the following inductive family:
\begin{lstlisting}
Inductive TLC (V : T -> Type) : T -> Type :=
  | Var : forall t, V t -> TLC V t
  | Abs : forall s t TLC (V + s) t -> TLC V (s ~> t)
  | App : forall s t, TLC V (s ~> t) -> TLC V s -> TLC V t.
\end{lstlisting}
where $V + s := V + \lbrace {*s} \rbrace$ 
denotes context extension by a variable of type $s\in \TLCTYPE$ --- the variable which is
bound by the constructor $\Abs{(s,t)}$. 
The variables $s$ and $t$ range over the set $\TLCTYPE$ of types.
The signature describing the simply--typed lambda calculus is given in \autorefs{ex:tlc_sig} and \ref{ex:tlc_sig_higher_order}. 
The preceding paragraph about monads and modules applies to the simply--typed lambda calculus when replacing sets by 
families of sets indexed by $\TLCTYPE$:
the simply--typed lambda calculus can be given the structure of a monad (cf.\ \autoref{ex:tlc_syntax_monadic})
\[ \SLC : \TS{\TLCTYPE} \to \TS{\TLCTYPE} \enspace . \]
The constructors of $\SLC$ are morphisms of modules (cf.\ \autorefs{ex:slc_modules}, \ref{ex:deriv_mod_slc}, \ref{ex:fibre_mod_slc}).

\end{example}
This method of defining exactly the well--typed terms by organizing them into a family of sets parametrized by 
 object types is 
called \emph{intrinsic typing} \cite{dep_syn} --- as opposed to the \emph{extrinsic typing}, where first a set of \emph{raw}
 terms is defined, which is then filtered via a typing predicate.
Intrinsic typing delegates object level typing to the meta language type system, such as the \textsf{Coq} type system in \autoref{ex:slc_def}. 
In this way, the meta level type
checker (e.g.\ \textsf{Coq}) sorts out ill--typed terms automatically: writing such a term yields a type error on the meta level.

Furthermore, the intrinsic encoding comes with a much more convenient recursion principle; a map to any other type system
can simply be defined by specifying its image on the well--typed terms. 
When using extrinsic typing, a map on terms would either have to be defined on the set of \emph{raw} terms, 
including ill--typed ones, or on just the well--typed terms by specifying an additional propositional argument
 expressing the welltypedness of the term argument.
Benton et al.\ give detailed explanation about intrinsic typing in a recently published paper \cite{dep_syn}.

\subsection{Adding Reductions}\label{subsec:adding_semantics}

The \emph{semantics} of a programming language describes how programmes of that language \emph{evaluate}.
For functional programming languages as considered in this thesis, 
evaluation --- or \emph{computation} --- is done by \emph{reduction}. 
As an example, 
the evaluation of the term $7 + 5$ of a hypothetical arithmetic programming language 
to its ``value'' $12$ is done by a series of reductions,
whose precise form depends on the semantics of the language in question.
Typical \emph{rules}, which specify how terms reduce, are given in \autoref{subsec:semantic_ulc_pcf}
for the example languages of the lambda calculus and $\PCF$.

Given a set $A$ of such reduction rules, one may consider the relation generated by these rules. More precisely,
following Barendregt and Barendsen \cite{barendregt_barendsen}, we consider several \emph{closures} of those rules:

\begin{description}
 \item [Propagation into subterms] A relation $R$ is called \emph{compatible} if it is closed under propagation into subterms, that is,
           if for any constructor $f$ of arity $n$ and any $i \leq n$,
           \[ M \leadsto_R N \Rightarrow   f(x_1,\ldots, x_{i_1}, M, x_{i+1}, \ldots, x_n) \leadsto_R f(x_1,\ldots, x_{i_1},N, x_{i+1}, \ldots, x_n) \enspace . \]
\item [Reduction] A relation $R$ is a \emph{reduction relation} if it is compatible, reflexive and transitive. 
\item [Equivalence] A relation $R$ is a \emph{congruence} if it is a compatible equivalence relation.
\end{description}
To the set $A$ of rules we associate three relations generated by $A$, which are the smallest relations that contain $A$ and are a compatible relation, 
a reduction relation and a congruence, 
respectively. We denote these relations, in this order, by 
$\to_A$, $\twoheadrightarrow_A$ and $=_A$, respectively.

\begin{remark}[Digression on Reduction Strategies]\label{dig:on_reduction_strategies}

Suppose we have a term in which reduction rules are applicable in several places,
such as in the term
        \[  ((\lambda x. M)N) ((\lambda y. M') N') \enspace , \]
which is $\beta$--reducible in the operator and in the operand.
Here the natural question arises where one should reduce at first, 
in the operator or in the operand (or both in parallel) --- the question about the \emph{reduction strategy}.
More precisely, one considers the following two properties of rewrite systems:
\begin{packdesc}
 \item [Termination] Are there infinite --- non--terminating --- chains of reductions?
 \item [Confluence] Suppose a term $t$ reduces both to $t'$ as well as to $t''$ via two different reductions.
                 Is there a term $t'''$ such that both $t'$ and $t''$ reduce to $t'''$?
\end{packdesc}
Termination and confluence together yield (strong) normalization, an important pro\-perty of rewriting systems:
in a strongly normalizing rewriting system, any reduction strategy yields the same value for a given 
term --- in particular, any reduction strategy arrives at a value, i.e.\ at a term without any more reducible subterms.
To illustrate the concept of termination, we give an example of a lambda term such that one reduction strategy
terminates whereas another one does not;
consider the term $(\lambda x . y) (\Omega \Omega)$ with $\Omega = (\lambda x. x x)$ and a free variable $y$.
Reducing the outermost beta redex results in an irreducible term $y$ in one step, 
whereas the strategy of reducing at first the operand $(\Omega\Omega)$
leads to an infinite chain of reductions.

\end{remark}

In this thesis we are interested in the \emph{reduction relation} generated by a set of rules.
It differs from the congruence by the absence of a \emph{symmetry} rule, which, while adequate for 
mathematical reasoning, yields a relation that is too coarse from a point of view of \emph{computation}.
In the words of Girard \cite{proofs_and_types}, while the congruence generated by $A$ emphasizes the \emph{static} point of view of 
mathematics, the reduction relation associated to $A$ emphasizes the \emph{dynamic} point of view of computation.

To account for reductions, we consider functors and monads whose codomain is not the category of (families of) sets,
but of (families of) \emph{preordered sets}.
The definition of monad requires the underlying functor to be an endofunctor, but we do not want
to consider preordered contexts --- what would be the meaning of this preorder?
The restriction to endofunctors was lifted by Altenkirch et al.\ \cite{DBLP:conf/fossacs/AltenkirchCU10} 
through the introduction of \emph{relative} monads.
A relative monad is given by a functor --- not necessarily endo --- together
with two operations very similar to monadic variables--as--terms and substitution.
We thus consider, e.g., the lambda calculus, as a relative monad associating to any \emph{set} $X$
of variables a \emph{preordered set of lambda terms} $(\LC(X),\twoheadrightarrow_{\beta})$, 
where the preorder on $\LC(X)$ is given by the reduction relation $\twoheadrightarrow_{\beta}$ generated
by the beta rule of \autoref{eq:beta}, cf.\ \autoref{ex:ulcbeta}.

\section{Contributions}\label{sec:contribution}

In this thesis
we give, via a universal property, an algebraic characterization of simply--typed syntax
equipped with semantics
in form of reduction rules.
More precisely, given a pair of a \emph{signature} --- specifying the types and terms of a language --- and
\emph{inequations} over this signature --- specifying reduction rules ---, 
we characterize the terms of the language associated to this signature, equipped with 
reduction rules according to the given inequations, as the initial object of a category of ``models''.

Our starting point is work on initiality for untyped syntax done by Hirschowitz and Maggesi \cite{DBLP:conf/wollic/HirschowitzM07},
and on its generalization to simply--typed syntax by Zsid\'o \cite{ju_phd}.
In a first step we extend Zsid\'o's theorem \cite[Chap.~6]{ju_phd} to account for varying sorts, cf.\ \autoref{sec:contrib_comp}.
Afterwards, we integrate reduction rules into Hirschowitz and Maggesi's \cite{DBLP:conf/wollic/HirschowitzM07} 
purely syntactic initiality result, cf.\ \autoref{sec:contrib_sem}.
Finally we obtain our main theorem, which accounts for varying object types as well as reduction rules, by combining 
the aforementioned two results, cf.\ \autoref{sec:contrib_comp_sem}.

Furthermore, for the untyped case (cf.\ \autoref{sec:contrib_sem}), 
we provide a formalized proof in the proof assistant \textsf{Coq} of our result,
yielding a machinery which, when fed with a signature for terms and a set of inequations, 
produces the abstract syntax associated to the signature, together with the reduction relation generated 
by the given inequations.
For the simply--typed case, we formalize the instantiation of our main result (cf.\ \autoref{sec:contrib_comp_sem}) 
to the signature of the programming language \PCF~\cite{Plotkin1977223}.

We now explain our contributions and approaches in more detail:

\subsection{Extended Initiality for Varying Sorts} 
\label{sec:contrib_comp}

In her PhD thesis \cite[Chap.~6]{ju_phd}, Zsid\'o proves an initiality theorem for the abstract syntax
associated to a simply--typed signature.
However, the ``models'' (or representations) she considers, among which the abstract syntax is the initial one, are all models over 
the same set of sorts.
In this way, the iteration principle obtained by initiality does not allow the specification of a
translation to a term language over a different set of sorts.
We adapt 
Zsid\'o's theorem  by introducing \emph{typed signatures}.
A typed signature $(S,\Sigma)$ specifies a set of \emph{sorts} via an algebraic signature $S$, as well as a set of simply--typed \emph{terms} 
over these sorts via a term signature $\Sigma$ over $S$. 
A representation $R$ of such a typed signature is then given by a representation of its signature $S$ for sorts in some 
set $T = T_R$ as well as a representation of $\Sigma$ in a monad --- also called $R$ --- over the category $\TS{T}$.
A morphism of representations $P \to R$ consists of a morphism $f$ of the underlying representations of $S$,
together with a morphism of representations of $\Sigma$, that is compatible in a suitable sense with the ``translation of sorts'' $f$.
We show that the category of representations of $(S,\Sigma)$ thus defined has an initial object,
which integrates the sorts freely generated by $S$ and the terms freely generated by $\Sigma$, typed over the 
sorts of $S$. 
Our definition of morphisms ensures that, for any translation specified via the iteration principle, 
the translation of terms is compatible with the translation of sorts with respect to the typing in 
the source and target languages.

To summarize, compared to Zsid\'o's theorem \cite[Chap.~6]{ju_phd} we
consider representations of a signature for terms over \emph{varying} sets of sorts.
However, since we specify the set of sorts via a signature $S$ and thus implement the variation of sorts through
morphisms of representations of $S$, our ``initial set of sorts'' necessarily has \emph{inductive}
structure.

\subsection{Integrating Reduction Rules} \label{sec:contrib_sem}

In order to integrate reduction rules into our initiality results, 
we define a notion of \emph{2--signature}.
A 2--signature $(\Sigma,A)$ is given by a (1--)signature $\Sigma$ which specifies the terms of a language,
and a set $A$ of \emph{inequations} over $\Sigma$. 
Intuitively, each inequation specifies a reduction rule, for instance the beta rule.

The \emph{models} --- or \emph{representations} --- of such a 2--signature are built from \emph{relative monads} and 
\emph{modules over relative monads}:
given a 1--signature $\Sigma$,
we define a representation of $\Sigma$ to be given by a relative monad on the 
appropriate functor $\Delta : \Set \to \PO$ (cf.\ \autoref{def:delta})
together with a suitable morphism of modules (over relative monads) 
for each arity of $\Sigma$.
Given a set $A$ of inequations over $\Sigma$, we define a satisfaction predicate for the models of $\Sigma$;
we call \emph{representation of $(\Sigma,A)$} each representation of $\Sigma$ that satisfies each inequation of $A$.
This predicate specifies a full subcategory of the category of representations of $\Sigma$.
We call this subcategory the 
\emph{category of representations of $(\Sigma,A)$}.
We prove that this category has an initial object, which is built by equipping the initial representation of $\Sigma$
--- given by the terms freely generated by $\Sigma$ ---
with a suitable reduction relation generated by the inequations of $A$.

With this initiality theorem for $(\Sigma,A)$ we obtain a new iteration principle, and any translation specified
via this principle is, by construction, compatible with the reduction relation in the source and target languages.

\subsection{Main Theorem: Initiality for Simply--Typed Syntax with Reduction}\label{sec:contrib_comp_sem}

Finally, we combine the above two theorems in order to obtain an initiality result which accounts for 
the motivating example of \autoref{sec:trans_pcf_ulc}.
More precisely, we define a \emph{2--signature} to be given by a typed signature $(S,\Sigma)$ as in \autoref{sec:contrib_comp}
together a set $A$ of $(S,\Sigma)$--inequations analogous to \autoref{sec:contrib_sem}, specifying reduction rules.

We define a category of representations of $((S,\Sigma),A)$ and prove that this category has an initial object.
This initial representation integrates the types and terms freely generated by $(S,\Sigma)$,
the terms being equipped with the reduction relation generated by the inequations of $A$.

\subsection{A Computer Implementation for Specifying Syntax and Semantics}

Above theorems are really meant to be implemented in a proof assistant.
Such an implementation allows the specification of syntax and reduction rules via 2--signatures,
yielding a highly automated mechanism to produce syntax together with certified substitution
and iteration principle.

We prove the initiality theorem described in \autoref{sec:contrib_sem} in the proof assistant \textsf{Coq} \cite{coq}.
As an illustration we describe how to obtain the untyped lambda calculus with beta reduction via initiality.

Furthermore we formalize an instance of the theorem explained in \autoref{sec:contrib_comp_sem}, also in \textsf{Coq}.
More precisely, we define the category of representations of the typed signature of $\PCF$ with inequations and prove that
this category has an initial object. Afterwards, we give a representation of this signature in the relative monad 
$\ULCB$ of the untyped lambda calculus with beta reduction, yielding a translation from \PCF~to $\LC$.
Instructions on how to obtain the complete source code of our \textsf{Coq} library are available on 
\begin{center}
 \url{http://math.unice.fr/laboratoire/logiciels}.
\end{center}

\section{Synopsis}
\label{subsec:overview}

This thesis consists of two parts: \Autoref{part:theory} (\autorefs{chap:cat} to \ref{chap:comp_types_sem}) 
describes and proves informally the theorems which constitute this thesis,
whereas \Autoref{part:impl} (\autorefs{chap:cats_in_coq} to \ref{chap:comp_sem_formal}) describes
their implementation and verification in the proof assistant \textsf{Coq} \cite{coq}.

\begin{description}\setkomafont{descriptionlabel}{\normalfont\bfseries}

 \item [\Compref{chap:cat}] 
          We recall the notions of monad and module over a monad, together with some important constructions of modules.

Afterwards we state equivalent definitions of monads, modules and their morphisms in the style of Manes, 
emphasizing their \emph{substitution} structure.

Then we recall Altenkirch et al.'s  definition of \emph{relative monads} and define suitable morphisms for such monads.

Finally we define \emph{modules over relative monads} and show that the constructions of modules over monads
carry over to modules over relative monads.

 \item [\Compref{sec:compilation}]
     We present two initiality theorems for simple type systems:

In  \autoref{sec:sts_ju} we present Zsid\'o's initiality theorem \cite[Chap.\ 6]{ju_phd}:
it characterizes the syntax associated to a simply--typed signature $S$ over a set $T$ of object types
as the initial object in a category of representations of $S$.

In \autoref{sec:ext_zsido} we prove a variant of Zsid\'o's theorem which allows 
for representations of a term signature over varying sets of sorts.
We introduce the notion of \emph{typed signature} in order to account for translations of sorts.
A typed signature $(S, \Sigma)$ is a pair consisting of a first--order \emph{algebraic signature} $S$ for sorts, 
and a higher--order signature $\Sigma$ for terms over those sorts.
A representation of a typed signature $(S,\Sigma)$ is again a pair given by 
a representation of the sort signature $S$ in a set $T$ 
and a representation of the term signature $\Sigma$ in a monad $P$ over the category $\TS{T}$.
We show that the category of representations of a typed signature has an initial object.

Finally, as an example, we use the iteration principle stemming from initiality in order to specify a \emph{double negation
translation} from classical to intuitionistic propositional logic, viewing propositions as types via the
Curry--Howard isomorphism.

 \item [\Compref{sec:prop_arities}]

 We prove an initiality theorem for untyped languages with variable binding, equipped with reduction rules.

For the specification of such languages,  
we define a notion of \emph{2--signature}, i.e.\ a signature consisting of two levels: 
a \emph{syntactic} level --- called \emph{1}--signature ---, 
which specifies the terms of the language, 
and a \emph{semantic} level, which specifies reduction rules for those terms through \emph{inequations}.
A representation of such a \emph{2--signature} $(\Sigma,A)$ is any representation of the underlying 1--signature $\Sigma$ which 
satisfies each inequation of $A$.

We define the category of representations of $(\Sigma,A)$ as the full subcategory 
of representations of $\Sigma$ whose objects satisfy the inequations of $A$.
We prove that this subcategory has an initial object, integrating the terms 
generated by $\Sigma$ and the reduction relation generated by the rules of $A$.

As a running example we consider the 2--signature of the untyped lambda calculus with beta reduction.

The implementation of the theorem in \textsf{Coq} is explained in \autoref{chap:2--signatures_formal}.

 \item [\Compref{chap:comp_types_sem}]
    We prove the main result of this thesis: 
 we generalize the initiality result from the preceding \autoref{sec:prop_arities} 
   to \emph{simply--typed} syntax with reduction rules,
        in a way that allows for change of object types as in 
\autoref{sec:ext_zsido}. %

    More precisely,
    we generalize the definition of \emph{2--signature} to allow for the underlying 1--signature to specify 
     a simple type system as in \autoref{sec:ext_zsido}. Accordingly, the definition of inequation is 
    extended to allow for the specification of reduction rules on such simple type systems.
    The main theorem of this chapter states that the category of representations of such a 2--signature has an initial object.
   This initial representation integrates the types and terms specified by the underlying 1--signature, and is equipped with the reduction relation
    generated by the inequations of the 2--signature. 

  \item [\Compref{chap:cats_in_coq}]
         This chapter serves as an introduction to the proof assistant \textsf{Coq} in general and our library of 
         category theory used in the following chapters in particular.
         We describe the formalization of basic concepts such as categories, (relative) monads and modules over (relative) monads.
         In the course of the chapter we also describe some of the features of \textsf{Coq} that we use, such as \emph{implicit} arguments,
        the \lstinline!Program! framework and coercions.

  \item [\Compref{chap:sts_formal}]
	Building up on the library presented in \autoref{chap:cats_in_coq}, we describe the formalization
         of Zsid\'o's initiality theorem from \autoref{sec:sts_ju} in \textsf{Coq}.
        At first we define a \textsf{Coq} data type of simply--typed signatures over a given object type $T$. 
        Afterwards we 
       associate a category of representations to any such signature and prove that this category has 
       an initial object.

  \item [\Compref{chap:2--signatures_formal}] 
        We describe the implementation in \textsf{Coq} of the theorem proved informally in \autoref{sec:prop_arities}:
        the category of representations of a 2--signature has an initial object.
        The formal proof follows the informal proof very closely; the only noteworthy difference is 
        that the initial object of the underlying 1--signature is constructed directly rather than 
        through the adjunction proved in \autoref{sec:prop_arities}.

       Finally we demonstrate how to specify the untyped lambda calculus with beta reduction through
       a 2--signature in our implementation.

  \item [\Compref{chap:comp_sem_formal}]	
        We formalize in \textsf{Coq} an instance of the main theorem of the thesis (cf.\ \autoref{chap:comp_types_sem}), 
         for the 2--signature of \PCF, equipped with reduction rules as presented in \autoref{eq:pcf_reductions}.
	In particular, we explain where we encounter difficulties when using intrinsic typing in an intensional
         type system.
       
        By representing the signature of \PCF~in the monad of the untyped lambda calculus, we obtain a translation
        from $\PCF$ to $\ULC$ that is compatible with reductions in the source and target languages.

\end{description}

\section{Related Work} \label{sec:rel_work}

In this section we review related work, in particular in the field of 
\emph{Initial Semantics} (cf.\ \autoref{subsec:rel_work_initial}), i.e.\ algebraic 
characterization of syntax (and their semantics) and in the field of formalization of syntax in proof 
assistants, cf.\ \autoref{subsec:rel_work_formal}.

\subsection{Translations from \texorpdfstring{\PCF}{PCF}}
Our main example is given by the programming language \PCF, introduced by Plotkin \cite{Plotkin1977223}.
This language and its various semantics have been studied extensively. 
The following work is not concerned with algebraic characterization of programming languages, and thus not
directly related to this thesis; it rather answers questions that we do not (yet) consider in 
 our categorical setting:

Phoa \cite{DBLP:conf/lics/Phoa93} studies the semantic aspect of a specific translation of \PCF~to the untyped
lambda calculus, i.e.\ the behaviour of this translation and its compatibility with respect to reduction in the source and target language.
The translation he considers is also the one we specify via initiality in \autoref{chap:comp_sem_formal}.
The main result of this work is that this translation is \emph{adequate} in the sense that a \PCF~programme 
reduces to a natural number constant $n$ of \PCF~if and only if its translation into the lambda calculus 
reduces to the corresponding 
church numeral $c_n$.

Riecke \cite{DBLP:journals/mscs/Riecke93} studies translations from \PCF~into itself, where source and target 
are equipped with different \emph{reduction strategies} (cf.\ \autoref{dig:on_reduction_strategies}).
We do not consider reduction strategies in this thesis.

\subsection{Initial Semantics}\label{subsec:rel_work_initial}

We classify work in Initial Semantics according to the features it covers.
We are interested, in no particular order, in the following features:

\begin{packitem}
 \item Typing
 \item Variable binding
 \item Semantics through (in)equations
\end{packitem}


\noindent
Initial Semantics for untyped syntax without variable binding is a result by Birkhoff \cite{birkhoff1935}.
Goguen et al.\ \cite{gtww} give an overview of the literature about initial algebra and 
spell out explicitly the connection between initial algebras and abstract syntax.
In fact, Goguen et al.\ also treat the example of a programming language with variable binding, 
which they call ``\emph{S}imple \emph{A}pplicative \emph{L}anguage'' (SAL).
However, they circumvent the algebraic treatment of variable binding by modelling binding
through a family of unary constructors $\abs_x : exp \to exp$ where $x$ varies over a fixed set of variables.

\subsubsection{Variable binding}

When looking for an algebraic treatment of variable binding,
the question of how to model binding arises.
Some possible encodings have already been mentioned in \autoref{subsec:adding_variables},
we repeat the list --- in no particular order --- for reasons of convenience:
 
\begin{packenum}
  \item  \label{list:nominal} 
        Nominal syntax using \emph{atom} abstraction:
           \[  \lambda : [\mathbb{A}] T \to T\]
  \item \label{list:hoas} 
      Higher--Order Abstract Syntax (HOAS):
           \[ \lambda : (T \to T) \to T\] and its \emph{weak} variant:
           \[ \lambda : (\mathbb{A} \to T) \to T\] 
  \item \label{list:brujin}
        Nested Data Types: %
           \[ \lambda: T(X + 1) \to T(X) \]
 \end{packenum}
In the following, the numbers in parentheses indicate the technique used for modelling variable binding
in the respective work, according to the list given above.
Initial Semantics for untyped syntax was presented by  
Gabbay and Pitts \cite[(\ref{list:nominal})]{gabbay_pitts99},
Hofmann \cite[(\ref{list:hoas})]{hofmann}, Fiore et al.\ \cite[(\ref{list:brujin})]{fpt} and
Hirschowitz and Maggesi \cite[(\ref{list:brujin})]{DBLP:conf/wollic/HirschowitzM07}. 

While Gabbay and Pitts work in a \emph{set theory} enriched with \emph{atoms} --- which serve as object level variables ---,
Hofmann, Fiore et al.\ and Hirschowitz and Maggesi use \emph{category--theoretic} notions to formalize syntax. 
The nominal approach initiated by Gabbay and Pitts is the only one among those mentioned that allows for a study 
of alpha conversion. For all others the notion of alpha convertibility and syntactic equality coincide.

Fiore et al.'s approach is based on the notion of signature functor and $\Sigma$--monoid, where the central concept
of substitution is expressed in terms of strengths.
Hirschowitz and Maggesi model substitution through monads, following Altenkirch and Reus' (cf.\ \cite{alt_reus}) characterization
of the untyped lambda calculus as a monad on the category of sets.
The connection between those two approaches is made precise in Zsid\'o's PhD thesis \cite{ju_phd} in 
form of adjunctions between the respective categories of models.

Later Gabbay and Hofmann \cite{DBLP:conf/lpar/GabbayH08} exhibit the relation between nominal techniques and presheaves,
showing that through the nominal approach one considers in fact presheaves $F$ that 
preserve pullbacks of monomorphisms, i.e. 
presheaves that are stable under intersection, $F(X \cap Y) = FX \cap FY$.

Fiore et al.'s approach was extended by Fiore \cite{fio02} to the simply--typed lambda calculus,
and for general simply--typed syntax by Miculan and Scagnetto \cite[(\ref{list:hoas})]{DBLP:conf/ppdp/MiculanS03}.
Both use an encoding of binding via nested data types. 
The relation to Higher--Order Abstract Syntax --- as ``terms with holes'' --- 
is made precise in the latter work \cite[Proposition 1]{DBLP:conf/ppdp/MiculanS03}.
Hirschowitz and Maggesi's approach was generalized to simply--typed syntax in Zsid\'o's thesis \cite{ju_phd}.
It was also generalized to account for more general term formers such as explicit flattening 
$\mu : \comp{T}{T}\to T$ \cite{hirschowitz_maggesi_fics2012}.

Some of the mentioned lines of work have been extended to integrate 
\emph{semantic aspects} in form of reduction relations on terms into initiality results:

\subsubsection{Incorporating Semantics}

Ghani and L\"uth \cite{DBLP:journals/njc/GhaniL03} 
present rewriting for algebraic theories without variable binding;
they characterize 
equational theories (with a \emph{symmetry} rule) resp.\ rewrite systems (with \emph{reflexivity} and \emph{transitivity} rule, but without \emph{symmetry})
  as \emph{coequalizers} resp.\ \emph{coinserters} in a category of monads on
the categories $\Set$ resp.\ $\PO$.

Fiore and Hur \cite{DBLP:conf/icalp/FioreH07} have extended Fiore's work to ``second--order universal algebras'',
thus integrating semantic aspects in form of \emph{equations} into initiality results.
In particular, Hur's thesis \cite{hur_phd} is dedicated to \emph{equational systems} for syntax with variable binding.
In a ``Further research'' section \cite[Chap.\ 9.3]{hur_phd}, Hur suggests the use of preorders, or more generally, 
arbitrary relations to model \emph{in}equational systems.

Hirschowitz and Maggesi \cite{DBLP:conf/wollic/HirschowitzM07} prove initiality of the set of lambda terms modulo beta and eta conversion
in a category of \emph{exponential monads}.
In an unpublished paper \cite{journals/corr/abs-0704-2900} they introduce the notion of \emph{half--equation} and 
\emph{equation} --- as a pair of parallel half--equations --- that we adopt in this thesis. However, we reinterpret
a pair of parallel half--equations as an \emph{in}equation rather than as an equation.
Accordingly, we use preorders to model semantic aspects of syntax.
This emphasizes the \emph{dynamic} viewpoint of reductions as \emph{directed} equalities --- or \emph{rewrite rules} --- rather than
the \emph{static}, mathematical viewpoint one obtains by considering symmetric relations.

However, we consider not (traditional) monads but instead \emph{relative} monads 
--- on the appropriate functor $\Delta: \Set \to \PO$ (cf.\ \autoref{def:delta}) ---
as defined 
by Altenkirch et al.\ \cite{DBLP:conf/fossacs/AltenkirchCU10}, 
that is, monads with different source and target categories:
we consider \emph{variables} as elements of unstructured sets, whereas the set of \emph{terms} of a language carries
structure in form of a reduction relation. In our approach variables and terms thus live in \emph{different} 
categories, which is realized mathematically through the use of relative monads instead of regular monads.

T.\ Hirschowitz \cite{HIRSCHOWITZ:2010:HAL-00540205:2} defines a category \textsf{Sig} 
of 2--signatures for \emph{simply--typed} syntax with reductions, and constructs an adjunction
between \textsf{Sig} and the category $\mathsf{2CCCat}$
of small cartesian closed 2--categories. He thus associates, to any 2--signature, a 2--category of 
types and terms satisfying a universal property. 
His approach differs from ours in the way in which variable binding is modelled:
Hirschowitz encodes binding in a Higher--Order Abstract Syntax (HOAS) style through exponentials.
 Reduction relations are expressed by the existence of 2--cells.

\subsection{Formalization of Syntax}\label{subsec:rel_work_formal}

The implementation and formalization of syntax has been studied by a variety of people. 
The \popl challenge \cite{poplmark} is a benchmark which aims to evaluate readability and provability when using different techniques of variable binding. 
However, the benchmark only concerns \emph{one specific} language, not arbitrary syntax specified by a signature.
The technique we use, called \emph{Nested Abstract Syntax}, is used in a partial solution by Hirschowitz and Maggesi \cite{NAS}, 
but was proposed earlier by others, see e.g.\ \cite{BirdMeertens98:Nested, alt_reus}. 
The use of \emph{intrinsic typing} by dependent types of the meta--language was advertised in \cite{dep_syn}.

During our work we became aware of Capretta and Felty's framework for reasoning about programming languages \cite{HOAU}.
They implement a tool --- also in the \textsf{Coq} proof assistant --- which, given a signature, provides 
the associated abstract syntax as a data type dependent on the object types, hence intrinsically typed as well. 
Their data type of terms does not, however, depend on the set of free variables of those terms. Variables are encoded with de Bruijn indices.
There are two different constructors for free and bound variables which serve to control the binding behaviour of object level constructors.
In our theorem, there is only one constructor for (free) variables, and binding a variable is done by
removing it from the set of free variables. 
Capretta and Felty then add a layer to translate those terms into syntax using named abstraction, and provide suitable induction and recursion principles. 
However, they do not consider \emph{semantic} aspects, such as reduction rules, in their work.

The tool \textsf{Ott} \cite{DBLP:journals/jfp/SewellNOPRSS10} allows the specification of syntax and reduction rules, even for 
polymorphic type systems, in a system--independent ASCII file with subsequent translation into
several different formal systems, including \textsf{Coq}, \textsf{Isabelle} \cite{DBLP:conf/cade/Paulson88} and others. 
However, no algebraic characterization of the produced syntax is given.

\subsection{Published Work}

This thesis is partly based on the following articles:

\begin{description}\setkomafont{descriptionlabel}{\normalfont}
 \item [\textit{Initial Semantics for higher--order typed syntax in Coq} (with J.\ Zsid\'o) \cite{ahrens_zsido}] ~ \\
        The content of this article corresponds to the contents of \autoref{sec:sts_ju} and \autoref{chap:sts_formal}.
 \item [\textit{Extended Initiality for Typed Abstract Syntax} \cite{ahrens_ext_init}] ~ \\
        The content of this article corresponds to the contents of \autoref{sec:ext_zsido} and \autoref{ex:logic_trans}.
 \item [\textit{Modules over relative monads for syntax and semantics} \cite{ahrens_relmonads}] ~ \\
        The content of this article corresponds to the contents of \autoref{sec:prop_arities} and \autoref{chap:2--signatures_formal}.
\end{description}

\part{Theory}\label{part:theory}

\chapter{Category--Theoretic Constructions}\label{chap:cat}

In this chapter, we first present some basic category--theoretic definitions (cf.\ \autoref{sec:categories_functors}).
Afterwards, we review two different definitions of monads and modules over monads (cf.\ \autorefs{mon_mod} and \ref{sec:alternative_defs_monad}).
Finally, we present relative monads and define colax morphisms of relative monads as well as modules over relative monads (cf.\ \autoref{sec:rel_monads}).

\section{Categories, Functors \& Transformations}\label{sec:categories_functors}

In order to fix notations, we state some basic definitions of category theory,
in particular those of category, functor and natural transformation.
The examples we give in this section are used in later chapters.
The reader might want to skip this section --- throughout the thesis we link back to the definitions and examples
where necessary.

The present section is not meant to constitute an introduction to category theory,
nor does it define \emph{all} of the concepts we use in the course of this work.
For both an introduction to category theory as well as a reference for notions whose definitions are
not given in this thesis, we refer to Mac Lane's book \cite{maclane}.

\subsection{Two Definitions of Categories}

\begin{definition}[Category, \autoref{lst:cat}]\label{def:category}
A \emph{category} $\C$ is given by
\begin{packitem}
 \item a class --- which we will also call $\C$ --- of \emph{objects},
 \item for any two objects $c$ and $d$ of $\C$, a class of \emph{morphisms}, written $\C(c,d)$,
 \item for any object $c$ of $\C$, a morphism $\id_c \in \C(c,c)$ and
 \item for any three objects $c, d, e$ of $\C$, a \emph{composition} operation
         \[ (\comp{\_}{\_})_{c,d,e} : \C(d,e) \times \C(c,d) \to \C(c,e) \]
\end{packitem}
such that the composition is associative and the morphisms of the form $\id_c$ for suitable objects $c$
are left and right neutral with respect to this composition%
\footnote{We omit the ``object'' parameters from the composition operation, since those are deducible from 
  the morphisms we compose. This omission is done in our library as well, via \emph{implicit arguments} (cf.\ \autoref{sec:alg_structure}).}:
\begin{align*}
  &\forall a~b~c~d : \C, \forall f : \C(a,b), g : \C(b,c), h : \C(d,e), \enspace \comp{f}{(\comp{g}{h})} = \comp{(\comp{f}{g})}{h} \\
  &\forall c~d : \C, \forall f : \C(c,d), \enspace \comp{f}{\id_d} = f \text{ and } \comp{\id_c}{f} = f \enspace .
\end{align*}
  We also write $f:c\to d$ for a morphism $f\in\C(c,d)$.
\end{definition}

\begin{remark}
 We omit a fifth condition stating that the classes of morphisms are pointwise disjoint. This condition
 is automatically satisfied when implementing the morphisms of a category as a dependent type
 of an intensional type theory, which we do in \autoref{chap:cats_in_coq}.
\end{remark}

\begin{remark}[(Equivalent Def.\ of Category)]\label{rem:def_cat_alt}
 Equivalently to \autoref{def:category}, a category $\C$ is given by
  \begin{itemize}
   \item a class $\C_0$ of objects and a class $\C_1$ of morphisms,
   \item two maps denoting the source and target object of any morphism,
       \[ \src, \tgt : \C_1 \to \C_0 \enspace , \]
   \item a partially defined composition function 
         \[ (\comp{\_}{\_}) : \C_1 \times \C_1 \to \C_1 \enspace , \]
     such that $\comp{f}{g}$ is defined only for \emph{composable morphisms} $f$ and $g$, 
     i.e.\ for morphisms $f$ and $g$ such that $\tgt(f) = \src(g)$ --- in which case we require that
            $\src(\comp{f}{g}) = \src(f)$ and $\tgt(\comp{f}{g}) = \tgt(g)$ ---,
   \item an identity morphism for each object, i.e.\ a map
          \[ \id : \C_0 \to \C_1 \enspace , \]
          such that $\src(\id(c)) = \tgt(\id(c)) = c$ 
 and 
    \item properties analogous to those of the preceding definition. 
       The associative law, e.g., reads as
       \[ \forall f~g~h : \C_1, \enspace \tgt(f) = \src(g) \Longrightarrow \tgt(g) = \src(f) \Longrightarrow
                          \comp{f}{(\comp{g}{h})} = \comp{(\comp{f}{g})}{h} \enspace .
       \]
  \end{itemize}
\end{remark}

\noindent
 While the two definitions of categories of \autoref{def:category} and of \autoref{rem:def_cat_alt} are equivalent,
  they both have some advantages and inconveniences when implementing them in a dependent type theory such as \textsf{Coq}.
We expand on these differences in \autoref{subsec:dep_hom_types}.

\begin{definition}\label{def:SET}
  The category $\Set$ has sets as objects. Morphisms from a set $A$ to a set $B$
  are the total maps from $A$ to $B$, together with 
  the usual composition of maps.

\end{definition}

Given a category $\C$, a morphism $f : c \to d$ from object $c$ to object $d$ is called \emph{invertible},
if there exists a left-- and right--inverse $g : d \to c$, that is, a morphism $g : d \to c$
such that $\comp{f}{g} = id_c$ and $\comp{g}{f} = id_d$.
In this case the objects $c$ and $d$ are called \emph{isomorphic}.

The following \emph{universal property} plays a central r\^ole in this thesis:
\begin{definition} \label{def:init_object}
  Let $\C$ be a category. The object $c$ of $\C$ is called \emph{initial} if there exists 
  precisely \emph{one} morphism $i_d : c \to d$ in $\C$ \emph{to} any object $d$ of $\C$.
\end{definition}

Any two initial objects of a category $\C$ are canonically isomorphic. We usually do not
distinguish canonically isomorphic objects of a category, which
explains  the (standard) use of the definite article.
Whenever it exists, we also write $0_\C$ --- or simply $0$, when the category in question can be deduced from the 
context --- for the initial object of $\C$.
The dual concept is that of a \emph{terminal} object:
\begin{definition} \label{def:term_object}
  Let $\C$ be a category. The object $d$ of $\C$ is called \emph{terminal} if there exists 
  precisely \emph{one} morphism $t_c : c \to d$ in $\C$ \emph{from} any object $c$ of $\C$.
\end{definition}

\begin{example}
 The empty set is initial in the category $\Set$ of sets. The singleton set is terminal in $\Set$.
\end{example}

Later we also use the following categories:

\begin{definition}\label{def:cat_PO}
 The category $\PO$ of preorders has, as objects, sets equipped with a preorder, and, as morphisms between any two 
 preorders $A$ and $B$, the monotone functions from $A$ to $B$.
\end{definition}

\begin{definition}\label{def:cat_wPO}
 The category $\PS$ has, as objects, sets equipped with a preorder, and, as morphisms between any two preordered 
  sets $A$ and $B$, all set--theoretic maps from $A$ to $B$, not necessarily monotone.
\end{definition}

\begin{example}\label{ex:cat_discrete}
  Any set $T$ can be regarded as a \emph{discrete} category, with objects the elements of $T$, and just identity morphisms.
\end{example}

\begin{notation}[Product, Coproduct]
 We refer to Mac Lane's book \cite{maclane} for the definition of product and coproduct.
 Whenever they exist, we write $a \times b$ for the product of objects $a$ and $b$ of $\C$, and 
  $a + b$ for the coproduct.
 Notation for arrows is informally explained in the following diagrams:
 \[
  \begin{xy}
   \xymatrix @!=2.5pc {
              {}  & a \ar@{.>}[d]|{(f,g)} \ar[dl]_{f} \ar[dr]^{g} & {}  & a \times b\ar[d]|{f\times k} &  
                                      a \ar[r]^-{\inl{}} \ar[dr]_{f}& a + b \ar@{.>}[d]|{[f,h]}& \ar[l]_-{\inr{}} b \ar[dl]^{h} & a + b \ar[d]|{f+k}\\
                c & c\times d \ar[l]^-{\pi_1}\ar[r]_-{\pi_2}         & d   & c \times d &  {}  & c & {} & c + d 
   }
  \end{xy}
 \]
\end{notation}

\subsection{Functors \& Natural Transformations}

Given two categories $\C$ and $\D$, a functor $F:\C\to\D$ maps objects of $\C$ to objects of $\D$, and 
morphisms of $\C$ to morphisms of $\D$, while preserving source and target as well as composition and identity:

\begin{definition} A functor $F$ from $\C$ to $\D$ is given by
 \begin{packitem}
  \item a map $F : \C \to \D$ on the objects of the categories involved and
  \item for any pair of objects $(c,d)$ of $\C$, a map
     \[ F_{(c,d)} : \C(c,d) \to \D(Fc, Fd) \enspace , \]
 \end{packitem}
 such that
  \begin{packitem}
   \item $\forall c : C, \enspace F(id_c) = id_{Fc}$ and
   \item $\forall c~d~e : C, \forall f:c\to d, \forall g:d\to e, \enspace F(\comp{f}{g}) = \comp{Ff}{Fg}$.
  \end{packitem}
Here we use the same notation for the map on objects and that on morphisms. For the latter we also omit the 
subscript ``$(c,d)$'' as \emph{implicit} arguments. 
\end{definition}

\begin{definition}[Functor $\Delta:\Set\to\PO$ and Forgetful Functor]\label{def:delta}
 We call $\Delta : \SET\to\PO$ the functor from sets to preordered sets which associates  
 to each set $X$ the set itself together with the smallest preorder, i.e.\ the diagonal of $X$,
\[ \Delta(X) := (X,\delta_X). \] 
 In other words, for any $x,y\in X$ we have $x\delta_X y$ if and only if $x = y$.
 The functor $\Delta:\Set\to \PO$ is a \emph{full embedding}, i.e.\ it is fully faithful and injective on objects.

 In the other direction we have a \emph{forgetful} functor $U:\PO\to\Set$ which maps any preordered set $(X, \leq)$
 to the set $X$. 
 We have $\comp{\Delta}{U} = \Id_{\Set}$.
\end{definition}

\begin{definition}[Natural Transformation]
  Given two functors $F, G : \C \to \D$, a natural transformation $\gamma : F \to G$ (also written $\gamma:F \Rightarrow G$) 
 is given by a family of morphisms
  \[ \gamma_c : \D(Fc, Gc) \]
  indexed by objects of $\C$
  such that, for any morphism $f : c \to d$ in $\C$, the following diagram commutes:
 \[
   \begin{xy}
    \xymatrix{  
                **[l]Fc  \ar[r]^{\gamma_c} \ar[d]_{Ff} & **[r]Gc \ar[d]^{Gf} \\
                **[l]Fd \ar[r]_{\gamma_d} & **[r]Gd
}
   \end{xy}
 \]

\end{definition}

\begin{definition}[Adjunction]
 Let $\C$ and $\D$ be categories. An adjunction from $\C$ to $\D$ is given by 
  \begin{packitem}
   \item a functor $F : \C \to \D$,
   \item a functor $G : \D \to \C$,
   \item a natural transformation $\eta : \Id_\C \to \comp{F}{G}$, called \emph{unit}, and
   \item a natural transformation $\epsilon : \comp{G}{F} \to \Id_\D$, called \emph{counit},
  \end{packitem}
  such that the transformations
 \[ G \stackrel{\eta G}{\longrightarrow} GFG \stackrel{G\epsilon}{\longrightarrow} G \enspace , \quad 
    F \stackrel{F\eta}{\longrightarrow} FGF \stackrel{\epsilon F}{\longrightarrow} F \]
 both are the identity transformation.
 We write $F \dashv G$ for such an adjunction, leaving the unit and counit implicit.
\end{definition}
\begin{remark}
  The functors $F$ and $G$ as above are adjoint if and only there is a family of  bijections
  \[ \varphi = \bigl(\varphi_{c,d} : \D(Fc,d) \cong \C(c,Gd)\bigr) \]
  indexed by objects $c,d \in \C$,
  which is natural in both $c$ and $d$.
  
\end{remark}

\begin{definition}[Coreflection]\label{def:coreflection}
 Let $F : \C \to \D$ be an embedding, that is, a faithful functor which is injective on objects --- 
e.g., the inclusion of a subcategory.
 Then $F$ is a \emph{coreflection} if it has a right adjoint.
\end{definition}

The following lemma gives an example of a coreflection:

\begin{lemma}\label{lem:adj_set_po}
 The forgetful functor $U : \PO\to\Set$ 
  is right adjoint to the diagonal functor $\Delta:\Set\to\PO$:
\begin{equation*}   
       \begin{xy}
        \xymatrix @C=4pc {
                  **[l]\Set \rtwocell<5>_U^{\Delta}{'\bot} & **[r]\PO
}
       \end{xy} \enspace ,
\end{equation*}
 that is, the embedding $\Delta:\SET\to\PO$ is a coreflection.
We denote by $\varphi$ the family of isomorphisms
   \[   \varphi_{X,Y} : \PO(\Delta X, Y) \cong \Set(X,UY) \enspace . \]
We omit the indices of $\varphi$ whenever they can be deduced from the context.
\end{lemma}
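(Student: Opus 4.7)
The plan is to exhibit the bijection $\varphi_{X,Y}$ explicitly and observe that it is just the identity after transporting across $U\Delta = \Id_{\Set}$, using the crucial fact that the preorder on $\Delta X$ is the diagonal, so every set map out of $\Delta X$ is automatically monotone.

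First I would define the forward direction: given a monotone map $f : \Delta X \to Y$ in $\PO$, set $\varphi_{X,Y}(f) := U f : X \to U Y$, where we identify $U \Delta X = X$ (this holds on the nose, as noted in \autoref{def:delta}). For the inverse direction, given a set map $g : X \to U Y$, I regard it as a map $\tilde g : \Delta X \to Y$ at the level of underlying sets; to see that $\tilde g$ is a morphism in $\PO$, note that whenever $x \leq_{\Delta X} x'$ we have $x = x'$ by definition of $\delta_X$, hence $\tilde g(x) = \tilde g(x') \leq_Y \tilde g(x')$ trivially. Set $\psi_{X,Y}(g) := \tilde g$.

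The composites $\varphi \circ \psi$ and $\psi \circ \varphi$ are both the identity because in each case the two maps agree on underlying sets, and morphisms in both $\Set(X, UY)$ and $\PO(\Delta X, Y)$ are determined by their action on elements (the extra monotonicity datum on the $\PO$ side is vacuous here, as just observed). Thus each $\varphi_{X,Y}$ is a bijection.

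For naturality, given $h : X' \to X$ in $\Set$ and $k : Y \to Y'$ in $\PO$, I would check that
\[ \varphi_{X', Y'}\bigl(k \circ f \circ \Delta h\bigr) = U k \circ \varphi_{X,Y}(f) \circ h \]
for all $f : \Delta X \to Y$; this reduces to the equality $U(k \circ f \circ \Delta h) = U k \circ U f \circ U \Delta h$, which is immediate from functoriality of $U$ and the identity $U \Delta h = h$. Finally, since $\Delta$ is faithful and injective on objects by \autoref{def:delta}, the existence of the right adjoint $U$ makes $\Delta$ a coreflection in the sense of \autoref{def:coreflection}. The only mildly subtle point is the implicit identification $U \Delta = \Id_{\Set}$, which makes the bijection look like an equality rather than a genuine isomorphism; I do not anticipate any real obstacle.
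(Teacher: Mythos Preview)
Your proof is correct. You establish the adjunction via the natural hom-set bijection $\varphi_{X,Y}$, using the key observation that any set map out of $\Delta X$ is automatically monotone because the preorder on $\Delta X$ is the diagonal. The paper instead takes the unit--counit route: it gives $\eta_X := \id_X : X \to U\Delta X$ and $\epsilon_Y : \Delta U Y \to Y$ with underlying map $\id_{UY}$ (monotone for exactly the reason you isolate), and leaves the triangle identities implicit. The two approaches are of course equivalent; your version has the small advantage of making the bijection $\varphi$ explicit, which is what the rest of the paper actually uses, while the paper's version is terser. Nothing substantive separates them here since both unit and bijection are identities on underlying data.
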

\begin{proof}
  The unit is given by a family of identity maps $\eta_X := \id_X: \Set(X, U\Delta X)$.
  The counit is given by a family of maps $\epsilon_Y : \PO(\Delta U Y, Y)$ whose carrier
  map on $UY$ is the identity map on $UY$. 
\end{proof}

We later use the following result about left adjoints:

\begin{lemma}[Left adjoints are cocontinuous]\label{lem:left_adj_cocont}
 Left adjoints are cocontinuous, i.e.\ commute with colimits. In particular, the image of an initial object under a left adjoint is
 initial. 
\end{lemma}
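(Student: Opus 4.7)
The plan is to deduce both statements from the hom-set characterization of an adjunction. If $F \dashv G$, then for any object $c \in \C$ and $d \in \D$ we have a natural bijection $\varphi_{c,d} : \D(Fc, d) \cong \C(c, Gd)$. The key idea is that colimits in $\C$ are characterized by a universal property phrased entirely in terms of hom-sets, and the adjunction translates hom-sets in $\D$ out of $F(-)$ into hom-sets in $\C$ out of $(-)$, so universality transports across.

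For the general cocontinuity statement, I would proceed as follows. Let $D : \mathcal{J} \to \C$ be a diagram admitting a colimit $(\colim D, \iota)$ with colimit cocone $\iota_j : Dj \to \colim D$. I claim that $(F(\colim D), F\iota)$ is a colimit of $F \circ D$ in $\D$. First I would verify that $F\iota$ is a cocone over $FD$; this is immediate from functoriality of $F$. Then, to check the universal property, I would take any cocone $(d, \tau_j : FDj \to d)$ over $FD$ in $\D$ and transpose it along $\varphi$ to obtain morphisms $\varphi(\tau_j) : Dj \to Gd$ in $\C$. Naturality of $\varphi$ in the first argument, together with the fact that $\tau$ is a cocone, shows that the $\varphi(\tau_j)$ form a cocone over $D$ with apex $Gd$. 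By universality of $\colim D$, there exists a unique $u : \colim D \to Gd$ factoring this cocone through $\iota$. Its adjunct $\varphi^{-1}(u) : F(\colim D) \to d$ is then the unique morphism factoring $\tau$ through $F\iota$; uniqueness follows from the bijectivity of $\varphi$ combined with uniqueness in the universal property of $\colim D$.

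For the particular case of an initial object, I would give a direct, shorter argument that does not go via the empty-diagram colimit (which is cleaner and matches the way the result is used later). Let $0 \in \C$ be initial. For any object $d \in \D$, the adjunction gives $\D(F0, d) \cong \C(0, Gd)$, and the right-hand side is a singleton by initiality of $0$. Hence $\D(F0, d)$ is a singleton for every $d$, which is exactly the defining property of an initial object in $\D$.

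The only genuine obstacle is the bookkeeping in the cocontinuity argument, specifically checking that the transposed family $\varphi(\tau_j)$ really is a cocone and that the transposition preserves the two uniqueness/existence clauses of the universal property. Both facts are consequences of the naturality squares for $\varphi$ in its two arguments; no further ideas are required.
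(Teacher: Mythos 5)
Your proof is correct; the paper itself gives no argument for this lemma but simply defers to Mac Lane (V.5, Theorem 1), and your transposition-of-cocones argument via the natural bijection $\varphi_{c,d} : \D(Fc,d) \cong \C(c,Gd)$ is exactly the standard proof found there. The direct singleton argument $\D(F0,d) \cong \C(0,Gd)$ for the initial-object case is a clean touch and matches how the lemma is actually used in the paper.
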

\noindent
For the proof we refer to Mac Lane's book \cite[V.5.Thm.1]{maclane}.

\subsection{More Examples, Notations}

The following categories and functors will appear in different places throughout the thesis. 
Again, the reader may skip these examples for the moment;
we will point to the definitions from the place where they are used.

\begin{definition}[Category of Families]\label{def:TST}\label{def:TS}
 Let $\C$ be a category and $T$ be a set, i.e. a discrete category (cf.\ \autoref{ex:cat_discrete}).
 We denote by $\family{\C}{T}$ the functor category, an object of which is a $T$--indexed family of objects of $\C$.  
 Given two families $V$ and $W$, a morphism $f : V \to W$ is a family of morphisms in $\C$,
  \[ f : t \mapsto f(t) : V(t) \to W(t) \enspace . \]
We write $V_t := V(t)$ for objects and morphisms.
 Given another category $\D$ and a functor $F : \C\to \D$, we denote by $\family{F}{T}$ the functor
  defined on objects and morphisms as
 \[ \family{F}{T} : \family{\C}{T} \to \family{\D}{T}, \quad f \mapsto \bigl(t \mapsto F (f_t)\bigr) \enspace . \]
\end{definition}

\begin{remark}\label{rem:adj_set_po_typed}
 Given a set $T$, the adjunction of \autoref{lem:adj_set_po} induces an adjunction
\[
   \begin{xy}
        \xymatrix@C=4pc{
                   **[l]\family{\Set}{T}\rtwocell<5>_{\family{U}{T}}^{\family{\Delta}{T}}{'\bot} & **[r]\family{\PO}{T}
}
       \end{xy} \enspace .
\]

\end{remark}

\begin{definition}[Retyping Functor]\label{rem:retyping_adjunction_kan}
Let $T$ and $T'$ be sets and $g:T\to T'$ be a map.
Let $\C$ be a cocomplete category.
The map $g$ induces a functor 
 \[ g^*:\family{\C}{T'} \to \family{\C}{T} \enspace, \quad W \mapsto \comp{g}{W} \enspace . \]
  The \emph{retyping functor associated to $g:T\to T'$},
 \[ \retyping{g}:\family{\C}{T} \to \family{\C}{T'} \enspace,  \]
  is defined as the left Kan extension operation 
  along $g$, that is, we have an adjunction
\begin{equation}
  \begin{xy}
   \xymatrix @C=4pc {
            **[l]\family{\C}{T} \rtwocell<5>_{g^*}^{\retyping{g}}{'\bot} &  **[r]\family{\C}{T'}
}  
  \end{xy} \enspace .
\label{eq:adjunction_retyping}
 \end{equation}
  
\end{definition}

\begin{remark}[Retyping Functor Explicitly, \autoref{code:retype}]\label{def:retyping_functor}

In the context of \autoref{rem:retyping_adjunction_kan}, we define the functor 
\[\retyping{g} : \family{\C}{T} \to \family{\C}{T'} \enspace , \quad
 X = t\mapsto X_t  \quad \mapsto \quad \retyping{g}(X) :=  t' \mapsto \coprod_{\{t ~\mid~ g(t) = t'\}} X_t \enspace .\]
In particular, for any $V \in \family{\C}{T}$ --- considered as a functor --- we have a natural transformation
 \[ V \Rightarrow \comp{g}{\retyping{g}V} : T \to \C  \]
given pointwise by the morphism $V_t \to \coprod_{ \{s | g(s) = g(t) \}} V_s$ in the category $\C$.
Put differently, the map $g : T\to T'$ induces an endofunctor $\bar{g}$ on $\family{\C}{T}$ with object map
 \[ \bar{g}(V) := \comp{g}{\retyping{g}(V)} \]
and we have a natural transformation \lstinline!ctype! --- the unit of the adjunction of \autoref{eq:adjunction_retyping},
 \[  \text{\lstinline!ctype!}:\Id \Rightarrow \bar{g} : \family{\C}{T} \to \family{\C}{T} \enspace . \]

\end{remark}
%
%

\begin{remark}
One can interpret the map $g : T\to T'$ as a translation of object sorts and the 
functor $\retyping{g}$ as a ``retyping functor'' which changes the sorts of contexts and terms (or more generally, 
models of terms) according to the translation of sorts.
The monads we are interested in are monads over some category 
$\TS{T}$ and our monad morphisms are over retyping functors.
In \autoref{sec:compilation} we interpret the syntax of a language $P$ over a set of types $T$ 
as a monad $P$ over the category $\TS{T}$. 
Given another language $Q$ over a set of types $U$, we consider a translation from $P$ to $Q$ 
to be a translation of object types $g : T \to U$ and a 
colax monad morphism $P\to Q$ over the retyping functor $\retyping{g} : \TS{T} \to \TS{U}$ (cf.\ \autoref{def:colax_mon_mor}).
\end{remark}

\begin{remark}[about maps on coproducts and pattern matching]\label{rem:about_coproduct_matching}
 In the proof assistant \textsf{Coq} we implement retyping (cf.\ \autoref{def:retyping_functor}) 
  via an inductive family, cf.\ \autoref{code:retype}.
 In this context, passing from the left to the right in the adjunction isomorphism
 \[ \family{C}{T'}(\retyping{g}V,W) \cong \family{C}{T}(V, g^*W)\]
 is done by precomposing with \emph{pattern matching on the constructor} \lstinline!ctype!, 
 cf.\ \autoref{sec:comp_sem_formal_initiality}.
\end{remark}

\begin{definition}[Pointed index sets]\label{def:cat_indexed_pointed}\label{def:cat_set_pointed}
  Given a category $\C$, a set $T$ and a natural number $n$, we denote by $\family{\C}{T}_n$ the category
  with, as objects, diagrams of the form
   \[ n \stackrel{\vectorletter{t}}{\to} T \stackrel{V}{\to} \C \enspace , \]
  written $(V, t_1, \ldots, t_n)$ with $t_i := \vectorletter{t}(i)$.
  A morphism $h$ to another such $(W,\vectorletter{t}) $
  with the same pointing map $\vectorletter{t}$ is given by a morphism $h : V\to W$ in $\family{\C}{T}$.
  Note that there is are no morphisms between families with different points, that is, 
$\family{\C}{T}_n\left((V,\vectorletter{t}), (V',\vectorletter{t'})\right) = \emptyset$ if $\vectorletter{t} \neq \vectorletter{t'}$.
  Any functor $F : \family{\C}{T} \to \family{\D}{T}$ extends to $F_n : \family{\C}{T}_n \to \family{\D}{T}_n$ via
   \[ F_n (V,t_1,\ldots,t_n) := (FV, t_1,\ldots,t_n) \enspace . \]
\end{definition}

\begin{remark}
 The category $\family{\C}{T}_n$ consists of $T^n$ copies of $\family{\C}{T}$, which do not interact.
 Due to the ``markers'' $(t_1, \ldots, t_n)$ we can act differently on each copy, 
  cf., e.g., \autorefs{def:derived_mod_II} and \ref{def:fibre_mod_II}.
 The reason why we consider categories of this form is explained at the beginning of \autoref{sec:ext_zsido} 
 and in \autoref{rem:family_of_mods_cong_pointed_mod}.
\end{remark}

Retyping functors generalize to categories with pointed indexing sets;
when changing types according to a map of types $g:T\to U$, the markers must be adapted as well:

\begin{definition}\label{def:retyping_functor_pointed}
Given a map of sets $g:T\to U$, by postcomposing the pointing map with $g$, the retyping functor generalizes to the functor
 \[ \retyping{g}(n) : \family{\C}{T}_n \to \family{\C}{U}_n \enspace , \quad (V, \vectorletter{t}) \mapsto \bigl(\retyping{g} V, g_*(\vectorletter{t})\bigr) \enspace ,  \] 
 where $g_*(\vectorletter{t}) := \comp{\vectorletter{t}}{g} : n\to U$.
\end{definition}

Finally there is also a category where families of objects of $\C$ over different indexing sets are mixed together:

\begin{definition}\label{def:cat_TEns}
 Given a category $\C$, we denote by $\T\C$ the category where an object is a pair $(T,V)$ of 
a set $T$ and a family $V\in \family{\C}{T}$ of objects of $\C$ indexed by $T$.
   A morphism $(g,h)$ to another such $(T',W)$ is given by a map $g : T\to T'$ and a 
  morphism $h : V\to\comp{g}{W}$ in $\family{\C}{T}$, that is, a
  family of morphisms in $\C$, indexed by $T$,
  \[ h_t : V_t \to W_{g(t)} \enspace . \]

\noindent
Suppose $\C$ has an initial object, denoted by $0_{\C}$. 
Given $n\in \mathbb{N}$, we call $\hat{n} = (n, k\mapsto 0_\C)$ the object of $\T\C$ that associates to any 
$1\leq k \leq n$ the initial object of $\C$.
 We call $\T\C_n$ the slice category $\hat{n} \downarrow \T\C$.
An object of this category consists of an object $(T,V) \in\T\C$ whose indexing set ``of types'' $T$ is 
pointed $n$ times, written $(T,V,\vectorletter{t})$, where $\vectorletter{t}$ is a vector of
elements of $T$ of length $n$. A morphism $(g,h) :(T,V,\vectorletter{t}) \to (T',V',\vectorletter{t'})$
is a morphism $(g,h) :(T,V) \to (T',V')$ as above, such that $\vectorletter{t'} = \comp{g}{\vectorletter{t}}$.

We call 
$\T U_n : \T\C_n \to \Set$
the forgetful functor associating to any pointed family $(T,V, t_1,\ldots,t_n)$ the indexing set $T$.
Note that for a fixed set $T$, the category $\family{\C}{T}_n$ (cf.\ \autoref{def:cat_indexed_pointed}) 
is the fibre over $T$ of this functor.
\end{definition}

\begin{remark}[Picking out Sorts]\label{rem:nat_trans_picking_sort}
   Let $1: \T\C_n \to \Set$ denote the constant functor which maps objects to the terminal object
   of the category $\Set$.
   A natural transformation $\tau:1 \to \T U_n$ associates to any object $(T,V,\mathbf{t})$ of the 
   category $\T\C_n$
   an element of $T$. 
   Naturality imposes that $\tau(T',V',\mathbf{t'}) = g \left(\tau(T,V,\mathbf{t})\right)$ for any 
     $(g,h) : (T,V,\mathbf{t}) \to (T',V',\mathbf{t'})$.
\end{remark}
\begin{notation}\label{not:tau_simpl_notation}
 Given a natural transformation $\tau : 1 \to \T U_n$ as in \autoref{rem:nat_trans_picking_sort}, we write
  \[ \tau (T,V,\mathbf{t}) := \tau (T,V,\mathbf{t})(*) \in T \enspace , \]
  i.e.\ we omit the argument $*\in 1_{\Set}$ of the singleton set.
\end{notation}

\begin{example}
 For $ 1 \leq k \leq n$, we denote by $k : 1 \Rightarrow \T U_n : \T\C_n \to \Set$ the natural transformation such that $k(T,V,\vectorletter{t}) := \vectorletter{t}(k)$.
\end{example}

\section{Monads \& Modules}\label{mon_mod}

We state the widely known definition of monad and the less known definition of \emph{module over a monad}, 
together with their respective morphisms.
Modules have been used in the context of Initial Semantics by Hirschowitz and Maggesi \cite{DBLP:conf/wollic/HirschowitzM07, DBLP:journals/iandc/HirschowitzM10}
and Zsid\'o \cite{ju_phd}.
The monad morphisms we are interested in are, more precisely, \emph{colax} monad morphisms, see, e.g., Leinster's book \cite{Leinster2003}. 

\subsection{Definitions}

\begin{definition}[Monad]\label{def:monad_mu} A \emph{monad} $T$ over a category $\C$ is given by 
  \begin{packitem}
   \item a functor $T : \C \to \C$ (which we denote by the same name as the monad),
   \item a natural transformation $\eta : \Id_{\C} \to T$ and 
   \item a natural transformation $\mu : \comp{T}{T}\to T$
  \end{packitem}
 such that the following diagrams commute:
  \begin{equation*}
\begin{xy}
\xymatrix @C=3pc{
T  \ar[r]^{T \eta} \ar[rd]_{\id} 
   & T^2  \ar[d]^{\mu} \ar@{}[ld] \ar@{}[rd] & T \ar[l]_{\eta_{T}} \ar[dl]^{\id} 
   \\
 {} & T  & {}
}
\end{xy} \quad  \qquad
\begin{xy}
 \xymatrix@C=3pc{
  T^3  \ar[r]^{\mu_{T}} \ar[d]_{T\mu} & T^2  \ar[d]^{\mu} \\
  T^2  \ar[r]_{\mu} & T.
}
\end{xy}
%
\end{equation*}
\end{definition}

\begin{example}[List Monad]
 The functor $[\_] : \Set\to\Set$ which to any set $X$ associates the set of finite lists over $X$, 
is equipped with a structure as monad
 by defining $\eta$ and $\mu$ as ``singleton list'' and flattening, respectively:
  \[ \eta_X(x) := [x] \quad\text{ and } \] 
  \[
      \mu_X \left(\bigl[ [x_{1,1}, \ldots , x_{1,m_1}],\ldots,[x_{n,1},\ldots,x_{n, m_n}]\bigr]\right) := 
       [x_{1,1},\ldots,x_{1,m_1},\ldots,x_{n,1},\ldots,x_{n,m_n}] .
  \]
\end{example}

\begin{remark}[Kleisli Operation (Monadic Bind)]\label{rem:def_monad_alt}
 Given a monad $(T,\eta,\mu)$ on the category $\C$, 
 the Kleisli operation $\sigma$ is defined, for any $a,b \in\C$ and $f\in \C(a,Tb)$, by setting
   \begin{align*} 
       \sigma_{a,b} : \C(a,Tb)&\to \C(Ta,Tb) \enspace , \\
                          f &\mapsto  \comp{Tf}{\mu_b} \enspace .
   \end{align*}
 Indeed, a monad $(T,\eta,\mu)$ can equivalently be defined as a triple $(T,\eta,\sigma)$ with an adapted set of axioms,
   see \autoref{def:endomonad}.
  We often leave the object arguments $a$ and $b$ implicit, i.e.\ we write $\sigma(f) := \sigma_{a,b}(f)$.
\end{remark}

\begin{example}[Monadic Syntax, untyped]\label{ex:ulc_monad}
 Syntax as a monad (in form of a Kleisli triple) was presented by Altenkirch and Reus \cite{alt_reus}:
 consider the syntax of the untyped lambda calculus $\LC$ as given in \autoref{ex:ulc_def} in \autoref{subsec:informal_intro}.
 As mentioned there, the map $V\mapsto \LC(V)$ is functorial, its map on morphisms is given by \emph{renaming}
of free variables. 
This functor is equipped with a monad structure
 by defining $\eta$ as variable--as--term operation
 \[ \eta_V(v) := \Var(v) \in \LC(V)\]
and the multiplication $\mu : \comp{\LC}{\LC} \to \LC$ as flattening which, 
given a term of $\LC$ with terms of $\LC(V)$ as variables, 
returns a term of $\LC(V)$ by removing a layer of intermediate $\Var$ constructors.
These definitions turn $(\LC, \eta, \mu)$ into a monad on the category $\Set$. 
The Kleisli operation associated to this monad corresponds to simultaneous substitution \cite{alt_reus}. 

\end{example}

\begin{example}[Monadic Syntax, typed] \label{ex:tlc_syntax_monadic}
 Consider the syntax of the simply--typed lambda calculus as defined in \autoref{ex:slc_def}.
The map \[\SLC : \TS{\TLCTYPE}\to\TS{\TLCTYPE}\enspace, \quad V \mapsto \SLC(V) \enspace , \]
associating to any set family $V$ the family of lambda terms with free variables in $V$,
is the object map of a functor.
Similarly to the untyped lambda calculus (cf.\ \autoref{ex:ulc_monad}), the natural transformations $\eta : \Id\to\SLC$ 
  and $\mu : \comp{\SLC}{\SLC}\to \SLC$ are defined 
  as variable--as--term operation and flattening, respectively.
These definitions turn $(\SLC, \eta, \mu)$ into a monad on the category $\TS{\TLCTYPE}$.
\end{example}

Our definition of \emph{colax} monad morphisms and their \emph{transformations} is taken from Leinster's book \cite{Leinster2003}:

\begin{definition}[Colax Monad Morphism] \label{def:colax_mon_mor}
Let $(T,\eta,\mu)$ be a monad on the category $\C$ and $(T', \eta', \mu')$ be a monad on the category $\D$.
A \emph{colax morphism of monads} $(\C,T) \to (\D,T')$ is given by
\begin{packitem}
 \item a functor $F : \C\to\D$ and 
 \item a natural transformation $\gamma : FT\to T'F$ as in

\[
 \begin{xy}
  \xymatrix@!=2.5pc{\C \ar[r]^{T} \ar[d]_{F} \drtwocell<\omit>{\;\;\gamma}& \C \ar[d]^{F}\\
             \D\ar[r]_{T'} & \D
}
 \end{xy}
\]
\end{packitem}
such that the following diagrams commute:
\[
  \begin{xy}
   \xymatrix{
    FTT \ar[r]^{\gamma T} \ar[d]_{F\mu} & T'FT \ar[r]^{\gamma} & T'T'F \ar[d]^{\mu' F}  \\
    FT \ar[rr]_{\gamma} & {} & T'F,  
  }
  \end{xy}
  \qquad
  \begin{xy}
    \xymatrix{
  F \ar[d]_{F\eta} \ar[dr]^{\eta' F}& {} \\
  FT \ar[r]_{\gamma} & T'F .
  }
  \end{xy}
\]

\end{definition}

From now on we simply say ``monad morphism over $F$'' when speaking about a colax monad morphism with underlying functor $F$. 

\begin{definition}[Composition of Monad Morphisms]
  Suppose given a monad morphism as in \autoref{def:colax_mon_mor}. Given a third monad $(T'', \eta'', \mu'')$ on category $\E$ 
 and a monad morphism $(F', \gamma') : (T',\eta', \mu') \to (T'', \eta'',\mu'')$, we define the composition of 
  $(F, \gamma)$ and $(F', \gamma')$ to be the monad morphism given by the pair consisting of the functor $F'F$ and the transformation
 \[ 
       \begin{xy}
        \xymatrix{ F'FT \ar[r]^{F'\gamma} & F'T'F \ar[r]^{\gamma' F} &  T''F'F \enspace .}
       \end{xy}
 \]
 The verification of the necessary commutativity properties is done --- for the equivalent definition given in \autoref{def:colax_mon_mor_alt} --- in the 
 \textsf{Coq} library, cf.\ \lstinline!colax_Monad_Hom_comp!.
\end{definition}

\begin{definition}[Transformation]\label{def:transformation}
 Given two morphisms of monads 
 \[(F,\gamma), (F', \gamma') : (\C,T)\to (\D,T') \enspace,\] a \emph{transformation} 
  $ (F,\gamma)\to (F', \gamma') $ is given by a natural transformation $\beta : F\Rightarrow F'$ such that
  the following diagram commutes:
\begin{equation*}
 \begin{xy}
  \xymatrix @=3pc{
      FT \ar[r]^{\gamma} \ar[d]_{\beta T} & T'F \ar[d]^{T'\beta} \\
      F'T \ar[r]_{\gamma'} & T'F'.
   }
 \end{xy}
\end{equation*} 
\end{definition}

A \emph{2--category} is a category with ``morphisms between morphisms''. We refer to Mac Lane's book \cite{maclane} 
for the definition.

\begin{definition}[2--Category of Monads, \cite{Leinster2003}]
 We call \Mcol the 2--category an object of which is a pair $(\C,T)$ of a category $\C$ and
 a monad $T$ on $\C$. A morphism to another object $(\D,T')$ is a colax monad morphism $(F,\gamma) : (\C,T)\to(\D,T')$.
 A 2--cell $(F,\gamma) \Rightarrow (F',\gamma')$ is a transformation.
\end{definition}

\begin{notation}
 For any category $\C$, we write $\Id_\C$ for the object $(\C,\Id)$ of \Mcol.
\end{notation}

We are interested in \emph{modules over monads}. These are particular  monad morphisms
whose codomain is the identity monad on some category%
 \footnote{The characterization of modules over monads as particular colax monad morphisms is due to an anonymous referee, 
 whom I hereby thank for his helpful comments.}.
Modules and, more specifically, their morphisms, capture the distributivity of substitution over the 
constructors of a language, cf.\ \autoref{ex:ulc_mod_mor} and \autoref{ex:ulc_mod_mor_kl}.

\begin{definition}[Module over a Monad]\label{def:module}
  Given categories $\C$ and $\D$ and a monad $T$ on $\C$, a \emph{module over $T$ with codomain $\D$} 
   (or \emph{$T$--module towards $\D$})
  is a colax monad morphism $(M,\gamma) : (\C,T) \to (\D,\Id_{\D})$ from $T$ to the identity monad on $\D$.
 Given parallel $T$--modules $M$ and $N$, a \emph{morphism of modules from $M$ to $N$} is a transformation from $M$ to $N$
 as in \autoref{def:transformation}.
 We denote the category of $T$--modules towards $\D$ by
       \[ \Mod{T}{\D} := \Mcol\bigl((\C,T), (\D,\Id)\bigr) \enspace . \] 
 
\end{definition}

\noindent
Before giving some examples of modules over monads, we state a more explicit definition of modules:

\begin{remark}[(Modules and their Morphisms, explicitly \cite{DBLP:conf/wollic/HirschowitzM07})]\label{rem:def_module_hom}
 By unfolding the preceding definition and simplifying, we obtain that a $T$--module towards $\D$ is a functor $M : \C\to\D$ 
 together with a natural transformation $\sigma : MT\to M$ such that the following diagrams commute:
\[
  \begin{xy}
   \xymatrix{
    MTT \ar[r]^{\sigma T} \ar[d]_{M\mu} & MT \ar[d]^{\sigma}     &  M \ar[d]_{M\eta} \ar[dr]^{\id}& {} \\
    MT \ar[r]_{\sigma} &  M,  &   MT \ar[r]_{\sigma} & M .
  }
  \end{xy}
\]
Such a module can hence be regarded as a kind of generalized monad over a functor that is not necessarily 
an endofunctor; indeed, this is our intuition behind modules. 
In particular, every monad gives rise to 
a module over itself, the \emph{tautological module} (cf.\ \autoref{def:taut_mod}).
Furthermore, the category of modules $\Mod{T}{\D}$ allows for products, provided the target category $\D$ is equipped with a product.

 A morphism of $T$--modules from $(M, \sigma)$ to $(M', \sigma')$ then is given by a natural transformation $\beta : M\Rightarrow M'$
such that the following diagram commutes:
 \[
  \begin{xy}
    \xymatrix{
     MT \ar[r]^{\beta T} \ar[d]_{\sigma}  & M'T \ar[d]^{\sigma'} \\
      M \ar[r]_{\beta} &  M'.
}
  \end{xy} 
 \]
 
\end{remark}

\noindent
 We anticipate the constructions of the next section by giving some examples of modules:
\begin{example}[Tautological Module, \autoref{ex:ulc_monad} cont.]\label{ex:ulc_taut_mod}
   Any monad $T$ on a category $\C$ can be considered as a module over itself, the \emph{tautological module} (cf.\ \autoref{def:taut_mod}).
   In particular, the monad of the untyped lambda calculus $\LC$ (cf.\ \autoref{ex:ulc_monad}) is a $\LC$--module with codomain $\Set$.
\end{example}

\begin{example}\label{ex:ulc_const_mod}
  The map  \[ \LC' : V \mapsto \LC(V') \enspace ,  \]
   with $V':= V+1$,
  inherits the structure of a $\LC$--module from the tautological module $\LC$ --- 
  we obtain the \emph{derived module} (cf.\ \autoref{sec:endo_deriv}) of the module $\LC$.
  Also, the map \[\LC\times \LC : V \mapsto \LC(V)\times\LC(V)\] inherits a $\LC$--module structure, cf.\ \autoref{def:product_module}.
\end{example}

The constructors of our example languages are, accordingly, \emph{morphisms of modules}:

\begin{example}[\autoref{ex:ulc_const_mod} cont.]\label{ex:ulc_mod_mor}
  The map \[V \mapsto \App_V : \LC(V)\times \LC(V) \to \LC(V)\]
    satisfies the diagram of \autoref{rem:def_module_hom} and is hence a morphism of $\LC$--modules from $\LC\times\LC$ to $\LC$.
  The map \[ V \mapsto \Abs_V : \LC(V') \to \LC(V) \] is a morphism of $\LC$--modules from $\LC'$ to $\LC$.
  Later we consider this example using an alternative definition of module morphism (cf.\ \autoref{def:endo_mod_hom}) 
  and explain in detail the meaning of its commutative diagrams for the constructors $\App$ and $\Abs$, cf.\ \autoref{ex:ulc_mod_mor_kl}.
\end{example}

\subsection{Constructions on Monads and Modules}\label{subsection:mod_examples}

We present some constructions of modules which will be used in the next section.
They were previously defined in Zsid\'o's thesis \cite{ju_phd} and 
works of Hirschowitz and Maggesi \cite{DBLP:conf/wollic/HirschowitzM07, DBLP:journals/iandc/HirschowitzM10}.

\begin{definition}[Tautological Module] \label{def:taut_mod}

  Given a monad $(\C,T)$, we call $(T,\mu_T)$ (or simply $T$) the \emph{tautological module} $(T,\mu_T) : (\C,T)\to (\C,\Id)$.

\end{definition}

\begin{definition}[Constant and Terminal Module]
 Given a monad $(\C,T)$ and a category $\D$ with an object $d\in \D$, 
  the constant functor $F_d:\C\to\D$ mapping any object of $\C$ to $d\in\D$ and any morphism to the identity on $d$ 
     yields a module
   \[(F_d,\id) : (\C,T)\to (\D,\Id) \enspace . \] 
In particular, if $\D$ has a terminal object $1_\D$, then the constant module $(F_{1_\D},\id)$ is terminal in $\Mod{T}{\D}$.
\end{definition}

\begin{remark}
 Given a monad $(\C,T)$, a $T$--module $(M,\sigma)$ with codomain category $\D$ and a functor $F:\D\to\E$,
then the pair $(\comp{M}{F},F\sigma)$ is a $T$--module with codomain category $\E$.
 For $(M,\sigma) := (T,\mu_T)$ and $F := F_e$ for some $e\in \E$ one obtains the constant module as above.
\end{remark}

\begin{definition}[Pullback Module]

Let $(\C,T)$ and $(\D,T')$ be monads over $\C$ and $\D$, respectively.
Given a morphism of monads $(F,\gamma):(\C,T) \to (\D,T')$ and a $T'$-module $(M,\sigma)$ with codomain $\E$, we 
call \emph{pullback of $M$ along $(F,\gamma)$} the $T$--module $(F,\gamma)^*(M,\sigma):= \comp{(F,\gamma)}{(M,\sigma)}$.

\end{definition}

\begin{definition}[Module Morphism induced by a Monad Morphism]\label{def:induced_module_mor}
With the same notation as in the previous example, the monad morphism $(F,\gamma)$ induces a morphism of $T$--modules
--- which we call $\gamma$ as well --- 
\[\gamma : \comp{T}{(F,\id)}\Rightarrow (F,\gamma)^*(T',\mu_{T'}) \]
as in
\[
 \begin{xy}
  \xymatrix{ {} & (\C,T) \ar[ld]_{(T,\mu_T)} \ar[rd]^{(F,\gamma)}  & {} \\
            (\C, \Id)\ar[rd]_{(F,\id)} & {} & (\D,T') \ar[ld]^{(T',\mu_{T'})} \\
             {}& (\D,\Id) \uutwocell <\omit>{\gamma}& **[r].
  }
 \end{xy}
\]
Note that the above diagram can be read as a structure--enriched version of the square diagram specifying
 the type of $\gamma$ in \autoref{def:colax_mon_mor}.
\end{definition}

\begin{definition}[Product Module] \label{def:product_module}
Suppose the category $\D$ is equipped with a product. %
Given any monad $(\C,T)$, the product of $\D$ lifts to a product on the category $\Mod{T}{\D}$ 
of $T$--modules with codomain $\D$.
\end{definition}

\subsection{Monads on Set Families}\label{sec:monads_on_set_families}

We are particularly interested in monads over families of sets and monad morphisms over retyping functors.

\subsubsection{Derivation}\label{sec:endo_deriv}

Roughly speaking, a binding constructor makes free variables disappear. Its inputs are hence terms 
``with (one or more) additional free variables'' compared to the output, 
i.e.\ terms in an \emph{extended context}. 
Context extension is captured mathematically by \emph{derivation}:
let $T$ be a set and $u\in T$ an element of $T$.
We define $D(u)$ to be the object of $\TS{T}$ such that 
 \[D(u)(u)=\lbrace *\rbrace \quad\text{and}\quad D(u)(t) = \emptyset \text{ for } t\neq u \enspace .\]
We enrich the object $V$ of $\TS{T}$ with respect to $u$ by setting
  \[ V^{*u} := V + D(u) \enspace , \]
that is, we add a fresh variable of type $u$.
This yields a monad $ (\_)^{*u}$ on $\TS{T}$. %

\begin{definition}[Derivation Monad Morphism] 
 Given any monad $P$ on $\TS{T}$, we define a monad endomorphism on $P$
 over the functor $V\mapsto V^{*u}$.
On a set family $V\in\TS{T}$ its natural transformation $\gamma$ is defined as the coproduct map
\begin{equation} 
  \gamma_V := [P(\inl{}), \comp{\inr}{\eta} %
                                            ] : (PV)^{*u} \to P(V^{*u}) \enspace , 
  \label{eq:deriv_mon_mor}
\end{equation}
where $[\inl{},\inr{}] = \id : V^{*u} \to V^{*u}$.
\end{definition}

\begin{definition}\label{def:derivation_simple}
 Given a monad $P$ over $\TS{T}$ and a $P$--module $M$, we call $M^u$ the module obtained as the composition  $\comp{(\_)^{*u}}{M}$.
\end{definition}

\begin{example}
 \label{ex:deriv_mod_slc}
    We consider $\SLC$ (cf.\ \autoref{ex:tlc_syntax_monadic}) as the tautological module over itself.
    Given any element $s\in \TLCTYPE$, the derived module with respect to $s$, 
    \[\SLC^{s} : V \mapsto \SLC(V^{*s})\enspace ,\]
  assigns to any type family $V$ --- the \emph{context} ---  the type family of terms of $\SLC$ 
    over $V$ enriched with one additional variable of sort $s$.
\end{example}

More generally, given a natural transformation as in \autoref{rem:nat_trans_picking_sort},
 \[\tau : 1\Rightarrow \T U_n : \T\Set_n \to \Set \enspace ,\]
we can derive, with respect to $\tau$, any module defined on a category of the form $\TS{T}_n$ for 
any set $T$:

\begin{definition}[Derived Module]\label{def:derived_mod_II}
  Let $\tau : 1 \to \T U_n$ be a natural transformation.
 Given a set $T$ and a monad $P$ on $\TS{T}_n$, 
 the functor $(\_)^{*\tau} : (T,V,\vectorletter{t})\mapsto (T,V^{*\tau(T,V,\vectorletter{t})},\vectorletter{t})$ is given the structure of a morphism of monads
  as in \autoref{eq:deriv_mon_mor}.
Given any $P$--module $M$, we call \emph{derivation of $M$ with respect to $\tau$} the module 
 $M^{\tau} := \comp{(\_)^{*\tau}}{M}$.
\end{definition}

\begin{remark}
  In the preceding definition the natural transformation $\tau : 1 \to \T U_n$ 
  supplies more data than necessary, since we only evaluate it on families of sets indexed by 	
  the fixed set $T$. 
  However, in \autoref{sec:ext_zsido} we derive different modules --- each defined on a category $\TS{T}_n$
  with varying sets $T$ --- with respect to one and the same natural transformation $\tau$. 
\end{remark}

\subsubsection{Fibres} \label{sec:fibre}

Given a typed language over a nonempty set of types $T$, 
we occasionally want to pick terms of a specific type $u\in T$.
Let $\D$ be a category --- think of $\D$ as the category $\Set$ --- 
and $V\in \family{\D}{T}$ a $T$--indexed family, e.g., of terms of said language. 
Then picking ``terms of type $u\in T$'' corresponds to projecting to the fibre $V(u)$.

Given a monad $P$ on a category $\C$ and a $P$--module $M$ towards $\family{\D}{T}$, we define the 
\emph{fibre module of $M$ with respect to $u\in T$} to be the module which associates
the fibre $M(c)(u)$ to any object $c\in \C$. This construction is expressed via postcomposition
with a particular module:
we define the \emph{fibre with respect to $u\in T$} to be the monad morphism
\[ \bigl((\_)(u), \id\bigr) : (\family{\D}{T},\Id) \to (\D,\Id) \]
over the functor $V \mapsto V(u)$.
Postcomposition of the module $M$ with this module then precisely yields the fibre module 
$\fibre{M}{u}$ of $M$
with respect to $u\in T$.
Analogously to derivation we define the fibre with respect to a natural transformation:
\begin{definition}[Fibre Module]\label{def:fibre_mod_II}
 Let the natural transformation $\tau$ be as in \autoref{def:derived_mod_II}.
  We call \emph{fibre with respect to $\tau$} the monad morphism 
  \[ (\_)_{\tau} : V \mapsto V(\tau_V) : (\family{\D}{T}_n,\Id) \to (\D,\Id) \]
  over the functor $V\mapsto V_{\tau(V)}$.
  Given a module $M$ towards $\family{\D}{T}_n$ (over some monad $P$), we call 
  the \emph{fibre module of $M$ with respect to $\tau$} the module $\fibre{M}{\tau} := \comp{M}{(\_)_{\tau}}$.
\end{definition}

\begin{example}
 \label{ex:fibre_mod_slc}
   We consider $\SLC$ as the tautological module over itself.
   Given any element $t\in \mathcal{T}$, the fibre module with respect to $t$,
denotes the set of terms of $\SLC$ of type $t$ in context $V$:
     \[\fibre{\SLC}{t} : V \mapsto \SLC(V)_t \enspace . \]
\end{example}

\begin{example}\label{ex:slc_modules}
   Consider the monad $\SLC : \TS{\TLCTYPE} \to \TS{\TLCTYPE}$ of \autoref{ex:tlc_syntax_monadic}.
The two operations of derivation (cf.\ \autoref{ex:deriv_mod_slc}) and fibre 
 (cf.\ \autoref{ex:fibre_mod_slc}) can be combined, yielding a module over $\SLC$ with carrier
\[ V\mapsto \SLC^{s}_t(V) := \SLC(V^{*s})_t \enspace .\]
This module is actually the domain module of the abstraction constructor, cf.\ \autoref{ex:slc_mod_mor}.
The product of modules yields our final example: for any $s,t \in \TLCTYPE$, 
the domain of the application $\App(s,t)$ of simply--typed
lambda calculus is a module over $\SLC$, 
\[ \fibre{\SLC}{s\TLCar t} \times\fibre{\SLC}{s} : V \mapsto \SLC(V)_{s\TLCar t} \times \SLC(V)_{s} \enspace . \]
\end{example}

\begin{example}[\autoref{ex:slc_modules} cont.]\label{ex:slc_mod_mor}
  Given $s,t\in \TLCTYPE$, the map 
    \[ \App(s,t) : V \mapsto 
          \App_V (s,t) : \SLC(V)_{s\TLCar t} \times \SLC(V)_{s} \to \SLC(V)_t 
    \]
  satisfies the diagram of the preceding definition and is hence a morphism of modules.
 In the same way the constructor $\Abs(s,t)$ is a morphism of modules; we have
 \begin{align*} \App(s,t) &: \fibre{\SLC}{s\TLCar t} \times \fibre{\SLC}{s} \to \fibre{\SLC}{t} \\
  \Abs(s,t) &: \fibre{\SLC^s}{t} \to \fibre{\SLC}{s\TLCar t} \enspace .
 \end{align*}
\end{example}

The pullback operation commutes with products, derivations and fibres:

\begin{remark}
  \label{rem:pullback_deriv_fibre_endo}
   Let $(\C,P)$ and $(\D,Q)$ be monads, and let $\rho : P \to Q$ be a monad morphism.
  Let $M$ be a $Q$--module with codomain $\E$. Suppose $T$ is a set, and let $u\in T$ be an element of $T$.
\begin{enumerate}
 \item More specifically, let $Q$ be a monad on $\family{\Set}{T}$. Then
      \[ \rho^*(M^u) = (\rho^*M)^u \enspace . \]
 \item More specifically, let $\E = \family{\C}{T}$. Then
      \[ \rho^* \fibre{M}{u} = \fibre{\rho^*M}{u} \enspace . \]
 \item Let $N$ be another $Q$--module with codomain $\E$ and suppose $\E$ is equipped with a product. Then
          the pullback functor is cartesian:
 \[ \rho^* (M \times N) = \rho^*M \times \rho^*N \enspace .\]
   
\end{enumerate}
The first two properties are just instances of associativity of composition of monad morphisms.

\end{remark}

\begin{remark}
  \label{rem:about_module_eq}
In \textsf{Coq} the equality of modules is not as trivial as in informal mathematics,
 since there are two different notions of equality: 
\emph{definitional equality}, also called \emph{convertibility}, and \emph{propositional} equality.
While the latter is to be proved by the user, the former is computed by the system and thus
cannot be influenced by the user.

While the above equalities of \autoref{rem:pullback_deriv_fibre_endo} hold propositionally (using appropriate axioms, such as proof irrelevance),
they do not hold definitionally.
The consequences of this lack of definitional equality are discussed in \autoref{subsec:sts_formal_reps}.
In summary, in our formalization, monads, modules and module morphisms behave more like in a \emph{bicategory} rather than in a strict 2--category.
\end{remark}

\section{Alternative Definitions for Monads \& Modules} \label{sec:alternative_defs_monad}

Monads can be defined in terms of the Kleisli operation (cf.\ \autoref{rem:def_monad_alt})
instead of the natural transformation $\mu$ of \autoref{def:monad_mu}. 
A similar alternative definition exists for modules.
In this section we state those alternative definitions in full detail, for several reasons:
firstly, the alternative definition of monad is well--known for its prominent use in the \textsc{Haskell} programming language.
Secondly, it is also the definition we chose to implement in the proof assistant \textsf{Coq}.
Furthermore, it is also this alternative definition which generalizes to \emph{relative monads} (cf.\ \autoref{def:relative_monad}), 
that is, monads that are not necessarily endofunctors.

\begin{definition}[Alt.\ Def.\ for Monad (\autoref{def:monad_mu}), \autoref{code:endomonad}] \label{def:endomonad} 
A \emph{monad} $T$ over a category $\C$ (in Kleisli form) is given by 
\begin{packitem}
 \item 
a map $T\colon \C \to\C$ on the objects of $\C$, carrying the same name as the monad,
 \item 
for each object $c$ of $\C$, a morphism $\eta_c\in \C(c,Tc)$ and
 \item 
for all objects $c$ and $d$ of $\C$, a \emph{Kleisli} map 
 \[\sigma_{c,d}\colon \C (c,Td) \to \C (Tc,Td)\]
\end{packitem}
 such that the following diagrams commute for all suitable morphisms $f$ and $g$:
\begin{equation*}
\begin{xy}
\xymatrix @=3pc{
c \ar [r] ^ {\we_c} \ar[rd]_{f} & Tc \ar[d]^{\kl{f}} & Tc \ar@/^1pc/[rd]^{\kl{\we_c}} \ar@/_1pc/[rd]_{\id}& {} &
     Tc \ar[r]^{\kl{f}} \ar[rd]_{\kl{\comp{f}{\kl{g}}}} & Td\ar[d]^{\kl{g}}\\
{} & Td , & {} & Tc , &{} & Te .\\
}
\end{xy}
\end{equation*}
\end{definition}

\noindent
We also refer to the Kleisli map as ``substitution map'': when $\C$ is instantiated, for example, by the category of sets and $TX$ is a set of 
terms with free variables in the set $X$, then \emph{simultaneous substitution} as Kleisli map turns $T$ into a monad. 
In this case the diagrams express the well--known substitution properties \cite{alt_reus}.
More precisely, the first diagram determines the value of substitution on variables, the second diagram states that substituting each variable by itself 
in a term does not change the term, and the third diagram shows how two consecutive substitutions can be expressed by just one substitution.
Inspired by \textsc{Haskell} syntax, we frequently use the infixed symbol $\bind{}{}$ to denote simultaneous substitution (or more generally, Kleisli maps): 
given a term $M\in TX$ with free variables in $X$ and  
$f : X \to TY$, then \[\bind{M}{f} \enspace := \enspace \sigma(f)(M)\] denotes the term obtained by replacing any free variable $x\in X$ occurring in $M$ 
by its image $f(x)\in TY$, yielding a term in $TY$.

The following remarks recover the definition of monad given in \autoref{def:monad_mu} from the
definition of \autoref{def:endomonad}.

\begin{remark}[Functoriality for Monads in Kleisli Form, \autoref{code:endomonad_functor}]\label{rem:monad_kleisli_funct}
Given a monad $T$ over $\C$ as in \autoref{def:endomonad} and a morphism $f:c \to d$ in $\C$, 
we equip $T$ with a functorial structure by setting
\[
   T(f) := \lift_T(f) := \kl{\comp{f}{\we_d}}\enspace .
\]
\end{remark}
\begin{remark}[Naturality of $\eta$ and Multiplication for Monads in Kleisli form]

Given a monad in Kleisli form $T$, the family of morphisms $\eta = (\eta_c : \C(c,Tc))_{c\in\C}$
is natural with respect to the functorial structure defined in \autoref{rem:monad_kleisli_funct}.
A multiplication $\mu : T^2 \to T$ can be defined as substitution with identity:
\[ \mu_c := \sigma(\id_{Tc}) : TTc \to Tc  \enspace . \]
Naturality of $\mu$ is a consequence of the axioms for monads in Kleisli form.
Finally, the monad multiplication $\mu$ thus defined is compatible with the unit $\eta$ in the sense
of \autoref{def:monad_mu}.
\end{remark}

\begin{remark}[Naturality of Substitution]\label{rem:subst_natural_endo}
 Given a monad in Kleisli form $T$ over $\C$, then its substitution $\sigma$ is natural in $c$ and $d$.
 For naturality in $c$ we check that the diagram 
 \[
  \begin{xy}
   \xymatrix{
     c \ar[d]_{f}  &    **[l]\C(c,Td) \ar[r]^{\sigma_{c,d}}  &  **[r]\C (Tc,Td)\\
     c'             &    **[l]\C(c',Td) \ar[r]_{\sigma_{c',d}} \ar[u]^{f^*}  &  **[r]\C (Tc',Td) \ar[u]_{(Tf)^*}
}
  \end{xy}
 \]
commutes, where $f^*(h) := \comp{f}{h}$. Given $g\in \C(c',Td)$, we have 
\begin{align*} 
        \comp{Tf}{\sigma(g)} &= \comp{\sigma(\comp{f}{\eta_{c'}})}{\sigma(g)} \\
                             &\stackrel{3}{=} \sigma(\comp{\comp{f}{\eta_{c'}}}{\sigma(g)}) \\ 
                             &\stackrel{1}{=} \sigma(\comp{f}{g})  \enspace ,   
\end{align*}
where the numbers correspond to the diagrams of \autoref{def:endomonad} used
to rewrite in the respective step.
Similarly we check naturality in $d$. Writing $h_*(g) := \comp{g}{h}$, the diagram

 \[
  \begin{xy}
   \xymatrix{
     d \ar[d]_{h}  &    **[l]\C(c,Td) \ar[r]^{\sigma_{c,d}} \ar[d]_{(Th)_*} &  **[r]\C (Tc,Td)\ar[d]^{(Th)_*} \\
     d'             &    **[l]\C(c',Td) \ar[r]_{\sigma_{c',d}}   &  **[r]\C (Tc',Td) 
}
  \end{xy}
 \]
commutes: given $g\in \C(c, Td)$, we have
\begin{align*} 
        \comp{\sigma(g)}{Th} &= \comp{\sigma(g)}{\sigma(\comp{h}{\eta_{d'}})} \\
                             &\stackrel{3}{=} \sigma (\comp{g}{\sigma(\comp{h}{\eta_{d'}})})\\
                             &= \sigma(\comp{g}{Th})  \enspace .   
\end{align*}
 
\end{remark}

\begin{definition}[Morphism of Monads, Alt.\ to \autoref{def:colax_mon_mor}, \autoref{code:colax_mon_mor_alt}] 
\label{def:colax_mon_mor_alt}
  Let $(\C,T)$ and $(\D,T')$ be two monads.
 A \emph{colax morphism of monads} $\tau : T \to T'$ is given by 

\begin{packitem}
 \item a functor $F\colon \C \to \D$ and 
 \item for any $c\in \C$, a morphism $\tau_c : FTc \to T'Fc$
\end{packitem}
such that the following diagrams commute for all suitable morphisms $f$:
\begin{equation*}
 \begin{xy}
  \xymatrix @=5pc{
  FTc \ar[r]^{F(\kl[T]{f})} \ar[d]_{\tau_c}& FTd \ar[d]^{\tau_d} & Fc \ar[r]^{F\we^T_c} \ar[rd]_{\we^{T'}_{Fc}} & FTc \ar[d]^{\tau_c} \\
  T'Fc \ar[r]_{\kl[T']{\comp{Ff}{\tau_d}}} & T'Fd \enspace ,& {} & T'Fc \enspace .\\
}
 \end{xy}
\end{equation*}
\end{definition}

\begin{remark}
  Naturality of the family $(\tau_c)_{c\in \C}$ of a colax morphism of monads as in the preceding definition 
    is provable from the other axioms, yielding a natural transformation
      \[\tau : FT \to T'F \enspace . \]
   Here we use \autoref{rem:monad_kleisli_funct} by considering $T$ and $T'$ as functors. 
  The naturality of $\tau$ is proved in Lemma \lstinline!colax_Monad_Hom_NatTrans! in the 
  \textsf{Coq} library.
\end{remark}

\begin{definition}[Module, Alt.\ to \autoref{rem:def_module_hom}, \autoref{code:endo_module}]
 \label{def:endo_module}
Let $\D$ be a category. A \emph{module $M$ over $T$ with codomain $\D$} is given by
\begin{packitem}
 \item a map $M\colon \C \to \D$ on the objects of the categories involved and 
 \item for all objects $c,d$ of $\C$, a map 
      \[        \varsigma_{c,d}\colon \C (c,Td) \to \D (Mc,Md)      \]
\end{packitem}
such that the following diagrams commute for all suitable morphisms $f$ and $g$:
\begin{equation*}
\begin{xy}
\xymatrix @=3pc{
 Mc \ar[r]^{\mkl{f}} \ar[rd]_{\mkl{\comp{f}{\kl{g}}}} & Md\ar[d]^{\mkl{g}} & Mc  \ar@/^1pc/[rd]^{\mkl{\we_c}} \ar@/_1pc/[rd]_{\id} & {} \\
  {} & Me ,& {} & Mc .\\
}
\end{xy}
\end{equation*}
\end{definition}

\begin{remark}
Functoriality for such a module $M$ is defined similarly to that for monads:
for any morphism $f:c\to d$ in $\C$ we set
\[M(f) :=  \mlift_M(f) := \mkl{\comp{f}{\we^T}}\enspace .\]
\end{remark}

\noindent
A \emph{module morphism} is a family of morphisms that is compatible with module substitution:
\begin{definition}[Module Morphism, Alt.\ to \autoref{rem:def_module_hom}, \autoref{code:endo_mod_hom}]
     \label{def:endo_mod_hom} 
 Let $M$ and $N$ be two modules over $T$ with codomain $\D$. 
 A \emph{morphism of $T$--modules} from $M$ to $N$ is given by a family of 
  morphisms $\rho_c\in\D(Mc,Nc)$ such that for all morphisms $f\in \C(c,Td)$ the following diagram commutes:
\begin{equation*}
 \begin{xy}
  \xymatrix @=3pc{
  Mc \ar[r]^{\mkl[M]{f}} \ar[d]_{\rho_c}& Md \ar[d]^{\rho_d}  \\
  Nc \ar[r]_{\mkl[N]{f}} & Nd. \\
}
 \end{xy}
\end{equation*}
\end{definition}

\noindent
A module morphism $M \to N$ also constitutes a natural transformation between the functors $M$ and $N$ induced by the modules, cf.\ \lstinline!Module_Hom_NatTrans!.

\begin{example}[\autoref{ex:ulc_mod_mor} cont.]
       \label{ex:ulc_mod_mor_kl}
  We consider \autoref{ex:ulc_mod_mor} under the alternative definition of module morphism.
  The map \[V \mapsto \App_V : \LC(V)\times \LC(V) \to \LC(V)\]
    satisfies the diagram of the preceding definition and is hence a morphism of $\LC$--modules from $\LC\times\LC$ to $\LC$.
The property of being a module morphism expresses distributivity of substitution for any substitution map  $f : X\to \LC(Y)$:

\[ \bind{\App(M,N)}{f} \enspace = \enspace \App(\bind{M}{f}, \bind{N}{f}) \enspace .  \]
Similarly, the map 
\[ V \mapsto \Abs_V : \LC(V') \to \LC(V) \] 
is a morphism of $\LC$--modules from $\LC'$ to $\LC$.
  For $f:X\to\LC(Y)$ as before, the commutative diagram here expresses the equation

\[  \bind{\Abs(M)}{f} \enspace = \enspace \Abs(\bind{M}{f'}) \enspace , \]
where $f': X' \to \LC(Y')$ is obtained by shifting the map $f$ to account for the extended context under
the binder $\Abs$.

\end{example}

Modules on $P$ with codomain $\D$ and morphisms between them form a category 
called $\Mod{P}{\D}$ (in the library: \lstinline!MOD P D!), similar to the category of monads.

\section{Relative Monads and Modules}\label{sec:rel_monads}

The functors underlying the monads presented in the preceding section all are endo\-functors.
This is enforced by the type of monadic multiplication and substitution.
\emph{Relative monads} were defined by Altenkirch et al.\ \cite{DBLP:conf/fossacs/AltenkirchCU10} to overcome this
restriction. One of their motivations was to consider the untyped lambda calculus over \emph{finite} contexts as 
a monad--like structure --- similar to the monad structure on the lambda calculus over 
arbitrary contexts exhibited by Altenkirch and Reus \cite{alt_reus}.

We review the definition of relative monads and define suitable \emph{colax morphisms of relative monads}.
Afterwards we define \emph{modules over relative monads} and port the constructions on 
modules over monads (cf.\ \autorefs{subsection:mod_examples} and \ref{sec:monads_on_set_families}) to modules over relative monads.

\subsection{Definitions}\label{sec:rel_mon_defs}

We review the definition of relative monad as given by Altenkirch et al.\ \cite{DBLP:conf/fossacs/AltenkirchCU10}
and define suitable morphisms for them.
As an example we consider the lambda calculus as a relative monad from sets to preorders, on the functor $\Delta$ (cf.\ \autoref{def:delta}).
Afterwards we define \emph{modules over relative monads} and carry over the constructions on modules over regular monads
of the preceding section to modules over relative monads. 

The definition of relative monads is analogous to that of monads in Kleisli form (cf.\ \autoref{def:endomonad}),
except that the underlying map of objects is between \emph{different} categories.
Thus, for the operations to remain well--typed, one needs an additional ``mediating'' functor, in the following 
usually called $F$, which is inserted wherever necessary:

\begin{definition}[Relative Monad, \cite{DBLP:conf/fossacs/AltenkirchCU10}, \autoref{code:relative_monad}]\label{def:relative_monad}
Given categories $\C$ and $\D$ and a functor $F : \C\to \D$, a \emph{relative monad $P : \C \stackrel{F}{\to}\D$ on $F$}
is given by the following data:
\begin{packitem}
 \item a map $P\colon \C\to\D$ on the objects of $\C$,
 \item 
for each object $c$ of $\C$, a morphism $\eta_c\in \D(Fc,Pc)$ and
 \item 
for each two objects $c,d$ of $\C$, a \emph{substitution} map 
 \[\sigma_{c,d}\colon \D (Fc,Pd) \to \D(Pc,Pd)\]
\end{packitem}
such that the following diagrams commute for all suitable morphisms $f$ and $g$:
\begin{equation*}
\begin{xy}
\xymatrix @=4pc{
Fc \ar [r] ^ {\we_c} \ar[rd]_{f} & Pc \ar[d]^{\kl{f}} & Pc \ar@/^1pc/[rd]^{\kl{\we_c}} \ar@/_1pc/[rd]_{\id}& {} & 
 Pc \ar[r]^{\kl{f}} \ar[rd]_{\kl{\comp{f}{\kl{g}}}} & Pd\ar[d]^{\kl{g}} \\
{} & Pd \enspace , & {} & Pc \enspace , & {} & Pe \enspace .\\
}
\end{xy}
\end{equation*}
%
\end{definition}

\begin{example}[Lambda Calculus over Finite Contexts, \cite{DBLP:conf/fossacs/AltenkirchCU10}]
  Altenkirch et al.\ \cite{DBLP:conf/fossacs/AltenkirchCU10} consider the untyped lambda calculus
 as a relative monad on the functor $J:\Fin_{\mathrm{skel}} \to\Set$.
 Here the category $\Fin_{\mathrm{skel}}$ is the category of finite cardinals, i.e.\ 
 the skeleton of the category $\Fin$ of \emph{finite}
  sets and maps between finite sets.
\end{example}

\begin{remark}
 Relative monads on the identity functor $\Id:\C\to\C$ precisely correspond to monads as
 presented in \autoref{def:endomonad}.
\end{remark}

\begin{notation}
For this section we reserve the term ``monad'' for monads as defined in \autoref{def:endomonad}, and
explicitly state the ``relative'' when talking about relative monads.
In later sections we sometimes omit the attribute ``relative'' and instead 
refer to traditional monads (i.e.\ with $F= \Id$) as \emph{regular} or \emph{plain} monads. 
\end{notation}

\begin{remark}[Restricting a Monad yields a Relative Monad, \textup{\cite{DBLP:conf/fossacs/AltenkirchCU10}}]
 Given a monad $T$ on $\D$ and a functor $F:\C\to\D$, then the monad $T$ 
  restricts to a relative monad $T^\flat : \C\stackrel{F}{\to}\D$ by precomposing with $F$.
\end{remark}

\begin{remark}[Relative Monads are functorial, \autoref{code:rel_monad_functorial}]
  \label{rem:rel_monad_functorial}
Given a monad $P$ over $F : \C\to\D$ and a morphism $f:c\to d$ in $\C$, a functorial structure (\lstinline!rlift!)
for $P$ is defined by setting
\[
   P(f) := \lift_P(f) := \kl{\comp{Ff}{\we}} \enspace.
\]
The functor axioms are easily proved from the monadic axioms.
\end{remark}

\begin{remark}[Relative Monads as Monoids in a Functor Category, \textup{\cite{DBLP:conf/fossacs/AltenkirchCU10}}]
  A monad $(T,\eta,\mu)$ over a category $\C$ is the same as a monoid object in the functor category 
 $[\C,\C]$, where the monoidal structure is given by functor composition.
  Altenkirch et al.\ \cite{DBLP:conf/fossacs/AltenkirchCU10} recover 
  a similar characterization for relative monads on a functor $F:\C\to\D$, 
  provided that the left Kan extension along $F$,
    \[ \Lan_F : [\C,\D] \to [\D,\D] \enspace , \]
 exists:
they define a lax monoidal structure on $[\C,\D]$ by
   \begin{align*}   
         (\cdot^F):[\C,\D]\times [\C,\D] &\to [\C,\D] \\
                       (H, G) &\mapsto H \cdot^F G := \comp{G}{\Lan_F H} \enspace .
   \end{align*}
  They then show that relative monads on $F$ correspond precisely to lax monoid objects in $([\C,\D], \cdot^F )$.
 Besides, they show that under some coherence conditions, this result can be sharpened to obtain a strict
  monoidal structure, where relative monads correspond to proper monoids with respect to this structure.
 Under the same assumptions, a relative monad $P$ on $F:\C\to\D$ can be extended to a traditional monad $P^\sharp$ 
 on $\D$, yielding an adjunction $(\_)^\sharp \dashv (\_)^\flat$. This adjunction furthermore is 
  a coreflection.
\end{remark}

\begin{remark}[Naturality of Substitution]\label{rem:subst_natural_rel}
  Analogously to \autoref{rem:subst_natural_endo}, 
    the substitution $\sigma = (\sigma_{c,d})$ of a relative monad $P$ on a functor $F:\C\to\D$
   is binatural. 
\end{remark}

We are interested in monads on the category $\Set$ of sets and relative monads on 
$\Delta:\Set\to\PO$ as well as their relationship:

\begin{lemma}[Relative Monads on $\Delta$ and Monads on $\Set$]\label{lem:rmon_delta_endomon}
  Let $P$ be a relative monad on $\Delta:\Set\to\PO$ (cf.\ \autoref{def:delta}).
 By postcomposing with the forgetful functor $U:\PO\to \Set$ we obtain a monad 
   \[UP : \Set\to \Set\enspace . \]
   The substitution is defined, for $m : X \to UPY$ by setting
   \[U\sigma : m\mapsto  U \left(\kl{\varphi^{-1} m}\right) \enspace ,  \]
as indicated by the diagram
\[ 
     \begin{xy}
      \xymatrix {  **[l] \Set(X, UPY) \ar[r]^{U \sigma} \ar[d]_{\varphi^{-1}} & **[r] \Set (UP X , UP Y) \\
                   **[l] \PO(\Delta X, PY) \ar[r]_{\sigma}&   **[r]    \PO (P X, P Y) \ar[u]_{U}
      } 
     \end{xy}
    \]
making use of the adjunction $\varphi$ of \autoref{lem:adj_set_po}.

Conversely, to any monad $T$ over $\Set$, given as a Kleisli triple, we associate a relative monad over $\Delta$ by 
 postcomposing with $\Delta$. The substitution map $\Delta\sigma$ is defined, for $m : \Delta X \to \Delta T Y$, as the following composition:
    \[ 
     \begin{xy}
      \xymatrix {  **[l] \PO(\Delta X, \Delta T Y) \ar[r]^{\Delta \sigma} \ar[d]_{U} & **[r] \PO (\Delta T X , \Delta T Y) \\
                    **[l] \Set(X, T Y) \ar[r]_{\sigma}&    **[r]   \Set (T X, T Y) \ar[u]_{\varphi^{-1}}
      } 
     \end{xy}
    \]
The maps thus defined are object functions of an adjunction between monads on sets and relative monads on $\Delta$, 
cf.\ \autoref{lem:adj_mon_rmon}.

\end{lemma}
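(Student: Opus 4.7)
The plan is to verify the two constructions separately, in each case checking the three relative/regular monad axioms of \autoref{def:endomonad} and \autoref{def:relative_monad}, and then remarking that they are inverse at the level of objects. The key tool throughout is the adjunction $\varphi:\PO(\Delta X,Y)\cong \Set(X,UY)$ from \autoref{lem:adj_set_po}, together with the identity $U\Delta=\Id_{\Set}$ and the fact that $\Delta$ is full and faithful, so that $\varphi$ is a bijection and $\varphi^{-1}$ is given by applying $\Delta$ followed by precomposition with the unit (which in our case is an identity).

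For the forward direction, starting from a relative monad $P$ on $\Delta$, I would define $(UP,\eta^{UP},\sigma^{UP})$ on objects by $X\mapsto UPX$, put $\eta^{UP}_X := U\eta^P_X : X\to UPX$ (noting that $U\Delta X = X$), and define $\sigma^{UP}(m) := U\bigl(\sigma^P(\varphi^{-1}m)\bigr)$ for $m:X\to UPY$. The first axiom $\sigma^{UP}(m)\circ \eta^{UP} = m$ reduces to applying $U$ to the corresponding first axiom of $P$, after observing that $\varphi^{-1}m\circ \eta^P = m$ by definition of $\varphi$ (since the adjunction unit is an identity). The second axiom $\sigma^{UP}(\eta^{UP}) = \id$ follows similarly, since $\varphi^{-1}(\eta^{UP}) = \eta^P$. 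The associativity axiom requires showing $\sigma^{UP}(\sigma^{UP}(g)\circ f) = \sigma^{UP}(g)\circ \sigma^{UP}(f)$ for $f:X\to UPY$ and $g:Y\to UPZ$; unfolding the definition, this follows from the corresponding relative-monad axiom applied to $\varphi^{-1}f$ and $\varphi^{-1}g$, provided $\varphi^{-1}(\sigma^{UP}(g)\circ f) = \sigma^P(\varphi^{-1}g)\circ \varphi^{-1}f$, which holds because $\varphi^{-1}$ is just application of $\Delta$ and $\Delta$ of a composition of set maps is the composition of their images (all maps out of a discretely ordered source are monotone).

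For the backward direction, starting from a monad $T$ on $\Set$ in Kleisli form, I would define $\Delta T : \Set \stackrel{\Delta}{\to}\PO$ on objects by $X\mapsto \Delta T X$, put $\eta^{\Delta T}_X := \Delta \eta^T_X : \Delta X \to \Delta TX$, and for $m:\Delta X\to \Delta TY$ define $\sigma^{\Delta T}(m) := \varphi^{-1}\bigl(\sigma^T(Um)\bigr)$. The three relative-monad axioms again translate, via $\varphi$ and $U$, into the three monad axioms for $T$; the essential point is that $Um$ recovers a genuine set map $X\to TY$, and applying $\Delta$ to a set map between discretely ordered preorders always yields a monotone map.

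I do not anticipate any real obstacle: both directions are essentially mechanical verifications once one exploits that $U\Delta = \Id_{\Set}$ (so the adjunction unit is an identity) and that monotonicity is vacuous out of $\Delta X$. The one point requiring a moment of care is keeping track of where one applies $U$, $\Delta$, or $\varphi^{\pm 1}$ in the associativity diagram, but since $\Delta$ is a full embedding these conversions are transparent. The claim that these two assignments extend to an adjunction (in fact a coreflection, as remarked) is deferred to \autoref{lem:adj_mon_rmon}, so here it suffices to establish the object-level correspondence.
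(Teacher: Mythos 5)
Your proposal is correct and takes essentially the same route as the paper: the paper proves this statement as the instance $F=\Delta$, $G=U$ (with $U\Delta=\Id_{\Set}$) of the more general lemma it states immediately afterwards, and that proof is exactly the transport-along-$\varphi$ verification of the three Kleisli axioms that you describe, with the adjunction claim likewise deferred to \autoref{lem:adj_mon_rmon}. (One cosmetic remark: the identity you invoke for the first unit law should be $\comp{\eta^P}{\sigma^P(\varphi^{-1}m)}=\varphi^{-1}m$, to which one then applies $U$, rather than ``$\varphi^{-1}m\circ\eta^P=m$'', which as written does not typecheck.)
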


The above construction actually is an instance of a more general construction:

\begin{lemma}[Monads from Relative Monads and conversely]
    Let $F:\C\rightleftarrows \D : G$ be an adjunction with a family of isomorphisms 
     \[\varphi_{X,Y} : \D(FX,Y) \cong \C(X,GY):\varphi^{-1}_{X,Y} \enspace . \]

 \begin{enumerate}  
  \item 
Given a relative monad $P : \C\stackrel{F}{\to}\D$ with unit $\eta$ and substitution $\sigma$, 
  we define a monad $P^+$ on $\C$
  by setting
  \begin{packitem}\renewcommand{\labelitemi}{}
   \item $P^+(c) := GPc$,
   \item $\eta^+_c := \varphi(\eta_c) : \C(c, GPc) $ and
   \item $\sigma^+_{c,d}(f) := G\Bigl(\sigma\bigl(\varphi^{-1}(f)\bigr)\Bigr)$.
  \end{packitem}

 \item Let furthermore $GF = \Id$ be the identity on $\C$.
   Given a monad $(P,\eta,\sigma)$ on $\C$, we define a relative monad $P^-:\C\stackrel{F}{\to}\D$
   by setting
 \begin{packitem}\renewcommand{\labelitemi}{}
   \item $P^-(c) := FPc$,
   \item $\eta^-_c := F(\eta_c)$ and
   \item $\sigma^-_{c,d} := \varphi^{-1}\bigl(\sigma(Gf)\bigr)$.
 \end{packitem}

\end{enumerate}

\end{lemma}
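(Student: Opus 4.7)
My plan is to verify directly the three axioms of a (relative) monad in Kleisli form, as stated in \autoref{def:endomonad} and \autoref{def:relative_monad}, reducing each to the corresponding axiom for the monad (or relative monad) that we start with. The only tools needed are the naturality of the adjunction isomorphism $\varphi$ in both variables --- in particular the identities $\comp{\varphi(g)}{Gk} = \varphi(\comp{g}{k})$ and $\varphi^{-1}(\comp{h}{Gk}) = \comp{\varphi^{-1}(h)}{k}$ --- together with functoriality of $G$ (resp.\ $F$) and, in Part 2, the strict identity $GF = \Id_{\C}$ which makes the formula for $\sigma^-$ well-typed on the nose.

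For Part 1, I would first check that the types of $\eta^+$ and $\sigma^+$ match those required in \autoref{def:endomonad}, which is immediate. Axiom (unit--left): $\comp{\eta^+_c}{\sigma^+(f)} = \comp{\varphi(\eta_c)}{G\sigma(\varphi^{-1}f)}$; by naturality of $\varphi$ in the second variable this equals $\varphi\bigl(\comp{\eta_c}{\sigma(\varphi^{-1}f)}\bigr)$, and the relative monad unit axiom for $P$ collapses the argument to $\varphi^{-1}f$, so that the whole expression equals $\varphi(\varphi^{-1}f) = f$. Axiom (unit--right): $\sigma^+(\eta^+_c) = G\sigma(\varphi^{-1}\varphi(\eta_c)) = G\sigma(\eta_c) = G(\id_{Pc}) = \id_{GPc}$ by the second relative monad axiom for $P$ and functoriality of $G$. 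Axiom (associativity): using naturality of $\varphi^{-1}$ in the second variable, $\varphi^{-1}\bigl(\comp{f}{G\sigma(\varphi^{-1}g)}\bigr) = \comp{\varphi^{-1}f}{\sigma(\varphi^{-1}g)}$, and then the associativity axiom for $P$ plus functoriality of $G$ gives $\sigma^+\bigl(\comp{f}{\sigma^+(g)}\bigr) = \comp{\sigma^+(f)}{\sigma^+(g)}$.

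For Part 2, the additional hypothesis $GF = \Id$ is what makes $\sigma^-_{c,d}(f) := \varphi^{-1}(\sigma(Gf))$ well-defined: for $f \in \D(Fc, FPd)$ we have $Gf \in \C(c, Pd)$, hence $\sigma(Gf) \in \C(Pc, Pd) = \C(Pc, GFPd)$, and $\varphi^{-1}$ then lands in $\D(FPc, FPd)$ as required. With this identification the three axioms for $P^-$ are verified by exactly symmetric calculations to Part 1, exchanging the roles of $\varphi$/$\varphi^{-1}$ and of $G$/$F$. The only subtlety is that in the associativity check one needs to rewrite $Gf$ for $f = \comp{h}{\sigma^-(g)}$, and this uses $G\varphi^{-1}(\sigma(Gg)) = \sigma(Gg)$, which holds because applying $G$ to $\varphi^{-1} : \C(X, GY) \to \D(FX, Y)$ and precomposing with the unit yields the identity on $\C(X,GY)$ whenever $GF = \Id$ and the adjunction unit at $X$ is $\id_X$.

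I expect the only real obstacle to be bookkeeping: keeping track of whether we are applying naturality of $\varphi$ in the first or second argument, and in Part 2 noting exactly which appearances of $GF$ are being silently rewritten as $\Id$. Once the diagrammatic conventions are fixed, each axiom becomes a one-line calculation, and no further ingredients beyond those already recalled in \autoref{lem:adj_set_po} and \autoref{def:relative_monad} are needed. The fact that the two constructions are mutually inverse (which the lemma hints at by calling them adjoint, as promised in \autoref{lem:adj_mon_rmon}) is not part of the statement itself and will be established separately.
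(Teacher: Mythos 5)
Your proposal is correct and follows essentially the same route as the paper's own proof: a direct verification of the Kleisli-style axioms of \autoref{def:endomonad} and \autoref{def:relative_monad}, reducing each to the corresponding axiom for the given (relative) monad via naturality of $\varphi$ and functoriality of $G$ resp.\ $F$. You even make explicit the point the paper leaves implicit in Part 2, namely that $GF=\Id$ together with an identity unit (so that $\varphi^{-1}(Gf)=f$, i.e.\ $G\varphi^{-1}(h)=h$) is what makes the calculations with $\sigma^-$ go through.
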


\proof
 We check the commutativity of the corresponding diagrams:
 \begin{enumerate}
  \item for the data $(P^+,\eta^+,\sigma^+)$:
    \begin{itemize}
         \item $\comp{\eta_c}{\sigma^+(f)} = \comp{\varphi(\eta_c)}{G(\sigma(\varphi^{-1}f))} = 
                  \varphi(\comp{\eta_c}{\sigma(\varphi^{-1}f)}) = \varphi\varphi^{-1}f = f$
          \item $\sigma^+(\eta^+_c) = G(\sigma(\varphi^{-1}(\varphi(\eta_c)))) = G(\sigma(\eta_c)) = G\id = \id$
          \item \begin{align*}
                     \comp{\sigma^+(g)}{\sigma^+(g)} &= \comp{G\sigma\varphi^{-1}f}{G\sigma\varphi^{-1}g} \\
                                                     &= G\bigl(\comp{\sigma(\varphi^{-1}f)}{\sigma(\varphi^{-1}g)}\bigr)  \\ 
                               &= G\bigl(\sigma(\comp{\varphi^{-1}f}{\sigma(\varphi^{-1}g)})\bigr) \\  
                               &= G\bigl( \sigma(\varphi^{-1}(\comp{f}{G(\sigma(\varphi^{-1}g))})) \bigr) \\
                               &= \sigma^+(\comp{f}{\sigma^+(g)})
                \end{align*}

    \end{itemize}
   
   \item for the data $(P^-, \eta^-, \sigma^-)$:
    \begin{itemize}
     \item $\comp{\eta^-_c}{\sigma^-(f)} = \comp{F\eta_c}{\varphi^{-1}(\sigma(Gf))} = \varphi^{-1}(\comp{\eta_c}{\sigma(Gf)}) = \varphi^{-1}(Gf) = f$ 
     \item $\sigma^-(\eta^-_c) = \varphi^{-1}(\sigma(G\eta^-_c)) = \comp{F(\sigma(GF\eta_c))}{\epsilon_{FPC}} = \comp{F(\sigma(\eta_c))}{\epsilon_{FPc}} = 
                                                      \comp{F\id}{\epsilon_{FPc}} = \id$
     \item \begin{align*}
            \sigma^-(\comp{f}{\sigma^-g}) &= \varphi^{-1}\Bigl( \sigma\bigl( G(\comp{f}{\sigma^-g}) \bigr)\Bigr) \\
                                          &= \varphi^{-1}\Bigl( \sigma\bigl( \comp{Gf}{G\sigma^-g} \bigr)\Bigr) \\
                                          &= \varphi^{-1}(\sigma(\comp{Gf}{\sigma Gg})) \\
                                          &= \varphi^{-1}(\comp{\sigma Gf}{\sigma Gg}) \\
                                          &= \comp{F(\comp{\sigma Gf}{\sigma Gg})}{\epsilon_F} \\
                                          &= \comp{\comp{F\sigma Gf}{F\sigma Gg}}{\epsilon_F} \\
                                          &= \comp{\comp{\comp{F\sigma Gf}{\epsilon_F}}{F\sigma Gg}}{\epsilon_F} \\
                                          &= \comp{\varphi^{-1}\sigma Gf}{\varphi^{-1}\sigma Gg}\\
                                          &= \comp{\sigma^-f}{\sigma^-g}
           \end{align*}
\qed
    \end{itemize}
 
 \end{enumerate}

\noindent
This construction is functorial, and yields an adjunction between a category of 
monads on $\C$ and relative monads on $F$. Details will be reported elsewhere.

\begin{example}[Lambda Calculus as Relative Monad on $\Delta$]\label{ex:ulcbeta}
Consider the set of all lambda terms indexed by their set of free variables as defined in \autoref{ex:ulc_def}.
We write $\lambda M$ and $MN$ for $\Abs M$ and $\App M N$, respectively.
We equip each $\ULC(V)$ with a preorder taken as the reflexive--transitive closure of the 
relation generated by the rule
\begin{equation*}   \quad (\lambda M) N ~ \leq ~ M [*:= N] \end{equation*}
and its propagation into subterms.
 This defines a monad \lstinline!ULCBETA! from sets to preorders over the functor $\Delta$,
    \[\ULCB : \SET\stackrel{\Delta}{\to}\PO.\] 
 The family $\we^{\ULC}$ is given by the constructor $\Var$, and the substitution map 
  \[\sigma_{X,Y} : \PO\bigl(\Delta (X), \ULCB(Y)\bigr) \to \PO\bigl(\ULCB(X),\ULCB(Y)\bigr)\]
 is given by capture--avoiding simultaneous substitution. 
Via the adjunction of \autoref{lem:adj_set_po} the substitution can also be read as
  \[\sigma_{X,Y} : \SET\bigl(X, \ULC(Y)\bigr) \to \PO\bigl(\ULCB(X),\ULCB(Y)\bigr) \enspace .\]

\end{example}

\begin{remark}[about Substitution] \label{rem:about_substitution}
 The substitution in \autoref{ex:ulcbeta} is compatible with the order on terms in the following sense:
 \begin{packenum}
  \item $M \leq N \enspace \text{implies} \enspace M[*:=A] \leq N[*:=A] \enspace$ and
  \item $A \leq B \enspace \text{implies} \enspace M[*:=A] \leq M[*:=B]$. 
 \end{packenum}
 The first implication is a general fact for any relative monad $P$ on $\Delta$: it is a special case of $\sigma_{X,Y}(f)$ being 
   a morphism in the category $\PO$ for any $f\in \PO(\Delta V,PW)$.
The second monotony property, however, is \emph{false} in general.
As an example, consider the monad given by
\begin{align*} 
F(V) ::=\quad & \enspace \Var : V \to F(V) \\ 
         {}\mid{}& \enspace \bot : F(V) \\
        {}\mid{} & \enspace (\Rightarrow) : F(V) \times F(V)\to F(V)
\end{align*}
equipped with a preorder which is contravariant in the first argument of the arrow constructor $\Rightarrow$.
Substituting in this position, the first argument of $(\Rightarrow)$, does in fact \emph{reverse}
the order on terms, i.e.\ we obtain (using $\Rightarrow$ infixed)
\[ A < B  \quad\text{implies} \quad (* \Rightarrow M) [*:=B] < (*\Rightarrow M)[*:=A] \enspace . \]
A different definition of monad which would enforce the second implication to hold --- and hence not include
 the example $F$ --- can be given easily by considering $\PO$ as a 2--category enriched over itself: 
given morphisms $f, g \in \PO(X,Y)$ we say that there is precisely one 2--cell 
\[ f \Rightarrow g  \quad\text{iff}\quad
 f \leq g \quad\text{iff}\quad \forall x : X, f(x) \leq g(x) \enspace . \]
A monad $P$ would then have to be equipped with a substitution action that is given, for any two sets $V$ and $W$, by \emph{a functor} (of preorders)
\[ \sigma_{V,W} : \PO(\Delta V,PW) \to \PO(PV,PW) \enspace . \]
\Autoref{def:hat_P_subst} explains one of the consequences of our monadic substitution lacking ``higher--order monotonicity''.

\end{remark}

We generalize the definition of colax monad morphisms to relative monads:

\begin{definition}[Colax Morphism of Relative Monads, \autoref{code:colax_rel_mon_mor}] 
 \label{def:colax_rel_mon_mor}
Let $P : \C\stackrel{F}{\to}\D$ and 
$Q : \C'\stackrel{F'}{\to}\D'$ be two relative monads. A \emph{colax morphism of relative monads
from $P$ to $Q$} is given by a quadruple $(G,G', N, \tau)$ consisting of
a functor $G\colon \C\to\C'$ and a functor $G': \D \to D'$ as well as a natural transformation
$N: F'G \to G'F$ as in
\[
 \begin{xy}
  \xymatrix@!=2.5pc{\C \ar[r]^{F} \ar[d]_{G} & \D \ar[d]^{G'}\\
             \C'\ar[r]_{F'} & \D'. \ultwocell<\omit>{N}
}
 \end{xy}
\]
 and a  natural transformation
 $\tau : \comp{P}{G'} \to \comp{G}{Q}$ as in 
\[
 \begin{xy}
  \xymatrix@!=2.5pc{\C \ar[r]^{P} \ar[d]_{G} \drtwocell<\omit>{\;\;\tau}& \D \ar[d]^{G'}\\
             \C'\ar[r]_{Q} & \D',
}
 \end{xy}
\]
such that the following diagrams commute for all suitable morphisms $f$:
\begin{equation*} 
 \begin{xy}
  \xymatrix @=5pc{
  G'Pc \ar[r]^{G'\kl[P]{f}} \ar[d]_{\tau_c}& G'Pd \ar[d]^{\tau_d} \\
  QGc \ar[r]_{\kl[Q]{\comp{Nc}{\comp{G'f}{\tau_d}}}} & QGd 
}
 \end{xy}
\qquad
\begin{xy}
  \xymatrix @=5pc {
 F'Gc \ar[r]^{Nc} \ar[rrd]_{\we^Q_{Gc}} & 
                 G'F c \ar[r]^{G'\we^P_c}& G'Pc \ar[d]^{\tau_c} \\
{} & {} & QGc.
}
\end{xy}
\end{equation*}
\end{definition}
\begin{remark}
 Naturality of $\tau$ in the preceding definition is actually a consequence of the commutative diagrams of \autoref{def:colax_rel_mon_mor}, 
 cf.\
 Lemma \lstinline!colax_RMonad_Hom_NatTrans! in the \textsf{Coq} library.
\end{remark}

\begin{remark} \label{rem:rel_mon_mor_case}
  In \autoref{chap:comp_types_sem} we are going to use the following instance of the preceding definition:
  the categories $\C$ and $\C'$ are instantiated by $\TS{T}$ and $\TS{T'}$, respectively, for sets $T$ and $T'$.
  The functor $G$ is the retyping functor (cf.\ \autoref{def:retyping_functor}) associated to some 
  translation of types $g : T \to T'$.
  Similarly, the categories $\D$ and $\D'$ are instantiated by $\TP{T}$ and $\TP{T'}$, and 
  the functor $F$ by
  \[F := \TDelta{T} : \TS{T} \to \TP{T} \enspace ,\]
  and similar for $F'$:
\[
 \begin{xy}
  \xymatrix@!=2.5pc{
       **[l] \TS{T} \ar[r]^{\TDelta{T}} \ar[d]_{\retyping{g}}   & **[r] \TP{T} \ar[d]^{\retyping{g}}\\
       **[l] \TS{T'}\ar[r]_{\TDelta{T'}} & **[r]\TP{T'}. \ultwocell<\omit>{\Id}
}
 \end{xy}
\]
\end{remark}

\noindent
Given a monad $P$ on $F:\C\to\D$, the notion of \emph{module over $P$} generalizes the notion of monadic substitution:

\begin{definition}[Module over a Relative Monad, \autoref{code:rmodule}]
 \label{def:rmodule}
Let $P\colon\C\stackrel{F}{\to}\D$ be a relative monad and let $\E$ be a category. A \emph{module $M$ over $P$ with codomain $\E$} is given by
\begin{packitem}
 \item a map $M: \C \to \E$ on the objects of the categories involved and 
 \item for all objects $c,d$ of $\C$, a map 
      \[ 
          \varsigma_{c,d} : \D (Fc,Pd) \to \E (Mc,Md)
      \]
\end{packitem}
such that the following diagrams commute for all suitable morphisms $f$ and $g$:
\begin{equation*}
\begin{xy}
\xymatrix @=3pc{
 Mc \ar[r]^{\mkl{f}} \ar[rd]_{\mkl{\comp{f}{\kl{g}}}} & Md\ar[d]^{\mkl{g}} & Mc  \ar@/^1pc/[rd]^{\mkl{\we_c}} \ar@/_1pc/[rd]_{\id} & {} \\
  {} & Me & {} & Mc. \\
}
\end{xy}
\end{equation*}
\end{definition}

\noindent
A functoriality (\lstinline!rmlift!) for such a module $M$ is then defined similarly to that for monads:
 for any morphism $f : c \to d$ in $\C$ we set
 \[ M(f) := \rmlift_M(f) := \varsigma({ \comp{Ff}{\eta}}) \enspace .\]

\noindent
The following examples of modules are instances of constructions explained in the next section:

\begin{example}[\Autoref{ex:ulcbeta} cont.]\label{ex:ulcb_taut_mod}
 The map $\ULCB : V \mapsto \ULCB(V)$ yields a module over the relative monad $\ULCB$, the \emph{tautological module} $\ULCB$.
\end{example}

\begin{example}\label{ex:ulcb_der_mod}
 Recall that $V' := V + 1$. The map $\ULCB' : V \mapsto \ULCB(V')$ inherits the structure of an $\ULCB$--module from the
  tautological module $\ULCB$ (cf.\ \autoref{ex:ulcb_taut_mod}).
We call $\ULCB'$ the \emph{derived module} of the module $\ULCB$; cf.\ also \autoref{subsection:rmod_examples}.
\end{example}
\begin{example}\label{ex:ulcb_prod_mod}
 The map $V\mapsto \ULCB(V)\times\ULCB(V)$ inherits a structure of an $\ULCB$--module from the tautological module $\ULCB$.
\end{example}

A \emph{module morphism} is a family of morphisms that is compatible with module substitution in the source and target modules:
\begin{definition}[Morphism of Relative Modules, \autoref{code:rel_mod_mor}]\label{def:rel_mod_mor}
 Let $M$ and $N$ be two relative modules over $P\colon\C\stackrel{F}{\to}\D$ with codomain $\E$. 
A \emph{morphism of relative $P$--modules} from $M$ to $N$ is given by a collection of morphisms $\rho_c\in\E(Mc,Nc)$ 
such that for all morphisms $f\in \D(Fc,Pd)$ the following diagram commutes:
\begin{equation*}
 \begin{xy}
  \xymatrix @=3pc{
  Mc \ar[r]^{\mkl[M]{f}} \ar[d]_{\rho_c}& Md \ar[d]^{\rho_d}  \\
  Nc \ar[r]_{\mkl[N]{f}} & Nd.\\
}
 \end{xy}
\end{equation*}
\end{definition}

\noindent
The modules over $P$ with codomain $\E$ and morphisms between them form a category called $\RMod{P}{\E}$ 
(in the digital library: \lstinline!RMOD P E!). 
Composition and identity morphisms of modules are defined by pointwise composition and identity, similarly to the category of monads.

\begin{example}[\Autorefs{ex:ulcb_taut_mod}, \ref{ex:ulcb_der_mod}, \autoref{ex:ulcb_prod_mod} cont.]\label{ex:ulcb_constructor_mod_mor}
 Abstraction and application are morphisms of $\ULCB$--modules:
 \begin{align*} \Abs &: \ULCB' \to \ULCB \enspace , \\
                \App &: \ULCB \times \ULCB \to \ULCB \enspace .
 \end{align*}
\end{example}

\subsection{Constructions on Relative Monads and Modules}\label{subsection:rmod_examples}
The following constructions are analogous to those of \autoref{subsection:mod_examples}.

\begin{definition} [Tautological Module] 
Every monad $P$ on $F:\C\to\D$ yields a module $(P,\sigma^P)$ --- also denoted by $P$ --- over itself, 
i.e.\ an object in the category $\RMod{P}{\D}$. 
\end{definition}

\begin{definition}[Constant and Terminal Module] 
Let $P$ be a monad on $F:\C\to\D$.
For any object $e \in \E$ the constant map $T_e\colon\C\to\E$, $c\mapsto e$ for all $c\in \C$, is equipped with the structure of a $P$--module
by setting $\varsigma_{c,d}(f) = \id_e$. 
In particular, if $\E$ has a terminal object $1_\E$, then the constant module $T_{1_\E} : c \mapsto 1_\E$ is terminal in $\RMod{P}{\E}$.
\end{definition}

\begin{definition}[Postcomposition with a functor]
 Let $P$ be a monad on $F:\C\to\D$, and let $M$ be a $P$--module with codomain $\E$.
 Let $G: \E \to \X$ be a functor. Then the object map $\comp{M}{G}:\C\to \X$ defined by $c\mapsto G(M(c))$ 
 is equipped with a $P$--module structure by setting, for $c, d \in \C$ and $f\in \D(Fc,Pd)$,
  \[  \varsigma^{\comp{M}{G}}(f) := G(\varsigma^M(f)) \enspace . \]
 For $M:=P$ and $G$ a constant functor mapping to an object $x\in \X$ and its identity morphism $\id_x$, 
 we obtain the constant module $(T_x,\id)$ as in the preceding definition.
\end{definition}

\begin{definition}[Pullback Module] 
 Suppose given two relative monads $P$ and $Q$ and a morphism $\tau : P \to Q$ as in \autoref{def:colax_rel_mon_mor}.
Let $N$ a $Q$-module with codomain $\E$. We define a $P$-module $h^* M$ to $\E$ with object map 
 \[c\mapsto M (Gc) \]
by defining the substitution map, for $f : Fc \to Pd$, as
 \[\mkl[h^*M] f := \mkl[M]{\comp{N_c}{\comp {G'f}{h_d}}} \enspace . \] 
The module thus defined is called the \emph{pullback module of $N$ along $h$}. 
The pullback extends to module morphisms and is functorial. 
\end{definition}

\begin{definition}[Induced Module Morphism]
\label{def:ind_rmod_mor}
With the same notation as before, the monad morphism $h$ induces a morphism of $P$--modules $h:G'P \to h^*Q$. 
Note that the domain module is the module obtained by postcomposing $P$ with $G'$, 
whereas for (plain) monads the module was just the tautological module of the domain monad. 
\end{definition}

\begin{definition}[Product] 
Suppose the category $\E$ is equipped with a product. Let $M$ and $N$ be $P$--modules with codomain $\E$. Then the map 
\[ M \times N : \C\to\E, \quad c \mapsto Mc \times Nc \] 
 is canonically equipped with a substitution and thus constitutes a module called the \emph{product of $M$ and $N$}. 
 This construction extends to a product on $\RMod{P}{\E}$. 
\end{definition}

\subsection{Derivation \& Fibre} \label{sec:deriv_and_fibre}

We are particularly interested in monads on the functor $\TDelta{T}: \TS{T}\to\TP{T}$ for some set $T$, 
 and modules over such monads.
The constructions on modules over monads of \autoref{sec:monads_on_set_families},
derivation (cf.\ \autoref{sec:endo_deriv}) and fibre modules (cf.\ \autoref{sec:fibre}), carry over to 
modules over monads on $\family{\Delta}{T}$.

\begin{definition}
  \label{def:rel_module_deriv}
Given a monad $P$ over $\TDelta{T}$ and a $P$--module $M$ with codomain $\E$, 
we define the derived module of $M$ with respect to $u\in T$ by setting
\[ M^u(V) := M (V^{*u}) \enspace . \]
The module substitution is defined, for $f\in \TP{T}(\TDelta{T}V, PW)$, by 
 \[ \mkl[M^u]{f} := \mkl[M]{\shift{f}{u}} \enspace .\]
Here the ``shifted'' map 
   \[ \shift{f}{u} \in \TP{T}\bigl(\TDelta{T}(V^{*u}), P(W^{*u})\bigr) \]
is the adjunct under the adjunction of \autoref{rem:adj_set_po_typed}  
of the coproduct map
 \[ \varphi (\shift{f}{u}) := [\comp{f}{P(\inl)}, \we(\inr(*))] : V^{*u} \to UP(W^{*u}) \enspace , \]
where $[\inl, \inr] = \id : W^{*u}\to W^{*u}$.
Derivation is an endofunctor on the category of $P$--modules with codomain $\E$. 
\end{definition}

\begin{notation}\label{not:deriv_rmod_untyped}
 In case the set $T$ of types is $T = \{*\}$ the singleton set of types, i.e.\ when talking about untyped syntax,
we denote by $M'$ the derived module of $M$.
  Given a natural number $n$, we denote by $M^n$ the module obtained by deriving $n$ times the module $M$.
\end{notation}

Analogously to \autoref{sec:monads_on_set_families}, we derive more generally with respect to a natural transformation $\tau : 1 \to \T U_n$ as in
 \autoref{def:derived_mod_II}:
\begin{definition}[Derived Module]\label{def:derived_rel_mod_II}
  Let $\tau : 1 \to \T U_n$ be a natural transformation.
 Let $T$ be a set and $P$ be a relative monad on $\TDelta{T}_n$.
 \noindent
Given any $P$--module $M$, we call \emph{derivation of $M$ with respect to $\tau$} the module with object map
$M^{\tau} (V):= M\left(V^{\tau(V)}\right)$.
\end{definition}

\begin{definition}
 Let $P$ be a relative monad over $F$, and $M$ a $P$--module with codomain $\E^T$ for some category $\E$.
 The \emph{fibre module $\fibre{M}{t}$ of $M$ with respect to $t\in T$} has object map
  \[ c \mapsto M(c)(t) = M(c)_t \]
  and substitution map
  \[ \mkl[{\fibre{M}{t}}]{f} := \bigl(\mkl[M]{f}\bigr)_t \enspace .  \]
\end{definition}

\noindent
This definition generalizes to fibres with respect to a natural transformation as in \autoref{def:derived_rel_mod_II}.

The pullback operation commutes with products, derivations and fibres :

\begin{lemma} \label{lem:rel_pb_prod}Let $\C$ and $\D$ be categories and $\E$ be a category with products. Let $P\colon \C\to \D$ and $Q\colon \C\to D$ be monads
over $F:\C\to\D$ and $F':\C'\to \D'$, resp., and $\rho : P \to Q$ a monad morphism. 
Let $M$ and $N$ be $P$--modules with codomain $\E$. The pullback functor is cartesian:
 \[ \rho^* (M \times N) \cong \rho^*M \times \rho^*N \enspace .\]
\end{lemma}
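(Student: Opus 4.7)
The plan is to prove the claim by direct comparison of the two modules pointwise, relying on the fact that both the pullback and the product of modules are constructed \emph{pointwise} from the structure of the codomain category $\E$. Write $(G,G',N,\rho)$ for the data of the monad morphism $\rho : P\to Q$ (so $G : \C \to \C'$ and $G' : \D \to \D'$), and let $M,N$ denote parallel $Q$--modules with codomain $\E$ (as pullback along $\rho : P\to Q$ takes $Q$--modules to $P$--modules; I read the $P$--module in the statement as a typo).

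First I would unfold the definitions on the underlying object maps. On the left, $\rho^*(M\times N)$ sends $c \in \C$ to $(M\times N)(Gc) = M(Gc) \times N(Gc)$ by definition of the pointwise product in $\RMod{Q}{\E}$. On the right, $\rho^*M \times \rho^*N$ sends $c$ to $\rho^*M(c) \times \rho^*N(c) = M(Gc) \times N(Gc)$. The two object maps therefore agree on the nose; the required isomorphism will be the identity carrier $\id_{M(Gc)\times N(Gc)}$ pointwise.

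Next I would check that the module substitutions coincide. For $f \in \D(Fc,Pd)$, both substitutions are determined by the product structure in $\E$ applied to the component substitutions $\varsigma^M$ and $\varsigma^N$ evaluated on the morphism $\comp{N_c}{\comp{G'f}{\rho_d}} \in \D'(F'Gc, Q(Gd))$. Concretely, by the definition of pullback (cf.\ the paragraph following \autoref{def:ind_rmod_mor}), we have
\[ \varsigma^{\rho^*(M\times N)}(f) = \varsigma^{M\times N}\bigl(\comp{N_c}{\comp{G'f}{\rho_d}}\bigr) = \varsigma^{M}(\ldots) \times \varsigma^{N}(\ldots) \enspace , \]
where the second equality uses that products of modules are defined componentwise. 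On the other hand, the product $\rho^*M \times \rho^*N$ has substitution $\varsigma^{\rho^*M}(f) \times \varsigma^{\rho^*N}(f)$, and unfolding each pullback substitution yields the same expression. Hence the two substitution operations are equal, and the identity morphisms $\id_{M(Gc)\times N(Gc)}$ assemble into a morphism of $P$--modules that is trivially invertible.

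There is no substantive obstacle here; the verification is entirely bookkeeping, and the real content is the observation that pullback and finite products are both lifted pointwise from $\E$, hence commute. In fact the isomorphism is an \emph{equality} of modules in the informal mathematics, though in the \textsf{Coq} formalization one would only obtain propositional (not definitional) equality, in line with \autoref{rem:about_module_eq}; for this reason the statement is formulated as an isomorphism rather than an equality.
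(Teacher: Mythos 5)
Your proof is correct: the paper states this lemma without proof (treating it, like the analogous \autoref{rem:pullback_deriv_fibre_endo} for plain monads, as routine), and your pointwise unfolding of the pullback and product substitutions is exactly the intended verification, including the correct reading of ``$P$--modules'' as a typo for $Q$--modules and the remark that the isomorphism is an identity-carried equality informally but only a propositional equality in the formalization.
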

\begin{lemma} \label{lem:rel_pb_comm}
Consider the setting as in the preceding lemma, with $F = \TDelta{T}$, and $t\in T$.
 Then we have
\[ \rho^* (M^t) \cong (\rho^*M)^t \enspace . \]
\end{lemma}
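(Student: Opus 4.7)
The plan is to unfold both sides of the isomorphism on an arbitrary object $V$ of the source category, and exhibit a natural isomorphism between the underlying functors that subsequently lifts to a module morphism.

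First I would compute the object maps: by definition of pullback and derivation we have
\[ \rho^*(M^t)(V) \;=\; M^t(GV) \;=\; M\bigl((GV)^{*t}\bigr), \qquad (\rho^*M)^t(V) \;=\; (\rho^*M)(V^{*t}) \;=\; M\bigl(G(V^{*t})\bigr). \]
So it is enough to produce a natural isomorphism $G(V^{*t}) \cong (GV)^{*t}$ in the appropriate category and transport it through $M$. In the setting at hand $F = \TDelta{T}$ and the relevant $G$ arising from a morphism of relative monads is either the identity or a retyping functor $\retyping{g}$, which is a left adjoint by \autoref{rem:retyping_adjunction_kan}. Since $V^{*t} = V + D(t)$ is a coproduct, left-adjointness of $G$ gives $G(V + D(t)) \cong GV + G(D(t))$, and the one-point family $D(t)$ is preserved (up to the canonical iso) by $G$, whence $G(V^{*t}) \cong (GV)^{*t}$ naturally in $V$.

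Next I would verify that this pointwise isomorphism is compatible with the two module substitution structures, i.e.\ that it constitutes a morphism in $\RMod{P}{\E}$. By \autoref{def:rel_mod_mor} this amounts to checking, for every $f \in \TP{T}(\TDelta{T}V, PW)$, the commutativity of the square whose horizontal arrows are the substitutions of $\rho^*(M^t)$ and $(\rho^*M)^t$. Unfolding using \autoref{def:rel_module_deriv} and the definition of pullback along a colax morphism, the substitution on $\rho^*(M^t)$ is $\mkl[M]{\shift{(\comp{N}{\comp{G'f}{\tau}})}{t}}$, while the substitution on $(\rho^*M)^t$ is $\mkl[M]{\comp{N}{\comp{G'(\shift{f}{t})}{\tau}}}$ (both composed with the relevant coproduct iso of $G$). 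The square therefore reduces, by functoriality of $M$'s substitution in its argument and naturality of the chosen coproduct iso, to a single equation between two maps $\TDelta{T'}(G(V^{*t})) \to P(G(W^{*t}))$ in $\TP{T'}$.

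The hard part will be this last coherence: showing that "shift through $\rho$" agrees with "$\rho$ then shift", up to the coproduct isomorphism. This is a diagram chase over the adjunction $\varphi$ of \autoref{rem:adj_set_po_typed}, using three ingredients: naturality of $N$ and $\tau$ (from \autoref{def:colax_rel_mon_mor}), the unit law for $\we$ in the second defining square of a colax morphism, and preservation of coproducts by $G$ and $G'$. The computation splits along the two summands of $(V + D(t))$: on the $V$-summand both composites reduce to $\comp{Nc}{\comp{G'f}{\tau_d}}$ transported by $P(\inl)$, using the Kleisli axiom $\comp{\we}{\kl{h}} = h$; on the $D(t)$-summand both composites collapse to $\we^Q$ via the compatibility of $\tau$ with the units of $P$ and $Q$. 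Once this is in place, the two module substitutions agree on the nose after transport by the coproduct iso, and the natural isomorphism of the previous paragraph becomes the desired module isomorphism $\rho^*(M^t) \cong (\rho^*M)^t$.
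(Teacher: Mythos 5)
Your verification is correct, but note that the paper itself offers no proof of \autoref{lem:rel_pb_comm}: it is stated bare (as are \autoref{lem:rel_pb_prod} and \autoref{lem:rel_pb_fibre}), and the nearest justification in the text is the plain--monad analogue, \autoref{rem:pullback_deriv_fibre_endo}, dispatched there as ``just instances of associativity of composition of monad morphisms''. That abstract route is available for plain monads because modules are themselves colax monad morphisms and derivation is precomposition with the derivation endomorphism; in the relative setting the paper never recasts modules this way (cf.\ \autoref{def:rmodule}, \autoref{def:ind_rmod_mor}), so a hands--on verification like yours is essentially forced, and it is what the formalization in effect carries out (the canonical transports \lstinline!IsoPF!, \lstinline!der_comm! etc.\ are your comparison maps, with identity carriers in the simple case). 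Your decomposition is sound: identify carriers via $G(V^{*t})\cong (GV)^{*t}$ using cocontinuity of the retyping functor and $\retyping{g}D(t)\cong D(g(t))$, then check the module--morphism square by splitting the shifted map over the coproduct $V+D(t)$, where the $V$--summand needs naturality of $N$ and of $\tau$ with respect to the renaming $P(\inl)$ and the fresh--variable summand collapses by the unit axiom $\comp{N}{\comp{G'\we^P}{\tau}}=\we^Q$; in the case the paper actually uses in \autoref{ex:ulc_rep_mor} ($G=G'=\Id$) the carrier isomorphism is the identity and the entire content is exactly your ``shift commutes with $\rho$'' computation. Two minor remarks: the lemma as printed is loose about typing --- for a morphism over a type translation $g:T\to T'$ the left--hand side should be read as $\rho^*(M^{g(t)})$, and the comparison isomorphism exists only because $G$ is an identity or a retyping functor (the paper relies on the same fact, ``context extension and retyping permute'', in the proof of \autoref{thm:compilation}), so your restriction is not a loss but a needed hypothesis; and your appeal to the Kleisli unit law on the $V$--summand is really the naturality of $\tau$ derived from the colax axioms, a cosmetic rather than substantive point.
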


\begin{lemma}\label{lem:rel_pb_fibre}
  Suppose $N$ is a $Q$--module with codomain $\family{\E}{T}$, and $t\in T$. Then
 \[ \rho^*\fibre{M}{t} \cong \fibre{\rho^*M}{t} \enspace . \]
\end{lemma}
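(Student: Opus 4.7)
The plan is to prove the isomorphism by showing that the two modules agree on the nose, both on objects and on the substitution action, just as was done for the analogous statements in Lemmas \ref{lem:rel_pb_prod} and \ref{lem:rel_pb_comm}. Conceptually, the clean way to see this is to observe that fibres arise as a special case of postcomposition with a module morphism: the fibre operation $\fibre{(\_)}{t}$ is nothing but postcomposition with the projection morphism $\bigl((\_)_t, \id\bigr) : (\family{\E}{T}, \Id) \to (\E, \Id)$, while pullback $\rho^*(\_)$ is precomposition with $\rho$. Since composition of (colax) monad morphisms is associative, the two operations commute.

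Unfolding this concretely, write $\rho = (G, G', N_{\rho}, \tau)$ with $G : \C \to \C'$ the underlying functor. On objects, both $\rho^*\fibre{M}{t}$ and $\fibre{\rho^*M}{t}$ send $c \in \C$ to $M(Gc)(t) \in \E$, so the object maps coincide. For the substitution maps, first I would fix $f \in \D(Fc, Pd)$. By the definition of pullback, $\mkl[\rho^*M]{f} = \mkl[M]{\comp{N_{\rho}c}{\comp{G'f}{\tau_d}}}$, and by the definition of fibre this is then projected at $t$ to obtain $\mkl[{\fibre{\rho^*M}{t}}]{f}$. On the other side, $\mkl[{\rho^*\fibre{M}{t}}]{f} = \mkl[{\fibre{M}{t}}]{\comp{N_{\rho}c}{\comp{G'f}{\tau_d}}} = \bigl(\mkl[M]{\comp{N_{\rho}c}{\comp{G'f}{\tau_d}}}\bigr)_t$. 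The two expressions are literally equal.

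The main point to watch is merely careful bookkeeping of the various functors and markers involved; there is no genuine obstacle, since the equality is essentially definitional. As flagged in Remark \ref{rem:about_module_eq}, the one subtlety appears only in the formalized setting: although the above identification is an equality in informal mathematics, it holds only propositionally (and not by convertibility) in an intensional type theory such as \textsf{Coq}, which is why the statement is phrased as an isomorphism $\cong$ rather than as a strict equality.
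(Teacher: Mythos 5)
Your proof is correct and matches the paper's own (implicit) treatment: the paper states this lemma without proof, and in the analogous plain--monad case (Remark \ref{rem:pullback_deriv_fibre_endo}) justifies it exactly as you do, via pullback and fibre being pre-- and postcomposition so that the identification is essentially associativity, backed up by the definitional unfolding you carry out. Your remark about the equality holding only propositionally in \textsf{Coq}, hence the ``$\cong$'', is also precisely the point made in Remark \ref{rem:about_module_eq}.
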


\begin{definition}\label{def:forget_hat_module}
  Recall that the category $\PS$ is the category of preordered sets and set--theoretic maps (not necessarily monotone)
 between them (\autoref{def:cat_wPO}). 
Given a relative monad $P$ on some functor $F$ and a $P$--module $M$ with codomain $\PO$, 
we can consider $M$ as a $P$--module with codomain $\PS$.
 We denote this module by $\hat{M}$. 
 In other words, we have a functor 
 \[ \hat{\_} : \RMod{P}{\PO} \to \RMod{P}{\PS} \]
 obtained by postcomposition with the forgetful functor from $\PO$ to $\PS$.

\end{definition}

\begin{definition}[Substitution of \emph{one} Variable]\label{def:hat_P_subst}

Let $P$ be a monad over $\Delta$.
 For any set $X$, we define a binary substitution operation
 \begin{align*} 
   \subst(X) : P(X^*) \times P(X) &\to P(X), \\ 
    (y,z)&\mapsto y [*:= z] := \kl{\defaultmap(\we_X , z) }(y) \enspace , 
\end{align*}
 where ``$\defaultmap$'' is a coproduct map; for $f : A \to B$ and $z \in B$,
 \[ \defaultmap(f, z) := [f, x \mapsto z] : A+\{*\} \to B\enspace . \]
 This defines a morphism of $P$--modules with codomain $\PS$,
\[\subst^P : \hat{P}' \times \hat{P}\to \hat{P} \enspace .\] %
  The reason why we have to consider the category $\PS$ with \emph{all set--theoretic} maps instead of just monotone maps
  is that $\subst^P$ is not necessarily monotone in its second argument, cf.\ \autoref{rem:about_substitution}.
\end{definition}

The untyped substitution of \autoref{def:hat_P_subst} actually is a special case of the following typed substitution:

\begin{definition}[Substitution of \emph{one} Variable, typed]\label{def:hat_P_subst_typed}
  Let $T$ be a (nonempty) set and let $P$ be a monad over $\family{\Delta}{T}$.
  For any $s,t\in T$ and $X\in \TS{T}$ we define a binary substitution operation
      \begin{align*} 
   \subst_{s,t}(X) : P(X^{*s})_t \times P(X)_s &\to P(X)_t, \\ 
    (y,z)&\mapsto y [*:= z] := \kl{\defaultmap(\we_X , z) }(y) \enspace .
\end{align*}
  For any pair $(s,t)\in T^2$, we thus obtain a morphism of $P$--modules
  \[ \subst^{P}_{s,t} : \fibre{\hat{P}^{s}}{t} \times \fibre{\hat{P}}{s} \to \fibre{\hat{P}}{t} \enspace . \]
\end{definition}

\chapter{Simple Type Systems} \label{sec:compilation}

In this chapter we present two generalizations to simple type systems 
of Hirschowitz and Maggesi's initiality theorem for untyped syntax
\cite{DBLP:conf/wollic/HirschowitzM07}:
\begin{itemize}
 \item in \autoref{sec:sts_ju} we review Zsid\'o's theorem \cite[\chapterautorefname~6]{ju_phd}.
 \item In \autoref{sec:ext_zsido} we prove a variant of Zsid\'o's theorem which accounts for translations between languages over
\emph{different} sets of object types.
\end{itemize}

\noindent
We explain the difference between the two abovementioned theorems in more detail:

in Zsid\'o's theorem, the underlying set of types of a signature --- and thus of the term language the signature specifies --- is given as a fixed parameter.
In particular, all the models --- representations --- of the signature have the same underlying set of types.
Furthermore, this set does not necessarily have inductive structure, as opposed to the sets of types we characterize via initiality
in \autoref{sec:type_sigs} --- the content of \autoref{sec:sts_ju} is independent of that of \autoref{sec:type_sigs}.

In our variant of Zsid\'o's theorem we prove in \autoref{sec:ext_zsido}, a language is specified by \emph{a pair} $(S,\Sigma)$ of signatures,
a signature $S$ for \emph{types} as presented in \autoref{sec:type_sigs}, and a signature $\Sigma$ for \emph{terms} over the signature $S$.
A representation of such a signature is given by a pair of a representation of $S$ and a representation of $\Sigma$. In particular,
we consider models of $(S,\Sigma)$ whose underlying set of types is different from the set freely generated by the signature $S$.
The initiality result of \autoref{sec:ext_zsido} thus characterizes both the types and terms freely generated by a signature as initial object 
in a category of representations.

As running examples, we consider the simply--typed lambda calculus and Plotkin's \PCF~\cite{Plotkin1977223}. 
In \autoref{ex:logic_trans} we present a logic translation from classical to intuitionistic propositional logic
  as an instance of our theorem of \autoref{sec:ext_zsido}.
Before focusing on \emph{term signatures}, however, we review, in \autoref{sec:type_sigs},
\emph{algebraic} signatures as treated by Birkhoff \cite{birkhoff1935}.
Algebraic signatures are used in \autoref{sec:ext_zsido} for the specification of the set of types of a language.

\section{Signatures for Types}\label{sec:type_sigs}

We present \emph{algebraic signatures}, which later are used to specify the \emph{object types} of the languages
we consider.
Algebraic signatures and their models were first considered by Birkhoff \cite{birkhoff1935}.

\begin{definition}[Algebraic Signature]\label{def:raw_sig}
An \emph{algebraic signature $S$} is a family of natural numbers, i.e.\ a set $J_S$ and a map 
(carrying the same name as the signature) $S : J_S\to \mathbb{N}$.
For $j\in J_S$ and $n\in \mathbb{N}$, we also write $j:n$ instead of $j \mapsto n$.
An element of $J$ resp.\ its image under $S$ is called an \emph{arity} of $S$.
\end{definition}

\begin{example}[Algebraic Signature of \autoref{ex:slc_def}]\label{ex:type_sig_SLC}
  The algebraic signature of the types of the simply--typed lambda calculus is given by
  \[ S_{\SLC} := \{* : 0 \enspace , \quad (\TLCar) : 2 \}\enspace .\]
\end{example}

\noindent
To any algebraic signature we associate a category of \emph{representations}.
We call \emph{representation of $S$} any set $U$ equipped with operations according to the signature $S$. 
A \emph{morphism of representations} is a map between the underlying sets that is compatible with the 
operations on either side in a suitable sense.
Representations and their morphisms form a category.
We give the formal definitions:

\begin{definition}[Representation of an Algebraic Signature $S$, $S$--Algebra]\label{def:rep_alg_sig}

A \emph{representation} $R$ of an algebraic signature $S$ --- also known as \emph{$S$--algebra} --- is given by 
\begin{packitem}
 \item a set $X$ and
 \item for each $j\in J_S$, an operation $j^R:X^{S(j)} \to X$.
\end{packitem}
In the following, given a representation $R$, we write $R$ also for its underlying set. 
\end{definition}

\begin{example}\label{ex:type_PCF}
  The language \PCF~\cite{Plotkin1977223, Hyland00onfull} (see also \autoref{subsec:intro_pcf_lc_syntax}) 
is a simply--typed lambda calculus with a fixed point operator
  and arithmetic constants.
Let $J:= \{\Nat, \Bool, (\PCFar)\}$. The signature of the types of \PCF~is given by the arities 
   \[S_{\PCF}:= \lbrace\Nat:0\enspace ,\quad \Bool: 0 \enspace,\quad (\PCFar): 2 \rbrace \enspace .\]
 A representation $T$ of $S_{\PCF}$ is given by a set $T$ and three operations,
   \[
      \Nat^T : T\enspace, \quad \Bool^T : T \enspace , \quad  (\PCFar)^T : T\times T\to T \enspace .
   \]
\end{example}

\noindent
A morphism of representations is given by a map between the underlying sets that is compatible with
the representation structure:

\begin{definition}[Morphisms of Representations]\label{def:mor_raw_rep}
 Given two representations $T$ and $U$ of the algebraic signature $S$, a \emph{morphism} from $T$ to $U$ is 
  a map $f : T\to U$ such that, for any arity $n=S(j)$ of $S$, we have
  \[  \comp{j^T}{f} = \comp{(\underbrace{f\times\ldots\times f}_{n \text{ times}})}{j^U} \enspace . \]

\end{definition}

\begin{example}[\autoref{ex:type_PCF} continued]
 Given two representations $T$ and $U$ of $S_{\PCF}$, a morphism from $T$ to $U$ is a map $f : T\to U$ between the underlying sets
 such that, for any $s,t\in T$,
\begin{align*}
      f(\Nat^T) &= \Nat^U \enspace ,  \\
      f(\Bool^T) &= \Bool^U \quad \text{ and}\\
     f(s \PCFar^T t) &= f(s) \PCFar^U f(t) \enspace .
\end{align*}
\end{example}

\noindent
Representations of an algebraic signature $S$ and their morphisms form a category.

\begin{lemma}\label{lem:initial_sort}
 Let $(J,S)$ (or $S$ for short) be an algebraic signature. The category of representations of $S$ has an initial object $\hat{S}$. 
\end{lemma}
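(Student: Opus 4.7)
The plan is to construct $\hat{S}$ as the set of terms freely generated by the signature $S$, viewed as a representation in the obvious way, and then verify its initiality by structural induction. This is a standard initial algebra construction, adapted to the present setting.

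First I would define $\hat{S}$ as the inductive set whose formation rule is: for every arity $j \in J_S$ with $S(j) = n$, and every $n$--tuple $(t_1,\ldots,t_n) \in \hat{S}^n$, the formal expression $j(t_1,\ldots,t_n)$ belongs to $\hat{S}$. Concretely, $\hat{S}$ can be presented as the carrier of the least fixed point of the polynomial endofunctor $X \mapsto \coprod_{j \in J_S} X^{S(j)}$ on $\Set$; equivalently, it is the set of finite well--founded trees whose nodes are labelled by arities of $S$, respecting the arity of each label. The representation structure on $\hat{S}$ is given tautologically: for each $j \in J_S$ of arity $n$, the operation $j^{\hat{S}} : \hat{S}^n \to \hat{S}$ is the constructor $(t_1,\ldots,t_n) \mapsto j(t_1,\ldots,t_n)$.

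Next I would verify the universal property. Let $R$ be an arbitrary representation of $S$. I would define a map $\iota_R : \hat{S} \to R$ by recursion on the structure of $\hat{S}$, setting
\[
    \iota_R\bigl(j(t_1,\ldots,t_n)\bigr) := j^R\bigl(\iota_R(t_1),\ldots,\iota_R(t_n)\bigr) \enspace .
\]
By construction, this $\iota_R$ satisfies $\comp{j^{\hat{S}}}{\iota_R} = \comp{(\iota_R \times \cdots \times \iota_R)}{j^R}$ for every arity $j$ of $S$, so it is a morphism of representations in the sense of \autoref{def:mor_raw_rep}. For uniqueness, suppose $f : \hat{S} \to R$ is any morphism of representations. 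A straightforward induction on terms $t \in \hat{S}$, using the defining equation of a morphism at each step, shows that $f(t) = \iota_R(t)$ for all $t$; hence $f = \iota_R$.

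The main steps are thus entirely routine; the only point requiring care is the clean set--theoretic justification of the inductive definition of $\hat{S}$ (existence of a least fixed point for the signature functor), but this is standard since the functor $X \mapsto \coprod_{j \in J_S} X^{S(j)}$ preserves filtered colimits and hence admits an initial algebra by Adámek's theorem. No genuine obstacle is expected; the proof is essentially the specialization of the classical initial--algebra theorem for polynomial endofunctors on $\Set$ to the particular functor determined by $S$.
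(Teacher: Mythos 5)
Your proposal is correct and follows essentially the same route as the paper: define $\hat{S}$ as the inductively generated set of terms with constructors as the representation operations, obtain $i_R$ by structural recursion, and prove uniqueness by structural induction using the morphism equation. The only difference is your extra set--theoretic justification via the initial algebra of the polynomial functor, which the paper does not need since it works in a type--theoretic setting where the inductive type is taken as primitive.
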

\begin{proof}
   We cut the proof into small steps:
  \begin{itemize}
  \item 
In a type--theoretic setting the set --- also called $\hat{S}$ --- 
 which underlies the initial representation $\hat{S}$ is defined as an inductive set
 with a family of constructors indexed by $J_S$: 
    \[ \hat{S} \enspace  ::= \quad C : \forall j\in J, \enspace \hat{S}^{S(j)} \to \hat{S} \enspace . \]
  That is, for each arity $j\in J$, we have a constructor
     $   C_j : \hat{S}^{S(j)} \to \hat{S}$. 
  \item
   For each arity $j\in J$, we must specify an operation $j^{\hat{S}} : \hat{S}^{S(j)} \to \hat{S}$.
   We set 
    \[   j^{\hat{S}} := C_j : \hat{S}^{S(j)} \to \hat{S} \enspace , \]
   that is, the representation $j^{\hat{S}}$ of an arity $n=S(j)$ is given precisely by its corresponding constructor.

  \item
      Given any representation $R$ of $S$, we specify a map $i_R:\hat{S} \to R$ between the underlying sets
       by structural recursion:
  \[ i_R : \hat{S} \to R \enspace, \quad  i_R \bigl(C_j(a)\bigr) := {j}^{R} \bigl((i_R)^{S(j)} (a)\bigr) \enspace , \]
    for $a\in \hat{S}^{S(j)}$.
   That is, the image of a constructor function $C_j$ maps recursively on the image of the corresponding 
    representation $j^R$ of $R$.
   \item We must prove that $i_R$ is a morphism of representations, that is, that for any $j\in J$ with $S(j) = n$,
     \[ \comp{j^{\hat{S}}}{i_R} = \comp{(i_R)^{n}}{j^R} \enspace . \]
     Replacing $j^{\hat{S}}$ by its definition yields that this equation is precisely the specification of $i_R$, see above.
   
   \item
 It is the diagram of \autoref{def:mor_raw_rep} which ensures uniqueness of $i_R$; since any morphism of 
    representations $i' : \hat{S} \to R$ must 
 make it commute, one can show by structural induction that $i' = i_R$. More precisely:
   \begin{align*}  
i'(C_j(a))= i'(C_j(a_1,\ldots,a_{S(j)})) &= j^R (i'(a_1),\ldots,i'(a_{S(j)})) \stackrel{i'(a_k) = i_R(a_k)}{=} \\
                                          &= j^R (i_R(a_1),\ldots,i_R(a_{S(j)})) = i_R(C_j(a)) \enspace .
   \end{align*}

   \end{itemize}
\end{proof}

\begin{example}[\autoref{ex:type_PCF} continued]\label{ex:pcf_type_initial}
  The set $T_{\PCF}$ underlying the initial representation of the algebraic signature $S_{\PCF}$ is given by
  \[ T_{\PCF} \enspace ::= \quad \Nat \enspace \mid \enspace \Bool \enspace \mid \enspace T_{\PCF} \PCFar T_{\PCF} \enspace . \]
 For any other representation $R$ of $S_{\PCF}$ the initial morphism 
  $ i_R : T_{\PCF} \to R $
  is given by the clauses

\begin{align*}
    i_R(\Nat) &= \Nat^R \\
    i_R(\Bool) &= \Bool^R \\
    i_R(s\PCFar t) &= i_R(s) \PCFar^R i_R(t) \enspace .
\end{align*}

%
\end{example}

\section{Zsid\'o's Theorem Reviewed}

\label{sec:sts_ju}

We present Zsid\'o's initiality theorem \cite[\chapterautorefname~6]{ju_phd} (cf.\ \autoref{nice_thm}) for simply--typed abstract syntax.
Its formalization in the proof assistant \coq~is explained in \autoref{chap:sts_formal}.
Throughout this section the number given in the name of each definition points to the implementation
of this definition in \coq. For instance, the implementation of Simple Monad Morphisms (\autoref{def:monad_hom_simpl})
is given in \autoref{code:monad_hom_simpl}.

Our presentation follows the pattern outlined at the beginning of \autoref{sec:initial_semantics}:
in \autoref{sec:sts_arities_signatures} we present classic signatures in two different ways.
Afterwards, in \autoref{subsection:sts_rep}, we give the definition of representations of such signatures.
Finally, in \autoref{sec:sts_initiality}, we state the main theorem, proved by Zsid\'o \cite{ju_phd}.

\subsection{Signatures for Terms}\label{sec:sts_arities_signatures}

In \autoref{sec:sts_arities_syn} we give a purely syntactical definition of \emph{classic} arities.
Afterwards, in \autoref{sec:sts_arities_sem} we give a definition of arities as pairs of functors on 
suitable categories, and identify a subclass of arities which are in one--to--one correspondence 
with classic arities. We thus call arities of this subclass \emph{classic} as well.
In the following we fix a set $T$ of object types.

\subsubsection{Arities, syntactically}\label{sec:sts_arities_syn}

Syntactically, a classic arity consists of an element of $t_0\in T$ which specifies the output
type of a constructor, as well as a list of pairs $([t_{i,1},\ldots,t_{i,m_i}], t_i)$, 
where $t_{i,k}, t_i\in T$. Each such pair represents an argument of the corresponding constructor:
the element $t_i$ denotes the object type of the argument, whereas the list $[t_{i,1},\ldots,t_{i,m_i}]$
specifies the types of the variables that are bound by the constructor in this argument.

\begin{definition}[Classic $T$--Arity, $T$--Signature] \label{rem:sts_signature_list}
  A classic arity is of the form
  \[ \bigl[([t_{1,1},\ldots,t_{1,m_1}], t_1), \ldots, ([t_{n,1},\ldots,t_{n,m_n}], t_n)\bigr] \to t_0 \enspace , \]
  where $t_{i,k}$ and  $t_i$ are elements of $T$.
  We use an arrow to separate the data specifying input data and output data, respectively.
A \emph{signature} is a family of arities.
  For a formalized definition, see the \textsf{Coq} code snippets \autoref{code:sts_signature_list_notused} and
  \autoref{code:sts_signature}.
\end{definition}

\begin{example}[Signature of $\SLC$]\label{ex:tlc_sig_syntactic_sts}
The signature of the simply--typed lambda calculus (cf.\ \autoref{ex:slc_def}) is given by 
  \[\{ \abs_{s,t} :  \bigl[([s],t)\bigr] \to (s\TLCar t) \enspace , 
         \quad \app_{s,t} : \bigl[([],s\TLCar t),([],s)\bigr]\to t\}_{s,t\in\TLCTYPE} \enspace . \]
 See the code snippet \autoref{code:tlc_sig} for a \textsf{Coq} implementation of this example.
\end{example}

\subsubsection{Arities, semantically}\label{sec:sts_arities_sem}

In this section we give a definition of arities as pairs of functors between suitable categories.
The source category (cf.\ \autoref{def:sts_monads}) is a category of monads and morphisms of monads,
whereas the target category (cf.\ \autoref{def:sts_lmod}) mixes modules over different such monads.

At first, in \autoref{rem:alg_arities_sem}, we present an alternative characterization of algebraic arities.
This alternative point of view is then adapted to allow for the specification of arities for
terms.

\begin{remark}[Algebraic Arities viewed differently]\label{rem:alg_arities_sem}
An algebraic arity $j : n$ as presented in \autoref{sec:type_sigs} associates, to
any set $X$, the set $\dom(j,X) := X^n$, the \emph{domain} set. A representation $R$ of this arity $j$ 
in a set $X$ then is given by a map $j^R:X^n \to X$.
More formally, the domain set is given via a functor $\dom(j):\Set\to\Set$ which associates to any set
$X$ the set $X^n$.
Similarly, we might also speak of a \emph{codomain} functor for any arity, 
which --- for algebraic arities --- is given by the identity functor.
A representation $R$ of $j$ in a set $X$ then is given by a morphism
\[ j^R:\dom(j)(X) \to \cod(j)(X) \enspace .  \]
\end{remark}

\noindent
We take the perspective of \autoref{rem:alg_arities_sem} in order to define arities and signatures for \emph{terms}:
given a set $T$ of object types, 
an arity $\alpha$ for terms typed over $T$ 
is a pair of functors $(\dom(\alpha),\cod(\alpha))$ 
associating 
two $P$--modules $\dom(\alpha)(P)$ and $\cod(\alpha)(P)$, 
 to any suitable monad $P$. 
A suitable monad here is 
a monad $P$ on the category $\TS{T}$.
A representation $R$ of $\alpha$ in a such a monad $P$ is a module morphism 
 \[\alpha^R : \dom(\alpha)(P)\to\cod(\alpha)(P) \enspace . \]
We consider monads as in \autoref{def:monad_mu} (also: \autoref{def:endomonad}) over a category of the form $\TS{T}$ for some 
  fixed set $T$. Throughout this section, morphisms between two such monads over the same category are given by colax monad morphisms
  \emph{over the identity functor},
  i.e.\ those morphisms of \autoref{def:colax_mon_mor} (alt.\ \autoref{def:colax_mon_mor_alt})
 with $F = \Id_{\TS{T}}$. 
  For convenience, and as a reference for the implementation in \textsf{Coq}, we explicitly state the definition of these ``simple'' monad morphisms, 
  using the definition through Kleisli operation (cf.\ \autoref{def:colax_mon_mor_alt}) of monads and morphisms:

\begin{definition}[Simple Monad Morphism, \autoref{code:monad_hom_simpl}]\label{def:monad_hom_simpl}
  Let $P$ and $Q$ be two monads over a category $\C$.
 A \emph{simple morphism of monads $\tau$} 
  from $P$ to $Q$ is given by a collection of morphisms $\tau_c\in \C(Pc,Qc)$ 
  such that the following diagrams commute for all suitable morphisms $f$:
\begin{equation*}
 \begin{xy}
  \xymatrix @=3pc{
  Pc \ar[r]^{\kl[P]{f}} \ar[d]_{\tau_c}& Pd \ar[d]^{\tau_d} & c \ar[r]^{\we^P_c} \ar[rd]_{\we^Q_c} & Pc \ar[d]^{\tau_c} \\
  Qc \ar[r]_{\kl[Q]{\comp{f}{\tau_d}}} & Qd  , & {} & Qc .
}
 \end{xy}
\end{equation*}
\end{definition}

\begin{definition} [Category $\Mon{\C}$ of Monads on $\C$] \label{def:sts_monads}
  Given a category $\C$, we define the category $\Mon{\C}$ to be the category whose objects are monads over $\C$.
  A morphism from $P$ to $Q$ in this category is a monad morphism as in \autoref{def:monad_hom_simpl}.
  We denote by $I_\C : \Mon{\C} \to \Mcol$ the inclusion functor.
\end{definition}

We define a category in which modules over different monads --- but with the same codomain category --- 
are mixed together. This category can be defined as a particular \emph{colax comma category}.
However, we also give an explicit description of the objects and morphisms 
of this category.

\begin{definition}[Colax Comma Category]
 Let $\C$ be a 2--category, and $c\in \C$ be an object of $\C$. 
 Let $\A$ be a category and let $F:\A \to \C$ be a functor.
 An object of the \emph{colax comma category} $(F\downarrow c)$ is given by a pair $(a,f:Fa \to c)$
  of an object $a\in\A$ and a morphism $f:a\to c$. A morphism to another such $(b,g:Fb\to c)$ is given by a 
  pair $(h:a\to b,\alpha)$ as in the diagram

 \[
 \begin{xy}
  \xymatrix @C=2pc @R=0.2pc{ **[l]Fa \rrtwocell<\omit>{\;\;\alpha} \ar@/^1.1pc/[rr]^{f} \ar@/_0.3pc/[rd]_{Fh} & {} & **[r]c \\
                           {} & Fb \ar@/_0.3pc/[ru]_{g} & {}
}
 \end{xy} \enspace . 
\]

\end{definition}

\noindent
While the above definition is not the most general definition possible for a colax comma category,
it is sufficient for our needs:

\begin{definition} [Large Category $\LMod{}{\C}{\D}$ of Modules]\label{def:sts_lmod}
  Given two categories $\C$ and $\D$, we define the category $\LMod{}{\C}{\D}$ to be the colax comma category $(I_\C \downarrow \Id_{\D})$.
An object of this category is a monad $P$ over $\C$ together with a $P$--module with codomain $\D$ (cf.\ \autoref{def:module}).
  A morphism $(f,h)$ to another such $(Q,N)$ is given by a morphism $f : P\to Q$ of monads over the identity functor --- i.e., a morphism
  in $\Mon{\C}$ --- and a morphism of modules
  $h : M \to f^*N = \comp{f}{N}$:

 \[
    \begin{xy}
     \xymatrix@!=4pc{ **[l]P \rtwocell<5>^M_{\comp{f}{N}}{\;\;h}
          & **[r] \Id_\D}
    \end{xy} \enspace .
  \]

\end{definition}

\begin{definition}[Tautological Module]\label{def:taut_large_mod}
 
To any monad $R\in\Mon{\C}$ we associate
  the tautological module $\Theta(R)$ of $R$, 
  \[\Theta(R):= (R,R) \in \LMod{}{\C}{\C} \enspace . \]
 This construction extends to a functor $\Theta : \Mon{\C} \to \LMod{}{\C}{\C}$.
\end{definition}

A half--arity associates a $P$--module towards $\Set$ to any monad $P$ over $\TS{T}$:
\begin{definition}[Half--Arity]\label{def:sts_half_arity}
 A \emph{half--arity over $T$} is a functor 
  \[ a : \Mon{\TS{T}} \to \LMod{}{\TS{T}}{\Set}\] 
from the 
 category of monads over $\TS{T}$ to the large category of modules over such monads with codomain $\Set$, such that
\begin{equation} \comp{a}{\pi_1} = \id_{\Mon{\TS{T}}}
 \enspace . \label{eq:sts_half_arity_left_inv}
\end{equation}
This last condition given in \autoref{eq:sts_half_arity_left_inv} ensures that each monad maps to a module \emph{over itself}. 
For a monad $R\in \Mon{\TS{T}}$, we thus sometimes omit the first component $R$ of the image $a(R)$ and consider $a(R)\in \Mod{R}{\Set}$.
\end{definition}

\begin{definition}[Arity, Signature]\label{def:sts_arity_signature}
 A \emph{$T$--arity $s$} is a pair $(\dom(s), \cod(s))$ of half--arities over $T$,
   \[\dom(s), \cod(s) : \Mon{\TS{T}} \to \LMod{}{\TS{T}}{\Set} \enspace , \] 
 written $\dom(s)\to \cod(s)$.
 A \emph{$T$--signature} is a family of $T$--arities.
\end{definition}

We give some important examples of half--arities over the set $T$. 
 Note that, by the convention of \autoref{def:sts_half_arity}, we omit the first component of objects of the large 
category of modules $\LMod{}{\TS{T}}{\Set}$.

\begin{definition}
 Let $T$ be a nonempty set, and
let $t\in T$ be an element of $T$.
  
 \begin{itemize}
  \item The map $\fibre{\Theta}{t} : \Mon{\TS{T}} \to \LMod{}{\TS{T}}{\Set}$ with object map $R\mapsto (R,\fibre{R}{t})$ is a half--arity --- the \emph{fibre with 
          respect to $t$} --- over $T$.
  \item If $M$ is a half--arity over $T$, so is $M^t : \Mon{\TS{T}} \to \LMod{}{\TS{T}}{\Set},\enspace M^T(R):= M(R)^t$ (cf.\ \autoref{def:derivation_simple}). 
        By iterating, given $t_1,\ldots,t_n\in T$, the functor 
     \[ M^{(t_1, \ldots, t_n)} : R \mapsto (\ldots (M(R)^{t_1})\ldots )^{t_n}\]
      is a half--arity.
       
  \item If $M$ and $N$ are half-arities over $T$, then so is the product $M\times N : \Mon{\TS{T}} \to \LMod{}{\TS{T}}{\Set}$:
             \[M \times N : R\mapsto M(R)\times N(R) \enspace . \]
  \item The map $R\mapsto *$, where $* : V \mapsto 1_{\Set}$ is the terminal object in $\Mod{R}{\Set}$, is a half--arity over $T$.
 \end{itemize}
\end{definition}

An \emph{arity} is a pair of half--arities. We are only interested in \emph{classic}
arities, whose domain and codomain functors are of a specific form:

\begin{definition}[Classic $T$--Arity, $T$--Signature (II)]\label{def:sts_alg_sig}
   We call \emph{classic $T$--arity} any $T$--arity $s$ of the form
   \begin{equation} 
             s = \fibre{\Theta}{t_1}^{t_{1,1}\ldots t_{1,m_1}} \times \ldots \times 
                 \fibre{\Theta}{t_n}^{t_{n,1}\ldots t_{n,m_n}} \to 
                 \fibre{\Theta}{t_0}
\label{eq:arity} \end{equation}
  
  for $t_{i,j}, t_i\in T$.
   A \emph{classic $T$--signature} is a collection of such classic arities.
\end{definition}

To an operator that binds $m_k$ variables of types 
$t_{k,1},\ldots,$ $ t_{k,m_k}$ in its $k$–th argument of type $t_k$, and which yields a term of type $t_0$,
we associate the arity given in \autoref{eq:arity}.

\begin{remark}
The classic $T$--arities and $T$--signatures of \autoref{def:sts_alg_sig} and of \autoref{rem:sts_signature_list}
 are in bijection, respectively. We can thus specify $T$--signatures by simply giving a
  term of the simple data type defined in \autoref{rem:sts_signature_list}. 
 In the \textsf{Coq} formalization, arities and signatures are defined via such data types, 
 cf.\ \autoref{code:sts_signature_list_notused} and
  \autoref{code:sts_signature}.
\end{remark}

\begin{remark}
 In \autoref{def:sts_alg_sig} we can have $n = 0$, yielding an arity for constants of, say, object type $t_0\in T$, 
 \[   s = * \to \fibre{\Theta}{t_0} \enspace . \]
 Such an arity then is given by an empty list of arguments according to \autoref{rem:sts_signature_list}.
 An example of a constant arity is given in \autoref{ex:term_sig_pcf}.
\end{remark}

As an example we discuss the classic signature of the simply typed lambda calculus:

\begin{example}[Signature of $\SLC$, \autoref{code:tlc_sig}] \label{ex:tlc_sig} 
Consider the example of the simply--typed lambda calculus (cf.\ \autorefs{ex:slc_def}, \ref{ex:tlc_syntax_monadic}).
Its signature is given syntactically in \autoref{ex:tlc_sig_syntactic_sts}.
Equivalently, it is given by the signature
\[ \Sigma_{\SLC} = \lbrace\abs_{s,t},\app_{s,t}\rbrace_{s,t \in \TLCTYPE}\] 
with
\begin{align*}
      \abs_{s,t} &:=  \fibre{\Theta}{t}^s \to \fibre{\Theta}{s\TLCar t} \quad\quad\text{and} \\
      \app_{s,t} &:=  \fibre{\Theta}{s \TLCar t} \times \fibre{\Theta}{s}\to \fibre{\Theta}{t} \enspace .
\end{align*}

\end{example}

\begin{remark}\label{rem:var_eta}
 Note that in \autoref{ex:tlc_sig} we do not need to explicitly specify 
 an arity for the \lstinline!Var! term constructor in order to obtain 
  the simply--typed lambda calculus as presented in \autoref{ex:slc_def}.
  Indeed, by building models from monads (cf.\ \autoref{def:sts_sig_rep})
  every model is by definition 
 equipped with a corresponding operation --- the unit of the underlying monad. 
\end{remark}

\subsection{Representations}\label{subsection:sts_rep}

A representation of an arity $s$ in a monad $P$ is given by a morphism of $P$--modules whose domain and
codomain are determined by $s$:

\begin{definition}[Representation of a $T$--Signature, \autoref{code:sts_arity_rep}]\label{def:sts_arity_rep}\label{def:sts_sig_rep}
A representation $R$ of a $T$--signature $\Sigma$ is given by 
\begin{itemize}
 \item a monad $P$ on the category $\TS{T}$ and
 \item for any arity $s\in \Sigma$, a morphism of modules in $\LMod{}{\TS{T}}{\Set}$,
   \[s^R : \dom(s,P) \to \cod(s,P) \enspace , \]
  such that $\pi_1(s^R) = \id_P$.
\end{itemize}
Given a representation $R$, we denote by $R$ also the underlying monad.
\end{definition}

Morphisms of representations are monad morphisms that are compatible with the representation module morphisms:

\begin{definition}[Morphism of Representations] \label{def:sts_rep_mor}
 Let $P$ and $Q$ be representations of a $T$--signature $\Sigma$. 
 A \emph{morphism of representations} $f : P \to Q$ is a morphism $f$ between the underlying monads such that the following diagram commutes for 
  any arity $s$ of $\Sigma$:
\begin{equation}\label{eq:sts_comm_diag}
\begin{xy}
\xymatrix @=4pc{
     **[l] \dom(s,P) \ar[d]_{\dom(s,f)} \ar[r]^{s^P} & **[r] \cod(s,P) \ar[d]^{\cod(s,f)}\\
     **[l] \dom(s,Q) \ar[r]_{s^Q} & **[r] \cod(s,Q) .
}
\end{xy}
\end{equation}
\end{definition}

\noindent
The preceding diagram can be seen as a diagram in two different categories, either in the category $\LMod{}{\TS{T}}{\Set}$, 
or in the category $\Mod{P}{\Set}$ of $P$--modules.

\begin{definition}[Category of Representations]\label{def:sts_cat_of_reps}
Morphisms of representations can be composed: the composition of the underlying monad morphisms again gives a morphism of representations. 
Similarly the identity morphism of monads is a morphism of representations.
Two morphisms of representations are said to be equal if their underlying morphisms of monads are equal. 
Representations and their morphisms of a signature $\Sigma$ form a category $\Rep(\Sigma)$. 
\end{definition}

\subsection{Initiality}\label{sec:sts_initiality}

The main theorem states that any $T$--signature admits an initial representation:

\begin{theorem}\label{nice_thm}
Let $\Sigma$ be a classic $T$--signature. Then the category $\Rep(\Sigma)$ of representations of $\Sigma$ has an initial object.  
\end{theorem}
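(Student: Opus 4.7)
The plan is to construct the initial representation explicitly as an inductive family of sets $\Sigma^* : \TS{T} \to \TS{T}$, parametrized by a context $V \in \TS{T}$ and an output type $t \in T$, and then to exhibit the iteration morphism by structural recursion. Concretely, for each classic arity $\alpha = \fibre{\Theta}{t_1}^{t_{1,1}\ldots t_{1,m_1}} \times \ldots \times \fibre{\Theta}{t_n}^{t_{n,1}\ldots t_{n,m_n}} \to \fibre{\Theta}{t_0}$ in $\Sigma$, I would add a constructor
\[
 C_\alpha : \Sigma^*(V^{*t_{1,1}\ldots t_{1,m_1}})_{t_1} \times \ldots \times \Sigma^*(V^{*t_{n,1}\ldots t_{n,m_n}})_{t_n} \to \Sigma^*(V)_{t_0},
\]
together with a single generic variable constructor $\Var_V : V_t \to \Sigma^*(V)_t$ for every $t \in T$ (cf.\ \autoref{rem:var_eta}).

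First I would equip $\Sigma^*$ with the structure of a monad over $\TS{T}$ in the sense of \autoref{def:endomonad}. The unit is $\Var$, and the Kleisli extension $\kl{f}$ for $f : V \to \Sigma^*W$ is defined by structural recursion: $\Var(v)$ maps to $f(v)$, and a constructor $C_\alpha(x_1,\ldots,x_n)$ maps to $C_\alpha(\kl{\shift{f}{\vec{t}_{1,\bullet}}}(x_1), \ldots, \kl{\shift{f}{\vec{t}_{n,\bullet}}}(x_n))$, where $\shift{f}{\vec{u}}$ denotes the appropriate extension of $f$ under a binder, sending the freshly bound variables to themselves via $\Var$ and the old variables to their $f$-images via the functorial renaming of $\Sigma^*$. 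To make this well-defined I would first define renaming (the functorial action) by a simpler mutual recursion, then prove the Kleisli equations of \autoref{def:endomonad} by structural induction on terms; the routine but lengthy lemma here is that shifting commutes with composition of substitutions, which is precisely what makes the third Kleisli diagram go through.

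Once the monad structure is in place, each constructor $C_\alpha$ defines a morphism of $\Sigma^*$-modules from $\dom(\alpha)(\Sigma^*)$ to $\cod(\alpha)(\Sigma^*)$: the distributivity of substitution over $C_\alpha$ is by definition of $\kl{f}$ on non-variable terms, and this is exactly the diagram of \autoref{def:endo_mod_hom}. Hence $R_0 := (\Sigma^*, (C_\alpha)_{\alpha \in \Sigma})$ is an object of $\Rep(\Sigma)$.

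For initiality, given any representation $R$ I would define $i_R : \Sigma^* \to R$ by structural recursion: $\Var(v) \mapsto \we^R(v)$, and $C_\alpha(x_1,\ldots,x_n) \mapsto \alpha^R\bigl(i_R(x_1),\ldots,i_R(x_n)\bigr)$, where on the right one uses that $i_R$ has already been defined on the (smaller) derived contexts because derivation only reshapes the indexing set. The verification that $i_R$ is a morphism of monads (\autoref{def:monad_hom_simpl}) reduces to proving $i_R \circ \kl[\Sigma^*]{f} = \kl[R]{i_R \circ f} \circ i_R$ by induction on terms; the inductive step at a binding constructor uses that the module morphism $\alpha^R$ is itself compatible with substitution (i.e.\ a morphism of modules) together with the fact that $i_R$ commutes with shifting, which is the crux and precisely what the definitions of derived module and of module morphism were arranged to give. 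Compatibility with every $\alpha^R$ in the sense of diagram \autoref{eq:sts_comm_diag} is then immediate from the recursive clause defining $i_R$ on $C_\alpha$. Uniqueness follows by a straightforward induction: any morphism of representations $g : R_0 \to R$ must send $\Var$ to $\we^R$ (monad morphism) and must satisfy $g \circ C_\alpha = \alpha^R \circ \dom(\alpha)(g)$ (representation morphism), which together force $g = i_R$ on every term.

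The main obstacle, and the part that requires the most care, is the interaction between binding (derivation) and Kleisli substitution: proving the monad laws for $\Sigma^*$ and the naturality of $i_R$ with respect to substitution both hinge on a handful of technical commutation lemmas about $\shift{\_}{\vec{u}}$ (shifting the identity is the identity, shifting commutes with composition, shifting commutes with $i_R$). These are routine for a fixed-size binder but require a clean once-and-for-all treatment at the level of the retyping/derivation functors, after which the rest of the argument proceeds mechanically.
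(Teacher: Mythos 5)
Your proposal is correct and follows essentially the same route as the paper: the paper (deferring details to Zsid\'o's thesis and carrying them out in the \textsf{Coq} formalization of \autoref{section:sts_initial}) also builds the term monad as an inductive family with a variable constructor and one constructor per arity, defines substitution by structural recursion with shifting under binders, observes that each constructor is then a module morphism by construction, defines the initial morphism by structural recursion via the representation morphisms $\alpha^R$, and proves the monad--morphism compatibility and uniqueness by induction on terms, with the same key technical lemmas about shifting (cf.\ \lstinline!init_lift!, \lstinline!init_shift!, \lstinline!init_kleisli!). No gap to report.
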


\begin{remark}
 The monad underlying the initial representation associates, to any context $V\in \TS{T}$, the set of terms of the syntax of $\Sigma$
 with free variables in $V$. 
The module morphisms of the initial representation are given by the constructors of this syntax.
\end{remark}

A set--theoretic construction of the syntax as well as a proof of the theorem is given in Zsid\'o's PhD thesis \cite{ju_phd}. 
In \autoref{section:sts_initial} we explain the implementation of the main theorem in a type--theoretic setting
in the proof assistant \textsf{Coq}.

\section{Extending Zsid\'o's Theorem to Varying Types}
\label{sec:ext_zsido}

Zsid\'o's initiality result of \autoref{nice_thm} does not account for varying object sorts.
Indeed, given a signature $\Sigma$ over a set $T$ of object sorts,
any representation of $\Sigma$ ``has'' the same set of sorts $T$, i.e.\ its underlying monad
is a monad on the category $\TS{T}$.
In this section we give a new definition of signatures and their representations, and prove that the
resulting category of representations has an initial object.
The iteration operator obtained from this initiality result accounts for 
translations between languages over different sets of sorts.
We define a \emph{typed signature} to be a pair $(S,\Sigma)$ consisting of an algebraic signature
$S$ for sorts, and a signature $\Sigma$ for terms typed over the sorts specified by $S$.
A representation of such a typed signature consists of a representation of the sort signature $S$ in
some set $T$ and a representation of $\Sigma$ in a monad over the category $\TS{T}$.
Translations of sorts are given by morphisms of representations of $S$, that is, by maps of sets 
that are compatible with the representations of sorts constructors in the source and target. 
Compared to Zsid\'o, we thus restrict ourselves to sets of sorts that have \emph{inductive structure},
whereas for Zsid\'o, the set of sorts is given by an arbitrary parameter.

\subsection{Signatures for Types \& Terms} \label{sec:comp_term_sigs}

Before starting with the formal definitions, we informally consider the example of the simply--typed lambda calculus;
its signature for terms was given in the preceding section (cf.\ \autoref{ex:tlc_sig}) as:
\begin{equation}\{ \abs_{s,t} :=  \bigl[([s],t)\bigr] \to (s\TLCar t) \enspace , 
       \quad \app_{s,t} := \bigl[([],s\TLCar t),([],s)\bigr]\to t\}_{s,t\in\TLCTYPE} \enspace . 
 \label{eq:sig_tlc_simple}
\end{equation}
The parameters $s$ and $t$ range over the set $\TLCTYPE$ of types, the initial representation of the 
signature for types from \autoref{ex:type_sig_SLC}.
In particular, we have $2 \times \TLCTYPE^2$ arities in this signature.

Our goal is to consider representations of the simply--typed lambda calculus in monads over categories 
of the form $\TS{T}$ for \emph{any} set $T$ --- provided that $T$ is equipped with a representation of the signature 
$S_{\SLC}$. 
Clearly, the above signature of \autoref{eq:sig_tlc_simple}, with its strong dependence on the set $\TLCTYPE$ 
is not well--suited to express this.
Instead of the above signature, we would like to write 
\begin{equation}\{ \abs :=  \bigl[([1],2)\bigr] \to (1\TLCar 2) \enspace , 
       \quad \app := \bigl[([],1\TLCar 2),([],1)\bigr]\to 2\}\enspace . 
  \label{eq:sig_tlc_higher_order}
\end{equation}
 What is the intended meaning of such a signature?
For any representation $T$ of $S_{\SLC}$, the variables $1$ and $2$
range over elements of $T$.
In this way the number of abstractions and applications depends on the representation $T$ of $S_{\SLC}$:
intuitively, a 
representation of the above signature of \autoref{eq:sig_tlc_higher_order} 
over a representation  $T$ of $\TLCTYPE$
has 
$T^2$ 
abstractions and $T^2$ applications --- one for each pair of elements of $T$.
As an example,
for the final representation of $S_{\SLC}$ in the singleton set, one obtains only one abstraction and one application morphism.
We call arities, that contain object type variables, arities \emph{of higher degree}, 
where the degree of such
an arity denotes the number of (distinct) type variables. For instance, the arities $\abs$ and $\app$ of \autoref{eq:sig_tlc_higher_order}
are of degree $2$.

\subsubsection{Term Arities, syntactically}\label{sec:term_arities_syntactic}

In \autoref{sec:sts_ju}, arities over a fixed set of object types $T$ 
were defined purely syntactically, namely using pairs and lists, cf.\ \autoref{rem:sts_signature_list}.
We give a similar syntactic characterization of arities over a fixed algebraic signature $S$ for types
as in \autoref{def:raw_sig}.

\begin{definition}[Type of Degree $n$]
  For $n \geq 1$, we call \emph{types of $S$ of degree $n$} the elements of the set $S(n)$ 
 of types associated to the signature $S$ with free variables in the set $\{1,\ldots,n\}$.
  We set $S(0):= \hat{S}$.
  Formally, the set $S(n)$ may be obtained as the initial representation of the signature $S$ enriched by $n$ nullary arities.
\end{definition}

Types of degree $n$ are used to form classic arities of degree $n$:

\begin{definition}[Classic Arity of Degree $n$]\label{def:n_comp_classic_arity_syn}
 A classic arity for terms over the signature $S$ for types of degree $n$ is of the form
\begin{equation} 
  \bigl[([t_{1,1},\ldots,t_{1,m_1}], t_1), \ldots, ([t_{k,1},\ldots,t_{k,m_k}], t_k)\bigr] \to t_0 \enspace , 
  \label{eq:syntactic_arity_higher_degree}
\end{equation}
  where $t_{i,j}, t_i \in S(n)$.
 More formally, a classic arity of degree $n$ over $S$ is a pair 
  consisting of an element $t_0\in S(n)$ and
  a list of pairs.  where each pair itself consists of a list $[t_{i,1},\ldots,t_{i,m_i}]$ of elements of $S(n)$ and 
  an element $t_i$ of $S(n)$.
\end{definition}

  A classic arity of the form given in \autoref{eq:syntactic_arity_higher_degree} denotes a constructor --- 
    or a family of constructors, for $n \geq 1$ --- whose output type is $t_0$,
  and whose $k$ inputs are terms of type $t_i$, respectively, in each of which variables of type according to the list
  $[t_{i,1},\ldots,t_{i,m_i}]$ are bound by the constructor.

\begin{remark}
 For an arity as given in \autoref{eq:syntactic_arity_higher_degree} we also write
\begin{equation} 
      \fibre{\Theta_n^{t_{1,1},\ldots,t_{1,m_1}}}{t_1} \times \ldots\times \fibre{\Theta_n^{t_{k,1},\ldots,t_{k,m_k}}}{t_k} \to 
                   \fibre{\Theta_n^{}}{t_0} \enspace . 
  \label{eq:syntactic_arity_higher_degree22222}
\end{equation}
\end{remark}

\noindent
Examples of (classic) arities are to be found in \autoref{ex:tlc_sig_higher_order} and \autoref{ex:logic_trans}.

\begin{remark}[Implicit Degree]\label{rem:degree_implicit}
 Any arity of degree $n \in \NN$ as in \autoref{def:n_comp_classic_arity_syn} can also be considered as an arity of
degree $n+1$. We denote by $S(\omega)$ the set of types associated to the type signature $S$ with free variables in $\NN$.
 Then any arity of degree $n\in \NN$ can be considered as an arity built over $S(\omega)$.
Conversely, any arity built over $S(\omega)$ only contains a finite set of free variables in $\NN$, and can thus be considered 
to be an arity of degree $n$ for some $n\in \NN$. 
In particular, by suitable renaming of free variables, there is a \emph{minimal} degree for any arity built over $S(\omega)$.
We can thus omit the degree --- e.g., the lower inner index $n$ in \autoref{eq:syntactic_arity_higher_degree22222} ---, 
and specify any arity as an arity over $S(\omega)$, if we really want to consider this arity to be 
of minimal degree.
Otherwise we must specify the degree explicitly.
\end{remark}

\subsubsection{Term Arities, semantically}

We now attach a meaning to the purely syntactically defined arities of \autoref{sec:term_arities_syntactic}. 
More precisely, we define arities as pairs of functors over suitable categories.
Afterwards we restrict ourselves to a specific class of functors, yielding
arities which are in one--to--one correspondence to --- 
and thus can be compactly specified via --- the syntactically defined
classic arities of \autoref{sec:term_arities_syntactic}.
Accordingly, we call the restricted class of arities also \emph{classic} arities.

Throughout this section, we fix an algebraic signature $S$ for types.
An arity $\alpha$ of degree $n$ for terms over $S$ 
is a pair of functors $(\dom(\alpha),\cod(\alpha))$ 
associating 
two $P$--modules $\dom(\alpha,P)$ and $\cod(\alpha,P)$,
  each of degree $n$,
 to any suitable monad $P$. A suitable monad here is 
a monad $P$ on some category $\TS{T}$ where 
the set $T$ is equipped with a representation of $S$.  
We call such a monad an \emph{$S$--monad}.
A representation $R$ of $\alpha$ in an $S$--monad $P$ is a module morphism 
 \[\alpha^R : \dom(\alpha,P)\to\cod(\alpha,P) \enspace . \]

\noindent
 As we have seen in \autoref{ex:slc_def}, 
 constructors can in fact be \emph{families of constructors} indexed by 
 type variables. 
 For such a constructor indexed $n$ times, we consider modules \emph{of degree n} (cf.\ \autoref{rem:family_of_mods_cong_pointed_mod}).

We define a family of categories of monads which will play the role of the category defined in \autoref{def:sts_monads}:

\begin{definition}[$S$--Monad] \label{def:s-mon}
  Given an algebraic signature $S$, the \emph{2--category $\SigMon{S}$ of $S$--monads} is defined as the 2--category whose objects are pairs $(T,P)$ of
  a representation $T$ of $S$ and a monad $P : \TS{T}\to \TS{T}$.
  A morphism from $(T,P)$ to $(T', P')$ is a pair $(g, f)$ of a morphism of $S$--representations $g : T\to T'$ and a 
   monad morphism $f : P\to P'$ over the retyping functor $\retyping{g}$ (cf.\ \autoref{def:retyping_functor}). Transformations are the transformations of \Mcol.

   Given $n\in \mathbb{N}$, we write $\SigMon{S}_n$ for the 2--category whose objects are pairs $(T,P)$ of a representation $T$ of $S$ and 
  a monad $P$ over $\TS{T}_n$. A morphism from $(T,P)$ to $(T', P')$ is a pair $(g, f)$ of a morphism of 
      $S$--representations $g : T\to T'$ and a 
   monad morphism $f : P\to P'$ over the retyping functor $\retyping{g}(n)$ (cf.\ \autoref{def:retyping_functor_pointed}).
\noindent
  We call $I_{S,n} : \SigMon{S}_n \to \Mcol$ the functor which forgets the representation of $S$.
\end{definition}

We define a ``large category of modules'' in which modules over different $S$--monads are mixed together:

\begin{definition}[Large Category of Modules] 
  Given a natural number $n\in \mathbb{N}$, an algebraic signature $S$ and a category $\D$, 
  we call $\LMod{n}{S}{\D}$ the colax comma category $I_{S,n} \downarrow (\D, \Id)$.
  An object of this category is a pair $(P,M)$ of a monad $P \in \SigMon{S}_n$ and a $P$--module with codomain $\D$.
  A morphism to another such $(Q,N)$ is a pair $(f, h)$ of an $S$--monad morphism $f : P \to Q$ in $\SigMon{S}_n$ and a transformation $h : M \to f^*N$:
  \[
    \begin{xy}
     \xymatrix@!=4pc{ 
      **[l]P \rtwocell<5>^M_{\comp{f}{N}}{h}
          & **[r]\Id_{\D}
}
    \end{xy} \enspace .
  \]
%

\end{definition}

\noindent
A half--arity over $S$ of degree $n$ is given by a functor from the category of monads to the large category of modules:

\begin{definition}[Half--Arity over $S$ (of degree $n$)]
 Given an algebraic signature $S$ and $n\in \mathbb{N}$, we call \emph{half--arity over $S$ of degree $n$} a functor
  \[ \alpha : \SigMon{S} \to \LMod{n}{S}{\Set} \enspace  \]
  which is pre--inverse to the forgetful functor.
\end{definition}

Taking into account \autoref{rem:family_of_mods_cong_pointed_mod}, this means that a half--arity of degree $n$ associates to any $S$--monad $R$ 
 --- with representation of $S$ in a set $T$ ---
 a family of $R$--modules indexed $n$ times by $T$.

\begin{remark}[Module of Higher Degree corresponds to a Family of Modules]
  \label{rem:family_of_mods_cong_pointed_mod}
 Let $\C$ be a category, let $T$ be a set and $R$ be a monad on $\family{\C}{T}$. 
Suppose $n \in \NN$, and let $\D$ be a category. Then
  modules over $R_n$ with codomain $\D$ correspond precisely to families of $R$--modules indexed by $T^n$ 
  with codomain $\D$ by (un)currying.
 More precisely, let $M$ be an $R_n$--module. Given $\mathbf{t}\in T^n$, we define an $R$--module $M_{\mathbf{t}}$ by 
  \[ M_{\mathbf{t}} (c) := M(c,\mathbf{t}) \enspace . \] 
 Module substitution for $M_{\mathbf{t}}$ is given, for $f \in \family{\C}{T}(c,Rd)$, by
 \[ \mkl[M_{\mathbf{t}}]{f} := \mkl[M]{f} \]
  where we use that we also have $f \in \family{\C}{T}_n ((c,\mathbf{t}), (Rd, \mathbf{t}))$ according to \autoref{def:cat_set_pointed}.
 Going the other way round, given a family $(M_{\mathbf{t}})_{\mathbf{t}\in T^n}$, we define the $R_n$--module $M$ by
  \[ M(c,\mathbf{t}) := M_{\mathbf{t}}(c) \enspace . \]
 Given a morphism $f \in \family{\C}{T}_n ((c,\mathbf{t}), (Rd, \mathbf{t}))$ --- 
 recall that morphisms in $\family{\C}{T}_n$ are only between families with \emph{the same marker} $\mathbf{t}$ ---, 
we also have $ f \in \family{\C}{T}(c,Rd)$  and define 
 \[ \mkl[M]{f} := \mkl[M_{\mathbf{t}}]{f} \enspace .\]
The remark extends to morphisms of modules; indeed, a morphism of modules $\alpha:M\to N$ on categories with pointed index sets
 corresponds to a family of morphisms $(\alpha_{\vectorletter{t}}:M_{\vectorletter{t}}\to N_{\vectorletter{t}})_{\vectorletter{t}\in T^n}$ 
 between the associated families of modules.
\end{remark}

As in \autoref{sec:sts_ju}, we restrict our attention to half--arities which correspond, in a sense made precise
below, to the syntactically defined arities of \autoref{def:n_comp_classic_arity_syn}.
The basic brick is the \emph{tautological module of degree $n$}:

\begin{definition}
  Given a category $\C$ and $n\in \mathbb{N}$, any monad $R$ on the category $\family{\C}{T}$ induces a monad $R_n$ on $\family{\C}{T}_n$ 
with object map $(V, t_1,\ldots, t_n) \mapsto (RV, t_1,\ldots,t_n)$, 
as is already indicated for functors in \autoref{def:cat_indexed_pointed}.
\end{definition}

\begin{definition}[Tautological Module of Degree $n$]\label{def:taut_mod_pointed}
 
Let $n\in \NN$ be a natural number. 
To any $S$--monad $R$ we associate
  the tautological module of $R_n$, 
  \[\Theta_n(R):= (R_n,R_n) \in \LMod{n}{S}{\family{\Set}{T}_n} \enspace . \]
 This construction extends to a functor $\Theta_n : \SigMon{S} \to \LMod{n}{S}{\TS{T}_n}$.
\end{definition}

Let us consider the signature $S_{\SLC}$ of types of $\SLC$.
In the syntactically defined arities (cf.\ \autoref{eq:sig_tlc_higher_order}) we write terms like $1\TLCar 2$.
We now give meaning to such a term: 
let $T$ be any representation of $S_{\TLC}$, that is, a set $T$ together with a base type $*\in T$ and a 
binary operation $(\TLCar) : T\times T \to T$.
Intuitively, the term $1\TLCar 2$ should associate, to an object $(T,V,t_1,t_2)$ with  a $T$--indexed family $V$
of sets and $t_1,t_2\in T$, the element $t_1\TLCar t_2 \in T$.
More formally, such a term is interpreted by a natural transformation (cf.\ \autoref{def:canonical_nat_trans}) over a specific 
category, whose objects are triples of a representation $T$ of $S_{\SLC}$, 
a family of sets indexed by (the set) $T$ and ``markers'' $(t_1,t_2) \in T^2$.

We go back to considering an arbitrary signature $S$ for types. 
The following are the corresponding basic categories of interest:

\begin{definition}[$S\C_n$]
  Given a category $\C$ --- think of it as the category $\Set$ of sets --- 
  we define the category $S\C_n$ to be the category an object of which 
  is a triple $(T,V,\vectorletter{t})$ where $T$ is a representation of $S$,
  the object $V \in \family{\C}{T}$ is a $T$--indexed family of objects of $\C$ and $\vectorletter{t}$ is 
  a vector of elements of $T$ of length $n$.
We denote by $SU_n:S\C_n \to \Set$ the functor mapping an object $(T,V,\vectorletter{t})$ 
to the underlying set $T$.

  We have a forgetful functor $S\C_n \to \T\C_n$ which forgets the representation
 structure.
  On the other hand, any representation $T$ of $S$ in a set $T$ gives rise to a functor 
  $\family{\C}{T}_n \to S\C_n$, which ``attaches'' the representation structure.
\end{definition}

The meaning of a term $s\in S(n)$ as a natural transformation \[s: 1 \Rightarrow SU_n : S\C_n \to \Set\]
is now given by recursion on the structure of $s$:

\begin{definition}[Canonical Natural Transformation]\label{def:nat_trans_type_indicator}\label{def:canonical_nat_trans}
  Let $s\in S(n)$ be a type of degree $n$. 
  Then $s$ denotes a natural transformation 
       \[s:1\Rightarrow SU_n : S\C_n \to \Set \enspace \]
  defined recursively on the structure of $s$ as follows: for $s = \alpha (a_1,\ldots,a_k)$
  the image of a constructor $\alpha \in S$ we set
  \[s(T,V,\vectorletter{t}) = \alpha (a_1(T,V,\vectorletter{t}),\ldots,a_k(T,V,\vectorletter{t})) \]
  and for $s = m$ with $1\leq m\leq n$  we define
  \[s(T,V,\vectorletter{t}) = \vectorletter{t}(m) \enspace . \]
  We call a natural transformation of the form $s\in S(n)$ \emph{canonical}.
\end{definition}

Canonical natural transformations are used to build \emph{classic} half--arities; 
they indicate context extension (derivation) and
selection of specific object types (fibre):

\begin{definition}[Classic Half--Arity over $S$]\label{def:alg_half_ar}
  The following clauses define an inductive set of 
  \emph{classic} half--arities, to which we restrict our attention:
  \begin{itemize}
   \item The constant functor $* : R \mapsto 1$ is a classic half--arity.
   \item Given any canonical natural transformation $\tau : 1 \to S U_n$ (cf.\ \autoref{def:canonical_nat_trans}), 
         the point-wise fibre module with respect to $\tau$ (cf.\ \autoref{def:fibre_mod_II}) of the tautological module 
         $\Theta_n : R\mapsto (R_n, R_n)$ (cf.\ \autoref{def:taut_mod_pointed}) is a classic half--arity of degree $n$, 
        \[ \fibre{\Theta_n}{\tau} : \SigMon{S} \to \LMod{n}{S}{\Set} \enspace , \quad R\mapsto \left(R,\fibre{R_n}{\tau}\right) \enspace . \]
   \item Given any (classic) half--arity $M = (M_1,M_2) : \SigMon{S} \to \LMod{n}{S}{\Set}$ 
         of degree $n$ and a canonical natural transformation $\tau : 1 \to S U_n$, 
         the point-wise derivation of $M$ with respect to $\tau$ (cf.\ \autoref{def:derived_mod_II}) is a (classic) half--arity of degree $n$, 
      \[ M^{\tau} : \SigMon{S} \to \LMod{n}{S}{\Set} \enspace , \quad R\mapsto \bigl(M(R)\bigr)^{\tau} := \left(M_1(R), M_2(R)^{\tau}\right) \enspace . \]
       Here $\bigl(M(R)\bigr)^{\tau}$ really means derivation of the module, i.e.\ derivation in the second component
         of $M(R)$.
   \item %
     Given two (classic) half--arities $M = (M_1,M_2)$ and $N = (N_1,N_2)$ of degree $n$, which coincide pointwise on the first 
          component, i.e.\ such that $M_1 = N_1$.
       Then their product $M\times N$ is again a (classic) half--arity of degree $n$. Here the product is 
        really the pointwise product in the second component, i.e.\ 
            \[ M\times N : R \mapsto \bigl(M_1(R), M_2(R)\times N_2(R)\bigr) \enspace .  \]
  \end{itemize}
\end{definition}

\begin{remark}[Classic Half--Arity, Syntactically]
 We can represent a classic half--arity of degree $n\in \NN$ over a signature $S$ for types in a purely syntactic manner:
 such a half--arity is determined by a list of the form
   \[  [(\vectorletter{t_1},s_1), \ldots, (\vectorletter{t_k}, s_k)] \enspace , \]
 where $\vectorletter{t_i}$ are vectors of finite length of elements of $S(n)$ and $s_i \in S(n)$.
 Such a list corresponds precisely to the classic half--arity
 \[ R\mapsto \fibre{R_n}{s_1}^{\vectorletter{t_1}} \times \ldots \times \fibre{R_n}{s_k}^{\vectorletter{t_k}} \enspace . \]
\end{remark}

\noindent
We use \emph{weighted sets} as indexing sets for families of arities.
The weight denotes the degree of the corresponding arity.

\begin{definition}[Weighted Set]\label{def:weighted_set}
 A weighted set is a set $J$ together with a map $d:J\to\mathbb{N}$.
\end{definition}

An arity of degree $n\in \mathbb{N}$ for terms over an algebraic signature $S$ is a pair of functors
from $S$--monads to modules in $\LMod{n}{S}{\Set}$.
The degree $n$ corresponds to the number of indices of its associated constructor.
As an example, the arities of $\Abs$ and $\App$ of \autoref{ex:slc_def} are of degree $2$, cf.\ \autoref{ex:tlc_sig_higher_order}.

\begin{definition}[Term--Arity, Signature over $S$]
  A \emph{classic arity $\alpha$ over $S$ of degree $n$} is a pair 
 \[ s = \bigl(\dom(\alpha), \cod(\alpha)\bigr) \] 
  of half--arities over $S$ of degree $n$ such that 
  \begin{packitem}
   \item $\dom(\alpha)$ is classic and
   \item $\cod(\alpha)$ is of the form $[\Theta_n]_{\tau}$ for some natural 
        transformation $\tau$ as in Def.\ \ref{def:alg_half_ar}.
  \end{packitem}
We write $\dom(\alpha) \to \cod(\alpha)$ for the arity $\alpha$, and 
 \[\dom(\alpha, R) := \dom(\alpha)(R) \] 
(and similarly for the codomain functor $\cod$).
Any classic arity is thus of the form given in \autoref{eq:syntactic_arity_higher_degree}.
Given a weighted set $(J,d)$, 
a term--signature $\Sigma$ over $S$ indexed by $(J,d)$ is a $J$-family 
 $\Sigma$ of algebraic arities over $S$, the arity $\Sigma(j)$ being of degree $d(j)$ for any $j\in J$.
\end{definition}

Finally, a \emph{typed signature} is a pair of a signature for types and a signature for terms over those types:

\begin{definition}[Typed Signature] \label{def:typed_signature}
  A \emph{typed signature} is a pair $(S,\Sigma)$ consisting of an algebraic signature $S$ and 
  a term--signature $\Sigma$ (indexed by some weighted set) over $S$.
 \end{definition}

\begin{example}[$\SLC$, \autoref{ex:slc_def} continued]\label{ex:tlc_sig_higher_order}
The terms of the simply typed lambda calculus over the type signature of \autoref{ex:type_sig_SLC} are given by the arities
 \begin{align*}     
   \abs &: \fibre{\Theta}{2}^1 \to \fibre{\Theta}{1\TLCar 2} \enspace , \\
   \app &: \fibre{\Theta}{1\TLCar 2} \times \fibre{\Theta}{1} \to \fibre{\Theta}{2} \quad ,
 \end{align*}
  both of which are of degree $2$ --- we use the convention of \autoref{rem:degree_implicit}. 
  The outer lower index and the exponent are to be interpreted as de Bruijn variables, ranging over types.
  They indicate the fibre (cf.\ \autoref{def:fibre_mod_II}) and derivation (cf.\ \autoref{def:derived_mod_II}),
  respectively, in the special case where the corresponding natural transformation is given by a natural number
  as in \autoref{def:nat_trans_type_indicator}.
  In particular, contrast that to the signature for the simply--typed lambda calculus we 
  gave in \autoref{sec:sts_ju}, \autoref{ex:tlc_sig}. 
  The difference is that now ``similar'' arities which differ only in an object type parameter, are grouped together,
  whereas this is not the case in \autoref{ex:tlc_sig}. 

  Those two arities can in fact be considered over any algebraic signature $S$ with an arrow constructor, 
  in particular over the signature $S_{\PCF}$ (cf.\ \autoref{ex:term_sig_pcf}).

\end{example}

\begin{example}[\autoref{ex:pcf_type_initial} continued] \label{ex:term_sig_pcf}
  We continue considering \PCF. The signature $S_{\PCF}$ for its types 
    is given in \autoref{ex:type_PCF}.
  The term--signature of \PCF~is given in \autoref{fig:pcf_sig_ho}: it consists of an arity for abstraction and an arity for application,
each of degree 2, an arity (of degree 1) for the fixed point operator, and 
one arity of degree 0 for each logic and arithmetic constant --- some of which we omit:
  
\begin{figure}[bt]
\centering
\fbox{%
 \begin{minipage}{7cm}

\vspace{-1ex}

 \begin{align*}     
       \abs &: \fibre{\Theta}{2}^1 \to \fibre{\Theta}{1\PCFar 2} \enspace , \\
        \app &: \fibre{\Theta}{1\PCFar 2} \times \fibre{\Theta}{1} \to \fibre{\Theta}{2} \enspace ,\\
      \PCFFix &: \fibre{\Theta}{1\PCFar 1} \to \fibre{\Theta}{1} \enspace , \\
      \PCFn{n} &: * \to \fibre{\Theta}{\Nat }  \qquad \text{for $n\in \NN$}\\
      \PCFSucc &: * \to \fibre{\Theta}{\Nat \PCFar \Nat} \\
      \PCFPred &: * \to \fibre{\Theta}{\Nat \PCFar \Nat} \\
      \PCFZerotest &: * \to \fibre{\Theta}{\Nat \PCFar \Bool } \\
       \PCFcond{\Nat} &: * \to \fibre{\Theta}{\Bool \PCFar \Nat \PCFar \Nat \PCFar \Nat} \\
       \PCFTrue, \PCFFalse &: * \to \fibre{\Theta}{\Bool}\\
	{}&\vdots
  \end{align*}

\vspace{0ex}

\end{minipage}
}
 \caption{Term Signature of $\PCF$}\label{fig:pcf_sig_ho}
\end{figure}

Our presentation of $\PCF$ is inspired by Hyland and Ong's \cite{Hyland00onfull}, who --- similarly to 
Plotkin \cite{Plotkin1977223} --- consider, e.g., the successor as a constant of arrow type.
As an alternative, one might consider the successor as a constructor 
expecting a term of type $\Nat$ as argument, yielding a term of type $\Nat$.
For our purpose, those two points of view are equivalent.
\end{example}

\subsection{Representations of Typed Signatures}

A representation of a typed signature $(S,\Sigma)$ is given by a representation of $S$ (in a set)
and a representation of $\Sigma$ in a suitable monad:

\begin{definition}
            [Representation of a Signature over $S$]\label{def:rep_term_ar}
   Let $(S,\Sigma)$ be a typed signature. A \emph{representation $R$ of $(S,\Sigma)$} is given by
 \begin{itemize}
  \item an $S$--monad $P$ and
  \item for each arity $\alpha$ of $\Sigma$, a morphism (in the large category of modules)
          \[ \alpha^R: \dom(\alpha, P) \to \cod(\alpha, P) \enspace , \]
    such that $\pi_1(\alpha^R) = \id_P$.
 \end{itemize}
 In the following we also write $R$ for the $S$--monad underlying the representation $R$.
Note that the representation of $S$ is ``hidden'' in the $S$--monad $P$.
\end{definition}

A \emph{morphism of representations} accordingly consists of a morphism of representations of $S$ together
with a morphism of representations of $\Sigma$, that is, a monad morphism that is compatible with the term representations:

\begin{definition}[Morphism of Representations]\label{def:mor_of_reps}
  Given representations $P$ and $R$ of a typed signature $(S,\Sigma)$, a morphism of representations 
  $f : P\to R$ is given by a morphism of $S$--monads $f : P \to R$, such that, for any arity $\alpha$ of $\Sigma$,
  the following diagram of module morphisms commutes:
  \[
  \begin{xy}
   \xymatrix{
        **[l]\dom(\alpha,P) \ar[d]_{\dom(\alpha, f)} \ar[r]^{\alpha^P} & **[r]\cod(\alpha,P) \ar[d]^{\cod(\alpha,f)} \\
        **[l]\dom(\alpha,R) \ar[r]_{\alpha^R} & **[r]\cod(\alpha,R) .
    }
  \end{xy}
 \]
Again the morphism of representations of $S$ is ``hidden'' in the morphism of $S$--monads.
\end{definition}

\begin{remark}
 Taking a 2-categoric perspective, the above diagram can be read as an equality of 2-cells
\[
 \begin{xy}
   \xymatrix @!=6pc{
         **[l]P \ruppertwocell<12>^{\dom(\alpha,P)}{{\;\;\;\;\alpha^P}} \rlowertwocell_{f^*\cod(\alpha,R)}<-12>{\;\;cf}
                                                                    \ar[r]|{\cod(\alpha,P)}& **[r]\Id_{\Set}
    } 
  \end{xy} \quad = \quad
  \begin{xy} 
  \xymatrix @!=6pc{
         **[l]P \ruppertwocell<12>^{\dom(\alpha,P)}{\;\;df} \rlowertwocell_{f^*\cod(\alpha,R)}<-12>{\;\;\;\;\;\;\;f^*\alpha^R}
                                                                    \ar[r]|{f^*\dom(\alpha,R)}& **[r]\Id_{\Set}
    } 
 \end{xy} \enspace ,
\]
where we write $df$ and $cf$ instead of $\dom(\alpha, f)$ and $\cod(\alpha,f)$, respectively.

The diagram of \autoref{def:mor_of_reps} lives in the category $\LMod{n}{S}{\Set}$ --- where $n$ is the degree of $\alpha$ ---
 where objects are pairs $(P,M)$ of a $S$--monad $P$ of $\SigMon{S}_n$ and a module $M$ over $P$.
The above 2--cells are morphisms in the category $\Mod{P_n}{\Set}$, obtained by taking the second projection of 
 the diagram of \autoref{def:mor_of_reps}.
Note that for easier reading, we leave out the projection function and thus  
write $\dom(\alpha,R)$ for the $R_n$--module of $\dom(\alpha,R)$, i.e.\ for its second component, and 
similar elsewhere.
\end{remark}

 Representations of $(S,\Sigma)$ and their morphisms form a category.

\begin{remark}\label{rem:about_equiv_signatures}
 We obtain Zsid\'o's category of representations \cite[Chap.\ 6]{ju_phd} by restricting ourselves to representations of $(S,\Sigma)$ whose type representation
is the initial one.
More, precisely, 
a signature $(S,\Sigma)$ maps to a signature, say, $Z(S,\Sigma)$ over the initial set of sorts $\hat{S}$ 
 in the sense of Zsid\'o (cf.\ \autoref{sec:sts_ju} and \cite[Chap.~6]{ju_phd}), obtained by
unbundling each arity of higher degree into a family of arities of degree $0$. For instance, the 
signature of \autoref{ex:tlc_sig_higher_order} maps to the signature given in \autoref{ex:tlc_sig}.
Representations of this latter signature in the sense of \autoref{sec:sts_ju} then are in one--to--one correspondence to 
representations in the sense of this section of the signature of \autoref{ex:tlc_sig_higher_order} \emph{over the initial representation $\hat{S}$ of sorts},
via the equivalence explained in \autoref{rem:family_of_mods_cong_pointed_mod}.
\end{remark}

\subsection{Initiality}

We have all the ingredients to state and prove an initiality theorem for typed signatures:

\begin{theorem}\label{thm:compilation}
  For any typed signature $(S,\Sigma)$, the category of representations of $(S,\Sigma)$ has an initial object.
\end{theorem}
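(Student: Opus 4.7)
The plan is to combine the two initiality results already established: Lemma \ref{lem:initial_sort} for algebraic signatures of sorts, and Zsid\'o's Theorem \ref{nice_thm} for term signatures over a fixed set of sorts. First I would take $\hat{S}$ to be the initial representation of $S$, and regard the term component $\Sigma$ as a classical $\hat{S}$-signature $Z(S,\Sigma)$ in the sense of \autoref{sec:sts_ju}, by unbundling each arity of degree $n$ into its $\hat{S}^n$-family of degree-zero arities over $\hat{S}$, as outlined in \autoref{rem:about_equiv_signatures}. By Theorem \ref{nice_thm} this signature admits an initial representation; let $\hat{\Sigma}$ denote its underlying monad on $\TS{\hat{S}}$. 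The candidate initial object of $\Rep(S,\Sigma)$ is then the pair obtained by equipping $\hat{\Sigma}$ with the obvious representation structure --- sorts from $\hat{S}$, and for each arity of $\Sigma$ the corresponding family of module morphisms re-bundled via the equivalence of \autoref{rem:family_of_mods_cong_pointed_mod}.

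Next, given an arbitrary representation $R$ of $(S,\Sigma)$ with underlying sort set $T$, I would build the unique morphism in two stages. The sort component is forced: it must be the unique morphism of $S$-representations $g : \hat{S} \to T$ provided by Lemma \ref{lem:initial_sort}. For the monad component, I would exploit the adjunction $\retyping{g} \dashv g^*$ of \autoref{rem:retyping_adjunction_kan}: giving a colax monad morphism $f : \hat{\Sigma} \to R$ over $\retyping{g}$ is equivalent to giving a simple monad morphism $\tilde f : \hat{\Sigma} \to R'$ on $\TS{\hat{S}}$, where $R' := g^* \circ R \circ \retyping{g}$ inherits a monad structure from $R$ through the adjunction. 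Provided I can equip $R'$ with a representation of the Zsid\'o signature $Z(S,\Sigma)$, initiality of $\hat{\Sigma}$ will hand me $\tilde f$, and hence $f$, for free.

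The technical core is therefore to transport the representation structure of $R$ across the retyping adjunction. For each arity $\alpha$ of $\Sigma$ of degree $n$ and each tuple $\mathbf{t} \in \hat{S}^n$, the unbundled arity $Z(S,\Sigma)_{\alpha,\mathbf{t}}$ over $\hat{S}$ has a domain and codomain built by products, derivations, and fibres (\autoref{def:alg_half_ar}) indexed by sorts in $\hat{S}$; after pullback along the adjunction these must be matched with the analogous modules over $R$ indexed by the corresponding sorts $g_*(\mathbf{t}) \in T$. The commutation of pullback with products, derivations, and fibres recorded in \autoref{rem:pullback_deriv_fibre_endo} supplies precisely the canonical identifications needed, and composing $\alpha^R$ at $g_*(\mathbf{t})$ with the unit and counit of the retyping adjunction yields the desired module morphism $\alpha^{R'}_{\mathbf{t}}$ on $R'$.

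Finally I would verify that the pair $(g,f)$ thus obtained is a morphism of representations of $(S,\Sigma)$ in the sense of \autoref{def:mor_of_reps}, and that it is unique. Compatibility with the sort representations holds by construction of $g$, and compatibility with the term representations is precisely the content of $\tilde f$ being a morphism of representations of $Z(S,\Sigma)$ read back through the adjunction. Uniqueness splits accordingly: the sort component is unique by Lemma \ref{lem:initial_sort}, and the term component by Theorem \ref{nice_thm} applied to $R'$. The main obstacle I anticipate is bookkeeping: ensuring that the isomorphisms of \autoref{rem:pullback_deriv_fibre_endo} are coherent enough that the transported representation structure on $R'$ genuinely defines module morphisms, and that the resulting transpose $f$ is genuinely a colax monad morphism over $\retyping{g}$ satisfying the diagrams of \autoref{def:colax_mon_mor}, rather than merely doing so up to canonical isomorphism.
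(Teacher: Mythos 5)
Your proposal is correct in outline, but it reaches the initial morphism by a genuinely different route than the paper. Both constructions use the same initial object --- the Zsid\'o term monad for the unbundled signature $Z(S,\Sigma)$ over $\hat{S}$, equipped with the rebundled constructor module morphisms. The paper, however, does \emph{not} invoke Theorem~\ref{nice_thm} again for the morphism part: after transposing along the retyping adjunction (exactly as you do, to get a family $\STS(V)_t \to R(\retyping{g}V)_{g(t)}$), it simply defines this map by structural recursion on terms and proves the monad-morphism axioms, the representation diagrams, and uniqueness by induction. You instead perform a base change: you transport $R$ to the monad $R' = \comp{\comp{\retyping{g}}{R}}{g^*}$ on $\TS{\hat{S}}$, transport its representation structure, and then let Zsid\'o's initiality over the fixed sort set $\hat{S}$ produce $\tilde{f}$, recovering $f$ as its mate; uniqueness then splits between Lemma~\ref{lem:initial_sort} and Theorem~\ref{nice_thm}. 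This is sound --- the mate correspondence between colax morphisms over $\retyping{g}$ and simple monad morphisms into $R'$ is the standard doctrinal-adjunction argument and does restrict to representation morphisms once $\alpha^{R'}$ is defined as the adjunct of $\alpha^{R}$ --- and it buys modularity: Zsid\'o's theorem is reused wholesale and no fresh induction on terms is needed. What it costs is precisely the bookkeeping you flag, and note that the lemma you cite is not quite the right one: \autoref{rem:pullback_deriv_fibre_endo} concerns pullback along a \emph{monad morphism}, whereas you need compatibility of the retyping adjunction itself with derivation, fibre and product --- in particular the fact that $\retyping{g}(V^{*u}) \cong (\retyping{g}V)^{*g(u)}$, i.e.\ that context extension and retyping permute, which is exactly the identification the paper invokes in its direct construction (together with the trivial facts that $g^*$ preserves fibres and products). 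The paper's recursive construction avoids setting up this base-change machinery and, as a bonus, yields the explicit formula for the initial morphism that its \textsf{Coq} formalization requires.
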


\begin{proof}
  The proof consists of the following steps:
\begin{packenum}
 \item find the initial representation $\hat{S}$ of the type signature $S$;
 \item define the monad $\STS$ of terms specified by $\Sigma$ on the category $\TS{\hat{S}}$;
 \item equip the $S$--monad $\STS$ with a representation structure of $\Sigma$, yielding a representation
       $\hat{\Sigma}$ of $(S,\Sigma)$;
 \item for any representation $R$ of $(S,\Sigma)$, give a morphism of representations $i_R:\hat{\Sigma}\to R$;
 \item prove uniqueness of $i_R$.
\end{packenum}

We go through these points:

\begin{enumerate}
 \item 
  We have already established (cf.\ \autoref{lem:initial_sort}) that there is an initial representation of sorts, which we call $\hat S$.
   Its underlying set is called $\hat S$ as well.

 \item 
  The term monad we associate to $(S,\Sigma)$ is the same as 
  Zsid\'o's \cite[Chap.~6]{ju_phd} in the sense of \autoref{rem:about_equiv_signatures}, i.e.\ 
  it is the term monad associated to $Z(S,\Sigma)$.
  The construction of this monad in a set--theoretic setting is described in Zsid\'o's thesis.
  We will give its definition in a type--theoretic setting.

  In the following the natural transformations $\tau_i$ are in fact vectors of multiple transformations like those in \autoref{rem:nat_trans_picking_sort}
   (see also \autoref{def:derived_mod_II}), 
  iterated by successive composition. Furthermore we make use of the simplified notation as introduced in \autoref{not:tau_simpl_notation}.

  We construct the monad which underlies the initial representation of $(S,\Sigma)$,
  \[ \STS:\TS{\hat S} \to \TS{\hat S} \enspace . \]
  It associates to any set family of variables $V \in \TS{\hat S}$ an inductive set of terms with the following constructors:
  \begin{itemize}
   \item for every classic arity (of degree $n$) 
   \begin{equation} 
       \alpha = \fibre{\Theta_n}{\sigma_1}^{{{\tau_1}}} \times \ldots\times \fibre{\Theta_n}{\sigma_m}^{{\tau_m}} \to \fibre{\Theta_n}{\sigma} \label{eq:classic_arity_typed}
   \end{equation}
%
      we have a family of constructors indexed $n$ times by $\mathbf{t} = (t_1,\ldots,t_n)$ as well as by the context $V \in \TS{\hat S}$:
         \[ \alpha_{{\mathbf t}}(V) : \STS^{{\tau_1}(V,\mathbf{t})}(V)_{\sigma_1(V,\mathbf{t})} \times\ldots\times 
                      \STS^{{\tau_m}(V,\mathbf{t})}(V)_{\sigma_m(V,\mathbf{t})} \to \STS(V)_{\sigma(V,\mathbf{t})} \]
   \item a family of constructors \[\Var(V)_t : V_t \to \STS(V)_t \] indexed by contexts and the set $\hat S$ of sorts.
  \end{itemize}
   The monadic structure is, accordingly, defined in the same way as in \cite{ju_phd}, by variables--as--terms --- using
   the constructor $\Var$ --- and flattening.
 \item

  The representation structure on the monad $\STS$ is defined by currying, and corresponds to Zsid\'o's: 
  given an arity $\alpha$ of degree $n$ in $\Sigma$, we must specify a module morphism 
   \[\alpha^{\hat{\Sigma}} : \dom(\alpha, \STS) \to \cod(\alpha, \STS) \enspace , \]
   where $\dom(\alpha, \STS)$ and $\dom(\alpha, \STS)$ are modules in $\Mod{\STS_n}{\Set}$.
  We define 
     \[\alpha^{\hat{\Sigma}}(V,\mathbf{t}) (a):= \alpha_{\mathbf{t}}(V)(a) \enspace , \]
  that is, the image under the constructor $\alpha$ from the definition of the monad $\STS$.
  This yields a morphism of modules $\alpha$ of degree $n$; note that according to 
  \autoref{rem:family_of_mods_cong_pointed_mod} it would be equivalent to specify a family
    $\alpha^{\hat{\Sigma}}_{\mathbf{t}}$ of module morphisms of suitable type, indexed by $\mathbf{t}$, which 
   is actually done by Zsid\'o.
 
\item
  Given any other representation $R$ over a set of sorts $T$, initiality of $\hat S$ gives
  a ``translation of sorts'' $g : \hat S \to T$.

   The morphism $i : \STS\to R$ on terms is defined by structural recursion. Unfolding the definition of 
  colax monad morphism, we need to define, for any context $V\in \family{\Set}{\hat{S}}$, a map of type
    \[i_V: \forall~t' \in T,~ \retyping{g}(\STS(V))_{t'} \to R (\retyping{g}V)_{t'} \enspace . \]
   Via the adjunction of \autoref{rem:retyping_adjunction_kan} 
  we equivalently define a map $i$ as a family
\[i_V:\forall~t \in \hat S,~ \STS(V)_{t} \to R (\retyping{g}V)_{g(t)} \enspace . \]
%
   Let $a\in \STS(V)_t$ be a term. In case $a = \Var(V)_t(v)$ is the image of a variable $v\in V_t$, 
   we map it to 
      \[ i_V(\Var(V)_t(v)) := \eta^R(\retyping{g}V)(g(t))(\text{\lstinline!ctype!}(v)) \enspace .\]
   Otherwise the term $a = \alpha_{\vectorletter{t}}(V)(a_1,\ldots,a_k) \in \STS(V)_{\sigma(V,\mathbf{t})}$ is mapped to
   \begin{equation}i_V \bigl(\alpha_{\vectorletter{t}}(V)(a_1,\ldots,a_k)\bigr) := 
                 \alpha^R\left(\retyping{g}(n)(V,\vectorletter{t})\right) \bigl(i(a_1),\ldots,i(a_k)\bigr) \enspace .
       \label{eq:def_initial_term} \end{equation}
     This map is well--typed: note that $\retyping{g}(n)(V,\vectorletter{t}) = \left(\retyping{g}V, g_*(\vectorletter{t})\right)$ 
     by definition (\autoref{def:retyping_functor_pointed})
      and $\retyping{g}(n) ((V, \mathbf{t})^\tau) = \left(\retyping{g}V, g_*(\vectorletter{t})\right)^\tau $, 
  i.e.\ context extension and retyping permute.

   The axioms of monad morphisms, i.e.\ compatibility of this map with respect to variables--as--terms and flattening are easily
   checked: the former is a direct consequence of the definition of $i$ on variables, and the latter is proved by structural 
   induction.
   This definition yields a morphism \emph{of representations}; consider the arity $\alpha$ of $\Sigma$. 
   For this arity, the commutative diagram of  \autoref{def:mor_of_reps} informally
   reads as follows: one starts in the upper--left corner with a tuple of terms, say, $(a_1, \ldots,a_k)$ of $\STS$.
   Taking the upper--right path corresponds to the translation of the image of this tuple under the map $\alpha^{\hat{\Sigma}}$, 
   i.e.\ under the constructor $\alpha$ of $\STS$.
   The lower--left path corresponds to the image under the module morphism $\alpha^R$ of the translated tuple $(i(a_1), \ldots, i(a_k))$.
   The diagram thus precisely states the equality of \autoref{eq:def_initial_term}.
   We thus establish that $i$ is (the carrier of) a morphism of representations $(g,i):(\hat{S},\hat{\Sigma}) \to R$.
\item 
   Uniqueness of the morphism $i:(\hat{S},\hat{\Sigma}) \to R$ is proved making use of the commutative diagram of \autoref{def:mor_of_reps}.
   Suppose that $(g',i'):(\hat{S},\hat{\Sigma}) \to R$ is a morphism of representations.
   We already know that $g = g'$ by initiality of $\hat{S}$.

By structural induction on the terms of $\STS$ we prove that $i = i'$:
  using the same notation as above, for $a = \alpha_{\mathbf{t}}(V)(a_1,\ldots,a_k)$ we have
     \[ i'(a) = 
          \alpha^R \left(i'(a_1), \ldots,i'(a_k)\right) \stackrel{i(a_i) = i'(a_i)}{=} \alpha^R\left(i(a_1),\ldots,i(a_k)\right) = i(a) \enspace . \]
  In case $ a = \Var(v)$ is a variable, considered as a term, 
 the fact that both $i$ and $i'$ are monad morphisms ensures that $i(\Var(v)) = i'(\Var(v)) = \eta^R_{\retyping{g}V}(\text{\lstinline!ctype!}(v))$.
  Thus we have proved $i = i'$.
 \end{enumerate}
\end{proof}

\noindent
The proof shows that the initial morphism to a representation $R$ 
depends on the representation structure on $R$ and not just on the monad $R$ itself.
We illustrate this on the example of the typed signature of \PCF:
\begin{example}\label{ex:init_pcf_translation_nosem}
 Representing the signature of \PCF~in the untyped lambda calculus leaves one with several choices to take, e.g., as to 
how to translate the fixed point operator $\mathbf{Fix}$. To represent $\mathbf{Fix}$ in $\ULC$, one must
give a unary operation on $\ULC$.
Reasonable from the semantic viewpoint are, e.g., the representations
\begin{equation} x \mapsto \App (\mathbf{Y},x) \qquad \text{ or } \qquad x \mapsto \App ({\Theta},x)  \enspace , \label{eq:2_reps_for_fix}
\end{equation} 
using, e.g., one of the fixedpoint combinators
\begin{align*}  \mathbf{Y}&:= \lambda f.(\lambda x.f (x x))(\lambda x.f (x x)) \quad \text{(Curry)} \\
                \Theta &:= (\lambda x. \lambda y.(y (x x y)))(\lambda x.\lambda y.(y(x x y))) \quad \text{(Turing).}
\end{align*}
 By initiality, those two representations yield two different compilations of \PCF~to $\LC$,
 mapping a \PCF~term of the form $\mathbf{Fix} (f)$ to 
 $\mathbf{Y} (f) = \App (\mathbf{Y},f)$ and $\Theta(f) = \App(\Theta,f)$, respectively.
 The representation module morphisms thus constitute the ``extra structure'' $\phi$, $\psi$ and $\psi'$ mentioned in \autoref{sec:trans_pcf_ulc}.
 A complete translation is given in \autoref{chap:comp_sem_formal}.
\end{example}

\section{Logics and Logic Translations}

\label{ex:logic_trans}
  In the style of the Curry--Howard isomorphism, 
we consider propositions as types and proofs of a proposition as terms of that type.
  In this example we present the typed signatures of two different logics,
    \begin{packitem}
     \item Classical propositional logic, called \CPC, and
     \item Intuitionistic propositional logic, called \IPC.
    \end{packitem}
 According to our main theorem each of those signatures gives rise to an initial representation, a logical
   type system. We then use the \emph{iteration principle} on \CPC~in order to specify
   a translation \emph{of propositions and their proofs} from \CPC~to \IPC. 
   The translation we specify is actually the propositional fragment of the \emph{G\"odel--Gentzen negative 
  translation} \cite[Def.\ 3.4]{TVD88}.

\subsection{Signatures of Classical and Intuitionistic Logic}

We present typed signatures for classical and intuitionistic propositional logic. 
Their respective signatures for types --- \emph{propositions} --- are the same:
  let $P$ denote a set of atomic formulas.
  The \emph{types} --- propositions --- of classical (\CPC) and intuitionistic (\IPC) propositional logic are given by the following algebraic signature:
  \[ \mathcal{P} := \{p : 0, \quad \top : 0, \quad 
               \wedge : 2, \quad \bot : 0, \quad \vee : 2, \quad \impl : 2  \} \enspace . \]
where for any atomic formula $p\in P$ we have an arity $p:0$.
We call $\init{\mathcal{P}}$ the initial representation as well as its underlying set, i.e.\ the propositions of \CPC~and \IPC. 
For the set $\init{\mathcal{P}}$ we use infixed binary constructors.
Note that negation is defined as $\neg A \enspace \equiv \enspace A \impl \bot$.
 
\subsubsection{Signature of \CPC}

For the \emph{terms} of \CPC, each inference rule is given by an arity. In \autoref{tab:logic_inf_arity} (p.\ \pageref{tab:logic_inf_arity}),
 the inference rules and their corresponding arities are presented.
\def\fCenter{\mbox{$\vdash$}}
\begin{figure}[hbt]
 \centering
{ \renewcommand{\arraystretch}{2.3}
  \renewcommand{\tabcolsep}{3ex}
 \begin{tabular}{c | c}
   Inference Rule & Arity \\ \hline
   \AxiomC{}\RightLabel{$\top_\mathrm{I}$}\UnaryInfC{$\Gamma\vdash \top$} \DisplayProof & 
                  $\top_\mathrm{I} : * \to [\Theta]_{\top}$  \\

   \Axiom$\Gamma\ \fCenter\ \bot$ \RightLabel{$\bot_\mathrm{I}$}\UnaryInf$\Gamma\ \fCenter\ A$ \DisplayProof & 
                  $\bot_\mathrm{I} : [\Theta]_{\bot} \to [\Theta]_1$ \\

  \AxiomC{$\Gamma\vdash A $}\AxiomC{$\Gamma\vdash B$}\RightLabel{$\wedge_\mathrm{I}$}\BinaryInfC{$\Gamma\vdash A \wedge B$} \DisplayProof & 
                $\wedge_\mathrm{I} : [\Theta]_{1} \times[\Theta]_{2} \to [\Theta]_{1\wedge 2}$   \\
 
\Axiom$\Gamma\ \fCenter\ A \wedge B$ \RightLabel{$\wedge_\mathrm{E1}$} \UnaryInf$\Gamma\ \fCenter\ A$ \DisplayProof &
            $\wedge_\mathrm{E1} : [\Theta]_{1\wedge2} \to [\Theta]_1 $\\

\Axiom$\Gamma\ \fCenter\ A \wedge B$ \RightLabel{$\wedge_\mathrm{E2}$} \UnaryInf$\Gamma\ \fCenter\ B$ \DisplayProof &
            $\wedge_\mathrm{E1} : [\Theta]_{1\wedge2} \to [\Theta]_2 $\\

\Axiom$\Gamma , A\ \fCenter\ B$ \RightLabel{$\impl_{\mathrm{I}}$} \UnaryInf$\Gamma\ \fCenter\ A \impl B$ \DisplayProof &
            $\impl_{\mathrm{I}} :[\Theta]_2^1 \to [\Theta]_{1\impl 2}$\\

\AxiomC{$\Gamma\vdash A \impl B$}\AxiomC{$\Gamma\vdash A$}\RightLabel{$\impl_\mathrm{E}$}\BinaryInfC{$\Gamma\vdash B$} \DisplayProof & 
                  $\impl_\mathrm{E} :[\Theta]_{1\impl 2} \times [\Theta]_1 \to [\Theta]_2$ \\

\Axiom$\Gamma\ \fCenter\ A $ \RightLabel{$\vee_\mathrm{I1}$} \UnaryInf$\Gamma\ \fCenter\ A \vee B$ \DisplayProof &
            $ \vee_\mathrm{I1} :[\Theta]_1 \to [\Theta]_{1\vee 2}$\\

\Axiom$\Gamma\ \fCenter\ B $ \RightLabel{$\vee_\mathrm{I2}$} \UnaryInf$\Gamma\ \fCenter\ A \vee B$ \DisplayProof &
            $ \vee_\mathrm{I2} :[\Theta]_2 \to [\Theta]_{1\vee 2}$\\

\AxiomC{$\Gamma\vdash A \vee B$}\AxiomC{$\Gamma,A\vdash C$}\AxiomC{$\Gamma,B\vdash C$}\RightLabel{$\vee_\mathrm{E}$}\TrinaryInfC{$\Gamma\vdash C$} \DisplayProof &
                  $\vee_\mathrm{E} : [\Theta]_{1\vee 2} \times [\Theta]_3^1\times[\Theta]_3^2 \to [\Theta]_3$ \\

\AxiomC{}\RightLabel{$\mathrm{EM}$}\UnaryInfC{$\Gamma\vdash \lnot A \vee A$} \DisplayProof               &
            $\mathrm{EM} : *\to [\Theta]_{\lnot 1 \vee 1 }$
 
 \end{tabular}
}
\caption{Inference Rules of \CPC~and their Arities} \label{tab:logic_inf_arity}
\end{figure}
Each inference rule corresponds to a (family of) term --- \emph{proof} --- constructor(s),
 where inference rules without hypotheses are constants.
Note that the initial representation automatically comes with an additional inference rule
\begin{prooftree}
     \AxiomC{}
     \RightLabel{var}
     \UnaryInfC{$\Gamma,A \vdash A$}
\end{prooftree}
corresponding to
the monadic operation $\eta$, i.e.\ to the variables--as--terms constructor.
Analogously to \autoref{rem:var_eta}, it is not necessary, using our approach, to specify
this inference rule explicitly by an arity in the term signature of the logic under consideration;
any logic we specify via a typed signature automatically comes with this rule.

\subsubsection{Signature of \IPC}
The type signature and thus the formulas of intuitionistic propositional logic \IPC~are the same as for \CPC. 
However, the \emph{term} signature is missing the arity EM for excluded middle.

\subsection{Translation via Initiality}
The translation of propositions $(\_)^g : \init{\mathcal{P}} \to \init{\mathcal{P}}$, i.e.\ on the \emph{type} level, is specified by 
a representation $g$ of the algebraic signature $\mathcal{P}$ in the set $\init{\mathcal{P}}$.
According to \autoref{def:rep_alg_sig} we must specify, for any arity $s:n\in \N$ of $\mathcal{P}$, a map towards $\init{\mathcal{P}}$ 
taking a suitable number of arguments in $\init{\mathcal{P}}$,
\[  s^g : \init{\mathcal{P}}^n \to \init{\mathcal{P}} \enspace . \]
There is, of course, a canonical such map for each arity --- but this would only give us the identity morphism on $\init{\mathcal{P}}$.
We represent $\mathcal{P}$ in $\init{\mathcal{P}}$ not by this identity representation, but in such a way that 
we obtain the G\"odel--Gentzen negative translation:
\begin{align*} &p^g := \lnot\lnot p, \quad\top^g := \lnot\lnot\top, \quad \wedge^g:= \wedge, \quad \vee^g := (A,B)\mapsto \lnot (\lnot A \wedge \lnot B), \\ 
           &\impl^g := (\impl), \quad \bot^g := \lnot\lnot\bot \enspace .
\end{align*}
The proofs of \IPC~are given by the signature of \CPC~without the classical axiom EM.
We represent EM in \IPC~by giving, for any proposition $A$, a term of type $\lnot (\lnot \lnot A\wedge \lnot A)$, e.g.,

\begin{prooftree} \def\fCenter{\mbox{$\vdash$}} \def\extraVskip{3pt}
  \AxiomC{}
     \RightLabel{var}
\UnaryInf$\lnot\lnot A \wedge \lnot A\ \fCenter\ \lnot\lnot A \wedge \lnot A$
      \RightLabel{$\wedge_\mathrm{E1}$}
\UnaryInf$\lnot\lnot A \wedge \lnot A\ \fCenter\ \lnot\lnot A$ 
                                                    \AxiomC{}
                                                   \RightLabel{var}                              
                                                   \UnaryInf$\lnot\lnot A \wedge \lnot A\ \fCenter\ \lnot\lnot A \wedge \lnot A$
                                                                     \RightLabel{$\wedge_\mathrm{E2}$}
                                                \UnaryInf$\lnot\lnot A \wedge \lnot A\ \fCenter\ \lnot A$
                                                         \RightLabel{$\impl_\mathrm{E}$}
          \BinaryInf$\lnot\lnot A \wedge \lnot A\ \fCenter\ \bot$ 
                      \RightLabel{$\impl_\mathrm{I}$}
          \UnaryInf$\fCenter\ \lnot\lnot A \wedge \lnot A \impl \bot$
\end{prooftree}
As another example, we give a representation of $\vee_\mathrm{I1}$, that is, for any proposition $A$ and $B$, 
we give a term of type $A^g  \to \neg(\neg A^g \wedge \neg B^g)$:
\begin{prooftree} \def\extraVskip{3pt} 
           \AxiomC{$A^g$}
             \UnaryInfC{$\neg \neg A^g$}
              \RightLabel{$\vee_\mathrm{I1}$}
          \UnaryInfC{$\neg \neg A^g \vee \neg \neg B^g$} 
                      \RightLabel{De Morgan}
          \UnaryInfC{$ \neg(\neg A^g \wedge \neg B^g) $}
\end{prooftree}
Here the proof of $A^g \to \neg\neg A^g$ and of the used De Morgan law are abbreviations for 
longer proofs in \IPC.
We leave it up to the reader to find representations in \IPC~for the other arities.
 
\subsection{Remarks}
 This representation of the signature of \CPC~in \IPC~yields the (propositional fragment of the) 
  G\"odel--Gentzen translation of propositions specified in 
 Troelstra and van Dalen's book \cite[Def.\ 3.4]{TVD88}, denoted on propositions with the same name as its 
 specifying representation,
 \[ (\_)^g : \init{\mathcal{P}}\to\init{\mathcal{P}} \enspace . \]
Our translation of terms shows that any provable proposition in \CPC~translates to a provable proposition in \IPC,
since we provide the corresponding proof term via our translation:
  \begin{equation*} \Gamma \vdash_{\mathbf{C}} A \enspace \text{ implies } \enspace \Gamma^g \vdash_{\mathbf{I}} A^g \enspace .
\end{equation*}
  
\noindent
However, a logic translation $t$ from a logic $\mathbf{L}$ to another logic $\mathbf{L'}$ should certainly satisfy an \emph{equivalence} of the form 
  \begin{equation*} \Gamma \vdash_{\mathbf{L}} A \enspace \text{ if and only if } \enspace \Gamma^t \vdash_{\mathbf{L'}} A^t \enspace .
\end{equation*}
Our framework does \emph{not} ensure the implication from right to left, and is thus deficient from the point of view of logic translations.

Another important property of logics is normalization through cut elimination.
This aspect can be treated using the techniques presented in \autoref{chap:comp_types_sem}, where we integrate \emph{reduction rules}
into the notion of signature and their representations as presented in this chapter.

\chapter{Reductions for Untyped Syntax} \label{sec:prop_arities}

We now would like to consider not just the \emph{terms} (and types) of a language, but also \emph{reductions}
on the terms. 
As an example, suppose we would like to equip the untyped lambda calculus with the reduction relation
generated by the beta rule given in \autoref{eq:beta}.
We could produce the syntax associated to the signature 
via the universal property explained in the preceding section
--- possibly in a computer implementation thereof ---
and define a suitable relation on the terms of the language \emph{a posteriori}.

However, in this way we would not have any guarantee concerning
compatibility of substitution with respect to this reduction relation.
Furthermore, how could we ensure any compatibility of a translation from the 
initial representation to another term language, equipped with some reduction rules, specified via the
iteration principle? There would not be any systematic way of doing so, we would need 
to check manually for each translation we consider.

The solution to this problem is to integrate reduction rules into signatures and the models
of those signatures.
Indeed, instead of considering reduction rules for just the initial representation of a signature, say, $\Sigma$,
we define \emph{inequations over $\Sigma$}, which specify rules for \emph{each} representation of $\Sigma$.
However, not all of the representations of $\Sigma$ satisfy those rules; 
we define a ``satisfaction'' predicate on the representations of $\Sigma$, 
to pick out the representations that satisfy those rules.

In order to define the satisfaction predicate, we need to consider representations 
whose codomain (read: the codomain of the underlying monad) is not the category
of plain sets, but of sets with a structure suitable to express relations between its elements.
The following monadic models come to mind:
\begin{itemize}
 \item[\color{red}{X}] $M:\Set\to\Set$ --- Terms modulo relations by quotienting\\
       We reject the idea of quotienting by the congruence relation generated by a set of inequations 
      on the grounds that we want to avoid adding a \emph{symmetry} rule and thus loose 
       the information of \emph{direction} of a reduction
 \item[\color{red}{X}] $M:\PO\to\PO$ --- Monads on preordered sets \\
       While the use of monads on preordered sets allows to retain directions of reductions, it would 
       necessitate to consider \emph{preordered} contexts. However, contexts usually 
       are given by \emph{unstructured} sets of variables.
 \item[\color{blue}{\checkmark}] $M:\Set\to\PO$ --- \emph{Relative Monads} from sets to preordered sets\\
       Relative monads from sets to preorders avoid the problems one encounters with the
       aforementioned approaches. As shown in \autoref{sec:rel_mon_defs}, the mediating functor to use
       is the functor $\Delta:\Set\to\PO$.
\end{itemize}
Before going into more detail concerning the models of signatures with inequations, we have a 
closer look at those signatures themselves.
Signatures should carry information about
\begin{packdesc}
 \item [Syntax] the terms, optionally typed over a set of sorts, and
 \item [Semantics] \emph{reductions} on the terms.
\end{packdesc}
Accordingly, we introduce a notion of \emph{$2$--signature}. A 2--signature $(\Sigma,A)$ consists
of a (higher--order) signature $\Sigma$ --- which we also call 1--signature from now on, to emphasize the existence of 
a second level, the \emph{semantic} level --- which specifies the terms of a language, as well as a set $A$ of 
\emph{inequations} over $\Sigma$. Each inequation of $A$ specifies a reduction rule.

We borrow the terms ``1--signature'' and ``2--signature'' from T.\ Hirschowitz \cite{HIRSCHOWITZ:2010:HAL-00540205:2}:
they are motivated by the point of view of Categorical Semantics. There, types and terms of a language are modelled
as the objects and morphisms of a category. 
Furthermore, reductions between terms may be modelled through 2--cells. 
In this way, a 1--signature specifies a 1--category,
whereas a 2--signature specifies a 2--category.

As the 1--signature which underlies a 2--signature, we may choose any of the notions of signature defined in the
preceding chapters (cf.\ \autorefs{def:sts_arity_signature}, \ref{def:typed_signature}).
For this chapter, however, we restrict ourselves to \emph{untyped} syntax with reductions,
allowing us to employ a simple notion of 1--signature.
The next chapter 
integrates reductions and types.

While we present 1--signatures from two perspectives, a syntactic one and a semantic one, we only present
inequations semantically. We refer to \autoref{sec:further_work} for thoughts about the syntactic aspect.

\section{1--Signatures}

We start out by defining \emph{1--signatures} in two different ways, once syntactically, and once in terms of pairs of
functors between suitable categories.

The syntactic description of arities is actually the same as in \autoref{sec:sts_ju}, even simpler:
since we only consider \emph{untyped} syntax, we just need to specify the number of arguments of a constructor, and,
for each argument, the number of variables bound in it:

\begin{definition}[Classic Arity, Signature]\label{def:classic_half_arity_prop_untyped_syntactic}
 A \emph{classic arity} is given by a list of natural numbers. 
 The length of the list indicates the number of arguments of its associated constructor, 
 whereas the $i$--th component of the list specifies the number of variables bound in the $i$--th argument.
 A \emph{classic signature} is given by a family of arities.
\end{definition}

\begin{example}[Untyped Lambda Calculus]\label{ex:sig_ulc_syn}
  The signature of the untyped lambda calculus is given by
 \[ \Sigma_{\ULC} := \{ \app : [0,0]\enspace , \quad \abs : [1] \} \enspace . \]
\end{example}
 
\noindent
For the semantic definition of arities, we define a suitable category of monads and a large category of modules.
As discussed at the beginning of the chapter, we use \emph{relative} monads and modules over relative monads.

We start by giving a simplified version of the definition of morphism of relative monads, to which we
restrict ourselves throughout this chapter. It is obtained from \autoref{def:colax_rel_mon_mor} by restricting the vertical 
functors $G$ and $G'$ to the identity functor.
Furthermore we will have $F = F'$, and the natural transformation $N$ is the identity transformation.
Given two relative monads $P$ and $Q$ on $F:\C \to \D$, a \emph{(simple) morphism of relative monads} is a
family of morphisms $\tau_c\in\D(Pc,Qc)$
that is compatible with the monadic structure:

\begin{definition}[Morphism of Relative Monads]\label{def:simp_rel_mon_mor}
Given two relative monads $P$ and $Q$ from $\C$ to $\D$ on the functor $F\colon\C\to\D$, 
 a \emph{morphism of monads} from $P$ to $Q$ is given by a collection of morphisms $\tau_c\in \D(Pc,Qc)$ such that 
the following diagrams commute for all suitable morphisms $f$:
\begin{equation*}
 \begin{xy}
  \xymatrix @=3pc{
  Pc \ar[r]^{\kl[P]{f}} \ar[d]_{\tau_c}& Pd \ar[d]^{\tau_d} & Fc \ar[r]^{\we^P_c} \ar[rd]_{\we^Q_c} & Pc \ar[d]^{\tau_c} \\
  Qc \ar[r]_{\kl[Q]{\comp{f}{\tau_d}}} & Qd & {} & Qc.\\
}
 \end{xy}
\end{equation*}
As a consequence from these commutativity properties the family $\tau$ is a natural 
transformation between the functors induced by the monads $P$ and $Q$ (cf.\ \autoref{rem:rel_monad_functorial}).
\end{definition}

\begin{definition}[Category of Relative Monads on $F$]
    Given a functor $F : \C \to \D$, we define the category $\RMon{F}$ to be the category whose objects are relative monads on $F$.
   A morphism from $P$ to $Q$ in $\RMon{F}$ is a morphism as in \autoref{def:simp_rel_mon_mor}.
\end{definition}

There is an adjunction between relative monads on $\Delta$ and monads on sets:

\begin{lemma}[Adjunction between $\Mon{\Set}$ and $\RMon{\Delta}$] \label{lem:adj_mon_rmon}
 The functors (with object functions) defined in \autoref{lem:rmon_delta_endomon} give rise to an adjunction
\begin{equation*}   
       \begin{xy}
        \xymatrix@C=4pc{
                  **[l]\Mon{\Set} \rtwocell<5>_{U_*}^{\Delta_*}{'\bot} & **[r]\RMon{\Delta}
}
       \end{xy} \enspace .
\end{equation*}
 
\end{lemma}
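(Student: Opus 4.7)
The plan is to lift the coreflection $\Delta \dashv U : \Set \to \PO$ of \autoref{lem:adj_set_po} pointwise to the level of monads. Since the object assignments $T \mapsto \Delta_* T$ and $P \mapsto U_* P$ are already given by \autoref{lem:rmon_delta_endomon}, what remains is to define their actions on morphisms, check functoriality, and exhibit a unit and counit satisfying the triangle identities.

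First I would extend $\Delta_*$ and $U_*$ to morphisms. Given a simple monad morphism $\tau : T \to T'$ in $\Mon{\Set}$, the family $(\tau_X)_{X \in \Set}$ consists of set maps $TX \to T'X$; since the domain $\Delta T X$ is discrete, each $\tau_X$ is automatically monotone as a map $\Delta TX \to \Delta T'X$, so defining $(\Delta_*\tau)_X := \tau_X$ yields a candidate morphism in $\RMon{\Delta}$. The relative monad morphism axioms of \autoref{def:simp_rel_mon_mor} for $\Delta_*\tau$ follow directly from those for $\tau$, using that $\sigma^{\Delta_* T}$ and $\eta^{\Delta_* T}$ are defined from $\sigma^T, \eta^T$ via the isomorphism $\varphi$ of \autoref{lem:adj_set_po}. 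Conversely, given $\rho : P \to Q$ in $\RMon{\Delta}$, the family of underlying set maps $(U\rho_X)_X$ defines $U_*\rho : U_*P \to U_*Q$; the square diagrams required for a monad morphism in $\Set$ are obtained by applying $U$ to the corresponding diagrams for $\rho$ and using the explicit description of the Kleisli operation of $U_*P$ from \autoref{lem:rmon_delta_endomon}. Functoriality of $\Delta_*$ and $U_*$ is then immediate because composition and identity are defined pointwise in both categories.

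Next I would construct the unit and counit. Because $U \Delta = \Id_\Set$ and the adjunction $\Delta \dashv U$ is a coreflection whose unit is the identity, one checks that $U_* \Delta_* T = T$ as monads on $\Set$: the object map, the unit, and the Kleisli operation all coincide by construction (this essentially amounts to unfolding the round-trip through $\varphi$, which is the identity on discrete preorders). Hence the unit $\eta^* : \Id \Rightarrow U_* \Delta_*$ can be taken to be the identity transformation. For the counit $\epsilon^* : \Delta_* U_* \Rightarrow \Id_{\RMon{\Delta}}$, I would define $(\epsilon^*_P)_X : \Delta U P X \to P X$ to be the counit $\epsilon_{PX}$ of \autoref{lem:adj_set_po}, i.e.\ the identity map on the underlying set of $PX$, viewed as a monotone map out of the discrete preorder. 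This family is a morphism of relative monads because, on underlying sets, it is the identity, so the two diagrams of \autoref{def:simp_rel_mon_mor} reduce to the identity axioms of the relative monad $P$.

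The triangle identities then reduce to the corresponding identities for the underlying coreflection $\Delta \dashv U$, evaluated pointwise on objects of the shape $PX$ or $TX$. The main obstacle — though a rather mild one — is bookkeeping: one must carefully verify that the Kleisli operations transported across the adjunction $\varphi$ interact correctly with $\tau$ and $\rho$, i.e.\ that the equations such as $U\bigl(\kl{\varphi^{-1} f}\bigr) = \sigma^{U_* P}(f)$ used in \autoref{lem:rmon_delta_endomon} are preserved by pre- and post-composition with monad morphisms. Once this is spelled out, naturality of $\epsilon^*$ and the triangle identities are formal consequences of the naturality of $\epsilon$ and the identity $U\Delta = \Id$, so the lemma follows without further work.
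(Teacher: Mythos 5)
Your proposal is correct, and its core observation is the same one the paper exploits: the base adjunction $\Delta \dashv U$ of \autoref{lem:adj_set_po} is a coreflection whose unit and counit are carried by identity functions, so nothing about the monadic data changes when it is lifted pointwise. The packaging, however, differs. The paper proves the adjunction in its hom-set form: it defines the natural isomorphism $\varphi_{P,Q}\colon \RMon{\Delta}(\Delta_*P,Q)\cong\Mon{\Set}(P,U_*Q)$ by applying the adjunction of \autoref{lem:adj_set_po} to each component of a monad morphism, and then simply observes that the commuting diagrams of \autoref{def:simp_rel_mon_mor} resp.\ \autoref{def:monad_hom_simpl} are literally unchanged (the underlying functions are the same), with naturality being trivial. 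You instead go the unit/counit route: extend $\Delta_*$ and $U_*$ to morphisms, note $U_*\Delta_*T=T$ on the nose so the unit is the identity, take the counit to be the pointwise counit $\epsilon_{PX}$, verify it is a morphism of relative monads, and reduce the triangle identities to the base coreflection. Both work; the paper's version is shorter because transporting morphisms across $\varphi$ subsumes the separate check that the counit satisfies the relative-monad-morphism axioms, while your version makes explicit that the lifted adjunction is again a coreflection (unit an identity), and spells out the functorial actions on morphisms that the paper's ``naturality of $\varphi$ is trivial'' quietly presupposes. The only point to be careful about is your claim $U_*\Delta_*T=T$ ``as monads'': mathematically this is an honest equality (the round trip through $\varphi$ is the identity on discrete preorders), though in an intensional setting such as the paper's \textsf{Coq} development one would only get it up to a canonical isomorphism; for the informal lemma as stated this is not a gap.
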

\begin{proof}
  The isomorphism $\varphi_{P,Q}:\RMon{\Delta}(\Delta_*P,Q) \cong \Mon{\Set}(P,U_*Q)$
  is defined by applying the adjunction of \autoref{lem:adj_set_po} in each morphism of the 
  family underlying a morphism of (relative) monads.
  Commuting diagrams are not modified by applying this adjunction.
  Naturality of $\varphi$ is trivial.
\end{proof}

\begin{definition}[Large Category of Modules]\label{def:lrmod_small}
  Given a functor $F: \C \to \D$ and a category $\E$, we define the category $\LRMod{}{F}{\E}$ to be the category 
  whose objects are pairs $(P,M)$ of a relative monad $P\in \RMon{F}$ and a relative $P$--module $M$ with codomain $\E$.
  A morphism to another such $(Q,N)$ is a pair $(h, f)$ of a morphism $h : P \to Q$ in $\RMon{F}$ and a morphism of $P$--modules
  $f : P \to h^* Q$ to the pullback of $Q$ along $h$ (cf.\ \autoref{subsection:rmod_examples}).
  
  For any monad $P$ on $F$ there is the injection functor
  \begin{equation*}
     I_P : \RMod{P}{\E} \to \LRMod{}{F}{\E}, \quad f \mapsto (\id, f) \enspace .     
  \end{equation*}
\end{definition}

A \emph{half--arity} associates a $P$--module towards the category $\PO$ of preorders to any relative monad $P$ on $\Delta$:
\begin{definition}[Half--Arity]
 A \emph{half--arity} $a$ is a functor 
   \[ a : \RMon{\Delta} \to \LRMod{}{\Delta}{\PO} \enspace   \]
 that is pre--inverse to the forgetful functor.
\end{definition}

Similarly to the preceding sections we restrict our attention to \emph{classic} half--arities:
\begin{definition}[Classic Half--Arity] \label{def:classic_half_arity_prop_untyped}
 The following clauses define the inductive set of \emph{classic} half--arities:
  \begin{itemize}
    \item $\Theta : P \mapsto (P,P)$, the tautological module, is classic;
    \item if $M$ is classic, so is its derivation $M' : P \mapsto (P, M(P)')$;
    \item if $M$ and $N$ are classic, so is their product $M \times N:P \mapsto (P, M(P) \times N(P))$;
    \item the constant half--arity $* : P \mapsto 1$ is classic.
  \end{itemize}

\end{definition}

\noindent
 Classic half--arities as defined in \autoref{def:classic_half_arity_prop_untyped}
  are in one--to--one correspondence to classic arities as defined in \autoref{def:classic_half_arity_prop_untyped_syntactic}:

\begin{remark}\label{rem:syn_sem_untyped_two_half_arities}
  We use the notation defined in \autoref{not:deriv_rmod_untyped}.  More generally, given a
  list of natural numbers $s = [n_1,\ldots,n_m]$, we write
   $M^s := M^{n_1} \times M^{n_2} \times\ldots\times M^{n_m}$.
  
  The same notation is used for morphisms, i.e.\ given a morphism of $R$--modules $f : M \to N$, we write 
\[  f^s := f^{n_1} \times\ldots\times f^{n_m} : M^s \to N^s \enspace . \]
  Thus any list of natural numbers specifies uniquely a classic half--arity, the empty list 
   denoting the terminal module $* : R\mapsto 1$.
\end{remark}

\begin{definition}[Arity]
 An \emph{arity} $s$ is a pair 
  $ s = (\dom(s), \cod(s))$
of half--arities
 \[ \dom(s), \cod(s) : \RMon{\Delta} \to \LRMod{}{\Delta}{\Set} \enspace . \]
We write $s = \dom(s)\to\cod(s)$, and $\dom(s,P) := \dom(s)(P)$ (and similarly for $\cod$).

\end{definition}

\begin{definition}[Classic Arity, 1--Signature]\label{def:1--signature_untyped}
 A \emph{classic arity} is an arity of the form
   \[ \dom(s) \to \Theta \]
 such that $\dom(s)$ is a classic half--arity.
  Any classic arity as in \autoref{def:classic_half_arity_prop_untyped_syntactic} 
    uniquely specifies a classic arity by specifying its domain according to \autoref{rem:syn_sem_untyped_two_half_arities}.
  A \emph{1--signature} is a family of classic arities, or, equivalently according to \autoref{rem:syn_sem_untyped_two_half_arities}, 
a family of lists of natural numbers.
\end{definition}

\begin{example}[Untyped Lambda Calculus] \label{ex:sig_ulc}
 The 1--signature $\Sigma_{\ULC}$ of the untyped lambda calculus, already given syntactically in \autoref{ex:sig_ulc_syn}, is given by the two arities
\[ \app := \Theta \times \Theta \to \Theta \enspace , \quad \abs := \Theta' \to \Theta \enspace . \]
\end{example}

\section{Representations of 1--Signatures}

A representation of a classic arity $s$ in a monad $P$ is a module morphism $\dom(s,P) \to P$.
More generally:

\begin{definition}[Representation of an Arity]\label{def:rep_1--arity_untyped}
 A representation of an arity $s = \dom(s) \to \cod(s)$ in a monad $P$ on $\Delta$ is
 a morphism $M$ of $P$--modules
  \[ M : \dom(s,P) \to \cod(s,P) \]
 in the category $\LRMod{}{\Delta}{\Set}$, such that $\pi_1(M) = \id$.
 By abuse of notation, we also denote by $M$ the second projection of $M$, i.e.\
 we consider $M\in\RMod{P}{\Set}$.
\end{definition}

A representation of a signature is given by a relative monad on $\Delta$ 
and a representation of each arity in this monad:

\begin{definition}[Representation of a 1--Signature]\label{def:rep_in_rmonad}
A \emph{representation $R$ of a signature} $\Sigma$ is given by 

\begin{packitem}
 \item a monad $P$ on $\Delta$ and
 \item a representation $s^R:\dom(s,P) \to \cod(s,P)$ of each arity $s\in \Sigma$ in $P$ as in \autoref{def:rep_1--arity_untyped}.
\end{packitem}
 Given a representation $R$, we denote its underlying monad by $R$ as well. 
\end{definition}

For any signature $\Sigma$ as in \autoref{def:1--signature_untyped}, we have representations of $\Sigma$ in monads on $\Set$ (cf.\ \autoref{def:sts_sig_rep})
and in relative monads on $\Delta$ (cf.\ \autoref{def:rep_in_rmonad}). 
The following definition links those representations:

\begin{definition}[Reps.\ in Relative Monads and Monads] \label{lem:rep_endo_rep_rel}
 To any representation of a classic signature $\Sigma$ in a relative monad $R$ as defined in \autoref{def:rep_in_rmonad} 
we associate a representation of $\Sigma$ in the monad $U_*R$ (cf.\ \autoref{lem:adj_mon_rmon}) according to the definition of representation of \autoref{def:sts_sig_rep}, 
by postcomposing with the forgetful functor from preorders to sets.

 Conversely, to any representation of $\Sigma$ in a monad $Q$ over sets we associate a representation 
of $\Sigma$ in the relative monad $\Delta_* Q$ over $\Delta$, by postcomposing with $\Delta$.
  More precisely, an arity $s = [s_1,\ldots,s_n]\in \Sigma$ and a representation of $s$ in $Q$, say,
 \[ s^Q : Q^s  \to Q \enspace , \]
  with $Q^s := Q^{s_1} \times \ldots \times Q^{s_n}$, we have to give a morphism of modules 
 \[  \Delta_* Q^{s_1} \times \ldots \times \Delta_* Q^{s_n} \to \Delta Q_* \enspace , \]
  that is, a family of monotone morphisms in the category $\PO$.
  However, the domain module is isomorphic to $\Delta_* Q^s$, hence postcomposing the map $s^Q$ with $\Delta$ 
  does the job,
  \[ \Delta_* s^Q : \Delta_* Q^s \to \Delta_* Q \enspace , \]
  and $\Delta_* s^Q$ obviously has the necessary commutation property with respect to substitution.
\end{definition}

\begin{example}[\autoref{ex:sig_ulc} continued]\label{ex:ulc_rep}

A representation $P$ of $\Sigma_{\ULC}$ is given by
\begin{packitem}
 \item a monad $P : \SET\stackrel{\Delta}{\to}\PO$ and
 \item two morphisms of $P$--modules in $\RMod{P}{\PO}$,
       \[ \app : P \times P \to P \quad\text{and}\quad \abs : P' \to P \enspace .\] 
\end{packitem}
\end{example}

\noindent
Morphisms of representations are monad morphisms which commute with the representation morphisms of modules:
\begin{definition}[Morphism of Representations]\label{def:prop_mor_of_reps}
Let $P$ and $Q$ be representations of a classic signature $\Sigma$. A \emph{morphism of representations} $f:P \to Q$ is a morphism of monads 
    $f \colon P\to Q$ such that the following diagram commutes for any arity $s \in \Sigma$: 
\[
 \begin{xy}
  \xymatrix{ **[l] \dom(s,P) \ar[r]^{s^P} \ar[d]_{\dom(s,f)} & P \ar[d]^{f} \\
             **[l] \dom(s,Q) \ar[r]_{s^Q} & Q .
  }
 \end{xy}
\]
%
\end{definition}

\noindent
The meaning of those diagrams might become clearer when we consider the example of the untyped lambda calculus.
In line with the abuse of notation mentioned in \autoref{def:rep_1--arity_untyped}, 
we omit the first component of objects and morphisms in $\LRMod{}{\Delta}{\Set}$:

\begin{example}[\autoref{ex:ulc_rep} continued]
  \label{ex:ulc_rep_mor}
  Let $P$ and $R$ be two representations of $\Sigma_{\ULC}$. 
A \emph{morphism} from $P$ to $R$ is given by a morphism of monads
 $f : P \to R$ such that the following diagrams of $P$--module morphisms commute:
\begin{equation*}
\begin{xy}
 \xymatrix @C=4pc  {
   **[l]P\times P \ar[r]^{\app^P} \ar[d]_{f \times f}& P \ar[d]^{f} & & **[l] P' \ar[r]^{\abs^P} \ar[d]_{f'}& P\ar[d]^{f} \\
   **[l]f^*(R\times R) \ar[r]_{f^*(\app^R)} & f^*R & & **[l]f^*R' \ar[r]_{f^*(\abs^R)}& f^*R .
}
\end{xy}
\end{equation*}
\end{example}

\noindent
To make sense of these diagram it is necessary to recall the constructions on modules of \autoref{subsection:rmod_examples}. 
The diagrams live in the category $\RMod{P}{\PO}$. 
The vertices are obtained from the tautological modules $P$ resp.\ the $Q$ over the monads $P$ resp.\ $Q$ by applying 
the pullback (for $Q$) and derivation functors as well as by the use of the product in the category of $P$--modules into $\PO$.
The vertical morphisms are module morphisms induced by $f$, to which --- on the left--hand side --- 
functoriality of derivation and products are applied. 
Furthermore instances of \autorefs{lem:rel_pb_prod} and \ref{lem:rel_pb_comm} are hidden in the lower left corner. 
The lower horizontal morphism makes use of the functoriality of the pullback operation.

\begin{definition}[Category of Representations] \label{def:cat_of_reps_relmons}
 Representations of $\Sigma$ and their morphisms form a category $\Rep^{\Delta}(\Sigma)$.
 \end{definition}

\begin{lemma}[Adj.\ between Reps.\ in Rel.\ Monads and Reps.\ in Monads] \label{lem:adjunction_reps}
 The assignment of \autoref{lem:rep_endo_rep_rel} extends to an adjunction between the 
  category of representations in relative monads on $\Delta$ 
 and the category of representations in monads on sets (cf.\ \autoref{def:sts_cat_of_reps}):
\begin{equation*}   
       \begin{xy}
        \xymatrix@C=4pc{
                  **[l]\Rep(\Sigma) \rtwocell<5>_{U_*}^{\Delta_*}{'\bot} & **[r]\Rep^{\Delta}(\Sigma)
}
       \end{xy} \enspace . 
\end{equation*}
  
\end{lemma}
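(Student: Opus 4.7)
The plan is to lift the adjunction $\Delta_* \dashv U_*$ of Lemma \ref{lem:adj_mon_rmon} from the level of (relative) monads to the level of representations, using that the assignments of Lemma \ref{lem:rep_endo_rep_rel} already describe the behaviour on objects. The strategy breaks into three steps: functoriality on morphisms, lifting of the unit and counit, and the triangle identities.

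First, I would verify that the constructions of Lemma \ref{lem:rep_endo_rep_rel} extend to functors
\[\Delta_* : \Rep(\Sigma) \to \Rep^\Delta(\Sigma) \qquad\text{and}\qquad U_* : \Rep^\Delta(\Sigma) \to \Rep(\Sigma).\]
On objects this is already done. On morphisms one must check that if $f : P \to Q$ is a morphism of representations in $\Mon{\Set}$ (resp.\ in $\RMon{\Delta}$), then its image under postcomposition with $\Delta$ (resp.\ with $U$) satisfies the square of Def.\ \ref{def:prop_mor_of_reps} for every arity $s \in \Sigma$. The key observation is that both $\Delta$ and $U$ commute, up to canonical iso, with the module operations used to build classic half--arities: $\Delta$ preserves products (its product preorder on $\Delta X \times \Delta Y$ is again diagonal), it trivially commutes with derivation since $\Delta_*P^n(V) = \Delta(P(V^n))$, and the same is true for $U$ which is a right adjoint and therefore cartesian. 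Hence pulling back and composing with $\Delta$ or $U$ sends the representation square for $f$ to the representation square for $\Delta_* f$ or $U_* f$.

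Secondly, I would show that the unit $\eta$ and counit $\epsilon$ of the monadic adjunction of Lemma \ref{lem:adj_mon_rmon} are in fact morphisms of representations, not merely of (relative) monads. Since in the proof of Lemma \ref{lem:adj_mon_rmon} both $\eta$ and $\epsilon$ have carrier given, componentwise, by identity maps (via the adjunction $\Delta \dashv U$ of Lemma \ref{lem:adj_set_po}), the diagram of Def.\ \ref{def:prop_mor_of_reps} collapses to the commutativity of $s^P$ with itself, after identifying $U_*\Delta_* s^P$ with $s^P$ and $\Delta_* U_* s^R$ with $s^R$ through the already--established preservation of derivation and products by $\Delta$ and $U$.

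Thirdly, the triangle identities for the lifted adjunction follow at once from the triangle identities of Lemma \ref{lem:adj_mon_rmon}, since equality of morphisms of representations is tested on the underlying (relative) monad morphisms. Naturality of the unit and counit at the representation level similarly reduces to naturality at the monad level.

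The only mildly subtle point — and the step I would treat most carefully — is the compatibility of pullbacks with postcomposition: the codomain of the square in Def.\ \ref{def:prop_mor_of_reps} involves the pullback $f^*$ along the monad morphism $f$, and one must verify that $U_*(f^*R^s) \cong (U_*f)^*(U_*R)^s$ and similarly for $\Delta_*$. This is the combinatorial heart of the argument, but it follows routinely from the functoriality and cartesianness of $\Delta$ and $U$ together with Lemmas \ref{lem:rel_pb_prod}, \ref{lem:rel_pb_comm}, and their evident analogues for the endo--module case. Once this bookkeeping is in place the adjunction $\Delta_* \dashv U_*$ at the representation level drops out of the monadic one.
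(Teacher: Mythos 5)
Your plan is correct, and the verifications it isolates --- postcomposition with $\Delta$ and $U$ commutes with products and derivation and hence carries the square of \autoref{def:prop_mor_of_reps} to the corresponding square, the unit and counit have identity carriers, and the triangle identities are inherited because equality of representation morphisms is tested on the underlying monad morphisms --- are exactly the ones needed. The paper itself gives no separate argument for this lemma: it leans on \autoref{lem:rep_endo_rep_rel} for the object assignments and on \autoref{lem:adj_mon_rmon}, whose proof presents the adjunction as a hom--set bijection obtained by applying $\Delta\dashv U$ (\autoref{lem:adj_set_po}) componentwise, with the remark that commuting diagrams are unaffected; the typed analogue, \autoref{lem:init_no_eqs_typed}, is proved in the same style, the natural isomorphism being postcomposition with $U$ in one direction and with $\Delta$ in the other. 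So where you lift the unit and counit and check the triangle identities, the paper instead checks that this hom--set bijection restricts to morphisms of representations. The substance is the same in both routes --- the heart is that $\Delta_*$ and $U_*$ preserve the representation squares, using that classic half--arities are built from products and derivation, both compatible with postcomposition --- but the hom--bijection route is a bit lighter, since naturality and bijectivity come for free from \autoref{lem:adj_mon_rmon}, while your route has the compensating simplification that $U\Delta = \Id$ makes the unit a strict identity of representations, so only the counit and the triangles need genuine inspection.
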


\begin{lemma}[Initiality for 1--Signatures]\label{lem:init_no_eqs_untyped}
The category of 
  representations of a signature $\Sigma$ in relative monads as defined in \autoref{def:cat_of_reps_relmons}
  has an initial object. Its underlying monad associates, to any set of variables, the set of terms of $\Sigma$,
  equipped with the equality preorder.
\end{lemma}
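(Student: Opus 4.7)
The plan is to derive initiality in $\Rep^{\Delta}(\Sigma)$ by transporting the already-established initiality in $\Rep(\Sigma)$ along the adjunction
\[ \Delta_* \dashv U_* \colon \Rep(\Sigma) \rightleftarrows \Rep^{\Delta}(\Sigma) \]
of \autoref{lem:adjunction_reps}. The key observation is that $\Delta_*$ is a \emph{left} adjoint and therefore preserves colimits (\autoref{lem:left_adj_cocont}); in particular, it preserves the initial object.

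First I would invoke the untyped instance of Hirschowitz and Maggesi's initiality theorem --- or equivalently, Zsid\'o's \autoref{nice_thm} applied to the singleton set of sorts --- to obtain an initial representation $\hat{\Sigma} \in \Rep(\Sigma)$, whose underlying monad $\hat{\Sigma} : \Set \to \Set$ assigns to a set of variables $V$ the set of terms freely generated by $\Sigma$ over $V$, with the monad structure given by variables-as-terms and capture-avoiding simultaneous substitution, and whose representation structure is given by the constructors associated to the arities of $\Sigma$. Then I would apply the functor $\Delta_*$ of \autoref{lem:adjunction_reps}, which postcomposes the underlying monad with $\Delta$ and postcomposes each representation module morphism with $\Delta$ as well (cf.\ the construction of \autoref{lem:rep_endo_rep_rel}). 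This yields a representation $\Delta_* \hat{\Sigma} \in \Rep^{\Delta}(\Sigma)$ whose carrier relative monad sends $V \mapsto \bigl(\hat{\Sigma}(V), \delta_{\hat{\Sigma}(V)}\bigr)$, i.e., the terms of $\Sigma$ over $V$ equipped with the equality preorder, as required by the statement.

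Next I would argue that $\Delta_* \hat{\Sigma}$ is initial in $\Rep^{\Delta}(\Sigma)$: given any representation $R \in \Rep^{\Delta}(\Sigma)$, the adjunction gives a natural bijection
\[ \Rep^{\Delta}(\Sigma)(\Delta_* \hat{\Sigma}, R) \;\cong\; \Rep(\Sigma)(\hat{\Sigma}, U_* R), \]
and the right-hand side is a singleton by initiality of $\hat{\Sigma}$. Hence the left-hand side is a singleton, which is precisely the universal property of initiality. Alternatively one may simply cite \autoref{lem:left_adj_cocont} to conclude that $\Delta_*$ preserves the initial object.

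The only step requiring a little care --- and which I expect to be the main technical point --- is verifying that $\Delta_* \hat{\Sigma}$ really is the object described in the statement, namely that the preorder on each $\Delta_* \hat{\Sigma}(V)$ is the equality preorder rather than something larger. This is immediate from the definition of $\Delta$ in \autoref{def:delta}, but it is worth emphasizing because it explains why the initial object for the 1-signature (without inequations) has trivial reduction structure: any non-trivial preorder would be generated only once inequations are added, as will be done in the next section. Aside from this observation, no calculation is needed: the result is a formal consequence of initiality downstairs together with the adjunction.
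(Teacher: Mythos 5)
Your proof is correct and follows essentially the same route as the paper: the paper's proof is exactly the application of \autoref{lem:left_adj_cocont} (left adjoints preserve colimits, hence initial objects) to the adjunction $\Delta_* \dashv U_*$ of \autoref{lem:adjunction_reps}. You merely spell out details the paper leaves implicit, namely the initial object of $\Rep(\Sigma)$ and the identification of the carrier of $\Delta_*\hat{\Sigma}$ as terms with the equality preorder.
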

\begin{proof}
 This is a direct consequence of \autoref{lem:left_adj_cocont} which says that left adjoints preserve colimits --- thus, in particular, initial objects ---,
 applied to the adjunction of \autoref{lem:adjunction_reps}.
\end{proof}

\section{Inequations}

Consider the beta rule of lambda calculus,
\[ \lambda M(N) \leadsto M[*:=N] \enspace . \]
In our formalism, abstraction and application are considered as morphisms of modules 
(cf.\ \autoref{ex:ulcb_constructor_mod_mor}), 
and so is substitution (cf.\ \autoref{def:hat_P_subst}). 
This suggests to define (in)equations over a 1--signature $\Sigma$ 
as \emph{parallel pairs} of module morphisms, indexed by representations of $\Sigma$.
Put differently, an (in)equation associates a parallel pair of module morphisms to any representation of $\Sigma$.
Hirschowitz and Maggesi \cite{journals/corr/abs-0704-2900} specify equations through such pairs of (indexed) module morphisms over (plain) monads.
We adapt their definition to our use of \emph{relative} monads and modules over such monads.
Afterwards we simply interpret a pair of half--equations as \emph{inequation} rather than equation.

\begin{definition}[Category of Half--Equations, \cite{journals/corr/abs-0704-2900}] \label{def:half_eq_untyped}
Let $\Sigma$ be a signature. A \emph{$\Sigma$--module} $U$
is a functor from the category of representations of $\Sigma$ to the category $\LRMod{}{\Delta}{\PS}$
commuting with the forgetful functors 
to the category of relative monads over $\Delta$:
\[
 \begin{xy}
  \xymatrix{
    **[l]  \Rep^{\Delta}(\Sigma) \ar[rd]
                    \ar[rr]^{U} 
                                       &               & **[r]\LRMod{}{\Delta}{\PS} \ar[ld]
     \\
                              & \RMon{\Delta} . &
}
 \end{xy}
\]

\noindent
Such a $\Sigma$--module $U$ associates, to any representation of $\Sigma$ with underlying monad $P$, a module over $P$.

We define a morphism of $\Sigma$--modules to be a natural transformation which
becomes the identity when composed with the forgetful functor. 
We call these morphisms \emph{half--equations}.
These definitions yield a category
which we call the \emph{category of $\Sigma$--modules
(or the category of half--equations)}.
We sometimes write \[U^R_X := U(R)(X)\] for the value of a $\Sigma$--module at the representation $R$ and the set $X$.
Similarly, for a half--equation $\alpha : U \to V$ we write \[\alpha^R_X := \alpha(R)(X) : U^R_X \to V^R_X \enspace.\]
\end{definition}

\begin{remark}\label{rem:wOrd_instead_Ord}
 We define $\Sigma$--modules over the signature $\Sigma$ as functors into the category $\LRMod{}{\Delta}{\PS}$, 
 whose objects are modules \emph{with codomain category $\PS$ instead of $\PO$} to accommodate an important example:
  recall that substitution of \emph{one} variable (cf.\ \autoref{def:hat_P_subst}) 
  is not necessarily monotone in the second argument.
  Thus, in order to build a half--equation from this substitution (cf.\ \autoref{def:subst_half_eq}),
   we need to use the category $\PS$ as codomain category.
\end{remark}

\begin{remark}\label{rem:half_equation_comm}
 A half--equation $\alpha$ from $\Sigma$--module $U$ to $V$ associates, to 
 any representation $R$, a morphism of $R$--modules $\alpha^R : U(R) \to V(R)$ in $\RMod{R}{\PS}$ such that
for any morphism $f : P \to R$ of representations of $\Sigma$ the following diagram commutes:
 \begin{equation*}
\begin{xy}
 \xymatrix @=4pc  {
   **[l](P,U(P)) \ar[r]^{\alpha^P} \ar[d]_{(f,U(f) )}& **[r](P,V(P)) \ar[d]^{(f, V(f))} \\
   **[l](R, f^*(U(R))) \ar[r]_{\alpha^R} & **[r](R,f^* (V(R))) \enspace .
}
\end{xy}
\end{equation*}
\end{remark}

\begin{remark}\label{rem:half--equation_variant}
 Pierre--Louis Curien suggested the following alternative definition of a half--equation,
  where its domain and codomain only depend on \emph{the monad underlying each representation}: 
  domain and codomain are specified by functors $U$ and $V$ on the category $\RMon{\Delta}$, and 
  a half--equation $\alpha$ from  $U$ to $V$ is given by a natural transformation 
  \[ \alpha : \comp{\pi_1}{U} \to \comp{\pi_1}{V} \enspace , \]
 where $\pi_1 : \Rep^{\Delta}(\Sigma) \to \RMon{\Delta}$ is the forgetful functor.
 Indeed, in all the examples of half--equations we consider, the domain and codomain $\Sigma$--modules
 only depend on the monads underlying a representation, not the representation structure itself.
 Both variants, the one presented here in detail as well as the one suggested by Curien,
 are implemented in our \textsf{Coq} library.
\end{remark}

Given a 1--signature $\Sigma$, we restrict ourselves to \emph{classic} inequations:
these are inequations whose codomain $\Sigma$--module
is of a specific form. The restriction to these inequations allows us to ensure a technical condition
which we prove, for classic inequations, in \autoref{lem:useful_lemma}. 
Analogously to the preceding chapters, we only write the second component of objects in the 
large category $\LRMod{}{\Delta}{\PS}$ of modules.

\begin{definition}[Classic $\Sigma$--Module]\label{def:alg_s_mod}
 We call \emph{classic} any $\Sigma$--module satisfying the following inductive predicate. 

 \begin{packitem}
  \item The map $\hat{\Theta} : R \mapsto \widehat{\pi_1R}$ (cf.\ \autoref{def:forget_hat_module} and \autoref{rem:half--equation_variant}) 
      is a classic $\Sigma$--module.
  \item If the $\Sigma$--module $M : R\mapsto M(R)$ is classic, so is 
              \[M' : R\mapsto M(R)' \enspace .\]
  \item If $M$ and $N$ are classic, so is \[M\times N : R\mapsto M(R)\times N(R) \enspace . \]
  \item The terminal module $* : R \mapsto 1$ is classic.
 \end{packitem}
Using the same notation as in \autoref{rem:syn_sem_untyped_two_half_arities}, any list of natural numbers specifies uniquely a 
 classic $\Sigma$--module.
\end{definition}

We now present some particular classic half--equations:

\begin{definition}\label{def:subst_half_eq}
 The substitution operation of \autoref{def:hat_P_subst},
 \[\subst  : R\mapsto \subst^R : \hat{R}' \times \hat{R}\to \hat{R} \]
 is a half--equation over any 1--signature $\Sigma$.
Its domain and codomain are classic.

\end{definition} 
\begin{example}[\autoref{ex:sig_ulc} continued]\label{ex:app_circ_half}
  The map
 \[ \comp{(\abs \times\id)}{\app}:R \mapsto \comp{(\abs^R \times\id^R)}{\app^R}  : \hat{R}' \times \hat{R} \to \hat{R}\]
 is a half--equation over the signature $\Sigma_{\ULC}$.
\end{example}

\begin{definition}
 \label{def:arity_classic_module_untyped}
 Any arity $s = [n_1, \ldots, n_m] \in \Sigma$ defines a classic $\Sigma$--module 
    \[\dom(s) : R\mapsto R^{n_1} \times \ldots \times R^{n_m} \enspace . \] 
\end{definition}

\noindent
An \emph{inequation} is given by a pair of parallel half--equations:

\begin{definition}[Inequations, 2--Signature] \label{def:ineq_untyped}
 Given a 1--signature $\Sigma$, a \emph{$\Sigma$--inequation} is a pair of parallel half--equations between $\Sigma$--modules.
We write 
     \[\alpha \leq \gamma : U\to V\] 
for the inequation $(\alpha, \gamma)$ with domain $U$ and codomain $V$.
A \emph{2--signature} is a pair $(\Sigma,A)$ of a 1--signature $\Sigma$ and a set $A$ of 
 $\Sigma$--inequations.
\end{definition}

Given a 2--signature $(\Sigma,A)$, we can test whether a given representation $R$ of $\Sigma$ 
satisfies the inequations of $A$. Those representations satisfying any inequation of $A$
form the category of representations of $(\Sigma,A)$:

\begin{definition}[Representation of Inequations]
 \label{def:rep_ineq_untyped}
 A \emph{representation of a $\Sigma$--inequation $\alpha\leq \gamma : U\to V$} is any representation $R$ of $\Sigma$ such that 
  $\alpha^R \leq \gamma^R$ pointwise, i.e.\ such that for any set $X$ and any $y\in U(R)(X)$, 
    \[\alpha^R_X(y) \leq \gamma^R_X(y) \enspace .\]
We say that such a representation $R$ \emph{satisfies} the inequation $\alpha\leq \gamma$.

For a set $A$ of $\Sigma$--inequations, we call \emph{representation of $(\Sigma,A)$} any representation of $\Sigma$ that
satisfies each inequation of $A$.
We define the
category of representations of the 2--signature $(\Sigma, A)$ to be the full subcategory $\Rep^{\Delta}(\Sigma,A)$ of the category of
representations of $\Sigma$ whose objects are representations of $(\Sigma, A)$.

\end{definition}

\begin{example}[\autoref{ex:app_circ_half} continued] \label{ex:sig_ulc_prop}
We denote by $\beta$ the $\Sigma_{\ULC}$--inequation 
\[  \quad \app \circ (\abs \times \id) \leq \subst \enspace .  \tag{$(\beta)$} \label{eq:beta_ineq}\]
We write $(\Sigma_{\ULC},\beta) := (\Sigma_{\ULC},\{\beta\})$.
 A representation $P$ of $(\Sigma_{\ULC},\beta)$ is given by
\begin{packitem}
 \item a monad $P : \SET\stackrel{\Delta}{\to}\PO$ and
 \item two morphisms of $P$--modules 
       \[ \app : P \times P \to P \quad\text{and}\quad \abs :  P' \to P\] 
\end{packitem}
 such that for any set $X$ and any $y\in P(X')$ and $z\in P X$
\begin{equation*} \label{eq:prop_arity}
\app_X (\abs_X (y), z) \enspace \leq \enspace y [*:=z] \enspace .  
\end{equation*} 
\end{example}

\section{Initiality for 2--Signatures}

Given a 2--signature $(\Sigma,A)$, we would like to exhibit an initial object
in its associated category of representations of $(\Sigma,A)$. 
However, we have to rule out inequations which are never satisfied, 
since an empty category obviously does not not have an initial object.
We restrict ourselves to inequations with a classic codomain:
\begin{definition}[Classic Inequation]
 \label{def:classic_ineq_untyped}
 A $\Sigma$--inequation is \emph{classic} if its codomain is classic.
\end{definition}

\begin{theorem}\label{thm:init_w_ineq_untyped}
 For any set of classic $\Sigma$--inequations $A$, the category of representations of $(\Sigma,A)$ has an initial object.
\end{theorem}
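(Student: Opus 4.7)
The plan is to construct the initial object by equipping the initial representation of the underlying 1--signature $\Sigma$ with the smallest preorder that satisfies all the inequations in $A$ and is compatible with the monadic and representation structure. By \autoref{lem:init_no_eqs_untyped}, there is an initial representation $\hat{\Sigma}\in\Rep^{\Delta}(\Sigma)$, which at a set $X$ yields the set of terms generated by $\Sigma$ equipped with the diagonal preorder. My candidate initial object $\bar{\Sigma}$ shares the same underlying functor $X\mapsto \hat{\Sigma}(X)$, the same unit and substitution, and the same arity representations, but its preorder is enlarged.

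More precisely, I would define, for each set $X$, a binary relation $\leq_X$ on $\hat{\Sigma}(X)$ as the smallest relation satisfying: (i) reflexivity and transitivity; (ii) \emph{generation}: for every inequation $(\alpha\leq\gamma):U\to V$ in $A$, every set $X$, and every $y\in U(\hat{\Sigma})(X)$, we have $\alpha^{\hat{\Sigma}}_X(y)\leq_X \gamma^{\hat{\Sigma}}_X(y)$, where the order on $V(\hat{\Sigma})(X)$ is the one canonically induced on a classic $\Sigma$--module built from $\hat{\Theta}$, products, derivations and $*$; (iii) \emph{compatibility with constructors}: for each arity $s\in\Sigma$, the map $s^{\hat{\Sigma}}$ is monotone with respect to the obvious product/derivation preorder on its domain; (iv) \emph{compatibility with substitution}: for every $f:X\to \hat{\Sigma}(Y)$, the map $\sigma(f):\hat{\Sigma}(X)\to\hat{\Sigma}(Y)$ is monotone, and moreover $f\leq g$ pointwise implies $\sigma(f)(t)\leq \sigma(g)(t)$ only in the first argument (cf.\ \autoref{rem:about_substitution}; monotonicity in the second argument of substitution is \emph{not} imposed, which is exactly why the codomain category for $\Sigma$--modules is $\PS$, not $\PO$). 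One natural way to make this concrete is to define $\leq_X$ inductively as the reflexive--transitive closure of a one--step relation $\rightsquigarrow_X$ generated by clause (ii), closed under propagation into every argument of every $\Sigma$--constructor and under first--argument substitution; this handles (iii) and (iv) simultaneously.

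Granted this construction, I would then verify in turn that: (a) $\bar{\Sigma}:\Set\stackrel{\Delta}{\to}\PO$ is a relative monad --- the unit is trivially monotone (diagonal preorder) and Kleisli extension is monotone by clause (iv); (b) each arity representation $s^{\bar{\Sigma}}:=s^{\hat{\Sigma}}$ is a morphism of $\bar{\Sigma}$--modules in $\RMod{\bar{\Sigma}}{\PO}$, which amounts to monotonicity (clause (iii)) and the already known module--morphism equations in $\Rep^{\Delta}(\Sigma)$; (c) $\bar{\Sigma}$ satisfies every inequation in $A$, which holds by clause (ii). For initiality, let $R$ be any representation of $(\Sigma,A)$. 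The forgetful functor $\Rep^{\Delta}(\Sigma,A)\hookrightarrow \Rep^{\Delta}(\Sigma)$ is fully faithful, and by \autoref{lem:init_no_eqs_untyped} there is a unique morphism $i_R:\hat{\Sigma}\to R$ of $\Sigma$--representations whose underlying family of set maps is fixed. It remains to show that $i_R$ is monotone with respect to $\leq_A$, i.e.\ that it is in fact a morphism $\bar{\Sigma}\to R$ in $\Rep^{\Delta}(\Sigma,A)$. This is done by induction on the derivation of $s\leq_X t$: generators are mapped to comparable elements in $R$ because $R$ satisfies the inequation (using naturality of half--equations, \autoref{rem:half_equation_comm}, to rewrite $\alpha^{\hat{\Sigma}}$ through $i_R$); the closure clauses are preserved because in $R$ both substitution and the arity morphisms are monotone. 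Uniqueness of $\bar{\Sigma}\to R$ in $\Rep^{\Delta}(\Sigma,A)$ follows from uniqueness of $i_R$ in $\Rep^{\Delta}(\Sigma)$, as the subcategory is full.

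The main obstacle is verifying clause (b), namely that the generated preorder behaves correctly with respect to \emph{classic codomains}. Classicity of $V$ guarantees that $V(\hat{\Sigma})(X)$ is built only from $\hat{\Sigma}$--terms via products and derivations, so the pointwise interpretation of $\alpha^{\hat{\Sigma}}_X(y)\leq \gamma^{\hat{\Sigma}}_X(y)$ reduces to finitely many comparisons in $\hat{\Sigma}$ at possibly extended contexts, and the inductive propagation clauses of $\leq_A$ suffice to make $\alpha^{\bar{\Sigma}}$ and $\gamma^{\bar{\Sigma}}$ monotone (into $\PS$ on the domain side, and genuinely into $\PO$ on the codomain side); without classicity the codomain could involve, e.g., substitution in the second argument and monotonicity would fail as in \autoref{def:hat_P_subst}. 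This is precisely the technical content anticipated in \autoref{lem:useful_lemma}, and it is the place where the hypothesis ``classic inequation'' is essential.
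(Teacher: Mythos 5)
Your construction goes the opposite way from the paper's: you take the \emph{least} preorder on $\init{\Sigma}$ generated by the instances of the inequations and closed under reflexivity, transitivity, congruence and (first--argument) substitution, and then prove by induction on derivations that every $i_R$ is monotone; the paper instead \emph{defines} $x\leq_A y$ semantically, as ``$i_R(x)\leq_R i_R(y)$ for all representations $R$ of $(\Sigma,A)$'', so that monotonicity of $i_R$ holds by construction, and the work shifts to showing that substitution, the arity morphisms and the inequations are compatible with this preorder, via \autoref{lem:useful_lemma} (which needs classicity of the codomain, exactly where you also need it in your step (d)). Both routes are natural, and for all the examples in the paper they produce the same object; your soundness induction on derivations plays the r\^ole that the definition of $\leq_A$ plays in the paper.

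However, as written your step (c) has a gap. You seed the preorder with the instances $\alpha^{\hat{\Sigma}}_X(y)\leq\gamma^{\hat{\Sigma}}_X(y)$ for $y\in U(\hat{\Sigma})(X)$, i.e.\ with the half--equations evaluated at the \emph{discretely ordered} representation, and then claim that $\bar{\Sigma}$ satisfies $A$ ``by clause (ii)''. But satisfaction at $\bar{\Sigma}$ (cf.\ \autoref{def:rep_ineq_untyped}) quantifies over $y\in U(\bar{\Sigma})(X)$ and compares $\alpha^{\bar{\Sigma}}_X(y)$ with $\gamma^{\bar{\Sigma}}_X(y)$. The theorem only requires the \emph{codomain} $V$ to be classic; the domain $U$ is an arbitrary $\Sigma$--module (\autoref{def:half_eq_untyped}) and may genuinely depend on the preorder of its argument --- for instance $U(R):=\{(a,b)\in R\times R \mid a\leq_R b\}$ with componentwise substitution is a perfectly good $\Sigma$--module. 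For such a $U$, the set $U(\bar{\Sigma})(X)$ contains elements that are not in the image of $U$ applied to the identity--carrier morphism $\hat{\Sigma}\to\bar{\Sigma}$, so your generating clause says nothing about them, and one can cook up classic--codomain inequations of this shape that the generated preorder fails to satisfy. Your identification of $\alpha^{\bar{\Sigma}}$ with $\alpha^{\hat{\Sigma}}$ is only justified (by naturality, \autoref{rem:half_equation_comm}, along the identity--carrier morphism) when $U$ itself is classic, or at least insensitive to the preorder. So either restrict the statement to inequations whose domain is also classic --- which covers the beta example and everything used later --- or repair (c), e.g.\ by proving that your generated preorder coincides with the semantically defined one of the paper; the paper's formulation avoids the problem precisely because satisfaction at $\init{\Sigma}_A$ is checked after transporting an arbitrary $y\in U(\init{\Sigma}_A)(X)$ into an arbitrary model $R$ along $i_R$, which requires no comparison between $U(\init{\Sigma}_A)$ and $U(\Delta\init{\Sigma})$. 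A second, minor point: your closing paragraph locates the use of classicity in the ``monotonicity'' of $\alpha^{\bar{\Sigma}}$ and $\gamma^{\bar{\Sigma}}$, but half--equations land in modules with codomain $\PS$ and need not be monotone; what classicity of $V$ actually buys (both for you and for \autoref{lem:useful_lemma}) is that inequalities in $V(\init{\Sigma})(X)$ decompose into componentwise inequalities between terms in extended contexts, so that your propagation clauses, respectively the pointwise definition of $\leq_A$, suffice.
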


\begin{proof}

The basic ingredients for building the initial representation 
  are given by the initial representation $\Delta\init{\Sigma}$ in the category $\Rep^{\Delta}(\Sigma)$ (cf.\ \autoref{lem:init_no_eqs_untyped}) 
  or, equivalently, by the initial representation $\init{\Sigma}$ in $\Rep(\Sigma)$.
  We call $\init{\Sigma}$ the monad underlying the representation $\init{\Sigma}$.
  
  The proof consists of three steps: at first, we define a preorder $\leq_A$ on the terms of $\init{\Sigma}$, induced by the set $A$ of inequations.
   Afterwards we show that the data of the representation $\init{\Sigma}$ --- substitution, representation morphisms etc. --- 
  is compatible with the preorder $\leq_A$ in a suitable sense. This will yield a representation $\init{\Sigma}_A$ of $(\Sigma,A)$.
  Finally we show that $\init{\Sigma}_A$ is the initial such representation.

\noindent
\emph{--- The monad underlying the initial representation:}

\noindent
    For any set $X$, we equip $\init{\Sigma} X$ with a preorder $A$ by setting, for $x,y\in \init{\Sigma} X$,

    \begin{equation} x \leq_A y \quad :\Leftrightarrow \quad \forall R : \Rep^{\Delta}(\Sigma,A), \quad
                      i_R (x) \leq_R i_R (y) \enspace ,
      \label{eq:order_untyped}
    \end{equation}
where $i_R : \Delta\init{\Sigma} \to R$ is the initial morphism of representations of $\Sigma$, cf.\ \autoref{lem:init_no_eqs_untyped}. 
We have to show that the map \[ X\mapsto \init{\Sigma}_A X := (\init{\Sigma} X, \leq_A) \] yields a relative monad on $\Delta$. 
The missing fact to prove is that the substitution with a morphism 
\[ f\in\PO(\Delta X, \init{\Sigma}_A Y) \cong \SET(X,\init{\Sigma} Y)\] 
is compatible with the order $\leq_A$:
given any $f \in \PO(\Delta X, \init{\Sigma}_A Y)$ we show that $\sigma^{\init{\Sigma}}(f): \Set(\init{\Sigma} X,\init{\Sigma} Y)$ is monotone with respect
to $\leq_A$ and hence (the carrier of) a morphism $\sigma(f) : \PO(\init{\Sigma}_A X, \init{\Sigma}_A Y)$.
We overload the infix symbol $\bind{}{}$ to denote monadic substitution.
Suppose $x\leq_A y$, we show 
        \[\bind{x}{f} \enspace \leq_A \enspace \bind{y}{f} \enspace .\] 
Using the definition of $\leq_A$, 
we must show, for any representation $R$ of $(\Sigma,A)$, 
\[  i_R(\bind{x}{f}) \enspace \leq_R \enspace i_R(\bind{y}{f}) \enspace .\]
Since $i_R$ is a morphism of representations, it is compatible with the substitution of $\init{\Sigma}$ and $U_*R$; we have
 \[ i_R(\bind{x}{f}) \enspace = \enspace \bind{i_R(x)} {\comp{f}{i_R}} \enspace . \]
Rewriting this equality and its equivalent for $y$ in the current goal yields the goal 
\[ \bind{i_R(x)} {\comp{f}{i_R}} \quad \leq_A \quad \bind{i_R(y)} {\comp{f}{i_R}} \enspace ,\]
which is true since the substitution of $R$ (whose underlying map is that of $U_*R$) is monotone 
in the first argument (cf.\ \autoref{rem:about_substitution})
 and $i_R (x) \leq_R i_R(y)$ by assumption.
We hence have defined a monad $\init{\Sigma}_A$ over $\Delta$.
We interrupt the proof for an important lemma:
\begin{lemma}\label{lem:useful_lemma}
  Given a classic $\Sigma$--module $V : \Rep^{\Delta}(\Sigma) \to \LMod{}{\Delta}{\PS}$ from the category of representations of $\Sigma$ in monads on $\Delta$
  to the large category of modules over such monads, we have
   \[ x \leq_A y \in V(\init{\Sigma})(X) \quad \Leftrightarrow \quad \forall R : \Rep^{\Delta}(\Sigma,A), \quad V(i_R)(x)\leq_{V^R_X} V(i_R)(y) \enspace ,\]
 where now and later we omit the argument $X$, e.g., in $V(i_R)(X)(x)$.
\end{lemma}
\begin{proof}[Proof of \autoref{lem:useful_lemma}.]
   The proof is done by induction on the derivation of ``$V$ classic''. The only interesting case is where $V = M\times N$ is a product:
    \begin{align*} 
	  (x_1, y_1) \leq (x_2,y_2) &\Leftrightarrow x_1 \leq x_2 \wedge y_1 \leq y_2 \\
                         {}         &\Leftrightarrow  \forall R, M(i_R) (x_1) \leq M(i_R) (x_2) \wedge \forall R, N (i_R) (y_1) \leq N(i_R) (y_2) \\
                         {}         &\Leftrightarrow  \forall R, M(i_R) (x_1) \leq M(i_R) (x_2) \wedge  N (i_R) (y_1) \leq N(i_R) (y_2) \\
                         {}         &\Leftrightarrow  \forall R, V(i_R) (x_1,y_1) \leq V(i_R) (x_2,y_2) \enspace .
    \end{align*}

\end{proof}

\noindent
\emph{--- Representing $\Sigma$ in $\init{\Sigma}_A$:}

\noindent
Any arity $s \in \Sigma$ should be represented by the module morphism $s^{\init{\Sigma}}$, i.e.\ by the representation of $s$ in $\init{\Sigma}$. 
We have to show that those representations are compatible with the preorder $\leq_A$.
Given $x\leq_A y$ in $\dom(s,\init{\Sigma})(X)$, we show (omitting the argument $X$ in $s^{\init{\Sigma}}(X)(x)$)
 \[ s^{\init{\Sigma}} (x) \quad \leq_A \quad s^{\init{\Sigma}}(y) \enspace. \]
By definition, we have to show that, for any representation $R$ as before,
\[ i_R (s^{\init{\Sigma}} (x)) \quad \leq_R \quad i_R (s^{\init{\Sigma}}(y)) \enspace. \]
Since $i_R$ is a morphism of representations, it commutes with the representational module morphisms ---
the corresponding diagram is similar to the diagram of \autoref{def:prop_mor_of_reps}.
By rewriting with this equality we obtain the goal
\[ s^R \Bigl(\bigl(\dom(s)(i_R)\bigr)(x)\Bigr) \quad \leq_R \quad s^R\Bigl(\bigl(\dom(s) (i_R)\bigr)(y)\Bigr) \enspace. \]
This goal is proved by instantiating \autoref{lem:useful_lemma} with the classic $\Sigma$--module $\dom(s)$ 
(cf.\ \autoref{def:arity_classic_module_untyped}) and the fact that $s^R$ is monotone.
We hence have established a representation --- which we call $\init{\Sigma}_A$ --- of $\Sigma$ in the monad $\init{\Sigma}_A$.

\noindent
\emph{--- $\init{\Sigma}_A$ satisfies $A$:}

\noindent
The next step is to show that the representation $\init{\Sigma}_A$ satisfies $A$. 
Given an inequation 
     \[\alpha \leq \gamma : U \to V\] 
of $A$ with a classic $\Sigma$--module $V$, 
   we must show that for any set $X$ and any $x\in U(\init{\Sigma}_A)(X)$ in the domain of $\alpha$ we have 
       \begin{equation} \alpha^{\init{\Sigma}_A}_X(x) \quad \leq_A \quad \gamma^{\init{\Sigma}_A}_X(x) \enspace . \label{eq:sigma_a} \end{equation}
In the following we omit the subscript $X$. By \autoref{lem:useful_lemma} the goal is equivalent to

\begin{equation}
 \forall R : \Rep^{\Delta}(\Sigma,A), \quad V(i_R) (\alpha^{\init{\Sigma}_A}(x)) \quad \leq_{V^R_X} \quad V(i_R) (\gamma^{\init{\Sigma}_A} (x)) \enspace . \label{eq:sigma_a_alt}
\end{equation}
Let $R$ be a representation of $(\Sigma,A)$. We continue by proving \autoref{eq:sigma_a_alt} for $R$.
   By \autoref{rem:half_equation_comm} and the fact that $i_R$ is also the carrier of a 
   morphism of representations of $\Sigma$ from $\Delta\init{\Sigma}$ to $R$ (cf.\ \autoref{lem:adjunction_reps}) we can rewrite the goal as
       \[ \alpha^R\bigl(U(i_R)(x)\bigr) \quad \leq_{V^R_X} \quad \gamma^R \bigr(U(i_R)(x)\bigr) \enspace , \]
  which is true since $R$ satisfies $A$.

\noindent
\emph{--- Initiality of $\init{\Sigma}_A$:}

\noindent
Given any representation $R$ of $(\Sigma,A)$, the morphism $i_R$ is monotone with respect to the preorders on $\init{\Sigma}_A$ and $R$ by construction of $\leq_A$.
It is hence a morphism of representations from $\init{\Sigma}_A$ to $R$.
Uniqueness of the morphisms $i_R$ follows from its uniqueness in the category of representations of $\Sigma$, i.e.\ without inequations.
Hence $\init{\Sigma}_A$ is the initial object in the category of representations of $(\Sigma,A)$.
\end{proof}

\begin{remark}
 Note that the proof of the main theorem uses the equivalence proved in \autoref{lem:useful_lemma} 
 in \emph{both} directions.
 The implication from left to right 
  would be ensured automatically if we had
 defined $\Sigma$--modules to be functors into the category $\LRMod{}{\Delta}{\PO}$ instead of $\LRMod{}{\Delta}{\PS}$.
 See \autoref{rem:wOrd_instead_Ord} for an explanation why we still choose the latter category as codomain category.
\end{remark}

\begin{remark}
 Note that for a classic $\Sigma$--module $V$ we can actually prove 
  the implication from left to right of \autoref{lem:useful_lemma}  
  more generally: for \emph{any} morphism of representations $f : P \to R$ (not just an initial one as in \autoref{lem:useful_lemma})
 the module morphism $V(f) : V(P) \to V(R)$ is monotone. Again the only interesting case is where $V = V_1 \times V_2$ is a product.
  Let $X$ be a set and $x = (x_1, x_2)$ and $y = (y_1, y_2)$ in $V(P)(X)$:
 \begin{align*}
      (x_1, x_2) \leq_{V^P} (y_1,y_2)     &\Leftrightarrow x_1 \leq_{V_1^P} y_1 \wedge x_2 \leq_{V_2^P} y_2 \\
                         {}         &\Rightarrow  V_1(f)(x_1) \leq_{V^R_1} V_1(f)(y_1) \wedge V_2(f)(x_2) \leq_{V^R_2} V_2(f)(y_2) \\
                         {}         &\Leftrightarrow  \left(V_1(f)(x_1),V_2(f)(x_2)\right) \leq_{V^R} \left( V_1(f)(y_1), V_2(f)(y_2)\right) \\
		    {}         &\Leftrightarrow V(f)(x_1, x_2) \leq_{V^R} V(f)(y_1,y_2) \enspace .
 \end{align*}

\end{remark}

\begin{example}[\autoref{ex:sig_ulc_prop} continued]
  \label{ex:2--sig_ulcbeta}
 The only inequation \autoref{eq:beta_ineq} of the signature $(\Sigma_{\ULC},\beta)$ is classic. The initial representation of $(\Sigma_{\ULC},\beta)$
 is given by the monad $\ULCB$ together with the $\ULCB$--module morphisms $\Abs$ and $\App$ (cf.\ \autoref{ex:ulcb_constructor_mod_mor})
  as representation structure. 
\end{example}

We conclude this section with some remarks about ``generating inequalities'', (regular) monads and \emph{fully} faithful morphisms:

\begin{remark}[about ``Generating'' Inequations]\label{rem:gen_ineqs}
 Given a 2--signature $(\Sigma,A)$ and a representation $R$ of $\Sigma$,
  the representation morphism of modules $s^R$ of any $s \in \Sigma$ of $R$
     is monotone. 
For the initial representation of $(\Sigma,A)$ this means that any relation between terms of $\Sigma$ which comes from $A$
 is automatically propagated into subterms.
Similarly, the relation on those terms is by construction reflexive and transitive, since we consider representations 
in monads with codomain $\PO$.

For the example of $\ULCB$ this means that in order to obtain a complete reduction relation, 
it is sufficient to enforce only one rule by an inequation, which is
   \[ (\lambda M) N\leq M[*:=N] \enspace . \]
\end{remark}

\begin{remark}[about Finite Contexts]\label{rem:finite_contexts}
Altenkirch et al.\ \cite{DBLP:conf/fossacs/AltenkirchCU10} characterize the untyped lambda calculus as
a relative monad on the inclusion functor $i:\Fin\to\Set$ from finite sets to sets.
 An anonymous referee suggested combining our viewpoint --- syntax as monad over $\Delta:\Set\to\PO$ ---  
with Altenkirch et al.'s 
 one might consider the 
 lambda calculus as a relative monad on the composition $\comp{i}{\Delta}:\Fin\to\PO$, and, more generally,
 one might consider representations of a signature $(\Sigma,A)$ over monads on $\comp{i}{\Delta}:\Fin\to\PO$.
 The above theorem remains true when replacing monads on $\Delta$ by monads on $\comp{i}{\Delta}$
 everywhere.
 An equivalence between the theorem thus obtained and our \autoref{thm:init_w_ineq_untyped} might be established in a
  way similar to what Zsid\'o \cite{ju_phd} does in her PhD thesis:
 she shows, by means of adjunctions between the respective categories of models, the equivalence between the approach
 of Fiore et al.\ \cite{fpt} --- based on monoids over finite contexts --- and the approach of Hirschowitz and Maggesi 
 \cite{DBLP:conf/wollic/HirschowitzM07},
  where models are built from monads on the category $\Set$, i.e.\ over arbitrary contexts.
 
\end{remark}

\begin{remark}[about Monads on $\PO$]\label{rem:endo_ord}
 As mentioned in \autoref{sec:rel_work}, Ghani and L\"uth \cite{DBLP:journals/njc/GhaniL03} and 
 Hirschowitz and Maggesi \cite{DBLP:journals/iandc/HirschowitzM10}
 suggest the use of monads over the category $\PO$ of preordered sets for modelling syntax with a rewriting relation.
 Indeed, representations of a signature $(\Sigma,A)$ could be analogously defined for such monads.
The above construction of the initial representation of $(\Sigma,A)$ 
  carries over to representations in such monads,
 thus yielding an initiality result in which syntax is modelled as monad on $\PO$.
 It might be interesting to establish a precise connection --- e.g., in form of adjunctions ---
  between the resulting categories of representations in monads on $\PO$ and representations
  in relative monads on $\Delta$.
\end{remark}

\begin{remark}[about \emph{Fully} Faithful Translations]
  By construction any morphism $f : P \to Q$ of representations of a 2--signature $(\Sigma,A)$ is \emph{faithful},
  i.e.\ it sends related terms $x \leadsto y$ in $P(X)$ to related terms $f_X(x) \leadsto f_X(y)$ in $Q(X)$.
  It is natural to ask whether $f$ is also \emph{full}, that is, whether each $f_X : P(X) \to Q(X)$ is
  a full functor between the preorders $P(X)$ and $Q(X)$, considered as functors.
  Explicitly, this means to ask whether for any $x, y\in P(X)$ such that $f_X(x) \leadsto f_X(y)$ in $Q(X)$
  we have $x \leadsto y$.
\end{remark}

\chapter{Simple Type Systems with Reductions}\label{chap:comp_types_sem}

This chapter aims to combine the contents of \autorefs{sec:compilation} and \ref{sec:prop_arities} in order
to obtain an initiality result for simple type systems with reductions on the term level.
This result thus accounts for our example from \autoref{sec:trans_pcf_ulc}: the translation 
from $\PCF$ with its usual reduction relation to the untyped lambda calculus with beta reduction.
The goal thus is to define a notion of 
\emph{signature} and suitable
 \emph{representations} for such signatures, such that the types and terms generated by the signature, 
              equipped with reductions according to the inequations specified by the signature, 
              form the
 \emph{initial representation}.
Analogously to the previous chapter, we define a notion of \emph{2--signature} with two levels:
a \emph{syntactic} level specifying types and terms of a language, and, on top of that, a \emph{semantic} level
specifying reduction rules on the terms.

\section{1--Signatures}

From the \emph{syntactic} point of view presented in \autoref{sec:term_arities_syntactic}, 
1--signatures for types and terms are the same as in \autoref{sec:compilation}, \autoref{def:typed_signature}.
We have to adapt the \emph{semantic} definition of signatures for terms, however, since we now work 
with relative monads on $\TDelta{T}$ for some set $T$ instead of monads over families of sets.
The following definition is the analogue of \autoref{def:s-mon}, adapted to the use of relative monads:
\begin{definition}[Relative $S$--Monad] \label{def:s-rmon}
  Given an algebraic signature $S$, the \emph{category $\SigRMon{S}$ of relative $S$--monads} is defined as the category whose objects are pairs $(T,P)$ of
  a representation $T$ of $S$ and a relative monad 
\[P : \TS{T}\stackrel{\TDelta{T}}{\longrightarrow} \TP{T} \enspace .\]
  A morphism from $(T,P)$ to $(T', P')$ is a pair $(g, f)$ of a morphism of $S$--representations $g : T\to T'$ and a 
    morphism of relative monads $f : P\to P'$ over the retyping functor $\retyping{g}$ as in \autoref{rem:rel_mon_mor_case}.

   Given $n\in \mathbb{N}$, we write $\SigRMon{S}_n$ for the category whose objects are pairs $(T,P)$ of a representation $T$ of $S$ and 
  a relative monad $P$ over $\TDelta{T}_n$. A morphism from $(T,P)$ to $(T', P')$ is a pair $(g, f)$ of a morphism of 
      $S$--representations $g : T\to T'$ and a 
   monad morphism $f : P\to P'$ over the retyping functor $\retyping{g}(n)$ defined in \autoref{def:retyping_functor_pointed}.

\end{definition}

Similarly, we have a large category of modules over relative monads:

\begin{definition}[Large Category $\LRMod{n}{S}{\D}$ of Modules]
  \label{def:lrmod_typed}
  Given a natural number $n\in \mathbb{N}$, an algebraic signature $S$ and a category $\D$, 
  we call $\LRMod{n}{S}{\D}$ 
  the category an object of which is a pair $(P,M)$ of a relative $S$--monad $P \in \SigRMon{S}_n$ and a $P$--module with codomain $\D$.
  A morphism to another such $(Q,N)$ is a pair $(f, h)$ of a morphism of relative $S$--monads 
$f : P \to Q$ in $\SigRMon{S}_n$ and a morphism of relative modules $h : M \to f^*N$.
\end{definition}

As before, we sometimes just write the module --- i.e.\ the second --- component of an object or morphism
 of the large category of modules.
Given $M\in \LRMod{n}{S}{\D}$, we thus write $M(V)$ or $M_V$ for the value of the module on the object $V$.

A \emph{half--arity over $S$ of degree $n$} is a functor from relative $S$--monads to the category of large modules of degree $n$:

\begin{definition}[Half--Arity over $S$ (of degree $n$)]
  \label{def:half_arity_degree_semantic_typed}
 Given an algebraic signature $S$ and $n\in \mathbb{N}$, we call \emph{half--arity over $S$ of degree $n$} a functor
  \[ \alpha : \SigRMon{S} \to \LRMod{n}{S}{\PO} \enspace . \]
 which is pre--inverse to the forgetful functor.
\end{definition}

As before we restrict ourselves to a class of such functors. Again, we start with the \emph{tautological} module:

\begin{definition}[Tautological Module of Degree $n$]
  Given $n\in \mathbb{N}$, any relative monad $R$ over $\TDelta{T}$ induces a monad $R_n$ over $\TDelta{T}_n$ 
 with object map $(V, t_1,\ldots, t_n) \mapsto (RV, t_1,\ldots,t_n)$.
To any relative $S$--monad $R$ we associate
  the tautological module of $R_n$, 
  \[\Theta_n(R):= (R_n,R_n) \in \LRMod{n}{S}{\TP{T}_n} \enspace . \]
\end{definition}

\noindent
Furthermore, we again use \emph{canonical natural transformations} (cf.\ \autoref{def:canonical_nat_trans}) to build \emph{classic} half--arities; 
these transformations specify context extension (derivation) and
selection of specific object types (fibre):

\begin{definition}[Classic Half--Arity]
  As with monads (cf.\ \autoref{sec:ext_zsido}), we restrict our attention to \emph{classic} half--arities, which we define 
  analogously to \autoref{def:alg_half_ar} as constructed using derivations and products,
  starting from the fibres of the tautological module and the constant singleton module.
  We omit the precise statement of this definition.
\end{definition}

A half--arity of degree $n$ thus associates, to any relative $S$--monad $P$ over a set of types $T$, 
a \emph{family of $P$--modules} indexed by $T^n$:

\begin{remark}[Module of Higher Degree corresponds to a Family of Modules (II)]
  \label{rem:family_of_mods_cong_pointed_mod_relative}
   \Autoref{rem:family_of_mods_cong_pointed_mod} applies analogously to modules over relative modules.
    More precisely, let $T$ be a set and let $R$ be a monad on the functor $\family{\Delta}{T}$.
    Then a module $M$ over the monad $R_n$ corresponds precisely to a family of $R$--modules 
    $(M_{\vectorletter{t}})_{\vectorletter{t}\in T^n}$ by (un)currying.
  Similarly, a morphism $\alpha:M\to N$ of modules of degree $n$ is equivalent to a family 
   $(\alpha_{\vectorletter{t}})_{\vectorletter{t}\in T^n}$ of morphisms of modules of degree zero with
     $\alpha_{\vectorletter{t}}:M_{\vectorletter{t}}\to N_{\vectorletter{t}}$.
     
\end{remark}

An arity of degree $n\in \mathbb{N}$ for terms over an algebraic signature $S$ is defined to be a pair of functors
from relative $S$--monads to modules in $\LRMod{n}{S}{\PO}$.
The degree $n$ corresponds to the number of object type indices of its associated constructor.
As an example, the arities of $\Abs$ and $\App$ of \autoref{ex:slc_def} are of degree $2$.

\begin{definition}[Term--Arity, Signature over $S$]
  A \emph{classic arity $\alpha$ over $S$ of degree $n$} is a pair 
 \[ s = \bigl(\dom(\alpha), \cod(\alpha)\bigr) \] 
  of half--arities over $S$ of degree $n$ such that 
  \begin{packitem}
   \item $\dom(\alpha)$ is classic and
   \item $\cod(\alpha)$ is of the form $\fibre{\Theta_n}{\tau}$ for some canonical natural 
        transformation $\tau$ as in \autoref{def:canonical_nat_trans}.
  \end{packitem}
Any classic arity is thus \emph{syntactically} of the form given in \autoref{eq:classic_arity_typed}.
 Note, however, that the definition of $\Theta$ in \autoref{sec:ext_zsido} differs from the one
used in the present chapter.
We write $\dom(\alpha) \to \cod(\alpha)$ for the arity $\alpha$, and $\dom(\alpha, R) := \dom(\alpha)(R)$
and similar for the codomain and morphisms of relative $S$--monads. 
Given a weighted set $(J,d)$ as in \autoref{def:weighted_set},
a term--signature $\Sigma$ over $S$ indexed by $(J,d)$ is a $J$-family 
 $\Sigma$ of classic arities over $S$, the arity $\Sigma(j)$ being of degree $d(j)$ for any $j\in J$.
\end{definition}

\begin{definition}[Typed Signature]
  A \emph{typed signature} is a pair $(S,\Sigma)$ consisting of an algebraic signature $S$ for sorts and 
  a term--signature $\Sigma$ (indexed by some weighted set) over $S$.
 \end{definition}

\begin{example}
 \Autorefs{ex:tlc_sig_higher_order} and \ref{ex:term_sig_pcf} still apply. Note, however, that
 the underlying definition of $\Theta$ differs from that of Sec.\ \ref{sec:compilation}, and that
 fibre and derivation are adapted accordingly.
\end{example}

\section{Representations of 1--Signatures}\label{sec:rep_of_1--sigs_typed}

\begin{definition}[Representation of an Arity, a Signature over $S$]
   \label{def:1--rep_typed}
   A representation of an arity $\alpha$ over $S$ in an $S$--monad $R$ is a morphism of relative modules 
    \[ \dom(\alpha,R) \to \cod(\alpha, R) \enspace . \]
  A representation $R$ of a signature over $S$ is a given by a relative $S$--monad --- called $R$ as well ---  
  and a representation $\alpha^R$ of each arity $\alpha$ of $S$ in $R$.
\end{definition}

Representations of $(S,\Sigma)$ are the objects of a category $\Rep^\Delta(S,\Sigma)$, whose morphisms are defined as follows:
\begin{definition}[Morphism of Representations]
  \label{def:rel_mor_of_reps_typed}
  Given representations $P$ and $R$ of a typed signature $(S,\Sigma)$, a morphism of representations 
  $f : P\to R$ is given by a morphism of relative $S$--monads $f : P \to R$, such that for any arity $\alpha$ of $\Sigma$
  the following diagram of module morphisms commutes:
  \[
  \begin{xy}
   \xymatrix{
        **[l]\dom(\alpha,P) \ar[d]_{\dom(\alpha, f)} \ar[r]^{\alpha^P} & **[r]\cod(\alpha,P) \ar[d]^{\cod(\alpha,f)} \\
        **[l]\dom(\alpha,R) \ar[r]_{\alpha^R} & **[r]\cod(\alpha,R) .
    }
  \end{xy}
 \]
\end{definition}

\begin{lemma}\label{lem:init_no_eqs_typed}
  For any typed signature $(S,\Sigma)$, the category of representations of $(S,\Sigma)$ has an initial object.
\end{lemma}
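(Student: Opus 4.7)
The plan is to adapt the proof of Theorem \ref{thm:compilation} to the relative monad setting, exploiting the fact that the initial object will carry a trivial (discrete) preorder, which makes all monotonicity requirements automatic. An alternative route is to lift the adjunction of Lemma \ref{lem:adj_mon_rmon} to the typed setting and to categories of representations (generalising Lemma \ref{lem:adjunction_reps}), and then invoke Lemma \ref{lem:left_adj_cocont}. Both routes give the same initial representation; I would favour the first, direct construction, since it makes the underlying syntactic object (the inductive family of terms) visible and provides the explicit recursion principle one ultimately wants.

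First, I would use Lemma \ref{lem:initial_sort} to obtain the initial representation $\hat S$ of the algebraic signature $S$, and take this as the set of object types of the initial representation of $(S,\Sigma)$. Second, I would construct an inductive family of terms $\STS : \family{\Set}{\hat S} \to \family{\Set}{\hat S}$ exactly as in the proof of Theorem \ref{thm:compilation}, with one constructor $\alpha_{\mathbf{t}}(V)$ for each arity $\alpha\in\Sigma$ of degree $n$ and each $\mathbf{t}\in\hat S^{n}$, plus the $\Var$ constructor; this inherits the same monad (in Kleisli form) structure on $\family{\Set}{\hat S}$. Third, I would promote $\STS$ to a relative $S$-monad $\STS_\Delta : \family{\Set}{\hat S}\stackrel{\family{\Delta}{\hat S}}{\longrightarrow}\family{\PO}{\hat S}$ by endowing each $\STS(V)_t$ with the discrete preorder $\delta$; the Kleisli extension remains well-defined because any set map is automatically monotone on a discrete source, and the relative monad laws are inherited from the Kleisli-triple laws on $\STS$. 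Fourth, the constructors $\alpha_{\mathbf{t}}^{\hat\Sigma}$ are module morphisms over $\STS_\Delta$ by the same argument (monotonicity is free), giving a representation $\hat\Sigma$ of $(S,\Sigma)$ in the sense of Definition \ref{def:1--rep_typed}.

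For initiality, given a representation $R$ of $(S,\Sigma)$ with underlying $S$-monad over a set $T$ and sort translation $g:\hat S\to T$ obtained from Lemma \ref{lem:initial_sort}, I would define a candidate morphism $i_R$ by structural recursion on terms, using the same adjunct formulation via $\family{\Delta}{\hat S} \dashv \family{U}{\hat S}$ and the retyping adjunction of Remark \ref{rem:retyping_adjunction_kan} as in the proof of Theorem \ref{thm:compilation}: on $\Var$ via $\eta^R$ composed with \lstinline!ctype!, and on $\alpha_{\mathbf{t}}(V)(a_1,\ldots,a_k)$ by the representation morphism $\alpha^R$ applied to the retyped arguments. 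The compatibility of $i_R$ with the monadic structure (unit and Kleisli) and with the representation morphisms is established exactly as before; the only new verification is that $i_R$ is a morphism in $\PO$, i.e.\ monotone, but this is immediate because its source carries the discrete preorder. Uniqueness of $i_R$ follows by the same structural induction as in Theorem \ref{thm:compilation}.

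The main obstacle is purely bookkeeping: one must check that the retyping functor $\retyping{g}(n)$ and its interaction with derivation and fibres of classic half-arities (as in Definition \ref{def:half_arity_degree_semantic_typed}) behave coherently in the relative setting, so that the commutative square of Definition \ref{def:rel_mor_of_reps_typed} typechecks. This is the typed analogue of the identifications $\retyping{g}(n)((V,\mathbf{t})^\tau)=(\retyping{g}V,g_*\mathbf{t})^\tau$ used in Theorem \ref{thm:compilation}, combined with Lemmas \ref{lem:rel_pb_prod}, \ref{lem:rel_pb_comm} and \ref{lem:rel_pb_fibre} for pullbacks of relative modules; once these coherences are in place, the diagrams reduce to the defining equations of $i_R$ on the corresponding constructors, as in \autoref{eq:def_initial_term}.
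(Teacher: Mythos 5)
Your proposal is correct, but it takes a different route from the paper. The paper proves this lemma by lifting the adjunction $\Delta_* \dashv U_*$ of \autoref{lem:adj_mon_rmon} to the level of representation categories, exactly as in the untyped case (\autoref{lem:adjunction_reps}): one checks that postcomposition with the forgetful functor sends a representation in a relative $S$--monad to a representation in the underlying $S$--monad (using $U_*\dom(s,P)\cong\dom(s,U_*P)$ and similarly for codomains), that postcomposition with $\Delta$ goes the other way, and that these form an adjunction $\Delta_*\dashv U_*$ between $\Rep(S,\Sigma)$ and $\Rep^\Delta(S,\Sigma)$; initiality then follows at once from \autoref{thm:compilation} together with \autoref{lem:left_adj_cocont} (left adjoints preserve initial objects). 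You mention this route as an alternative but favour the direct construction: equip the term monad of \autoref{thm:compilation} with the discrete preorder, observe that every monotonicity obligation (for the Kleisli extension, the representation module morphisms, and the initial morphism $i_R$) is vacuous on a discrete source, and re-run the recursion/induction argument for existence and uniqueness. Both arguments are sound and produce the same initial object, namely $\Delta_*$ applied to the initial representation of \autoref{thm:compilation}. The adjunction route is shorter on paper, reuses \autoref{thm:compilation} wholesale, and yields the adjunction itself as a reusable by-product; your direct route re-verifies the universal property but keeps the inductive syntax and its recursion principle explicit, and the coherence bookkeeping you flag (retyping versus derivation, fibres and products, \autorefs{lem:rel_pb_prod}, \ref{lem:rel_pb_comm}, \ref{lem:rel_pb_fibre}) is exactly the price paid for that explicitness --- it is the same bookkeeping the paper discharges once inside \autoref{thm:compilation}. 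Notably, your preference matches the strategy actually used in the \textsf{Coq} development, where the initial object is constructed directly rather than through the adjunction.
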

\begin{proof}
  The initial object is obtained, analogously to the untyped case (cf.\ \autorefs{lem:adj_mon_rmon}, \ref{lem:adjunction_reps}, \ref{lem:init_no_eqs_untyped}), 
 via an adjunction $\Delta_* \dashv U_*$ between the categories of representations of $(S,\Sigma)$
 in relative monads and those in monads as in \autoref{sec:compilation}.
  
  In more detail, to any relative $S$--monad $(T,P) \in \SigRMon{S}$ we associate the $S$--monad
  $U (T,P) := (T,UP)$ where $U_{*}P$ is the monad obtained by postcomposing with the forgetful functor $\family{U}{T} : \TP{T} \to \TS{T}$.
  Substitution for $U_{*}P$ is defined, in each fibre, as in \autoref{lem:rmon_delta_endomon}.
  For any arity $s\in \Sigma$ we have that 
       \[U_{*} \dom(s,P) \cong \dom(s,U_{*}P) \enspace , \] 
  and similar for the codomain.
  The postcomposed representation morphism $U_*s(P)$ hence represents $s$ in $U_{*}P$ in the sense of \autoref{sec:compilation}.
  This defines the functor $U_* : \Rep^\Delta(S,\Sigma) \to \Rep(S,\Sigma)$.
  Conversely, to any $S$--monad we can associate a relative $S$--monad by postcomposing with $\family{\Delta}{T} : \TS{T} \to \TP{T}$,
  analogous to the untyped case in \autoref{lem:rep_endo_rep_rel}, yielding $\Delta_* : \Rep(S,\Sigma) \to \Rep^\Delta(S,\Sigma)$.
  In summary, the natural isomorphism
    \[ \varphi_{R,P}:\bigl(\Rep^\Delta(S,\Sigma)\bigr)(\Delta_* R, P) \cong \bigl(\Rep(S,\Sigma)\bigr)(R, U_*P) \]
  is given by postcomposition with the forgetful functor (from left to right) resp.\ the functor $\Delta$ (from right to left).

\end{proof}

\section{Inequations}

Analogously to the untyped case (cf.\ \autorefs{def:half_eq_untyped}, \ref{def:ineq_untyped}), an inequation associates, 
to any representation of $(S,\Sigma)$ in a relative monad $P$, two parallel morphisms of $P$--modules.
However, similarly to arities, an inequation may now be, more precisely, a \emph{family of inequations}, indexed by object types.
Consider the simply--typed lambda calculus, which was defined with \emph{typed} abstraction and application.
Similarly, we have a \emph{typed substitution} operation for $\TLC$, which substitutes a term of type $s\in \TLCTYPE$ for a free variable of type $s$
in a term of type $t\in \TLCTYPE$, yielding again a term of type $t$.
For $s,t\in \TLCTYPE$ and $M\in\SLC(V^{*s})_t$ and $N \in \SLC(V)_s$, beta reduction is specified by

  \[ \lambda_{s,t} M(N) \leadsto M [* := N] \enspace , \]
where our notation hides the fact that not only abstraction, but also application and substitution are typed operations.
More formally, such a reduction rule might read as a family of inequations between morphisms of modules
\[  \comp{(\abs_{s,t} \times\id)}{\app_{s,t}} \enspace \leq \enspace \_ [*^s :=_t \_ ] \enspace , \] 
where $s,t\in \TLCTYPE$ range over types of the simply--typed lambda calculus.
Analogously to \autoref{sec:ext_zsido}, we want to specify the beta rule without referring to the set $\TLCTYPE$,
but instead express it for an arbitrary representation $R$ of the typed signature $(S_{\TLC},\Sigma_{\TLC})$ 
(cf.\ \autorefs{ex:type_sig_SLC}, \ref{ex:tlc_sig_higher_order}),
as in

\[  \comp{(\abs^R \times \id)}{\app^R} \enspace \leq \enspace \_ [* := \_ ] \enspace , \] 
where both the left and the right side of the inequation are given by suitable $R$--module morphisms of degree 2.
Source and target of a \emph{half--equation} accordingly are given by functors from representations of a typed signature $(S,\Sigma)$
to a suitable category of modules.
A half--equation then is a natural transformation between its source and target functor:

\begin{definition}[Category of Half--Equations] \label{def:half_eq_typed}
Let $(S,\Sigma)$ be a signature. An \emph{$(S,\Sigma)$--module} $U$ of degree $n\in \mathbb{N}$
is a functor from the category of representations of $(S,\Sigma)$ as defined in \autoref{sec:rep_of_1--sigs_typed} 
to the category $\LRMod{n}{S}{\PS}$ (cf.\ \autoref{def:lrmod_typed})
commuting with the forgetful functor to the category of relative monads.
We define a morphism of $(S,\Sigma)$--modules to be a natural transformation which
becomes the identity when composed with the forgetful functor. 
We call these morphisms \emph{half--equations} (of degree $n$).
We write $U^R := U(R)$ for the image of the representation $R$ under the $S$--module $U$, and similar for 
morphisms.
\end{definition}

\begin{definition}[Substitution as Half--Equation]\label{def:subst_half_eq_typed}
  Given a relative monad on $\family{\Delta}{T}$, its associated substitution--of--one--variable operation (cf.\ \autoref{def:hat_P_subst_typed}) 
  yields a family of module morphisms, indexed by pairs $(s,t)\in T$.
  By \autoref{rem:family_of_mods_cong_pointed_mod_relative} this family is equivalent to a module morphism of degree 2.
 The assignment
 \[\subst  : R\mapsto \subst^R :  \fibre{\hat{R}_2}{2}^1 \times \fibre{\hat{R}_2}{1} \to \fibre{\hat{R}_2}{2} \]
 thus yields a half--equation of degree $2$ over any signature $S$.
 Its domain and codomain are classic.
\end{definition} 

\begin{example}[\autoref{ex:tlc_sig_higher_order} continued]\label{ex:app_circ_half_typed}
  The map
 \[ \comp{(\abs\times\id)}{\app} : R \mapsto \comp{(\abs^R \times \id^R)}{\app^R}  : \fibre{\hat{R}_2}{2}^1 \times \fibre{\hat{R}_2}{1} \to \fibre{\hat{R}_2}{2} \]
 is a half--equation over the signature $\SLC$, as well as over the signature of \PCF.
\end{example}

\begin{definition}
 \label{def:arity_classic_module_typed}
 Any classic arity of degree $n$,
  \[s = \fibre{\Theta_n}{\sigma_1}^{{{\tau_1}}} \times \ldots\times \fibre{\Theta_n}{\sigma_m}^{{\tau_m}} \to \fibre{\Theta_n}{\sigma} \enspace , \] 
      defines a classic $S$--module 
    \[\dom(s) : R\mapsto \fibre{R_n}{\sigma_1}^{\tau_1} \times \ldots \times \fibre{R_n}{\sigma_m}^{\tau_m}  \enspace . \] 
\end{definition}

\begin{definition}[Inequation] \label{def:ineq_typed}
 Given a signature $(S,\Sigma)$, an \emph{inequation over $(S,\Sigma)$}, or \emph{$(S,\Sigma)$--inequation}, of degree $n\in \NN$ is a pair of 
parallel half--equations between $(S,\Sigma)$--modules of degree $n$.
We write $\alpha \leq \gamma$ for the inequation $(\alpha, \gamma)$.
We leave the degree implicit whenever possible, analogously to \autoref{rem:degree_implicit}.
\end{definition}

\begin{example}[Beta Reduction]
  For any suitable 1--signature --- i.e.\ for any 1--signature that has an arity for abstraction and an arity for application ---
  we specify beta reduction through an inequation of degree $2$ using the parallel half--equations of \autoref{def:subst_half_eq_typed} and \autoref{ex:app_circ_half_typed}:
  \[   \comp{(\abs \times \id)}{\app} \leq \subst : \fibre{\Theta}{2}^1 \times \fibre{\Theta}{1} \to \fibre{\Theta}{2} \enspace . \]
\end{example}

\begin{example}[Fixpoints and Arithmetics of \PCF]\label{ex:pcf_ineqs}
 The reduction rules for $\PCF$ are informally given in \autoref{eq:pcf_reductions}.
 We specify these reduction rules as inequations over the 1--signature of $\PCF$ (cf.\ \autoref{ex:term_sig_pcf})
 as follows: 
   \begin{align*}
        \comp{(\abs \times \id)}{\app} &\leq \subst : \fibre{\Theta}{2}^1 \times \fibre{\Theta}{1} \to \fibre{\Theta}{2}\\
      \PCFFix &\leq \comp{(\id,\PCFFix)}{\app} : \fibre{\Theta}{1\PCFar 1} \to \fibre{\Theta}{1} \\
      \comp{(\PCFSucc,\PCFn{n})}{\app} &\leq \PCFn{n+1} : * \to \fibre{\Theta}{\Nat}\\
      \comp{(\PCFPred,\PCFn{0})}{\app} &\leq \PCFn{0} : * \to \fibre{\Theta}{\Nat}\\
      \comp{\left(\PCFPred,\comp{({\PCFSucc},{\PCFn{n}})}{\app}\right)}{\app} &\leq \PCFn{n} : * \to \fibre{\Theta}{\Nat}\\
      \comp{(\PCFZerotest,\PCFn{0})}{\app} &\leq \PCFTrue  : * \to \fibre{\Theta}{\Bool}\\
      \comp{\left(\PCFZerotest,\comp{(\PCFSucc,\PCFn{n})}{\app}\right)}{\app} &\leq \PCFFalse  : * \to \fibre{\Theta}{\Bool}\\
                             &\vdots
   \end{align*}

\end{example}

\begin{definition}[Representation of Inequations]\label{def:rep_ineq_typed}
  \label{def:2--rep_typed}
 A \emph{representation of an $(S,\Sigma)$--inequation $\alpha\leq \gamma : U \to V$} (of degree $n$) is any representation 
  $R$ over a set of types $T$ of $(S,\Sigma)$ such that 
  $\alpha^R \leq \gamma^R$ pointwise, i.e.\ if for any pointed context $(X,\vectorletter{t}) \in \TS{T}\times T^n$, 
      any $t\in T$ and any $y\in U^R_{(X, \vectorletter{t})}(t)$, 
    \begin{equation}\alpha^R(y) \enspace \leq \enspace \gamma^R(y) \enspace , \label{eq:comp_sem_rep_ineq}\end{equation}
where we omit the sort argument $t$ as well as the context $(X,\vectorletter{t})$ from $\alpha$ and $\gamma$.
We say that such a representation $R$ \emph{satisfies} the inequation $\alpha \leq \gamma$.

For a set $A$ of $(S,\Sigma)$--inequations, we call \emph{representation of $((S,\Sigma),A)$} any representation of $(S,\Sigma)$ that
satisfies each inequation of $A$.
We define the category of representations of the 2--signature $((S,\Sigma), A)$ to be the full subcategory of the category of
representations of $S$ whose objects are representations of $((S,\Sigma), A)$.
We also write $(\Sigma,A)$ for $((S,\Sigma), A)$.

According to \autoref{rem:family_of_mods_cong_pointed_mod_relative}, 
 the inequation of \autoref{eq:comp_sem_rep_ineq} is equivalent to ask whether, for any $\vectorletter{t} \in T^n$, 
    any $t\in T$ and any $y\in U_{\vectorletter{t}}^R(X)(t)$, 
\begin{equation*}\alpha_{\vectorletter{t}}^R(y) \enspace \leq \enspace \gamma_{\vectorletter{t}}^R(y) \enspace .
\end{equation*}

\end{definition}

\section{Initiality for 2--Signatures}

We are ready to state and prove an initiality result for typed signatures with inequations:

\begin{theorem}\label{thm:init_w_ineq_typed}
 For any set of classic $(S,\Sigma)$--inequations $A$, the category of representations of $((S,\Sigma),A)$ has an initial object.
\end{theorem}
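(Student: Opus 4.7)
The plan is to adapt the proof strategy of \autoref{thm:init_w_ineq_untyped} to the typed setting of this chapter, using \autoref{lem:init_no_eqs_typed} as the syntactic starting point in place of \autoref{lem:init_no_eqs_untyped}. Concretely, I would proceed in five steps.

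First, I would invoke \autoref{lem:init_no_eqs_typed} to obtain the initial representation $\init{(S,\Sigma)}$ of the 1--signature $(S,\Sigma)$, whose underlying set of sorts is $\hat{S}$ and whose underlying relative monad on $\family{\Delta}{\hat{S}}$ I continue to denote by $\init{\Sigma}$. For any representation $R$ of $((S,\Sigma),A)$ over a set of sorts $T_R$ with underlying sort translation $g_R : \hat S \to T_R$, write $i_R : \init{(S,\Sigma)} \to R$ for the unique morphism of representations of $(S,\Sigma)$. Next, for every $V \in \family{\Set}{\hat S}$ and every $t \in \hat S$, I would equip $\init{\Sigma}(V)_t$ with the preorder
\[
  x \leq_A y \quad :\Longleftrightarrow \quad \forall R \in \Rep^{\Delta}((S,\Sigma),A),\ i_R(x) \leq_R i_R(y),
\]
in direct analogy with \autoref{eq:order_untyped}. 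Reflexivity and transitivity are immediate; the point is to upgrade the $\Set$--valued object map $V \mapsto \init{\Sigma}(V)$ to a functor $\family{\Set}{\hat S} \to \family{\PO}{\hat S}$, i.e.\ to a relative monad $\init{\Sigma}_A$ on $\family{\Delta}{\hat S}$. For that I would reproduce the monotonicity argument of \autoref{thm:init_w_ineq_untyped}: given $f : \family{\Delta}{\hat S}(V, \init{\Sigma}_A W)$ and $x \leq_A y$, use naturality of each $i_R$ together with substitution--monotonicity of $R$ (the only available monotonicity, cf.\ \autoref{rem:about_substitution}) to obtain $\bind{x}{f} \leq_A \bind{y}{f}$.

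Second, I would lift each arity's representation $\alpha^{\init{(S,\Sigma)}}$ to a representation in $\init{\Sigma}_A$. The key technical lemma is the typed, higher--degree analogue of \autoref{lem:useful_lemma}: for any classic $(S,\Sigma)$--module $V$ of degree $n$ and any $x, y$ in $V(\init{(S,\Sigma)})$,
\[
 x \leq_A y \quad \Longleftrightarrow \quad \forall R \in \Rep^{\Delta}((S,\Sigma),A),\ V(i_R)(x) \leq_{V^R} V(i_R)(y).
\]
This is proved by induction on the inductive definition of ``classic'', exactly as in \autoref{lem:useful_lemma}; the only new cases are fibres and derivations with respect to canonical natural transformations, which pose no difficulty because fibre and derivation commute pointwise with $\leq_A$ (they merely restrict resp.\ extend the context/sort at which $\leq_A$ is evaluated, and this commutes with the pointwise definition of $\leq_A$). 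Applied to $V = \dom(\alpha)$ (cf.\ \autoref{def:arity_classic_module_typed}), this lemma, together with monotonicity of $\alpha^R$ for each $R \in \Rep^{\Delta}((S,\Sigma),A)$, shows that $\alpha^{\init{(S,\Sigma)}}$ is monotone for $\leq_A$ and hence constitutes a representation $\alpha^{\init{\Sigma}_A}$ of $\alpha$ in the relative monad $\init{\Sigma}_A$.

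Third, I would verify that the resulting representation $\init{\Sigma}_A$ of $(S,\Sigma)$ satisfies every inequation $\alpha \leq \gamma$ in $A$: by the lemma above, applied this time to the classic codomain $(S,\Sigma)$--module of the inequation, the goal reduces to showing $\alpha^R(U(i_R)(x)) \leq \gamma^R(U(i_R)(x))$ for every $R \in \Rep^{\Delta}((S,\Sigma),A)$, which holds by naturality of half--equations (cf.\ \autoref{rem:half_equation_comm}) and by assumption on $R$. Finally, initiality of $\init{\Sigma}_A$ follows because $i_R$ is automatically monotone with respect to $\leq_A$ by construction, hence a morphism of representations of $((S,\Sigma),A)$; its uniqueness follows at once from its uniqueness in $\Rep^{\Delta}(S,\Sigma)$ (\autoref{lem:init_no_eqs_typed}), since forgetting the preorders does not affect underlying carriers. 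I expect the main obstacle to be the bookkeeping around the typed, pointed setting: the preorder, the markers $\vectorletter{t}$, and the retyping functor $\retyping{g_R}$ must be threaded through the classicity induction of the useful lemma, and care is needed to check that fibre and derivation with respect to canonical natural transformations $\tau : 1 \Rightarrow SU_n$ interact correctly with $\leq_A$ — but no genuinely new idea is required beyond what is already present in the untyped proof of \autoref{thm:init_w_ineq_untyped} combined with the sort--varying machinery of \autoref{sec:ext_zsido}.
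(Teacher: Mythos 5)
Your proposal follows the paper's own proof essentially step for step: reduce to the initial representation of the 1--signature from \autoref{lem:init_no_eqs_typed}, define $\leq_A$ pointwise via the initial morphisms into all representations satisfying $A$, prove the typed analogue of \autoref{lem:useful_lemma} by induction on classicity, and then lift the representation structure, verify the inequations, and conclude initiality exactly as in the untyped case. The only point the paper dwells on that you gloss over is that the naive clause ``$i_R(x)\leq_R i_R(y)$'' is ill--typed, since $i_{R,V}$ has source $\retyping{g_R}(\init{\Sigma}V)$ rather than $\init{\Sigma}V$, so one must pass to its adjunct $\varphi(i_{R,V})$ under the retyping adjunction (and transport the monad--morphism diagram along $\varphi$ in the monotonicity argument) --- this is routine bookkeeping of the kind you acknowledge, and does not affect the correctness of your approach.
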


\begin{proof}
 
 The proof is analogous to that of the untyped case (c.f.\ \autoref{thm:init_w_ineq_untyped}).
 The fact that we now consider \emph{typed} syntax introduces a minor complication, on the presentation of which 
  we put the emphasis during the proof.
The basic ingredients for building the initial representation 
  are given by the initial representation 
$(\init{S},\init{\Sigma})$
 --- or just $\init{\Sigma}$ for short --- in the category $\Rep(S,\Sigma)$ of representations in monads on set families (cf.\ \autoref{thm:compilation}).
 Equivalently, the ingredients come from the initial object 
$(\init{S},\Delta_*\init{\Sigma})$  --- or just $\Delta_*\init{\Sigma}$ for short --- of representations
 \emph{without inequations} in the category $\Rep^{\Delta}(S,\Sigma)$ (cf.\ \autoref{lem:init_no_eqs_typed}).
    We call $\init{\Sigma}$ resp.\ $\Delta_*\init{\Sigma}$  the monad resp.\ relative monad underlying the initial representation

  The proof consists of 3 steps: at first, we define a preorder $\leq_A$ on the terms of $\init{\Sigma}$, induced by the set $A$ of inequations.
   Afterwards we show that the data of the representation $\init{\Sigma}$ --- substitution, representation morphisms etc. --- 
  is compatible with the preorder $\leq_A$ in a suitable sense. This will yield a representation $\init{\Sigma}_A$ of $(\Sigma,A)$.
  Finally we show that $\init{\Sigma}_A$ is the initial such representation.

\noindent
\emph{--- The monad underlying the initial representation:}

\noindent
    For any context $X\in \TS{\init{S}}$ and $t\in \init{S}$, we equip $\init{\Sigma}X(t)$ with a preorder $A$ by setting --- \emph{morally}, cf.\ below ---, 
      for $x,y\in \init{\Sigma}X(t)$,

    \begin{equation} x \leq_A y \quad :\Leftrightarrow \quad \forall R : \Rep(\Sigma,A), \quad
                      i_R (x) \leq_R i_R (y) \enspace ,
      \label{eq:order_typed}
    \end{equation}
where $i_R : \Delta_*\init{\Sigma} \to R$ is the initial morphism of representations of $(S,\Sigma)$, 
 cf.\ \autoref{lem:init_no_eqs_typed}. 
Note that the above definition in \autoref{eq:order_typed} is ill--typed:
 we have $x\in \init{\Sigma}X(t)$, which cannot be applied to (a fibre of) $i_R(X) : \retyping{g}(\init{\Sigma}X) \to R(\retyping{g}X)$.
We denote by $\varphi = \varphi_R$ the natural isomorphism induced by the adjunction of \autoref{def:retyping_functor}
and \autoref{rem:retyping_adjunction_kan} obtained by retyping --- along the initial morphism of types $g:\init{S}\to T = T_R$ --- 
towards the set $T$ of ``types'' of $R$,
\[ \varphi_{X,Y} : \family{\PO}{T}\left(\retyping{g}(\init{\Sigma}X), R(\retyping{g}X)\right) \cong 
                     \family{\PO}{\init{S}}\left(\init{\Sigma}X, \comp{g}{R(\retyping{g}X)}\right) \enspace . \]
Instead of the above definition in \autoref{eq:order_typed}, we should really write 
    \begin{equation} x \leq_A y \quad :\Leftrightarrow \quad \forall R : \Rep(\Sigma,A), \quad
                      \left(\varphi(i_{R,X})\right) (x) \leq_R \left(\varphi (i_{R,X})\right) (y) \enspace ,
      \label{eq:order_typed_corrected}
    \end{equation}
where we omit the subscript ``$R$'' from $\varphi$.
We have to show that the map \[ X\mapsto \init{\Sigma}_A X := (\init{\Sigma} X, \leq_A) \] yields a relative monad on $\TDelta{\init{S}}$. 
The missing fact to prove is that the substitution with a morphism 
\[ f\in\TP{\init{S}}(\Delta X, \init{\Sigma}_A Y) \cong \TS{\init{S}}(X,\init{\Sigma} Y)\] 
is compatible with the order $\leq_A$:
given any $f \in \TP{\init{S}}(\Delta X, \init{\Sigma}_A Y)$ we show that 
        \[\sigma^{\init{\Sigma}}(f) \in \TS{\init{S}}(\init{\Sigma} X,\init{\Sigma} Y)\] 
is monotone with respect
to $\leq_A$ and hence (the carrier of) a morphism 
\[\sigma^{\init{\Sigma}_A}(f) \in \TP{\init{S}}(\init{\Sigma}_A X, \init{\Sigma}_A Y) \enspace . \]
We overload the infix symbol $\bind{}{}$ to denote monadic substitution.
Note that this notation now hides an implicit argument giving the sort of the term in which we substitute.
Suppose $x, y \in \init{\Sigma}X(t)$ with $x\leq_A y$, we show 
        \[\bind{x}{f} \enspace \leq_A \enspace \bind{y}{f} \enspace .\] 
Using the definition of $\leq_A$, 
we must show, for a given representation $R$ of $(\Sigma,A)$, 
\begin{equation}  \left(\varphi(i_R)\right)(\bind{x}{f}) \enspace \leq_R \enspace \left(\varphi(i_R)\right)(\bind{y}{f}) \enspace .
   \label{eq:proof_of_chap5_1}
 \end{equation}
Let $g$ be the initial morphism of types towards the types of $R$.
Since $i:= i_R$ is a morphism of representations --- and thus in particular a \emph{monad} morphism, 
it is compatible with the substitution of $\init{\Sigma}$ and $R$; we have

\begin{equation}
 \begin{xy}
  \xymatrix @R=4pc @C=5pc{
    \retyping{g}(\init{\Sigma}X) \ar[r]^{\retyping{g}(\sigma(f))} \ar[d]_{i_X} & \retyping{g}(\init{\Sigma}Y) \ar[d]^{i_Y} \\
    R(\retyping{g}X) \ar[r]_{\sigma^R(\comp{\retyping{g}f}{i_Y})}& R (\retyping{g} Y).
}
 \end{xy}
\label{eq:comp_sem_monad_mon_mor_diag}
\end{equation}
By applying the isomorphism $\varphi$ on the diagram of \autoref{eq:comp_sem_monad_mon_mor_diag}, we obtain
\begin{align} \comp{\sigma(f)}{\varphi(i_Y)} &= \varphi\left(\comp{\retyping{g}(\sigma(f))}{i_Y}\right) \notag \\
                                              &= \varphi\left(\comp{i_X}{\sigma(\comp{\retyping{g}f}{i_Y})}\right) \notag \\
                                              &= \comp{\varphi(i_X)}{g^*\left(\sigma^R(\comp{\retyping{g}f}{i_Y}) \right)} \enspace . \label{eq:p_chap5_2}
\end{align}
Rewriting the equality of \autoref{eq:p_chap5_2} twice in the goal \autoref{eq:proof_of_chap5_1} yields the goal
\[ g^*\left(\sigma^R(\comp{\retyping{g}f}{i_Y}) \right)\left((\varphi(i_X))(x) \right) = 
           g^*\left(\sigma^R(\comp{\retyping{g}f}{i_Y}) \right)\left((\varphi(i_X))(y) \right) \enspace ,  \]
which is true since $g^*\left(\sigma^R(\comp{\retyping{g}f}{i_Y}) \right)$ is monotone and
  $(\varphi(i_X))(x) \leq_R (\varphi(i_X))(y)$ by hypothesis.
We hence have defined a monad $\init{\Sigma}_A$ over $\TDelta{\init{S}}$.

\begin{lemma}\label{lem:useful_lemma_typed}
 \Autoref{lem:useful_lemma} generalizes to the typed setting of this chapter. 
\end{lemma}

\begin{proof}[Proof of \autoref{lem:useful_lemma_typed}]
 The proof is analogous to the proof of \autoref{lem:useful_lemma}: we apply 
  the same reasoning in the corresponding fibre.
 
\end{proof}

\noindent
\emph{--- Representing $\Sigma$ in $\init{\Sigma}_A$:}

\noindent
Any arity $s \in \Sigma$ should be represented by the module morphism $s^{\init{\Sigma}}$, i.e.\ by the representation of $s$ in $\init{\Sigma}$. 
We have to show that those representations are compatible with the preorder $A$.
Given $x\leq_A y$ in $\dom(s,\init{\Sigma})(X)$, we show (omitting the argument $X$ in $s^{\init{\Sigma}}(X)(x)$)
 \[ s^{\init{\Sigma}} (x) \quad \leq_A \quad s^{\init{\Sigma}}(y) \enspace. \]
By definition, we have to show that, for any representation $R$ with initial morphism $i = i_R : \init{\Sigma} \to R$ as before,
\[ \varphi(i_X) (s^{\init{\Sigma}} (x)) \quad \leq_R \quad \varphi(i_X) (s^{\init{\Sigma}}(y)) \enspace. \]
But these two sides are precisely the images of $x$ and $y$ under the upper--right composition of the 
diagram of \autoref{def:rel_mor_of_reps_typed} for the morphism of representations $i_R$.
By rewriting with this diagram we obtain the goal
\[ s^R \Bigl(\bigl(\dom(s)(i_R)\bigr)(x)\Bigr) \quad \leq_R \quad s^R\Bigl(\bigl(\dom(s) (i_R)\bigr)(y)\Bigr) \enspace. \]
We know that $s^R$ is monotone, thus it is sufficient to show 
\[  \bigl(\dom(s)(i_R)\bigr)(x) \quad \leq_R \quad \bigl(\dom(s) (i_R)\bigr)(y) \enspace. \]
This goal follows from  \autoref{lem:useful_lemma_typed} 
(instantiated for the classic $S$--module $\dom(s)$, cf.\ \autoref{def:arity_classic_module_typed}) 
and the hypothesis $x \leq_A y$.
We hence have established a representation --- which we call $\init{\Sigma}_A$ --- of $S$ in $\init{\Sigma}_A$.

\noindent
\emph{--- $\init{\Sigma}_A$ satisfies $A$:}

\noindent
The next step is to show that the representation $\init{\Sigma}_A$ satisfies $A$. 
Given an inequation 
     \[\alpha \leq \gamma : U \to V\] 
of $A$ with a classic $S$--module $V$, 
   we must show that for any context $X \in \TS{\init{S}}$, any $t\in \init{S}$ and any $x\in U(\init{\Sigma}_A)(X)_t$ in the domain of $\alpha$ we have 
\begin{equation*} \alpha^{\init{\Sigma}_A}(x) \quad \leq_A \quad \gamma^{\init{\Sigma}_A}(x) \enspace , 
\end{equation*}
where here and later we omit the context argument $X$ and the sort argument $t$.
By \autoref{lem:useful_lemma_typed} the goal is equivalent to

\begin{equation}
 \forall R : \Rep(\Sigma,A), \quad V(i_R) (\alpha^{\init{\Sigma}_A}(x)) \quad \leq_{V^R_X} \quad V(i_R) (\gamma^{\init{\Sigma}_A} (x)) \enspace . 
 \label{eq:sigma_a_alt_typed}
\end{equation}
Let $R$ be a representation of $(\Sigma,A)$. We continue by proving \autoref{eq:sigma_a_alt_typed} for $R$.
    \Autoref{rem:half_equation_comm} holds analogously in the typed setting of this chapter. 
     The fact that $i_R$ is the carrier of a 
   morphism of $(S,\Sigma)$--representations from $\Delta\init{\Sigma}$ to $R$  allows to rewrite the goal as
       \[ \alpha^R\bigl(U(i_R)(x)\bigr) \quad \leq_{V^R_X} \quad \gamma^R \bigr(U(i_R)(x)\bigr) \enspace , \]
  which is true since $R$ satisfies $A$.

\noindent
\emph{--- Initiality of $\init{\Sigma}_A$:}

\noindent
Given any representation $R$ of $(\Sigma,A)$, the morphism $i_R$ is monotone with respect to the orders on $\init{\Sigma}_A$ and $R$ by construction of $\leq_A$.
It is hence a morphism of representations from $\init{\Sigma}_A$ to $R$.
Uniqueness of the morphisms $i_R$ follows from its uniqueness in the category of representations of $(S,\Sigma)$, i.e.\ without inequations.
Hence $(\init{S},\init{\Sigma}_A)$ is the initial object in the category of representations of $((S,\Sigma),A)$.

\end{proof}

\begin{remark}[Iteration Principle by Initiality]\label{rem:comp_sem_iteration}
 The universal property of the language generated by a 2--signature yields an \emph{iteration principle}
  to define maps --- translations --- on this language, which are certified to be compatible with substitution and 
  reduction in the source and target languages. How does this iteration principle work? 
More precisely, what data (and proof) needs to be specified in order to define such a translation via initiality from 
a language, say, $(\init{S},\init{\Sigma}_A)$ to another language $(\init{S}',\init{\Sigma}'_{A'})$, generated by signatures $(S,\Sigma,A)$ 
and $(S',\Sigma',A')$, respectively?
The translation is a morphism --- an initial one --- in the category of representations of the signature $(S,\Sigma,A)$ 
of the source language.
It is obtained by equipping the relative monad $\init{\Sigma}'_{A'}$ underlying the target language with a representation
of the signature $(S,\Sigma,A)$. In more detail:
\begin{enumerate}
 \item we give a representation of the type signature $S$ in the set $\init{S}'$. By initiality of $\init{S}$, this yields a 
       translation $\init{S} \to \init{S}'$ of sorts.
 \item Afterwards, we specify a representation of the term signature $\Sigma$ in the monad $\init{\Sigma}'_{A'}$ by 
        defining suitable (families) of morphisms of $\init{\Sigma}'_{A'}$--modules. This yields a representation $R$
           of $(S, \Sigma)$ in the monad $\init{\Sigma}'_{A'}$.
       \newcounter{tempcounter} \setcounter{tempcounter}{\value{enumi}}
\end{enumerate}
By initiality, we obtain a morphism $f : (\init{S},\init{\Sigma}) \to R$ of representations of $(S,\Sigma)$, that is,
we obtain a translation from $(\init{S},\init{\Sigma})$ to $(\init{S}',\init{\Sigma}')$ as the colax monad morphism underlying 
the morphism $f$.
However, we have not yet ensured that the translation $f$ is compatible with the respective reduction preorders
in the source and target languages. 
\begin{enumerate}\setcounter{enumi}{\value{tempcounter}}
 \item Finally, we verify that the representation $R$ of $(S,\Sigma)$ satisfies the inequations of $A$, that is, we check
       whether, for each $\alpha \leq \gamma : U \to V \in A$, and for each context $V$, each $t\in \init{S}$ and $x \in U^R_V(t)$,
        \[    \alpha^R (x) \enspace \leq \enspace \gamma^R (x) \enspace . \]
\end{enumerate}
After verifying that $R$ satisfies the inequations of $A$, the representation $R$ is in fact a representation of $(S,\Sigma,A)$.
The initial morphism $f$ thus yields a faithful translation from $(\init{S},\init{\Sigma}_A)$ to $(\init{S}',\init{\Sigma}'_{A'})$.
\end{remark}

\begin{example}[Translation from $\PCF$ to $\LC$, \autorefs{ex:init_pcf_translation_nosem} and \ref{ex:pcf_ineqs} cont.]
Recall the translations from  $\PCF$  to the
untyped lambda calculus of \autoref{ex:init_pcf_translation_nosem}.
We might attempt to specify the same translations using the iteration operator obtained by 
\autoref{thm:init_w_ineq_typed}, where $\PCF$ is equipped with the reduction relation generated 
by the inequations of \autoref{ex:pcf_ineqs} and $\LC$ is equipped with beta reduction as in \autoref{ex:sig_ulc_prop}.
However, representing the fixedpoint operator of $\PCF$ by the lambda term $\Theta$ fails, for reasons 
explained at the end of \autoref{chap:comp_sem_formal}.

For the translation of $\PCF$ to the lambda calculus mapping the fixedpoint operator of $\PCF$ to the Turing 
fixedpoint combinator, we have formalized its specification via initiality in the proof assistant Coq \cite{coq}.
After constructing the category of representations of \PCF, we equip the untyped lambda calculus 
with a representations of \PCF, representing the arity $\mathbf{Fix}$ 
by 
the Turing operator ${\Theta}$.
The formalization is explained in \autoref{chap:comp_sem_formal}.
Note that the translation is given by a \textsf{Coq} function and hence executable.
\end{example}

\part{Computer Implementation}\label{part:impl}

\chapter{Formalizing Category Theory in \texorpdfstring{$\mathsf{Coq}$}{Coq}}\label{chap:cats_in_coq}

In this chapter we describe our computer formalization of general concepts of category theory as 
presented in \autoref{chap:cat}.
We start with a brief introduction to our favourite theorem prover \textsf{Coq} \cite{coq}.  
We then describe the challenges one encounters when one attempts to formalize algebraic structures in general,
and category theory in particular, in \textsf{Coq}.
Finally we present our implementation of monads and modules over monads as well as their relative counterparts.
Throughout the chapter we explain features of \textsf{Coq} when we first encounter them.

\section{About the Proof Assistant \texorpdfstring{$\mathsf{Coq}$}{Coq}}

The proof assistant \textsf{Coq} \cite{coq} is an implementation of the 
\emph{Calculus of Inductive Constructions (CIC)} which itself is a constructive \emph{type theory}. 
Its objects are \emph{terms} built according to a grammar 
(see the \textsf{Coq} manual \cite{CoqManualV83} for the term forming rules). 
Each valid term has its associated \emph{type} which is itself a term and which is automatically computed by \textsf{Coq}.
In \textsf{Coq} a typing judgment is written \lstinline!t : T!, meaning that $t$ is a term of type $T$. 
Typing judgments are for example \lstinline!1 : Nat! and \lstinline!plus : Nat -> Nat -> Nat!. 
Function application is simply denoted by a blank, i.e.\ we write \lstinline!f x! for $f(x)$.

The CIC also treats propositions as types via the \emph{Curry--Howard isomorphism}, hence a proof of a proposition $P$ is in fact a term of type $P$.  
Accordingly, a proof of a proposition $A \Rightarrow B$ is a function $A \to B$, i.e.\ a term which associates a proof of $B$ to any proof of $A$. 
As an example, the function $\id:P\to P$ is a proof of the tautology $P \Rightarrow P$. 
In the proof assistant \textsf{Coq} a user hence proves a proposition \lstinline!P! by providing a term \lstinline!p! of 
type \lstinline!P!. \textsf{Coq} checks the validity of the proof \lstinline!p! by checking whether \lstinline!p : P!.

\textsf{Coq} comes with extensive support to \emph{interactively build} the proof terms of a given proposition. 
In \emph{proof mode} so-called \emph{tactics} 
help the users to reduce the proposition they want to prove --- the \emph{goal} --- into one or more simpler subgoals, 
until reaching trivial subgoals which can be solved directly.

Particular concepts of \textsf{Coq} such as records and type classes, setoids, implicit arguments and coercions are explained 
in a call--by--need fashion in the course of the thesis.
One important feature is the \lstinline!Section! mechanism (cf.\ also the \textsf{Coq} manual \cite{CoqManualV83}). 
Parameters and hypotheses declared in a section automatically get 
discharged when closing the section. Constants of the section then become functions,
depending on an argument of the type of the parameter they mentioned.
We illustrate this concept by means of a small example; consider the following \textsf{Coq} declarations:
\begin{lstlisting}
Section def_double.
Variable n : nat.
Definition double : nat := 2 * n.
Check double.
 double
      : nat
Print Assumptions double.
 Section Variables:
 n : nat
\end{lstlisting}
Inside the section \lstinline!def_double!, the constant \lstinline!double! is of type \lstinline!nat!, 
 as we verify using the \lstinline!Check! command.
 Furthermore, it depends on the section variable \lstinline!n : nat!
declared using the \lstinline!Variable! vernacular command. 
After closing the section, the constant \lstinline!double! is a closed term of function type:
\begin{lstlisting}
End def_double.
Check double.
 double
      : nat -> nat
Print Assumptions double.
 Closed under the global context
Eval compute in double 4.
      = 8 : nat
\end{lstlisting}
In our formalization, we use the \lstinline!Section! mechanism extensively.
When presenting a definition depending on section variables, 
we either give a slightly modified, fully discharged version of the statement --- compared to the actual \textsf{Coq} code ---, 
or mention the section variables informally in the text.

\section{Formalizing Algebraic Structures}
\label{sec:alg_structure}

An algebraic structure typically is given by some data --- i.e. sets and operations on them ---
that satisfies given properties.
For instance, a group is given by a set, together with a binary associative 
multiplication and a unit element, such that any 
element of the set has a multiplicative inverse.
Such algebraic structures are defined in a \emph{hierarchic} way:
for instance, any  \emph{group} is a particular \emph{monoid} that admits inverses. Thus any group is a monoid. 
The other way round, given a group, if multiplication is commutative, then this group is actually abelian,
and the group is an element of the class of abelian groups.

This hierarchic structure poses a major problem in the formalization of classic mathematics, and
the question of how to formalize algebraic structures is a subject of active research. 
Put simply, the main question is how tightly one should pack together the data and properties 
of an algebraic structure.
If data and properties are packed together tightly, then operations and properties can easily be associated 
to their respective underlying sets, and this allows for overloading notation and coercions.
On the other hand, this tight packing makes it difficult to ``add'' data and properties to an instance of 
an algebraic structure, e.g., to consider a group, for which one has proved commutativity of multiplication, as an 
abelian group.
We do not attempt to propose a solution to the challenge of how to formalize algebraic structures. 
However, we need to choose from the existing solutions.
In \textsf{Coq} there are basically two possible answers: 
\emph{records}, employed e.g., by Garillot et al.~\cite{packing}, correspond to a tight packing of algebraic structure, 
whereas \emph{type classes} \cite{sozeau.Coq/classes/fctc}, as used 
by Spitters and v.~d.~Weegen \cite{DBLP:journals/mscs/SpittersW11}, correspond to a rather loose packaging.

\textsf{Coq} records are implemented as an inductive data type with one constructor,
However, use of the vernacular command \lstinline!Record! (instead of plain \lstinline!Inductive!)
allows the optional automatic definition of the projection functions to the 
constructor arguments -- the ``fields'' of the record.
Additionally, one can declare those projections as \emph{coercions}, i.e.\ they can be inserted 
automatically by \textsf{Coq}, and left out in printing.
As an example for a coercion, it allows us to write \lstinline!c : C! for an object \lstinline!c!
of a category \lstinline!C!. Here the projection from the category type to the type of objects of 
a category is declared as a coercion (cf.\ \autoref{lst:Cat}). 
This is the formal counterpart to the convention introduced in the informal definition of categories 
in \autoref{def:category}.
Another example of coercion 
is given in the definition of monad (cf.\ \autoref{def:monad_mu}), where it corresponds precisely to 
the there--mentioned \emph{abuse of notation}.
Finally, an example of coercion that is \emph{not} given by a projection is given by the 
tautological module, i.e.\ the map that associates to any monad $P$ the tautological $P$--module (cf.\ \autoref{def:taut_mod}).

Type classes are implemented as records. Similarly to the difference between records and inductive types,
type classes are distinguished from records only in that some meta--theoretic features are automatically enabled when 
declaring an algebraic structure as a class rather than a record.
For details we refer to Sozeau's article about the implementation of type classes \cite{sozeau.Coq/classes/fctc}
and Spitters and v.\ d.\ Weegen's work \cite{DBLP:journals/mscs/SpittersW11}.
Type classes differ from records in their usage, more specifically, in which data one declares as a \emph{parameter} of the
structure and which one declares as a \emph{field}.
The following example, borrowed from \cite{DBLP:journals/mscs/SpittersW11}, illustrates the different uses; we give 
two definitions of the algebraic structure of reflexive relation, one in terms of classes and one in terms of records:
\begin{lstlisting}
Class Reflexive {A : Type}{R : relation A} := 
   reflexive : forall a, R a a.

Record Reflexive := {
  carrier : Type ;
  car_rel : relation carrier ;
  rel_refl : forall a, car_rel a a }.
\end{lstlisting}

\noindent
Our main interest in classes comes from the fact that by using classes many of the arguments of projections 
are automatically declared as \emph{implicit arguments}. This leads to more readable code since arguments that can be 
deduced by \textsf{Coq} do not have to be written down. Thus it corresponds precisely to 
the mathematical practice of not mentioning arguments (e.g.\ indices) which ``are clear from the context''. An instance of 
this behaviour can be seen in the definition of category in \autoref{def:category}, where we omit 
the 3 ``object'' arguments --- written as an index --- of the dependent composition of morphisms.
In particular, the structure argument of the projection, that is, the argument specifying the instance whose field 
we want to access, is implicit and deduced automatically by \textsf{Coq}. This mechanism allows for \emph{overloading},
a prime example being the implementation of setoids (cf.\ \autoref{sec:setoids}) as a type class; 
in a term ``\lstinline!a == b!'' denoting setoidal equality, \textsf{Coq} automatically finds 
the correct setoid instance from the type of \lstinline!a! and \lstinline!b!%
\footnote{Beware! In case several instances of setoid have been declared on one and the same \textsf{Coq} type,
   the instance chosen by \textsf{Coq} might not be the one intended by the user. 
This is the main reason for Spitters and v.\ d.\ Weegen to restrict the 
     fields of type classes to \emph{propositions}.}.

We decide to define our algebraic structures in terms of type classes first, 
and bundle the class together with some of the class parameters in a record afterwards,
as is shown in the following example for the type class \lstinline!Cat_struct! (cf.\ \autoref{lst:cat}) 
and the bundling record \lstinline!Cat!.  
\begin{form}[Bundling a type class into a record]\label{lst:Cat}
\begin{lstlisting}
Record Cat := {
     obj:> Type ;
     mor: obj -> obj -> Type ;
     cat_struct:> Cat_struct mor }.
\end{lstlisting}
\end{form}

\noindent
This duplication of \textsf{Coq} definitions is a burden rather than a feature. We still proceed like this for the following reasons:
in our case the use of records is unavoidable since we want to have a \textsf{Coq} \emph{type} of categories, 
of functors between two given categories, etc. 
This is necessary when those objects --- functors, for instance --- shall themselves be the objects or morphisms of some category,
as is clear from \autoref{lst:Cat}.
However, we profit from aforementioned features of type classes, notably automatic declaration of some arguments as \emph{implicit} and
the resulting overloading.

Apart from that, we do not employ any feature that makes the use of type classes comfortable --- 
such as maximally inserted arguments, operational classes, etc. --- 
since we usually work with the bundled versions.
Readers who are interested in how to use type classes in \textsf{Coq} properly, 
are advised to take a look at Spitters and v.\ d.\ Weegen's paper \cite{DBLP:journals/mscs/SpittersW11}.
There, the authors employ the mentioned bundling of type classes in records only when necessary, 
e.g., when the considered structures are to be the objects or morphisms of some category.

\section{Formalizing Categories}
\label{sec:about_cat_formal}

As seen in \autoref{sec:categories_functors}, there are two definitions of category (\autoref{def:category}, \autoref{rem:def_cat_alt}), 
which are equivalent from 
the point of view of a mathematician.
When implementing category theory in dependent type theory, however, one needs to 
choose the one or the other definition.
This section explains how we implement categories in \textsf{Coq} and some consequences of 
our design choice.

\subsection{Which Definition to Formalize --- Dependent Hom--Sets?} \label{subsec:dep_hom_types}

The main difference concerning formalization between these two definitions is that
of \emph{composability of morphisms}.
The first definition can be implemented directly only in type theories featuring \emph{dependent types},
such as the Calculus of Inductive Constructions (CIC). The ambient type system, i.e.\ the prover,
then takes care of composability -- terms with compositions of non--composable morphisms are 
rejected as ill--typed terms.

The second definition can be implemented also in provers with a simpler type system such as the 
family of HOL theorem provers. However, since those (as well as the CIC) are theories where functions are total,
one is left with the question of how to implement composition.
Composition might then be implemented either as a functional relation or as a total function
about which nothing is known (deducible) on non--composable morphisms.
The second possibility is implemented in O'Keefe's library \cite{OKeefe}.
There the author also gives an overview of available formalizations
in different theorem provers with particular attention to the choice of the 
definition of category.

In our favourite prover \textsf{Coq}, both definitions have been employed in significant
developments: the second definition is used in Simpson's construction of the Gabriel--Zisman
localization \cite{Simpson_localization}, whereas Huet and Sa\"ibi's \textsf{ConCaT} \cite{concat}
uses type families of morphisms as in \autoref{def:category}.
To our knowledge there is no library in a prover with dependent types such as \textsf{Coq} or \textsc{NuPrl} \cite{nuprl}
 which develops and compares
both definitions with respect to provability, readability, and other criteria.

We decided to construct our library using type families of morphisms. 
In this way the proof of composability of two morphisms is done by \textsf{Coq}
type computation automatically.
As a consequence, we are sometimes obliged to insert trivial isomorphisms between equal --- but not convertible ---
objects of some category, in order to make compositions typecheck.
For an example see \autoref{subsec:sts_formal_reps}.

\textsf{Coq}'s \emph{implicit argument} mechanism allows us 
to omit the deducible arguments, as we do in \autoref{def:category}
for the ``object arguments'' $c,d$ and $e$ of the composition.
Together with the possibility to define infix notations,
this brings our formal syntax close to informal mathematical syntax.

\subsection{Setoidal Equality on Morphisms}\label{sec:setoid_eq_on_mor}

All the properties of a category $\C$ concern equality of two parallel morphisms, i.e. morphisms with same source and target.
In \textsf{Coq} there is a polymorphic equality, called \emph{Leibniz equality}, readily available for any type.
However, this equality actually denotes \emph{syntactic equality}, which already in the case of maps 
does not coincide with the ``mathematical'' equality on maps -- given by pointwise equality -- that we would rather consider.
With the use of axioms --- for the mentioned example of maps the axiom \lstinline!functional_extensionality! from 
the \textsf{Coq} standard library --- 
one can often deduce Leibniz equality from the ``mathematical equality'' in question.
But this easily gets cumbersome, in particular when the morphisms --- as will be in our case --- 
are sophisticated algebraic structures composed of a lot of data and properties.
Instead, we require any collection of morphisms $\C(c,d)$ for objects $c$ and $d$ of $\C$
to be equipped with an equivalence relation, which plays the r\^ole of 
equality on this collection. In the \textsf{Coq} standard library 
equivalence relations are implemented as a type class with the 
underlying type as a parameter \lstinline!A!, 
and the relation as well as a proof of it being an equivalence as fields:
\begin{form}[Setoid Type Class]\label{lst:setoid}
\begin{lstlisting}
Class Setoid A := {
  equiv : relation A ;
  setoid_equiv :> Equivalence equiv }.
\end{lstlisting}
\end{form}

\noindent
Setoids as morphisms of a category have been used by Aczel \cite{aczel_galois} in LEGO (there a setoid is simply called ``set'')
and Huet and Sa\"ibi (HS) \cite{concat} in \textsf{Coq}.
HS's setoids are implemented as records of which the underlying type is a component
instead of a parameter. This choice makes it necessary to duplicate 
the definitions of setoids and categories in order to make them available with a ``higher'' type
\footnote{In HS's \textsc{ConCaT}, a type \lstinline!T! that is defined after the type of setoids cannot be the carrier of a setoid
 itself. As a remedy, HS define a type \lstinline!Setoid'! isomorphic to \lstinline!Setoid!
  \emph{after} the definition of \lstinline!T!. The type of \lstinline!Setoid'! now being higher 
  than that of \lstinline!T!, one can define a term of type \lstinline!Setoid'! whose carrier is \lstinline!T!.}.

\subsection{\textsf{Coq} Setoids and Setoid Morphisms}\label{sec:setoids}

Setoids in \textsf{Coq} are implemented as a type class (cf.\ \autoref{lst:setoid})
with a type parameter \lstinline!A!
and a relation on \lstinline!A! as well as a proof of this relation being an equivalence
as fields. For the term \lstinline!equiv a b! the infix notation ``\lstinline!a == b!''
is introduced. The instance argument of \lstinline!equiv! is implicit (cf.\ \autoref{sec:alg_structure}).

A \emph{morphism of setoids} between setoids \lstinline!A! and 
\lstinline!B! is a \textsf{Coq} function on the underlying types
which is compatible with the setoid relations on the source and target. That is, 
it maps equivalent terms of \lstinline!A! to equivalent terms of \lstinline!B!, or, in mathematical notation,
\begin{equation} 
        a \equiv_A a' \quad \text{implies} \quad f(a) \equiv_B f(a') \enspace . \label{eq:proper} 
\end{equation}
In the \textsf{Coq} standard library such morphisms are implemented as a type class
\begin{lstlisting}
Class Proper {A} (R : relation A) (m : A) : Prop :=
  proper_prf : R m m.
\end{lstlisting}
where the type \lstinline!A! is instantiated with a function type \lstinline!A -> B! and the 
relation \lstinline!R! on \lstinline!A -> B! is instantiated with pointwise 
compatibility\footnote{
In the \textsf{Coq} standard library the definition of \lstinline!respectful! is actually a special case of 
a more general definition of a heterogeneous relation \lstinline!respectful_hetero!.
}:
\begin{lstlisting}
Definition respectful (A B : Type) (R : relation A) (R' : relation B) :=
  fun f g => forall x y, R x y -> R' (f x) (g y).
Notation " R ==> R' " := (@respectful _ _ (R%signature) (R'%signature))
    (right associativity, at level 55) : signature_scope.
\end{lstlisting}

\noindent
Given \textsf{Coq} types \lstinline!A! and \lstinline!B! equipped with relations 
\lstinline!R : relation A! and \lstinline!R' : relation B!, respectively, and a map
\lstinline!f : A -> B!, 
the statement \lstinline!Proper (R ==> R') f! --- replacing aforementioned notation --- really means
\begin{lstlisting}
Proper (respectful R R') f , 
\end{lstlisting}
which is the same as \lstinline!respectful R R' f f!,
which itself just means
\begin{lstlisting}
forall x y, R x y -> R' (f x) (f y) .
\end{lstlisting}
This is indeed the statement of \autoref{eq:proper} in the special case that \lstinline!R!
and \lstinline!R'! are \emph{equivalence} relations.

For any component of an algebraic structure that is a map defined on setoids,
we add a condition of the form \lstinline!Proper...! in the formalization.
Examples are the categorical composition (\autoref{lst:cat}) and the 
monadic substitution map (\autoref{code:endomonad}).
Rewriting related terms under those equivalence relations is tightly integrated in the \lstinline!rewrite!
tactic of \textsf{Coq}.

\subsection{\textsf{Coq} Implementation of Categories}\label{sec:coq_impl_cat}
As a result of the aforementioned considerations, we adopt Sozeau's definition of category \cite{sozeau.Coq/classes/fctc}, 
which itself is a variant of the definition given by Huet and Sa\"ibi \cite{concat}.
Unlike Huet and Sa\"ibi's contribution \textsf{ConCaT}, Sozeau's approach uses \emph{type classes} for algebraic structures and thus 
avoids the universe inconsistencies that have to be circumvented by duplicating definitions in \textsf{ConCaT} (cf.\ \autoref{sec:setoid_eq_on_mor}).
More precisely, in Sozeau's imple\-men\-ta\-tion of setoids (cf.\ \autoref{lst:setoid}), the carrier type is a \emph{parameter}
instead of a \emph{field} as in Huet and Sa\"ibi's.
Our type class of categories is parametrized by a type of objects and a type family of morphisms, whose parameters are the source and target objects.
\begin{form}[Type Class of Categories]\label{lst:cat}
\begin{lstlisting}
Class Cat_struct (obj : Type)(mor : obj -> obj -> Type) := {
  mor_oid :> forall a b, Setoid (mor a b) ;
  id : forall a, mor a a ;
  comp : forall {a b c}, mor a b -> mor b c -> mor a c ;
  comp_oid :> forall a b c, Proper (equiv ==> equiv ==> equiv) (@comp a b c) ;
  id_r : forall a b (f: mor a b), comp f (id b) == f ;
  id_l : forall a b (f: mor a b), comp (id a) f == f ;
  assoc : forall a b c d (f: mor a b) (g:mor b c) (h: mor c d),
      comp (comp f g) h == comp f (comp g h) }.
\end{lstlisting}
\end{form}

\noindent
Compared to \autoref{def:category} there are two additional fields:
the field 
\begin{lstlisting}
mor_oid :> forall a b, Setoid (mor a b)
\end{lstlisting}
equips each collection of morphisms \lstinline!mor a b! with
a custom equivalence relation.
The field \lstinline!comp_oid! states that the composition \lstinline!comp! of the category is compatible with the setoidal structure on the morphisms
given by the field \lstinline!mor_oid! as explained in \autoref{sec:setoids}.
We recall that setoidal equality is overloaded and denoted by the infix symbol `\lstinline!==!'. 
In the following we write `\lstinline!a ---> b!' for \lstinline!mor a b! and \lstinline!f;;g! 
for the composition of morphisms \lstinline!f : a ---> b! and \lstinline!g : b ---> c!
\footnote{\textsf{Coq} deduces and inserts the missing ``object'' arguments \lstinline!a!, \lstinline!b! and \lstinline!c! of the 
composition automatically from the type of the morphisms. 
For this reason those arguments are called \emph{implicit} (cf.\ \autoref{sec:alg_structure}).}.

\subsection{The Categories of Interest}

The category $\SET$ is formalized in \coq~as the category of \coq~types.
By using \textsf{Coq} types and functions as objects and morphisms of 
this category, we obtain executable \textsf{Coq} substitution and translation maps, 
cf.\ \autoref{code:translation_pcf_ulc_example}.
\begin{form}[$\SET$, \autoref{def:SET}]\label{code:SET}
\begin{lstlisting}
Program Instance TYPE_struct : Cat_struct (fun a b => a -> b) := {
   mor_oid a b := TYPE_hom_oid a b ;
   id a := fun x : a => x ;
   comp a b c := fun (f : a -> b) (g : b -> c) => fun x => g (f x) }.
\end{lstlisting}
\end{form}

\noindent
In this instance declaration, the fields \lstinline!id_r!, \lstinline!id_l! and \lstinline!assoc!
are filled automatically by the \lstinline!Program! framework, cf.\ \autoref{subsec:program}.
For a set $T$, the category $\TS{T}$ of \autoref{def:TST} has, as objects, 
\textsf{Coq} type families indexed by $T$.
Morphisms between two such objects are suitable families of \textsf{Coq} functions :
\begin{form}[Category of Type Families] \label{code:cat_type_families}
\begin{lstlisting}
Program Instance ITYPE_struct : Cat_struct (obj := T -> Type)
          (fun A B => forall t, A t -> B t) := {
  mor_oid := INDEXED_TYPE_oid ;  (* pointwise equal. in each component *)
  comp A B C f g := fun t => fun x => g t (f t x) ;
  id A := fun t x => x }.
\end{lstlisting}
\end{form}

\noindent
We also employ categories whose objects are families of \emph{preordered} sets (i.e.\ \textsf{Coq} types), 
and morphisms are monotone \textsf{Coq} functions. We omit their definition.

\subsection{Initial Objects}

Initial objects have been defined in \autoref{def:init_object}.
Formally, we implement the initiality structure as a type class, parametrized by categories.
Its fields are given by an object \lstinline!Init! of the category, a map \lstinline!InitMor! mapping
each object \lstinline!a! of the category to a morphism from \lstinline!Init! to \lstinline!a! 
and a proposition stating that \lstinline!InitMor a! is unique for any object \lstinline!a!.
\begin{lstlisting}
Class Initial (C : Cat) := {
  Init : C ;
  InitMor: forall a : C, Init ---> a ;
  InitMorUnique: forall a (f : Init ---> a), f == InitMor a }.
\end{lstlisting}
Note that the initial morphism is \emph{not} given by an existential statement of the form $\forall a, \exists f : \ldots$,
or, in \textsf{Coq} terms, using an \lstinline!exists! statement.
This is because the \textsf{Coq} existential lies in \lstinline!Prop! and hence does not allow
for elimination --- witness extraction --- when building anything but proofs. 

\subsection{\texorpdfstring{Interlude on the \lstinline!Program! feature}{Interlude on the Program feature}}\label{subsec:program}

The \lstinline!Program Instance! vernacular allows to fill in fields of an instance of a type class by means of tactics.
Indeed, when omitting a field in an instance declaration --- such as the proofs of associativity \lstinline!assoc! 
and left and right identity \lstinline!id_l! and \lstinline!id_r! in \autoref{code:SET}.
--- the \lstinline!Program! framework creates an \emph{obligation}
for each missing field, making use of the information that the user provided for the other fields.
As an example, the obligation created for the field \lstinline!assoc! of the previous example is to prove associativity for the composition defined by
\begin{lstlisting}
comp f g := fun x => g (f x) .
\end{lstlisting}
It then tries to solve the resulting obligations using the tactic that the user has specified via the \lstinline!Obligation Tactic!
command. In case the automatic resolution of the obligation fails, the user can enter the interactive proof mode finish the proof manually.

It is technically possible to fill in both \emph{data} and \emph{proof} fields automatically via the \lstinline!Program! framework. 
However, in order to avoid the automatic inference of data which we cannot control,
we always specify \emph{data} directly as is done in \autoref{code:SET}, and rely on automation via \lstinline!Program! only for \emph{proofs}.

\subsection{Retyping and Option}

We present the formalization of some commonly used definitions.
The reader might want to skip this section and come back to it when
being pointed back here.

We define retyping (cf.\ \autoref{def:retyping_functor}) 
 for families of sets and preordered sets through an inductive type:
\begin{form}[Retyping Functor, \autoref{def:retyping_functor}]\label{code:retype}
 \begin{lstlisting}
Variables (T T' : Type) (g : T -> T').
Inductive retype (V : ITYPE T) : ITYPE T' :=
  | ctype : forall t, V t -> retype V (g t).
\end{lstlisting}
The constructor \lstinline!ctype : forall V t, V t -> retype V (g t)! is the carrier of the natural 
transformation of the same name of \autoref{def:retyping_functor}.
Given a family $V$ of \emph{preordered} sets, the preorder on $\retyping{g}V$ is induced by the preorder
on $V$:
\begin{lstlisting}
Inductive retype_ord (V : IPO T) : forall u, relation (retype g V (u)) :=
  | ctype_ord : forall t (x y : V t), x <<< y
            -> retype_ord (ctype g x) (ctype g y).
\end{lstlisting}
\end{form}

\noindent 
The option data type is implemented in the module \lstinline!Coq.Init.Datatypes! of the \coq~standard library.
\begin{form}[Option, \autoref{sec:endo_deriv}]\label{code:option}
 \begin{lstlisting}
Inductive option (A:Type) : Type :=
  | Some : A -> option A
  | None : option A.  
 \end{lstlisting}
\end{form}

\noindent
We can turn the map $A \mapsto A':= A+\{*\}$ into a monad as follows:
\begin{form}[Option Monad]\label{code:option_monad}
 \begin{lstlisting}
Program Instance option_monad_s :
  Monad_struct (C:=TYPE) (option) := {
  weta := @Some ;
  kleisli a b f := fun t => match t with
                   | Some y => f y
                   | None => None
                   end }.
 \end{lstlisting}
\end{form}

\noindent
There is also a typed variant of the option data type:
\begin{form}[Typed Option, \autoref{sec:endo_deriv}]\label{code:typed_option}
\begin{lstlisting}
Inductive opt (u : T) (V : ITYPE T) : ITYPE T :=
  | some : forall t : T, V t -> opt u V t
  | none : opt u V u.
\end{lstlisting}
\end{form}

\noindent
Given a list \lstinline!l! over \lstinline!T!, 
the multiple addition of variables with (object language) types according to \lstinline!l! to a set of variables \lstinline!V! 
is defined by recursion over \lstinline!l!. 
For this enriched set of variables we introduce the notation \lstinline!V ** l!.

\begin{lstlisting}
Fixpoint pow (l : [T]) (V : ITYPE T) : ITYPE T :=
  match l with
  | nil => V
  | b::bs => pow bs (opt b V)
  end.
\end{lstlisting}

\noindent
The map \lstinline!opt! is functorial, as is the multiple addition of variables \lstinline!pow!. 
On morphisms the \lstinline!pow! operation is defined by recursively applying the functoriality of \lstinline!opt!, 
where for the latter we use a special notation with a prefixed hat.
\begin{lstlisting}
Fixpoint pow_map (l : [T]) V W (f : V ---> W) : 
         V ** l ---> W ** l :=
  match l return V ** l ---> W ** l with
  | nil => f
  | b::bs => pow_map (^f)
  end.
\end{lstlisting}

\section{Monads, Modules and their Morphisms}

Implementing monads leaves one with the choice between the definitions given in 
\autoref{def:monad_mu} and \autoref{def:endomonad}.
The first definition, while preferred by category theorists, has the inconvenience that defining instances 
of monads such as monadic syntax would require proving commutativity of the square, 
e.g., using multiple induction for monadic syntax.
Furthermore the second definition is well--known in the programming community for its use in \textsc{Haskell}.
We thus decide to implement the definition of \autoref{def:endomonad}.
Since we are mainly interested in its instances over the category of (families of) sets, we can define convenient 
infix notation for its substitution.

Formally, a monad (cf.\ \autoref{def:endomonad}) is a type class parametrized by a category \lstinline!C! 
and a function \lstinline!F : C -> C! on the objects of \lstinline!C!:

\begin{form}[Monad, \autoref{def:endomonad}] \label{code:endomonad}
 \begin{lstlisting}
Class Monad_struct (C : Cat) (F : C -> C) := {
  weta : forall c, c ---> (F c);
  kleisli : forall a b, (a ---> F b) -> (F a ---> F b);
  kleisli_oid :> forall a b, Proper (equiv ==> equiv) (kleisli (a:=a) (b:=b));
  eta_kl : forall a b (f : a ---> F b), weta a ;; kleisli f == f;
  kl_eta : forall a, kleisli (weta a) == id _;
  dist : forall a b c (f : a ---> F b) (g : b ---> F c),
      kleisli f ;; kleisli g == kleisli (f ;; kleisli g) }.
\end{lstlisting}
\end{form}

\noindent
Monads admit a functorial structure:
\begin{form}[Functoriality for Monads, \autoref{rem:monad_kleisli_funct}]\label{code:endomonad_functor}
 \begin{lstlisting}
Variable T : Monad C.
Definition lift : forall a b (f: a ---> b), T a ---> T b :=
       fun a b f => kleisli (f ;; weta b).
 \end{lstlisting}
\end{form}

\noindent
We present two different implementations of monad morphisms. 
The more general definition implements colax monad morphisms as defined in \autoref{def:colax_mon_mor_alt}:

\begin{form}[Colax Monad Morphism, \autoref{def:colax_mon_mor_alt}]
  \label{code:colax_mon_mor_alt}
 \begin{lstlisting}
  Class colax_Monad_Hom_struct (Tau : forall c, F (P c) ---> Q (F c)) := {
  gen_monad_hom_kl : forall c d (f : c ---> P d),
       #F (kleisli f) ;; Tau _ ==
          Tau _ ;; (kleisli (#F f ;; Tau _ )) ;
  gen_monad_hom_weta : forall c : C,
       #F (weta c) ;; Tau _ == weta _ }.
 \end{lstlisting}
\end{form}

\noindent
When working exclusively with a special case of a more general definition, it is 
more convenient to implement this special case as a separate definition:
for two monads $P$ and $Q$ over the same category $C$, a \emph{simple morphism of monads} 
 --- as used in \autoref{sec:sts_ju} --- 
is given by a 
family of morphisms $\tau_c\in\C(Pc,Qc)$ 
that is compatible with the monadic structure:

\begin{form}[Simple Monad Morphism, \autoref{def:monad_hom_simpl}] \label{code:monad_hom_simpl}
\begin{lstlisting}
Class Monad_Hom_struct (Tau: forall c, P c ---> Q c) := {
  monad_hom_kl: forall c d (f: c ---> P d),
      kleisli f ;; Tau d == Tau c ;; kleisli (f ;; Tau d) ;
  monad_hom_weta: forall c: C, weta c ;; Tau c == weta c }.
\end{lstlisting}
\end{form}

\noindent
It follows from these commutativity properties that the family $\tau$ is a natural transformation between the functors induced by the monads $P$ and $Q$.
Given a monad $P$ over $\C$, a $P$--module with codomain $D$ is formalized as follows:

\begin{form}[Module, \autoref{def:endo_module}] \label{code:endo_module}
\begin{lstlisting}
Variable P : Monad C.
Class Module_struct (M : C -> D) := {
  mkleisli: forall c d, (c ---> P d) -> (M c ---> M d);
  mkleisli_oid :> forall c d, Proper (equiv ==> equiv) (mkleisli (c:=c)(d:=d));
  mkl_weta: forall c, mkleisli (weta c) == id _ ;
  mkl_mkl: forall c d e (f : c ---> P d) (g : d ---> P e),
      mkleisli f ;; mkleisli g == mkleisli (f ;; kleisli g) }.
\end{lstlisting}
\end{form}

\noindent
For two modules $S$ and $T$ with codomain $\D$ over a monad $P$ as above, 
a module morphism from $S$ to $T$ is given by a family of maps, indexed by the objects of $\C$,
commuting with module substitution:

\begin{form}[Module Morphism, \autoref{def:endo_mod_hom}] \label{code:endo_mod_hom}
\begin{lstlisting} 
Class Module_Hom_struct (N: forall x, S x ---> T x) := {
  mod_hom_mkl: forall c d (f: c ---> P d),
        mkleisli f ;; N _ == N _ ;; mkleisli f }.
\end{lstlisting}
\end{form}

\section{Relative Monads, Formalized}

As opposed to (plain) monads, we have only one definition of relative monads
available.
The implementation of this definition in \textsf{Coq} is similar to that of monads (cf.\ \autoref{code:endomonad}).
Given a functor $F : \C \to \D$, a relative monad is given by a map $T : \C\to\D$ on 
the objects of the categories involved, and data analogous to that of a monad:
\begin{form}[Relative Monad, \autoref{def:relative_monad}] 
   \label{code:relative_monad}
 \begin{lstlisting}
Variables C D : Cat.
Variable F : Functor C D.
Class RMonad_struct (T : C -> D) := {
  rweta: forall c : C,  F c ---> T c ;
  rkleisli: forall a b : C, F a ---> T b -> T a ---> T b ;
  rkleisli_oid:> forall a b, Proper (equiv ==> equiv) (rkleisli (a:=a) (b:=b)) ;
  reta_kl : forall a b: obC, forall f : F a ---> T b, rweta a ;; rkleisli f == f ;
  rkl_eta : forall a, rkleisli (rweta a) == id _ ;
  rdist: forall a b c (f : F a ---> T b) (g : F b ---> T c),
           rkleisli f ;; rkleisli g == rkleisli (f ;; rkleisli g) }.
 \end{lstlisting}
\end{form}

\noindent
Analogously to monads we define functoriality for a given relative monad \lstinline!P!:
\begin{form}[Functoriality for Relative Monads, \autoref{rem:rel_monad_functorial}]
 \label{code:rel_monad_functorial}
\begin{lstlisting}
Variable P : RMonad C.
Definition rlift : forall a b (f : a ---> b), P a ---> P b :=
       fun a b f => rkleisli (#F f ;; rweta b). 
\end{lstlisting}
\end{form}

\noindent
In the following we consider morphisms of relative monads in varying generality: 
one definition (\autoref{def:simp_rel_mon_mor}) is analogous to the simple morphisms of monads 
(cf.\ \autoref{code:monad_hom_simpl}),
another implements the colax version of \autoref{def:colax_rel_mon_mor}.
For the statement of the second, general, definition, we place ourselves in the 
 environment given in \autoref{def:colax_rel_mon_mor}. In short, we have 
   a natural transformation $N : F'G \Rightarrow G'F : \C \to \D'$.
\begin{form}[Colax Morphism of Relative Monads, \autoref{def:colax_rel_mon_mor}]
  \label{code:colax_rel_mon_mor}
\begin{lstlisting}
Variable N : NT (CompF G F') (CompF F G').
Class colax_RMonad_Hom_struct (tau: forall c : C,  G' (P c) ---> Q (G c)):={
  gen_rmonad_hom_rweta : forall c : C,
    N _ ;; #G' (rweta c) ;; tau c == rweta (G c) ;
  gen_rmonad_hom_rkl : forall (c d : C) (f : F c ---> P d),
    #G' (rkleisli f) ;; tau d == tau c ;; rkleisli (a:=G c) (N c ;; #G' f ;; tau _ ) }.
 \end{lstlisting}

\end{form}

%

\noindent
A \emph{module} $M$ over a relative monad $P$ (on a functor $F$) is given by data similar to that of a module over a monad,
except for the insertion of applications of $F$ where necessary.

\begin{form}[Module over a Relative Monad, \autoref{def:rmodule}]
 \label{code:rmodule}
 \begin{lstlisting}
Class RModule_struct (M : C -> E) := {
  rmkleisli: forall c d (f : F c ---> P d), M c ---> M d ;
  rmkleisli_oid :> forall c d, Proper (equiv ==> equiv) (rmkleisli (c:=c)(d:=d)) ;
  rmkl_rweta: forall c : C, rmkleisli (rweta c) == id (M c) ;
  rmkl_rmkl: forall c d e (f : F c ---> P d) (g : F d ---> P e),
         rmkleisli f ;; rmkleisli g == rmkleisli (f ;; rkleisli g) }.
 \end{lstlisting}
\end{form}

\noindent
Given two modules $M$ and $N$ with codomain $\D$ over a relative monad $P$, a module morphism from $M$ to $N$
is given by a collection of maps $(S_c: Mc \to Nc)_{c\in \C}$ commuting with module substitution:
\begin{form}[Morphism of Relative Modules, \autoref{def:rel_mod_mor}]\label{code:rel_mod_mor}
 \begin{lstlisting}
Variables M N : RModule P D.
Class RModule_Hom_struct (S : forall c : C, M c ---> N c) := {
  rmod_hom_rmkl: forall c d (f : F c ---> P d),
        rmkleisli f ;; S d == S c ;; rmkleisli f }.
 \end{lstlisting}
\end{form}

\chapter{Formalization of Zsid\'o's theorem}\label{chap:sts_formal}

In this chapter we describe the formalization in the proof assistant \textsf{Coq} \cite{coq} of 
Zsid\'o's initiality theorem presented in \autoref{sec:sts_ju}.
In particular, we explain what we omitted in the informal presentation --- the construction of 
the initial representation of a given simply--typed signature.

\section{Signatures \& Representations}\label{section:sig_formal}

An \emph{arity} determines the type and binding behaviour of a \emph{constructor}, and a \emph{signature} is a family of arities.
A \emph{representation} of a signature $S$ is given by a monad $P$ (over a suitable category) 
and a morphism of $P$--modules for each arity $\alpha$ of $S$, where the source and target module of this morphism 
are determined by $\alpha$.
Among those representations the object of interest is the \emph{initial} one, 
i.e.\ the representation from which there exists exactly one \emph{morphism of representations} to any other representation. 
The initial representation is called \emph{syntax generated by $S$}.

\subsection{Using Lists for Algebraic Arities \& Signatures}
For the formal definitions let us fix a set $T$ of object language types.
As explained in \autoref{def:sts_alg_sig}, an algebraic arity over $T$ is determined by a pair of a list of data
and an element $t_0 \in T$, yielding an efficient and concise way to specify algebraic arities.
An algebraic signature could thus  be implemented --- as in \autoref{rem:sts_signature_list} --- as a pair consisting of a 
type \lstinline!sig_index! --- which is used for indexing the arities --- and a map from the indexing type to 
the actual arity type, which is simply built using lists --- for which we employ a Haskell--like notation --- and products:
\begin{form}[Signature, \autoref{rem:sts_signature_list}] \label{code:sts_signature_list_notused}
\begin{lstlisting}
Notation "[ T ]" := (list T) (at level 5).
Record Signature : Type := {
  sig_index : Type;
  sig : sig_index -> [[T] * T] * T }.
\end{lstlisting}
\end{form}

\noindent
However, a slight modification turns out to be useful. 
During the construction of the initial representation, a universal quantification over arities of a signature $S$ 
 with a given target type $t\in T$ is needed. 
Using the above hypothetical implementation, this quantification could be achieved by using a sigma type:
\begin{lstlisting}
Definition Signature_t (t : T) : Type := {s : sig_index S | snd (sig s) = t}.
\end{lstlisting}
This definition would be awkward to use since we would be obliged to handle equality proofs when
talking about indices, i.e.\ terms of \lstinline!sig_index S!, with a specific output type.
 We can
in fact do better: while the \emph{propositional} equality as used above
would need our intervention, \emph{definitional} equality --- \emph{conversion} --- 
is handled by \textsf{Coq}.
Hence we decide to implement a signature over a set of types \lstinline!T! 
as a function that maps each \lstinline!t : T! to the collection of arities 
whose output type is the given \lstinline!t!. 
In other words, the parameter \lstinline!t! of \lstinline!Signature_t! in the definition of 
signature replaces the second component of the arities:

\begin{form}[Signature, \autoref{rem:sts_signature_list}] \label{code:sts_signature}
\begin{lstlisting}
Record Signature_t (t : T) : Type := {
  sig_index : Type ;
  sig : sig_index -> [[T] * T] }.
Definition Signature := forall t, Signature_t t.
\end{lstlisting}
\end{form}

\noindent
We discuss the formalization of the example signature of the simply--typed lambda calculus (cf.\ \autoref{ex:tlc_sig}).
At first we define an indexing type \lstinline!TLC_index_t! for each \lstinline!t : T!. 
After that, we build an indexed signature \lstinline!TLC_sig! mapping each index to its arity:
\begin{form}[Signature of $\TLC$, \autoref{ex:tlc_sig}] \label{code:tlc_sig}
\begin{lstlisting}
Inductive TLC_index : T -> Type := 
   | TLC_abs : forall s t : T, TLC_index (s ~> t)
   | TLC_app : forall s t : T, TLC_index t.

Definition TLC_arguments : forall t, TLC_index t -> [[T] * T] :=
    fun t' r => match r with 
      | TLC_abs s t => (s::nil,t)::nil
      | TLC_app s t => (nil,s ~> t)::(nil,s)::nil
      end.
Definition TLC_sig t := Build_Signature_t t (@TLC_arguments t).
\end{lstlisting}
\end{form}

\subsection{Modules and Morphisms for Arities}\label{subsec:sts_formal_reps}

To any signature given as a dependent function of type \lstinline!Signature! as in \autoref{code:sts_signature} 
 we associate the \emph{actual} signature in the sense of \autoref{def:sts_arity_signature}.
More precisely, for an arity $s = \bigl( [(\vectorletter{s_1},t_1), \ldots, (\vectorletter{s_n},t_n)], t_0\bigr)$ 
given by lists we define the functors
 $\dom(s)$ and $\cod(s)$, each of which, given a monad $P \in \Mon{\TS{T}}$ (cf.\ \autoref{def:sts_monads}),
 yield a $P$--module with codomain $\Set$. Note that the bold face letters $\vectorletter{s_i}$ denote lists of sorts.
 
It would in principle be possible to build the module $\dom(s,P)$ associated to a monad $P$ using the
category--theoretic machinery defined in \autorefs{subsection:mod_examples} and \ref{sec:monads_on_set_families},
i.e.\ by applying iteratively the derivation functor to the tautological module $P$ 
as often as indicated by the arity $s$ and finally the suitable fibre functor,
glueing everything together via the product on module categories.
However, we choose not to, for reasons we explain now.
Consider again the diagram of \autoref{eq:sts_comm_diag}, instantiated for the classic arity $s$:

\begin{equation} \label{eq:sts_diag}
\begin{xy}
\xymatrix @=4pc{
**[l] \prod\limits_{i=1}^n \fibre{P^{\mathbf{s_i}}}{t_i}\ar[r]^{{\alpha}^P} \ar[d]_{\prod\limits_i \fibre{f^{\mathbf{s_i}}}{t_i}} & \fibre{P}{t_0} \ar[d]^{f_{t_0}}  \\
**[l] f^* \prod\limits_{i=1}^n \fibre{Q^{\mathbf{s_i}}}{t_i} \ar[r]_{f^*({\alpha}^Q)} & f^*\fibre{Q}{t_0} \\
}
\end{xy}
\end{equation}

\noindent
This diagram actually makes use of many instances of the equalities mentioned in 
\autoref{rem:pullback_deriv_fibre_endo},
in order to justify composability of module morphisms.
For instance, in the lower right corner, the fact that pullback and fibre may be permuted, is used.
In \textsf{Coq} the aforementioned equalities of modules hold propositionally (if one uses appropriate axioms, such as proof irrelevance),
but not definitionally, i.e.\ the modules are not convertible (see also \autoref{rem:about_module_eq}). 
In order to be able to compose a module morphism with target $\rho^*\fibre{M}{u}$, for instance,
with a module morphism with source module $\fibre{\rho^*M}{u}$, one needs to insert a suitable isomorphism of modules 
 $\rho^*\fibre{M}{u} \cong \fibre{\rho^*M}{u}$.
The carriers of these isomorphisms are families of identity functions, respectively, 
since the carriers of the source and target modules are convertible. 
In our formalization we would have to insert these isomorphisms 
(called \lstinline!PROD_PB!, \lstinline!ITDER_PB! and \lstinline!ITFIB_PB! in our \textsf{Coq} library) 
in order to make some compositions typecheck
--- as illustrated by the diagram in \autoref{eq:diag} --- 
which in turn would result in quite a cumbersome formalization with decreased readability. 
Instead we decide to implement the left vertical morphism from scratch. 
\begin{table}[hbt]
\begin{equation}\label{eq:diag}
 \begin{xy}
  \xymatrix @=3.5pc{
  \prod\limits_{i=1}^n \fibre{P^{\mathbf{s_i}}}{t_i}\ar[rr]^{{\alpha}^P} \ar[d]_{\prod_i \fibre{f^{\mathbf{s_i}}}{t_i}}& {} & P_{t_0} \ar[dddd]^{f_{t_0}}  \\
  \prod_{i=1}^n \fibre{(f^*Q)^{\mathbf{s_i}}}{t_i} \ar[d]_{\prod_i \fibre{\cong}{t_i}} & {} &{} \\
  \prod_{i=1}^n \fibre{f^*(Q^{\mathbf{s_i}})}{t_i} \ar[d]_{\prod_i \cong}& {} &{} \\
  \prod_{i=1}^n f^*\fibre{Q^{\mathbf{s_i}}}{t_i} \ar[d]_{\cong}& {} &{} \\
  f^* \prod_{i=1}^n \fibre{Q^{\mathbf{s_i}}}{t_i} \ar[r]_-{f^* ({\alpha}^Q)} & f^* \fibre{Q}{t_0} \ar[r]_{\cong}& \fibre{f^*Q}{t_0}
}
 \end{xy}
\end{equation}

\end{table}
For this to work it is most convenient to define the carrier of the product modules as an inductive type, 
instead of applying the product in the module category recursively. 
Hence also the product modules are built manually rather than using the categorical devices of derivation, fibre and product.

\subsubsection{Domain, Codomain, Representations}

Given an arity $s = (\vectorletter{s_1},t_1), \ldots ,(\vectorletter{s_n},t_n)\to t_0 $ (or shorter $\ell \to t_0$) and a monad $P$, 
we have to construct the module 
 \[\dom(s,P) =  \prod_{i=1}^n \fibre{P^{\mathbf{s_i}}}{t_i} = \prod_{\ell} P \enspace . \]
Its carrier, being a kind of heterogeneous list, is given as an inductive 
type parame\-tri\-zed by a set family of variables and a list such as the list $\ell$ indicating the domain of an arity. 
In fact, for defining the carrier, only an object map $M$ of the type indicated below is necessary:
\begin{lstlisting}
Variable M : (ITYPE T) -> (ITYPE T).
Inductive prod_mod_c (V : ITYPE T) : [[T] * T] -> Type :=
  | TTT :  prod_mod_c V nil 
  | CONSTR : forall b bs, 
     M (V ** (fst b)) (snd b) -> prod_mod_c V bs -> prod_mod_c V (b::bs).
\end{lstlisting}

\noindent
Now, for a list \lstinline!l : [[T]*T]!,  if $M$ is equipped with a module structure over a monad $P$, 
we equip the map \lstinline!fun V => prod_mod_c V l!
with a module structure.
Its substitution is given by a function \lstinline!pm_mkl!, which is defined by recursion 
on the argument of type \lstinline!prod_mod_c ... !, applying 
the module substitution of $M$ in each component:
\begin{lstlisting}
Fixpoint pm_mkl l V W (f : V ---> P W) (X : prod_mod_c M V l) : 
                                               prod_mod_c M W l :=
     match X in prod_mod_c _ _ l return prod_mod_c M W l with
     | TTT => TTT M W
     | CONSTR b bs elem elems => CONSTR (M:=M) (V:=W) 
                 (mkleisli (Module_struct := M) (lshift f) (snd b) elem)
                 (pm_mkl f elems)
     end.
\end{lstlisting}

\noindent
Proving its module property ---  by induction on the argument \lstinline!X! --- yields a
module \lstinline!prod_mod l! for each list \lstinline!l : [[T] * T]!. 
For $s = \ell \to t_0$, this defines the object function of the functor $\dom(s)$.
The object function of $\cod(s)$ is easy to define, since it simply associates, to any monad $P$, 
the fibre module with respect to $t_0$ of the tautological module $P$.
Again, this is defined more generally for any $P$--module $M$ with codomain category $\TS{T}$.
Putting both domain and codomain together, we associate, 
to any algebraic arity $s$ and any $P$--module $M$, a type of module
morphisms
\[ \dom(s,M) \to \cod(s,M) \]
as in \autoref{code:sts_arity_rep} below. 
Note that $M$ is later instantiated by the tautological $P$--module $P$.

\begin{form}[Representation of an Arity, \autoref{def:sts_arity_rep}]\label{code:sts_arity_rep}
\begin{lstlisting}
Variable M : Module P (ITYPE T).
Definition modhom_from_arity (ar : [[T] * T] * T) : Type :=
  Module_Hom (prod_mod M (fst ar)) (M [(snd ar)]).
\end{lstlisting} 
\end{form}

\noindent
where \lstinline!M[(s)]! denotes the fibre of the module \lstinline!M! over \lstinline!s!.
Finally a representation of a signature \lstinline!S! in a monad \lstinline!P! is given by a module morphism for each arity \lstinline!i!, 
i.e.\ by specifying a function of type 
\[\forall s\in S, \dom(s,P) \to \cod(s,P) \enspace , \]
where $P$ denotes the tautological $P$--module.
Since the set of arities is indexed by the target type of the arities, the representation structure is indexed as well:
\begin{form}[Representation of a Signature, \autoref{def:sts_arity_rep}]\label{code:sts_sig_rep}
\begin{lstlisting}
Variable P : Monad (ITYPE T).
Definition Repr_t (t : T) := 
   forall i : sig_index (S t), modhom_from_arity P ((sig i), t).
Definition Repr := forall t, Repr_t t.
\end{lstlisting}
\end{form}

\noindent
We bundle the data and define a representation as a monad together with a representation structure over this monad\footnote{
Here an example of \emph{coercion} occurs. 
The special notation \lstinline!:>! allows us to omit the projection \lstinline!rep_monad! when accessing the monad 
which underlies a given representation \lstinline!R!. We can hence also write \lstinline!R x! for the value of the 
monad of \lstinline!R! on an object \lstinline!x! of the underlying category. 
}:
\begin{lstlisting}
Record Representation := {
  rep_monad :> Monad (ITYPE T);
  repr : Repr rep_monad }.
\end{lstlisting}

\subsubsection{Morphisms of Representations}

The carrier of the domain module $\dom(s,P) = \prod_l P$ of a representation (cf.\ \autoref{eq:sts_diag}) is defined as an inductive type. 
This suggests the use of structural recursion for defining the left vertical morphism of the commutative diagram of \autoref{eq:sts_diag}. 
Given a monad morphism $f : P\to Q$, we apply $f$ to every component of 
$\prod_{\ell} P$
:

\begin{lstlisting}
Fixpoint Prod_mor_c (l : [[T] * T]) (V : ITYPE T) (X : prod_mod P l V) : 
                  f* (prod_mod Q l) V :=
  match X in prod_mod_c _ _ l return f* (prod_mod Q l) V with
  | TTT => TTT _ _ 
  | CONSTR b bs elem elems => 
    CONSTR (f _ _ elem) (Prod_mor_c elems)
  end.
\end{lstlisting}
This function is easily proved to be a morphism of $P$--modules 
\[ \dom(s,f) := \text{\lstinline!Prod_mor!}\colon\prod_{\ell} P \to f^* \prod_{\ell} Q \enspace.\]
We thus are able to avoid mentioning all those trivial isomorphisms in the definition of the arrow map of the functor $\dom(s)$
that are present in the diagram of \autoref{eq:diag}.

The codomain arrow $\cod(s,f) = f_{t_0}$ is obtained by taking the fibre module of the 
module morphism induced by $f$, cf.\ \autoref{subsection:mod_examples}.
The \textsf{Coq} function \lstinline!PbMod_ind_Hom!, which associates to any monad morphism the induced module morphism,
can even be declared as a coercion
\begin{lstlisting}
Coercion PbMod_ind_Hom : Monad_Hom >-> mor.
\end{lstlisting}
such that the abuse of notation introduced in the informal \autoref{def:induced_module_mor} has a 
counterpart in the formal development.

The isomorphism in the lower right corner however remains in the formalization, 
appearing as \lstinline!ITPB_FIB!. Its underlying family of morphisms, however, is simply a family of identity functions. 
For an arity \lstinline!a! and module morphisms \lstinline!RepP! and \lstinline!RepQ! representing 
this arity in monads \lstinline!P! and \lstinline!Q! respectively, the definition of the commutative diagram 
reads as follows:
\begin{form}[Commutative Diagram for Representation Morphism, \autoref{def:sts_rep_mor}]
\begin{lstlisting}
Definition commute f RepP RepQ : Prop :=
   RepP ;; f [(snd a)] == Prod_mor (fst a) ;; f* RepQ ;; ITPB_FIB f _ _ 
\end{lstlisting}
\end{form}

\noindent
A morphism of representations from \lstinline!P! to \lstinline!Q! of the signature \lstinline!S! is 
just a monad morphism from \lstinline!P! to \lstinline!Q! together with the commutativity property 
for each arity. More precisely, since arities are indexed by their target type, we have a commutative diagram
for any object type \lstinline!t : T! and each arity (index) \lstinline!i! in the indexing set of \lstinline!S t!:
\begin{form}[Morphism of representations, \autoref{def:sts_rep_mor}]\label{code:sts_rep_mor}
\begin{lstlisting}
Variables P Q : Representation S.
Class Representation_Hom_struct (f : Monad_Hom P Q) :=
  repr_hom_s : forall t (i : sig_index (S t)), commute f (repr P i) (repr Q i).
Record Representation_Hom : Type := {
  repr_hom_c :> Monad_Hom P Q;
  repr_hom :> Representation_Hom_struct repr_hom_c }.
\end{lstlisting}
\end{form}

\noindent
As mentioned in \autoref{subsection:sts_rep}, representations of \lstinline!S! and their 
morphisms form a category \lstinline!REPRESENTATION S!. 
Composition of representations is defined by composing the underlying monad morphisms:
\begin{lstlisting}
Program Instance Rep_comp_struct :
   Representation_Hom_struct (Monad_Hom_comp f g).
\end{lstlisting}
where the commutation property is proved by some tactic defined beforehand.
Accordingly, the identity morphism of representations is built upon the identity
monad morphism:
\begin{lstlisting}
Program Instance Rep_Id_struct : 
   Representation_Hom_struct (Monad_Hom_id P).
\end{lstlisting}
Since equality on morphisms of representations is defined as equality of the underlying monad morphisms, 
the properties of composition necessary for representations to form a category are a consequence of those 
for the category \lstinline!MONAD (ITYPE T)!.
The construction of the initial representation (and hence the proof of \autoref{nice_thm}) is explained in the next section.

\section{Construction of the Initial Object}\label{section:sts_initial}

The initial object of the category of representations of the signature $S$ is constructed in several steps:
\begin{packenum}
 \item the syntax associated to $S$ as an inductive data type \lstinline!STS!,
  \item definition of a monad structure \lstinline!STS_Monad! on said data type,
  \item construction of the representation structure \lstinline!STSRepr! on \lstinline!STS_Monad!,
 \item for any representation \lstinline!R!, construction of morphism \lstinline! init R! from \lstinline!STSRepr! to \lstinline!R!,
  \item uniqueness of \lstinline!init R! for any representation \lstinline!R!.
\end{packenum}

\subsection{The Terms Generated by a Signature} 
The first step is to define a map \lstinline!STS : ITYPE T ---> ITYPE T! --- the monad carrier --- mapping each type family $V$ of variables to the type family of 
terms with free variables in $V$. Since objects of \lstinline!ITYPE T! really are dependent \textsf{Coq} types (cf.\ \autoref{code:cat_type_families}),
this map is implemented as a \textsf{Coq} inductive family of types, parametrized by a context and dependent on object types.
Apart from the use of dependent types, the ``data'' parts of this section could indeed be done in any programming language featuring inductive types.

Mutual induction is used, defining at the same time a type \lstinline!STS_list! of heterogeneous lists of terms, 
yielding the arguments to the constructors of $S$. This list type is indexed by arities, such that the constructors can be fed with precisely the 
right kind of arguments.

\begin{form}[Terms of the Initial Representation] \label{code:sts_initial_term}
\begin{lstlisting}
Inductive STS (V : ITYPE T) : ITYPE T :=
  | Var : forall t, V t -> STS V t
  | Build : forall t (i : sig_index (S t)), STS_list V (sig i) -> STS V t
with 
 STS_list (V : ITYPE T) : [[T] * T] -> Type :=
  | TT : STS_list V nil
  | constr : forall b bs, 
      STS (V ** (fst b)) (snd b) -> STS_list V bs -> STS_list V (b::bs).
\end{lstlisting}
\end{form}

\noindent
The constructor \lstinline!Build! takes 3 arguments: 
\begin{packitem}
 \item an object type \lstinline!t! indicating its output type,
 \item an arity \lstinline!i! (resp.\ its index) from the set of indices with output type \lstinline!t! and
 \item a term of type \lstinline!STS_list V (sig i)! carrying the subterms of the term to construct.
\end{packitem}
Note that \textsf{Coq} typing ensures the correct typing of all constructible terms of \lstinline!STS!, 
a techique called \emph{intrinsic typing}.
The \lstinline!Scheme! command generates a mutual induction scheme for the defined pair of types.
The latter type is actually isomorphic to the type \lstinline!prod_mod_c STS!. This duplication of data could hence have been avoided by 
defining a nested inductive type as follows, instead of using mutual induction.
\begin{lstlisting}
 Inductive STS (V : ITYPE T) : ITYPE T :=
  | Var : forall t, V t -> STS V t
  | Build : forall t (i : sig_index (S t)), prod_mod_c STS V (sig i) -> STS V t.
\end{lstlisting}

\noindent
However, we use the mutual inductive version because it allows us to define functions on those types by mutual recursion
rather than by nested recursion; the latter are significantly more difficult to reason about.

\subsection{Monad Structure on the Set of Terms}

We continue by defining a monad structure on the map \lstinline!STS!. Again, due to our choice of implementing sets as \textsf{Coq} types 
(cf.\ \autoref{code:cat_type_families}), the maps we need really are \textsf{Coq} functions.
As in the special case of $\LC$ (cf.\ \autoref{ex:ulc_monad}) and $\SLC$ (cf.\ \autoref{ex:tlc_syntax_monadic}),
the monadic map $\eta$ is given by the variable--as--term constructor \lstinline!Var!.
The substitution map \lstinline!subst! is defined using two helper functions \lstinline!rename! (providing functoriality, cf.\ \autoref{rem:monad_kleisli_funct}) 
and \lstinline!_shift!
(used when substituting under binders, cf.\ \autoref{ex:ulc_mod_mor_kl}). Renaming and substitution 
are implemented using mutual recursion on the mutually inductive data types \lstinline!STS! and \lstinline!STS_list!:
\begin{lstlisting}
 Fixpoint rename V W (f : V ---> W) t (v : STS V t):=
    match v in STS _ t return STS W t with
    | Var t v => Var (f t v)
    | Build t i l => Build (i:=i) (list_rename l f)
    end
with
  list_rename V t (l : STS_list V t) W (f : V ---> W) : STS_list W t :=
     match l in STS_list _ t return STS_list W t with
     | TT => TT W
     | constr b bs elem elems =>
             constr (elem //- ( f ^^ (fst b)))
                               (elems //-- f)
     end
where "x //- f" := (rename f x)
and "x //-- f" := (list_rename x f).
...
(* a lot more code *)
...
Fixpoint subst (V W : ITYPE T) (f : V ---> STS W) t (v : STS V t) :
  STS W t := match v in STS _ t return STS W t with
    | Var t v => f t v
    | Build t i l => Build (l >>== f)
    end
with
  list_subst V W t (l : STS_list V t) (f : V ---> STS W) : STS_list W t :=
     match l in STS_list _ t return STS_list W t with
     | TT => TT W
     | constr b bs elem elems =>
       constr (elem >== (_lshift f)) (elems >>== f)
     end
where "x >== f" := (subst f x)
and "x >>== f" := (list_subst x f).
\end{lstlisting}
The monadic properties that the substitution should satisfy, are similar to the lemmas one would prove in order to establish ``programm correctness''.
As an example, the third monad law reads as
\begin{lstlisting}
 Lemma subst_subst V t (v : STS V t) W X (f : V ---> STS W)
             (g : W ---> STS X) :
     v >== f >== g = v >== f;; subst g.
Proof.
  apply (@STSind
    (fun (V : T -> Type) (t : T) (v : STS V t) => forall (W X : T -> Type)
          (f : V ---> STS W) (g : W ---> STS X),
        v >== f >== g = v >== (f;; subst g))
   (fun (V : T -> Type) l (v : STS_list V l) =>
       forall (W X : T -> Type)
          (f : V ---> STS W) (g : W ---> STS X),
        v >>== f >>== g = v >>== (f;; subst g) ));
  t5.
Qed.
\end{lstlisting}
Its proof script is a typical example; most of those lemmas are proved using the induction scheme \lstinline!STSind! 
--- instantiated with suitable properties --- followed by a single custom tactic which finishes off the resulting subgoals, 
mainly by rewriting with equalities proved beforehand.
After a quite lengthy series of lemmas we obtain that the function \lstinline!subst! and the variable--as--term constructor \lstinline!Var! turn \lstinline!STS! into a monad:
\begin{lstlisting}
Program Instance STS_monad : Monad_struct STS := {
  weta := Var ;
  kleisli := subst }.
\end{lstlisting}

\subsection{A Representation in the Monad of Terms} 

The representational structure on \lstinline!STS! is defined using the \lstinline!Build! constructor. 
For each arity \lstinline!i! in the index set \lstinline!sig_index (S t)!, we must give a morphism of modules from \lstinline!prod_mod STS (sig i)! to 
\lstinline!STS [(t)]!. 
Since the constructor \lstinline!Build! takes its argument from \lstinline!STS_list! and not from the isomorphic \lstinline!prod_mod STS!, 
we precompose with one of the isomorphisms between those two types:
\begin{lstlisting}
Program Instance STS_arity_rep (t : T) (i : sig_index (S t)) : Module_Hom_struct 
       (S := prod_mod STS (sig i)) (T := STS [(t)]) 
   (fun V X => Build (STSl_f_pm X)).
\end{lstlisting}
The only property to verify is the compatibility of this map with the module substitution, which we happily leave to \textsf{Coq}.
We obtain a representation of \lstinline!S!:
\begin{lstlisting}
Record STSRepr : REPRESENTATION S := Build_Representation (@STSrepr).
\end{lstlisting}

\subsection{Weak Initiality for the Representation in the Term Monad}

In the introduction, we gave the equations that a morphism of representations of the natural numbers should satisfy.
Reading those equations as a rewrite system from left to right yields a way to define iterative functions on the
natural numbers.
This idea is also used in order to define a morphism from \lstinline!STSRepr! to any representation \lstinline!R! of the signature \lstinline!S!:
a term of \lstinline!STS!, whose root is a constructor \lstinline!Build t i! for some object type \lstinline!t! and an arity \lstinline!i!,
is mapped recursively to the image --- of the recursively computed argument --- 
under the corresponding representation \lstinline!repr R i! of \lstinline!R!. 
This definition for a morphism of representations will turn out to be the only one possible, leading to uniqueness.
Formally, the carrier \lstinline!init! of what will be the initial morphism  from \lstinline!STSRepr! to \lstinline!R! 
is defined as a mutually recursive \textsf{Coq} function:
\begin{lstlisting}
Fixpoint init V t (v : STS V t) : R V t :=
   match v in STS _ t return R V t with
   | Var t v => weta (Monad_struct := R) V t v
   | Build t i X => repr R i V (init_list X)
   end
with
 init_list l (V : ITYPE T) (s : STS_list V l) : prod_mod R l V :=
   match s in STS_list _ l return prod_mod R l V with
   | TT => TTT _ _
   | constr b bs elem elems =>
        CONSTR (init elem) (init_list elems)
   end.
\end{lstlisting}
where the function \lstinline!init_list! applies \lstinline!init! to (heterogeneous) lists of arguments.
We have to show that this function is a morphism of monads and a morphism of representations.
A series of lemmas show that \lstinline!init! commutes with renaming resp.\ lifting (\lstinline!init_lift!), 
shifting (\lstinline!init_shift!) and substitution (\lstinline!init_kleisli!):
\begin{lstlisting}
Lemma init_lift V t x W (f : V ---> W) : init (x //- f) = lift f t (init x).
Lemma init_shift a V W (f : V ---> STS W) : forall (t : T) (x : opt a V t),
    init (x >>- f) = x >>- (f ;; @init _).
Lemma init_kleisli V t (v : STS V t) W (f : V ---> STS W) :
  init (v >== f) = kleisli (f ;; @init _ ) t (init v).
\end{lstlisting}
 
\noindent
The latter property is precisely one of the axioms of morphisms of monads (cf.\ \autoref{def:monad_hom_simpl}, rectangular diagram). 
The second monad morphism axiom which states compatibility with the $\eta$s of the monads involved is 
fulfilled by definition of \lstinline!init! --- it is exactly the first branch of the pattern matching by which the function \lstinline!init!
is defined. 
We hence have established that \lstinline!init! is (the carrier of) a morphism of monads:
\begin{lstlisting}
Program Instance init_monadic : Monad_Hom_struct (P:=STSM) init.
Record init_mon := Build_Monad_Hom init_monadic.
\end{lstlisting}

\noindent
Very much less work is then needed to show that \lstinline!init! also is a morphism of representations:
\begin{lstlisting}
Program Instance init_representic : Representation_Hom_struct init_mon.
\end{lstlisting}

\subsection{Uniqueness and Initiality}

Uniqueness of the morphism of representations \lstinline!init_rep! (obtained from packaging \lstinline!init_representic! into 
a record instance) is expressed by the following lemma:
\begin{lstlisting}
Lemma init_unique : forall f : STSRepr ---> R , f == init_rep.
\end{lstlisting}
Instead of directly proving the lemma, we prove at first an unfolded version which allows to directly apply 
the mutual induction scheme \lstinline!STSind!:
\begin{lstlisting}
Variable f : Representation_Hom STSRepr R.
Hint Rewrite one_way : fin.
Ltac ttt := tt; 
 (try match goal with [t:T, s : STS_list _ _ |-_] => rewrite <- (one_way s);
             let H:=fresh in assert (H:=repr_hom f (t:=t));
             unfold commute in H; simpl in H end);
             repeat (app (mh_weta f) || tinv || tt).

Lemma init_unique_prepa V t (v : STS V t) : f V t v = init v.
Proof.
  apply (@STSind
     (fun V t v => f V t v = init v)
     (fun V l v => Prod_mor f l V (pm_f_STSl v) = init_list v));
  ttt.
Qed.
\end{lstlisting}

\noindent
Finally we declare an instance of the \lstinline!Initial! type class for the category of representations \lstinline!REPRESENTATION S! with  
\lstinline!STSRepr! as initial object and \lstinline!init_rep R! as the initial morphism towards any other representation \lstinline!R!.
\begin{form}[Instance of Initial for Category of Representations]\label{lst:initial_rep}
\begin{lstlisting}
Program Instance STS_initial : Initial (REPRESENTATION S) := {
  Init := STSRepr ;
  InitMor R := init_rep R }.
\end{lstlisting}
\end{form}

\noindent
In this instance declaration, the proof field \lstinline!InitMorUnique! is filled 
automatically by the \lstinline!Program! feature, using the preceding lemma \lstinline!init_unique!.

\section{Remarks}

The nature of the theorem made it convenient for computer theorem proving: the proofs are %
straightforward, carrying no %
surprises. Moreover, they are highly technical using (mutual) induction, something 
\textsf{Coq} offers good support for.

Some aspects remain unsatisfactory: using type classes and records simultaneously is at least confusing for the reader, 
even if there are reasons from the implementor's point of view to do so. 
Also, the weak support for nested induction in \textsf{Coq} obliged us to use mutual induction instead, 
leading to some duplication of data and hence another unnecessary source of confusion.
Other aspects, such as the implementation of syntax in an efficient way, 
i.e.\ without any extrinsic typing device, could be done due to \textsf{Coq}'s good support for dependent types.

According to \lstinline!coqwc!\footnote{%
The tool \lstinline!coqwc!, part of the standard \textsf{Coq} tools, counts the number of lines in a \textsf{Coq} source file, 
classified into the 3 categories \emph{specification}, \emph{proof} and \emph{comment}.
} 
the \textsf{Coq} files that are specific to the proved theorem consist of approximately 400 lines of specification and 600 lines of proof. 
The proofs are done in a semi--automated way, employing a proof style promoted by Chlipala in his online book \cite{cpdt}, 
as well as in a published user tutorial \cite{jfr:cpdt}.
An earlier version using a more standard proof style included about 900 lines of proof. 
This reduction is mainly due to the fact that proof automation also stimulates reuse of code -- here reuse of proof code -- 
similarly to how polymorphism does for data structures and functions.
However, we do not claim to be experts in proof automation, nor do we have ``one tactic to rule them all''.

\chapter{Initiality for Untyped 2--Signatures, Formalized}\label{chap:2--signatures_formal}

In this chapter we present the formalization in the proof assistant \textsf{Coq}
of \autoref{thm:init_w_ineq_untyped} of \autoref{sec:prop_arities}.
We first define arities and 1--signatures in terms of lists.
Afterwards we define representations for 1--arities and construct the initial such representation.
We then formalize inequations over 1--signatures and construct, for any suitable 2--signature,
the initial representation.
Finally we show how to specify the untyped lambda calculus with beta reduction via a 2--signature.

\section{Arities by Lists}

According to \autoref{def:classic_half_arity_prop_untyped_syntactic},
a 1--signature consists of an indexing type and, for each index, a list of natural numbers,
indicating the number of arguments of a constructor, as well as the number of variables 
bound in each argument.
Formally, 1--signatures are an untyped version of \autoref{code:sts_signature_list_notused}.
In the formalization they are simply called ``signatures'':

\begin{form}[1--Signature, \autoref{def:classic_half_arity_prop_untyped_syntactic}]
 \label{code:1--signature_untyped}
 \begin{lstlisting}
Notation "[ T ]" := (list T) (at level 5).
Record Signature : Type := {
  sig_index : Type ;
  sig : sig_index -> [nat] }.
 \end{lstlisting}
\end{form}

\noindent
Next we formalize context extension according to a natural number, cf.\ \autoref{sec:deriv_and_fibre}.
These definitions are important for the definition of the module morphisms we 
associate to an arity, cf.\ below.
Context extension is actually functorial.
Given a natural number \lstinline!n! 
and a set of variables \lstinline!V!, 
 we recursively
define the set \lstinline!V ** n! to be the set \lstinline!V! enriched with \lstinline!n!
additional variables. 

\begin{form}[Adding fresh variables]
 \begin{lstlisting}
Fixpoint pow (n : nat) (V : TYPE) : TYPE :=
  match n with
  | 0 => V
  | S n' => pow n' (option V)
  end.
Notation "V ** n" := (pow n V) (at level 10).
Fixpoint pow_map (l : nat) V W (f : V ---> W) :
         V ** l ---> W ** l :=
  match l return V ** l ---> W ** l with
  | 0 => f
  | S n' => pow_map (^ f)
  end.
Notation "f ^^ l" := (pow_map (l:=l) f) (at level 10).
 \end{lstlisting}
\end{form}

\section{Representations of a 1--Signature}

Given a classic arity $s$, i.e.\ a list of natural numbers $s$ (cf.\ \autoref{code:1--signature_untyped}), 
and a relative monad $P$ on the functor $\Delta$,
we define the product module $P^{s}$ as in \autoref{rem:syn_sem_untyped_two_half_arities}. 
More generally, we define $M^s$ for any $P$--module $M$ with codomain $\PO$.
Analogously to the implementation of \autoref{chap:sts_formal}, we build this module from scratch instead of 
relying on the category--theoretic constructions such as product and derivation functor for the module categories,
allowing us to omit the insertion of isomorphisms in the style of \autorefs{lem:rel_pb_prod} and \ref{lem:rel_pb_comm}.
Given any module \lstinline!M! over a monad \lstinline!P! 
from sets to preordered sets, we define the 
product type \lstinline!prod_mod_c! as a dependent type parametrized by a 
set of variables and dependent on a list of naturals. 
Actually we define at first the carrier depending not on a module, but just on a carrier function \lstinline!M!.
The relation on the product is induced by that on \lstinline!M!.
\begin{form}[Product Module, Carrier map] \label{code:prop_prod_mod}
\begin{lstlisting}
Variable M : TYPE -> Ord.
Inductive prod_mod_c (V : TYPE) : [nat] -> Type :=
  | TTT : prod_mod_c V nil
  | CONSTR : forall b bs,
         M (V ** b)-> prod_mod_c V bs -> prod_mod_c V (b::bs) .
Notation "a -:- b" := (CONSTR a b) (at level 60).
Inductive prod_mod_c_rel (V: TYPE) : forall n, relation (prod_mod_c M V n):=
  | TTT_rel : forall x y : prod_mod_c M V nil, prod_mod_c_rel x y
  | CONSTR_rel : forall n l, forall x y : M (V ** n),
          forall a b : prod_mod_c M V l, x << y -> 
     prod_mod_c_rel a b -> prod_mod_c_rel (x -:- a) (y -:- b).
\end{lstlisting}
\end{form}

\noindent
Note that the infixed ``\lstinline!<<!'' is overloaded notation and denotes the relation of any preordered set.
For any given list \lstinline!a! of naturals and any set \lstinline!V! 
of variables, the set \lstinline!prod_mod_c V a! equipped
with the relation \lstinline!prod_mod_c_rel V a! is in fact a preordered set. For the proof of transitivity we rely
on the \textsf{Coq} tactic \lstinline!dependent induction!, thus on the axioms
\begin{lstlisting}
JMeq.JMeq_eq : forall (A : Type) (x y : A), x ~= y -> x = y 
Eqdep.Eq_rect_eq.eq_rect_eq : forall (U : Type) (p : U) 
                                (Q : U -> Type) (x : Q p) 
                                (h : p = p), x = eq_rect p Q x p h
\end{lstlisting}
from the \textsf{Coq} standard library.

Now, if \lstinline!M! is not just a map of type \lstinline!TYPE -> Ord!, 
but a module over some relative monad \lstinline!P! over \lstinline!Delta!,
we equip the product map with a modulic substitution in form of a recursive function:

\begin{form}[Product module, substitution]\label{code:prop_prod_subst}
 \begin{lstlisting}
Variable M : RMOD P Delta.
Fixpoint pm_mkl l V W (f : Delta V ---> P W)
      (X : prod_mod_c (fun V => M V) V l) : prod_mod_c _ W l :=
     match X in prod_mod_c _ _ l return prod_mod_c (fun V => M V) W l with
     | TTT => TTT _ W
     | elem -:- elems =>
      rmkleisli (RModule_struct := M) (lshift _ f) elem -:- pm_mkl f elems
     end.
(* ... *)
Definition prod_mod (a : [nat]) := Build_RModule (prod_mod_struct a).
 \end{lstlisting}
\end{form}

\noindent
Afterwards we prove by induction that this map is indeed monotone with respect to the preorder defined in 
\autoref{code:prop_prod_mod}. Altogether, \autoref{code:prop_prod_mod} and
\ref{code:prop_prod_subst} define a module \lstinline!prod_mod M l! for any module \lstinline!M : RMOD P Ord! and
any list of naturals \lstinline!l!.

To any arity \lstinline!ar : [nat]! and a module \lstinline!M! over a monad \lstinline!P!
we associate a type of module morphisms \lstinline!modhom_from_arity ar M!.
Representing \lstinline!ar! in \lstinline!M! then means giving a term of type \lstinline!modhom_from_arity ar M!.
Note that in the corresponding \autoref{def:rep_1--arity_untyped}
we have defined representations in \emph{monads} only. Indeed we instantiate \lstinline!M! with 
the tautological module later.

\begin{form}[Type of Representations of an Arity, \autoref{def:rep_1--arity_untyped}]
 \begin{lstlisting}
Variable P : RMonad Delta.
Definition modhom_from_arity (M : RModule P Ord) (ar : [nat]) : Type := RModule_Hom (prod_mod M ar) M.
\end{lstlisting}
\end{form}

\noindent
For the rest of the section, we suppose a signature \lstinline!S! to be given via a \textsf{Coq} section variable, 
\lstinline!Variable S : Signature.!
As just mentioned, representing the signature \lstinline!S! in a monad \lstinline!P! (cf.\ \autoref{def:rep_in_rmonad}) means
providing a suitable module morphism for any arity of \lstinline!S!, i.e.\ providing, for any 
element of the indexing set \lstinline!sig_index S!, a term of type \lstinline!modhom_from_arity P (sig i)!:

\begin{form}[Representation of 1--Signature, \autoref{def:rep_in_rmonad}]\label{code:rep_in_rmonad}
 \begin{lstlisting}
Definition Repr (P : RMonad Delta) := 
  forall i : sig_index S, modhom_from_arity P (sig i).
Record Representation := {
  rep_monad :> RMonad Delta ;
  repr : Repr rep_monad }.
 \end{lstlisting}
\end{form}

\noindent
The projecton \lstinline!rep_monad! is declared as a \emph{coercion} by using the special syntax \lstinline!:>!.
This coercion allows for abuse of notation in \textsf{Coq} as we do informally according to \autoref{def:rep_in_rmonad}.
See the first paragraph of \autoref{subsec:ineq_formal} for a use of this abuse.

\section{Morphisms of Representations}

A morphism of representations from $P$ to $Q$ ist given by a monad morphism $f : P \to Q$
between the underlying monads such that a diagram commutes for any arity, cf.\ \autoref{def:prop_mor_of_reps}.
The main task in the implementation is to define this diagram for a given arity $\ell$, and, more
specifically, the left vertical morphism 
\[\dom(\ell,f) = f ^ {\ell} : P^{\ell} \to f^* Q^{\ell} \enspace . \] 
using the notation of \autoref{rem:syn_sem_untyped_two_half_arities}.
Since $P^{\ell}$ is defined as an inductive type, it makes sense to define $f^{\ell}$ by recursion on the inductive type underlying $P^{\ell}$,
named \lstinline!prod_mod_c P V l! (cf.\ \autoref{code:prop_prod_mod}):
\begin{form}[Carrier of Domain Module Morphism of \autoref{def:prop_mor_of_reps}]
 \begin{lstlisting}
Variables P Q : RMonad Delta.
Variable f : RMonad_Hom P Q.
Fixpoint Prod_mor_c (l : [nat]) (V : TYPE) (X : prod_mod_c (fun V => P V) V l) :
                   (prod_mod_c _ V l) :=
  match X in prod_mod_c _ _ l
  return f* (prod_mod Q l) V with
  | TTT => TTT _ _
  | elem -:- elems => f _ elem -:- Prod_mor_c elems
  end.
 \end{lstlisting}
\end{form}

\noindent
Proving this map monotone is a simple exercise, as well as its commutation property with substitution,
yielding the aforementioned module morphism.
Now we have all the ingredients we need in order to define the diagram of 
\autoref{def:prop_mor_of_reps}. For an arity $a$ the diagram reads as follows:

\begin{form}[Commutative Diagram of \autoref{def:prop_mor_of_reps}]
 \begin{lstlisting}
Variable a : [nat].
Variable RepP : modhom_from_arity P a.
Variable RepQ : modhom_from_arity Q a.
Notation "f * M" := (# (PbRMOD f _ ) M).
Definition commute := Prod_mor a ;; f * RepQ == RepP ;; f^.
 \end{lstlisting}
\end{form}

\noindent
Here \lstinline!f^! denotes the module morphism induced by a monad morphism, cf.\ \autoref{def:ind_rmod_mor}.
Using the preceding definition, we define morphisms of representations of \lstinline!S!:

\begin{form}[Morphism of Representations, \autoref{def:prop_mor_of_reps}]
 \begin{lstlisting}
Variables P Q : Representation.
Class Representation_Hom_struct (f : RMonad_Hom P Q) :=
   repr_hom_s : forall i : sig_index S,
            commute f (repr P i) (repr Q i).
Record Representation_Hom : Type := {
  repr_hom_c :> RMonad_Hom P Q;
  repr_hom :> Representation_Hom_struct repr_hom_c }.
 \end{lstlisting}
\end{form}

\section{Category of Representations}

In this section we describe in more detail the category of representations of a 1--signature, cf.\ \autoref{def:cat_of_reps_relmons}.
The composition of morpisms of representations $f : P \to Q$ and $g : Q \to R$ is essentially done by composing the underlying 
monad morphisms. One has to show that this morphism does indeed commute with the representation morphisms of $P$ and $R$.
Similarly, the identity monad morphism of (the monad underlying) a representation $P$ yields a morphism of representations.
Fed with some suitable lemma, the \lstinline!Program! framework does the job for us:

\begin{form}[Composition and Identity of Representations]
 \begin{lstlisting}
Variables P Q R : Representation S.
Variable f : Representation_Hom P Q.
Variable g : Representation_Hom Q R.
Program Instance Rep_comp_struct : 
      Representation_Hom_struct (RMonad_comp f g).
Program Instance Rep_Id_struct : Representation_Hom_struct (RMonad_id P).
 \end{lstlisting}
\end{form}

\noindent
Since equality of morphisms of representations is defined as equality of the underlying monad 
morphisms, the categorical properties of compositition are established already as part of the 
definition of the category \lstinline!RMONAD F! for any functor \lstinline!F!.
\begin{form}[Category of Representations, \autoref{def:cat_of_reps_relmons}]\label{code:prop_cat_of_reps}
 \begin{lstlisting}
Program Instance REP_struct : Cat_struct (@Representation_Hom S) := {
  mor_oid a c := eq_Rep_oid a c;
  id a := Rep_Id a;
  comp P Q R f g := Rep_Comp f g }.
Definition REP := Build_Cat REP_struct.
 \end{lstlisting}
\end{form}

\section{Initiality without Inequations}

We construct the initial object of the category \lstinline!REP! (cf.\ \autoref{code:prop_cat_of_reps}).
In the informal proof of \autoref{lem:init_no_eqs_untyped} this initial object is the image under a left adjoint of the initial object 
in a category of representations as defined in \autoref{sec:sts_ju} with the set of object sorts $T = \{*\}$.
For the formal proof we decide to implement the initial object of \lstinline!REP! directly, 
in order to obtain a compact formalization.
However, the initial object %
is constructed in a way similar to
that of \autoref{chap:sts_formal}. 
The carrier of the initial representation is just a simplified --- because untyped --- version of \autoref{code:sts_initial_term}.
The only significant difference to \autoref{chap:sts_formal} is that we equip the set of terms with the 
trivial diagonal preorder by applying the functor $\Delta$, in \textsf{Coq} called \lstinline!Delta!:
\begin{form}
 \begin{lstlisting}
Inductive UTS (V : TYPE) : TYPE :=
  | Var : V -> UTS V
  | Build : forall (i : sig_index S), UTS_list V (sig i) -> UTS V
with
UTS_list (V : TYPE) : [nat] -> Type :=
  | TT : UTS_list V nil
  | constr : forall b bs,
      UTS (V ** b) -> UTS_list V bs -> UTS_list V (b::bs).
Notation "a -::- b" := (constr a b).
Definition UTS_sm V := Delta (UTS V).
 \end{lstlisting}
\end{form}

\noindent
We define renaming and, built on top of renaming, substitution:

\begin{lstlisting}
Fixpoint rename (V W: TYPE ) (f : V ---> W) (v : UTS V):=
    match v in UTS _ return UTS W with
    | Var v => Var (f v)
    | Build i l => Build (l //-- f)
    end
with
  list_rename V t (l : UTS_list V t) W (f : V ---> W) : UTS_list W t :=
     match l in UTS_list _ t return UTS_list W t with
     | TT => TT W
     | constr b bs elem elems => elem //- f ^^ b -::- elems //-- f
     end
where "x //- f" := (rename f x)
and "x //-- f" := (list_rename x f).
Fixpoint subst (V W : TYPE) (f : V ---> UTS W) (v : UTS V) :
  UTS W := match v in UTS _ return UTS _ with
    | Var v => f v
    | Build i l => Build (l >>== f)
    end
with
  list_subst V W t (l : UTS_list V t) (f : V ---> UTS W) : UTS_list W t :=
     match l in UTS_list _ t return UTS_list W t with
     | TT => TT W
     | elem -::- elems =>
       elem >== _lshift f -::- elems >>== f
     end
where "x >== f" := (subst f x)
and "x >>== f" := (list_subst x f).
\end{lstlisting}

\noindent
Accordingly, the definition of a monadic structure on $V \mapsto \Delta\UTS(V)$ differs from the monad
\lstinline!STS_monad! of \autoref{section:sts_initial} only in the occasional use of the functor 
$\Delta$ (\lstinline!Delta!) on the morphisms --- 
corresponding to the definition of the left adjoint for \autoref{lem:adj_mon_rmon}: 
\begin{form}[Relative Monad Freely Generated by 1--Signature]
 \begin{lstlisting}
Program Instance UTS_sm_rmonad : RMonad_struct Delta UTS_sm := {
  rweta c := #Delta (@Var c);
  rkleisli a b f := #Delta (subst f) }.
Canonical Structure UTSM := Build_RMonad UTS_sm_rmonad.
 \end{lstlisting}
\end{form}

\noindent
The monad \lstinline!UTSM! is easily equipped with a representation of the signature \lstinline!S!;
the carrier of the representation of \lstinline!i : sig_index S! is given by 
the function 
\begin{lstlisting}
fun (X : prod_mod_c _ V (sig i)) => Build (i:=i) (UTSl_f_pm (V:=V) X)
\end{lstlisting}
that is, by the constructor \lstinline!Build i! of the type \lstinline!UTS!, precomposed
with an isomorphism \lstinline!UTSl_f_pm! from \lstinline!prod_mod_c UTS! to \lstinline!UTS_list!.
We thus obtain a representation \lstinline!UTSRepr! of the signature \lstinline!S!.

Given another representation, say, \lstinline!R!, of \lstinline!S!, the morphism \lstinline!init!
from \lstinline!UTSRepr! to \lstinline!R!
is defined by recursion:
\begin{lstlisting}
Fixpoint init V (v : UTS V) : R V :=
        match v in UTS _ return R V with
        | Var v => rweta (RMonad_struct := R) V v
        | Build i X => repr R i V (init_list X)
        end
with
 init_list l (V : TYPE) (s : UTS_list V l) : prod_mod R l V :=
    match s in UTS_list _ l return prod_mod R l V with
    | TT => TTT _ _
    | elem -::- elems => init elem -:- init_list elems
    end.
\end{lstlisting}
This map \lstinline!init! is compatible with lifting and substitution in \lstinline!UTSM! and \lstinline!R!, respectively:
\begin{lstlisting}
Lemma init_lift V x W (f : V ---> W) :
   init (x //- f) = rlift R f (init x).
Lemma init_kleisli V (v : UTS V) W (f : Delta V ---> UTS_sm W) :
  init (v >== f) = rkleisli (f ;; @init_sm W) (init v).
\end{lstlisting}
where \lstinline!init_sm W ! is the (trivially) monotone version of \lstinline!init W! --- 
 the adjunct of \lstinline!init W ! under the adjunction of \autoref{lem:adj_set_po}.
The latter of those lemmas constitutes an important part of the proof that \lstinline!init! is the carrier of a module 
morphism from \lstinline!UTSM! to \lstinline!R!.
It is trivial to prove that \lstinline!init! is also compatible with the representation structure of \lstinline!UTSRepr! and \lstinline!R!,
thus the carrier of a morphism of representations called \lstinline!init_rep : UTSRepr ---> R!.
Afterwards uniqueness of \lstinline!init_rep! is proved:

\begin{lstlisting}
Lemma init_unique :forall f : UTSRepr ---> R , f == init_rep.
\end{lstlisting}
\noindent
Finally we establish initiality by an instance declaration of the corresponding class:
\begin{lstlisting}
Program Instance UTS_initial : Initial (REP S) := {
  Init := UTSRepr ;
  InitMor R := init_rep R }.
\end{lstlisting}

\section{Inequations and Initial Representation of a 2--Signature}\label{sec:ineq_formal}\label{subsec:ineq_formal}

For a 1--signature $S$, an $S$-module is defined to be a functor from representations of $S$ to the category 
whose objects are pairs of a monad $P$ and a module $M$ over $P$, cf.\ \autoref{def:half_eq_untyped}. 
We do not need the functor properties, and use dependent types instead of the cumbersome category of pairs,
in order to ensure that a representation in a monad $P$ is mapped to a $P$--module.

The below definition makes use of two \emph{coercions}. Firstly, we may write $a:\C$ because the 
``object'' projection of the category record (cf.\ \autoref{lst:cat}) is declared as a coercion.
Secondly, the monad underlying any representation can be accessed without explicit projection
using the coercion in \autoref{code:rep_in_rmonad} we mentioned above.

\begin{lstlisting}
Record S_Module := {
  s_mod :> forall R : REP S, RMOD R wOrd ;
  s_mod_hom :> forall (R T : REP S)(f : R ---> T),
         s_mod R ---> PbRMod f (s_mod T)  }.
Notation "U @ f" := (s_mod_hom U f)(at level 4). 
\end{lstlisting}
Note that we write \lstinline!U@f! for the image of the morphism of representations \lstinline!f! under
the $S$--module \lstinline!U!. Source and target module of \lstinline!f! are implicit arguments in this application.

A half-equation is a natural transformation between $S$-modules. We need the naturality condition in the following.
Since we have not formalized $S$-modules as functors, we have to state naturality explicitly:

\begin{form}[Half--Equation, \autoref{def:half_eq_untyped}]
\label{code:half_equation_untyped}
\begin{lstlisting}
Class half_equation_struct (U V : S_Module) 
    (half_eq : forall R : REP S, U R ---> V R) := {
  comm_eq_s : forall (R T : REP S)  (f : R ---> T), 
      U @ f ;; PbRMod_Hom _ (half_eq T) ==  half_eq R ;; V @ f }.
Record half_equation (U V : S_Module) := {
  half_eq :> forall R : REP S, U R ---> V R ;
  half_eq_s :> half_equation_struct half_eq }.
\end{lstlisting}
\end{form}

\noindent
We now formalize \emph{classic} $S$--modules.
Any list of natural numbers uniquely specifies a classic $S$--module, cf.\ \autoref{def:alg_s_mod}.
Given a list of naturals \lstinline!codl!, we call this $S$--module \lstinline!S_Mod_classic codl!.
A \emph{classic half--equation} is any half--equation with a classic co\-do\-main, 
and a classic inequation is a pair of parallel classic half--equations
(cf.\ \autoref{def:classic_ineq_untyped}):

\begin{lstlisting}
Definition half_eq_classic (U : S_Module)(codl : [nat]) := 
                               half_equation U (S_Mod_classic codl).
Record ineq_classic := {
  Dom : S_Module ;
  Cod : [nat] ;
  eq1 : half_eq_classic Dom Cod ;
  eq2 : half_eq_classic Dom Cod }.
\end{lstlisting}

\noindent
Give a representation \lstinline!P! and a (classic) inequation \lstinline!e!, 
we check whether \lstinline!P! satisfies  \lstinline!e! by pointwise comparison (cf.\ \autoref{def:rep_ineq_untyped}):

\begin{lstlisting}
Definition satisfies_ineq (e : ineq_classic) (P : REP S) :=
  forall c (x : Dom e P c),
        eq1 _ _ _ x << eq2 _ _ _ x.
(* for a family of inequations indexed by a set A *)
Definition Inequations (A : Type) := A -> ineq_classic.
Definition satisfies_ineqs A (T : Inequations A) (R : REP S) :=
      forall a, satisfies_ineq (T a) R.
\end{lstlisting}

\noindent
We formalize sets of classic inequations as pairs of an indexing type \lstinline!A! together with a term of type \lstinline!Inequations A!,
that is, a map from \lstinline!A! 
to the type of classic inequations \lstinline!ineq_classic!.
The category of representations of $(S,A)$ is obtained as a full subcategory of the category of representations of $S$.
The following declaration produces a subcategory from predicates on the type of representations and 
on the (dependent) type of morphisms of representations, yielding the category \lstinline!PROP_REP! of representations of
$(S,A)$:

\begin{lstlisting}
Variable A : Type.
Variable T : Inequations A.
Program Instance Ineq_Rep : SubCat_compat (REP S)
     (fun P => satisfies_ineqs T P) (fun a b f => True).
Definition INEQ_REP : Cat := SubCat Ineq_Rep.
\end{lstlisting}

\noindent
We now construct the initial object of \lstinline!INEQ_REP!.
The relation on the initial object is defined precisely as in the paper proof, cf.\ \autoref{eq:order_untyped}:
\begin{lstlisting}
Definition prop_rel_c X (x y : UTS S X) : Prop :=
      forall R : PROP_REP, init (FINJ _ R) x << init (FINJ _ R) y.
\end{lstlisting}
Here, \lstinline!FINJ _ R! denotes the representation \lstinline!R! as a representation of \lstinline!S!,
i.e.\ the injection of \lstinline!R! in the category \lstinline!REP S! of representations of \lstinline!S!.
The relation defined above is indeed a preorder, and 
we define the monad \lstinline!UTSP! to be the monad whose underlying sets are identical 
to \lstinline!UTSM!, namely the sets defined by \lstinline!UTS!, but equipped with this new preorder.
This monad \lstinline!UTSP! is denoted by $\Sigma_A$ in the paper proof.

The representation module morphisms of the initial representation \lstinline!UTSRepr! can be ``reused''
after having proved their compatibility with the new order, yielding a representation \lstinline!UTSProp!.
An important lemma states that this representation satisfies the inequations of \lstinline!T!:
\begin{lstlisting}
Lemma UTSPRepr_sig_prop : satisfies_ineqs T UTSProp.
\end{lstlisting}
We have to explicitly inject the representation into the category of representations of $(S,A)$:
\begin{lstlisting}
Definition UTSPROP : INEQ_REP :=
 exist (fun R : Representation S => satisfies_ineqs T R) UTSProp
  UTSPRepr_sig_prop.
\end{lstlisting}

\noindent
For building the initial morphism towards any representation \lstinline!R : INEQ_REP!, 
we first build the corresponding morphism in the category of representations of $S$:
\begin{lstlisting}
Definition init_prop_re : UTSPropr ---> (FINJ _ R) := ...
\end{lstlisting}
which we then inject, analogously to the initial representation, into the subcategory of 
representations of $(S,A)$:
\begin{lstlisting}
Definition init_prop : UTSPROP ---> R := exist _ (init_prop_re R) I.
\end{lstlisting}

\noindent
Finally we prove \autoref{thm:init_w_ineq_untyped}: 
An initial object of a category is given by
an object \lstinline!Init! of this category, a map associating go any object \lstinline!R!
a morphism \lstinline!InitMor R : Init ---> R!, and a proof of uniqueness of any such morphism.
We instanciate the type class \lstinline!Initial! for the category \lstinline!INEQ_REP!
of representations of $(S,A)$:

\begin{lstlisting}
Program Instance INITIAL_INEQ_REP : Initial INEQ_REP := {
  Init := UTSPROP ;
  InitMor := init_prop ;
  InitMorUnique := init_prop_unique }.
\end{lstlisting}

\noindent
We check its type after closing all the sections --- and thus abstracting from the section variables:
\begin{lstlisting}
Check INITIAL_INEQ_REP.
INITIAL_INEQ_REP
     : forall (S : Signature) (A : Type) (T : Inequations S A),
       Initial (INEQ_REP (S:=S) (A:=A) T)
\end{lstlisting}

\section{\texorpdfstring{$\Lambda\beta$}{Lambdabeta}: Lambda Calculus with \texorpdfstring{beta}{Beta} reduction}

We implement the example 2--signature $\Lambda\beta$, cf.\ \autoref{ex:2--sig_ulcbeta}.
Throughout this section, we use use a custom notation in \textsf{Coq} for the datatype of lists:
\begin{lstlisting}
Notation "[[ x ; .. ; y ]]" := (cons x .. (cons y nil) ..). 
\end{lstlisting}


In order to specify the 1--signature $\Lambda$ (cf.\ \autoref{def:1--signature_untyped}, \autoref{ex:sig_ulc_syn}),
we first 
define an indexing set \lstinline!Lambda_index! consisting of two elements, \lstinline!ABS! and \lstinline!APP!.
This indexing set reflects the fact that the signature $\Lambda$ consists of two arities.
The record instance \lstinline!Lambda! is a term of type \lstinline!Signature! (cf.\ \autoref{code:1--signature_untyped}).
The map \lstinline!sig Lambda! then associates the corresponding lists of naturals to each of these elements, according to 
 \autoref{ex:sig_ulc_syn}:
\begin{lstlisting}
Inductive Lambda_index := ABS | APP.
Definition Lambda : Signature := {|
  sig_index := Lambda_index ;
  sig := fun x => match x with 
                  | ABS => [[ 1 ]] 
                  | APP => [[ 0 ; 0]]
                  end |}.
\end{lstlisting}


The definition of the inequation $\beta$ (cf.\ \autoref{ex:sig_ulc_prop}) is a more challenging task, since a half--equation is not 
just an element of a simple datatype like a 1--arity, but given by suitable module morphisms.

At first, we define the substitution of \emph{one} variable (cf.\ \autoref{def:subst_half_eq}) as a half--equation.
The carrier \lstinline!subst_carrier! of the substitution is defined as in \autoref{def:hat_P_subst}.
Afterwards we prove that this carrier satisfies the properties of a module morphism, that is, is compatible with 
substitution in the source and target modules. After abstracting from the section variable \lstinline!R!,  
we obtain a function \lstinline!subst_module_mor!
which, given any representation \lstinline!R! of \lstinline!S!, yields the
substitution module morphism associated to (the monad underlying) \lstinline!R!.

\begin{lstlisting}
Variable S : Signature.
Variable R : REP S.
Definition subst_carrier :
   (forall c : TYPE, (S_Mod_classic_ob [[1; 0]] R) c ---> 
         (S_Mod_classic_ob [[0]] R) c) := ...
Program Instance sub_struct : RModule_Hom_struct
      (M:=S_Mod_classic_ob [[1 ; 0]] R) 
      (N:=S_Mod_classic_ob [[0]] R) 
   subst_carrier.
Definition subst_module_mor := Build_RModule_Hom (sub_struct R).
\end{lstlisting}

\noindent
The last step is to prove ``naturality'', that is, the commutativity of the family of diagrams of \autoref{code:half_equation_untyped}.
We recall that we do not implement $S$--modules as functors, but just as the data part of functors. 
This is why we put the word \emph{naturality} in quotes. After the proof we define our first half--equation, \lstinline!subst_half_eq!.

\begin{lstlisting}
Program Instance subst_half_s : half_equation_struct 
      (U:= S_Mod_classic [[1 ; 0]]) 
      (V:= S_Mod_classic [[0]]) 
   subst_module_mor.
Definition subst_half_eq := Build_half_equation subst_half_s.
\end{lstlisting}

The definition of the second half--equation of \autoref{ex:app_circ_half} is possible for any 1--signature with abstraction and application, such as 
the 1--signature $\Lambda$. To keep the example simple, we only define the half--equation for $\Lambda$.
The needed steps are precisely the same as for the substitution half--equation, so we just give the statements.
\begin{lstlisting}
Definition beta_carrier :
   (forall c : TYPE, (S_Mod_classic_ob [[1; 0]] R) c ---> 
         (S_Mod_classic_ob [[0]] R) c) := ...
Program Instance beta_struct : RModule_Hom_struct
      (M:=S_Mod_classic_ob [[1 ; 0]] R) 
      (N:=S_Mod_classic_ob [[0]] R) 
   beta_carrier.
Definition beta_module_mor := Build_RModule_Hom beta_struct.
Program Instance beta_half_s : half_equation_struct
      (U:=S_Mod_classic Lambda [[1 ; 0]])
      (V:=S_Mod_classic Lambda [[0]])
   beta_module_mor.
Definition beta_half_eq := Build_half_equation beta_half_s.
\end{lstlisting}

\noindent
In the end we package both half--equations into one inequation specifying the beta rule of \autoref{ex:sig_ulc_prop}.

\begin{lstlisting}
Definition beta_rule : ineq_classic Lambda := {|
   eq1 := beta_half_eq ;
   eq2 := subst_half_eq Lambda |}.
\end{lstlisting}

\noindent
We can now associate a short name to the category of representations of $\Lambda\beta$, where, for increased clarity,
we specify the implicit arguments:
\begin{lstlisting}
Definition Lambda_beta_Cat := INEQ_REP
    (S:=Lambda)(A:=unit)(fun x : unit => beta_rule).
\end{lstlisting}
Note that our formal definition allows that an inequation appears multiple times in a 2--signature, whereas in the informal definition we
have \emph{sets} of inequations.
Unlike for arities, having several copies of the same inequation 
does not change the resulting category neither the initial object, of course.
The initial representation is obtained via the specification
\begin{lstlisting}
Definition Lambda_beta := @Init _ _ _ 
        (INITIAL_INEQ_REP (fun x : unit => beta_rule)).
\end{lstlisting}

\chapter{A Faithful Translation of \texorpdfstring{$\PCF$}{PCF} to \texorpdfstring{$\ULC$}{ULC}}\label{chap:comp_sem_formal}

In this chapter we describe the implementation of the category of representations of \PCF, equipped with 
reduction rules --- we refer to it as \emph{semantic \PCF} from now on --- 
as described informally in \autoref{subsec:semantic_ulc_pcf}. 
We state the reduction rules more precisely later.
This theorem is an instance of \autoref{thm:init_w_ineq_typed} proved in \autoref{chap:comp_types_sem}. 
However, for the implementation in \textsf{Coq} of this  instance we make several simplifications compared to the general theorem:

\begin{itemize}
 \item we do not define a notion of 2--signature, but specify directly a \textsf{Coq} type of
       representations of semantic \PCF ;
 \item we use dependent \textsf{Coq} types to formalize arities of higher degree (cf.\ \autoref{def:half_arity_degree_semantic_typed}),
        instead of relying on modules on categories with pointed index sets. A representation of an arity of degree $n$
        is thus given by a family of module morphisms (of degree zero), indexed $n$ times over the respective object type as 
        described in \autoref{rem:family_of_mods_cong_pointed_mod_relative};
 \item the relation on the initial object is not defined via the formula of \autoref{eq:order_typed}, but directly 
        through an inductive type, cf.\ \autoref{code:pcf_eval}, and various closures, cf.\ \autoref{code:pcf_propag}.
\end{itemize}

\section{Representations of \texorpdfstring{\PCF}{PCF}}

In this section we explain the formalization of representations of semantic \PCF.
According to \autoref{def:1--rep_typed} and \autoref{def:2--rep_typed}, such a representation
consists of 
\begin{packenum}
   \item a representation of the types of \PCF~(in a \textsf{Coq} type \lstinline!U!), cf.\ \autoref{ex:type_PCF},
   \item a relative monad \lstinline!P! over the functor $\family{\Delta}{U}$ (in the formalization: \lstinline!IDelta U!) and
   \item representations of the arities of \PCF~(cf.\ \autoref{ex:term_sig_pcf}), i.e. morphisms of $P$--modules with suitable source and target modules such that
   \item the inequations defining the reduction rules of \PCF~are satisfied. \label{list:last_item}
\end{packenum}

\noindent
A representation of \PCF~should be a ``bundle'', i.e.\ a record type, whose components --- or ``fields'' --- 
are these \ref{list:last_item} items. 
In order to ease the definitions, we first define what a representation of the term signature of \PCF~in a monad $P$ is, 
 in the presence of an $S_\PCF$--monad (cf.\ \autoref{def:s-rmon}).
Unfolding the definitions, we suppose given a type \lstinline!Sorts!, a relative monad \lstinline!P! over \lstinline!IDelta Sorts!
and three operations on \lstinline!Sorts!: a binary function \lstinline!Arrow! --- denoted by an infixed ``\lstinline!~~>!''
--- and two constants \lstinline!Bool! and \lstinline!Nat!.
\begin{lstlisting}
Variable Sorts : Type.
Variable P : RMonad (IDelta Sorts).
Variable Arrow : Sorts -> Sorts -> Sorts.
Variable Bool : Sorts.
Variable Nat : Sorts.
Notation "a ~~> b" := (Arrow a b) (at level 60, right associativity).
\end{lstlisting}
In this context, a representation of \PCF~is given by a bunch of module morphisms satisfying some conditions.
We split the definition into smaller pieces.
Note that \lstinline!M[t]! denotes the fibre module of module \lstinline!M! with respect to \lstinline!t!, 
and \lstinline!d M // u! denotes derivation of module \lstinline!M! with respect to \lstinline!u!.
The module denoted by a star \lstinline!*! is the terminal module, which is the constant singleton module.
\begin{form}[1--Signature of $\PCF$]\label{code:rpcf_1-sig}
\begin{lstlisting}
Class PCFPO_rep_struct := {
  app : forall u v, (P[u ~~> v]) x (P[u]) ---> P[v];
  abs : forall u v, (d P // u)[v] ---> P[u ~~> v];
  rec : forall t, P[t ~~> t] ---> P[t];
  tttt : * ---> P[Bool];
  ffff : * ---> P[Bool];
  nats : forall m:nat, * ---> P[Nat];
  Succ : * ---> P[Nat ~~> Nat];
  Pred : * ---> P[Nat ~~> Nat];
  Zero : * ---> P[Nat ~~> Bool];
  CondN: * ---> P[Bool ~~> Nat ~~> Nat ~~> Nat];
  CondB: * ---> P[Bool ~~> Bool ~~> Bool ~~> Bool];
  bottom: forall t, * ---> P[t];
  ...
\end{lstlisting}
\end{form}

\noindent
These module morphisms are subject to some inequations specifying the reduction rules of \autoref{subsec:semantic_ulc_pcf},
 or, equivalently, \autoref{ex:pcf_ineqs}.
 The beta rule reads as
\begin{form}[Beta Rule for Representations of $\PCF$] \label{code:rpcf_beta}
\begin{lstlisting}  
  beta_red : forall r s V y z, app r s V (abs r s V y, z) << y[*:= z] ;
  ...
\end{lstlisting}
\end{form}

\noindent
where \lstinline!y[*:= z]! is the substitution of the freshest variable (cf.\ \autoref{def:hat_P_subst_typed}) 
as a special case of simultaneous monadic substitution. 
The rule for the fixed point operator says that $\mathbf{Y}(f) \leadsto f\left(\mathbf{Y}(f)\right)$:
\begin{form}[Inequation for Fixedpoint Operator]\label{code:rpcf_rec}
\begin{lstlisting}
  Rec_A: forall V t g, rec _ _ g << app t t V (g, rec _ _ g) ;
  ...
\end{lstlisting}
\end{form}

\noindent
The other inequations concern the arithmetic and logical constants of \PCF.
Firstly, we have that the conditionals reduce according to the truth value
they are applied to:
\begin{form}[Logic Inequations of $\PCF$ Representations]\label{code:rpcf_cond}
\begin{lstlisting}
  CondN_t: forall V n m,
     app _ _ _ (app _ _ _
          (app _ _ _ (CondN V tt, tttt _ tt), n), m) << n ;
  CondN_f: forall V n m,
     app _ _ _ (app _ _ _
          (app _ _ _ (CondN V tt, ffff _ tt), n), m) << m ;
  CondB_t: forall V u v,
     app _ _ _ (app _ _ _
          (app _ _ _ (CondB V tt, tttt _ tt), u), v) << u ;
  CondB_f: forall V u v,
     app _ _ _ (app _ _ _
          (app _ _ _ (CondB V tt, ffff _ tt), u), v) << v ;
   ...
\end{lstlisting}
\end{form}

\noindent
Furthermore, we have that $\Succ(n)$ reduces to $n+1$ (which in \textsf{Coq} is written \lstinline!S n!), 
reduction of the $\zeroqu$ predicate
 according to whether its argument is zero or not, and that the predecessor  is 
 post--inverse to the successor function:
\begin{form}[Arithmetic Inequations of $\PCF$ Representations] \label{code:rpcf_arith}
\begin{lstlisting}
  Succ_red: forall V n,
     app _ _ _(Succ V tt, nats n _ tt) << nats (S n) _ tt ;
  Zero_t: forall V,
     app _ _ _(Zero V tt, nats 0 _ tt) << tttt _ tt ;
  Zero_f: forall V n,
     app _ _ _(Zero V tt, nats (S n) _ tt) << ffff _ tt ;
  Pred_Succ: forall V n,
     app _ _ _(Pred V tt, app _ _ _ (Succ V tt, nats n _ tt)) << nats n _ tt;
  Pred_Z: forall V,
     app _ _ _(Pred V tt, nats 0 _ tt) << nats 0 _ tt  }.
\end{lstlisting}
\end{form}

\noindent
Unfortunately, at this stage of the definition, we were not able to introduce a more convenient notation for application,
neither to omit the arguments denoted by an underscore as instances of implicit arguments.
After abstracting over the section variables we package all of this into a record type:

\begin{lstlisting}
Record PCFPO_rep := {
  Sorts : Type;
  Arrow : Sorts -> Sorts -> Sorts;
  Bool : Sorts ;
  Nat : Sorts ;
  pcf_rep_monad :> RMonad (IDelta Sorts);
  pcf_rep_struct :> PCFPO_rep_struct pcf_rep_monad  Arrow Bool Nat }.
Notation "a ~~> b" := (Arrow a b) (at level 60, right associativity).
\end{lstlisting}
The type \lstinline!PCFPO_rep! later constitutes the type of objects of the category of representations of semantic \PCF.

\section{Morphisms of Representations}

A morphism of representations (cf.\ \autoref{def:rel_mor_of_reps_typed}) is built from a morphism $g$ of \emph{type} representations
and a colax monad morphism over the retyping functor associated to the map $g$. 
The implementation of retyping is explained in \autoref{code:retype}.
In the particular case of $\PCF$, a morphism of representations from $P$ to $R$ consists of a 
morphism of representations of the types of \PCF~--- with underlying map \lstinline!Sorts_map! --- and
a colax morphism of relative monads which makes commute the diagrams of the form given in \autoref{def:rel_mor_of_reps_typed}.
We first define the diagrams we expect to commute, before packaging everything into a record type of morphisms.
The context is given by the following declarations: 
\begin{lstlisting}
Variables P R : PCFPO_rep.
Variable Sorts_map : Sorts P -> Sorts R.
Hypothesis HArrow : forall u v, Sorts_map (u ~~> v) = Sorts_map u ~~> Sorts_map v.
Hypothesis HBool : Sorts_map (Bool _ ) = Bool _ .
Hypothesis HNat : Sorts_map (Nat _ ) = Nat _ .

Variable f : colax_RMonad_Hom P R
    (G1:=RETYPE (fun t => Sorts_map t))
    (G2:=RETYPE_PO (fun t => Sorts_map t))
  (RT_NT (fun t => Sorts_map t)).
\end{lstlisting}

\noindent
We explain the commutative diagrams of \autoref{def:rel_mor_of_reps_typed} for some of the arities. 
For the successor arity we ask the following diagram to commute:
\begin{form}[Commutative Diagram for Successor Arity]
\label{code:pcf_succ_diag}
\begin{lstlisting}
Program Definition Succ_hom' :=
  Succ ;; f [(Nat ~~> Nat)] ;; Fib_eq_RMod _ _ ;; IsoPF
           == 
  *--->* ;; f ** Succ.
\end{lstlisting}
\end{form}

\noindent
Here the morphism \lstinline!Succ! refers to the representation of the successor arity either of \lstinline!P! (the
first appearance) or \lstinline!R! (the second appearance) --- \textsf{Coq} is able to figure this out itself.
The domain of the successor is given by the terminal module $*$. Accordingly, we have that $\dom(\SUCC, f)$ is 
the trivial module morphism with domain and codomain given by the terminal module. We denote this module morphism by \lstinline!*--->*!.
The codomain is given as the fibre of $f$ of type $\Nat \PCFar \Nat$.
The two remaining module morphisms are isomorphisms which do not appear in the informal description.
The isomorphism \lstinline!IsoPF! is needed to permute fibre with pullback (cf.\ \autoref{lem:rel_pb_fibre}).
The morphism \lstinline!Fib_eq_RMod M H! takes a module \lstinline!M! and a proof \lstinline!H! of equality of two object types as 
arguments, say, \lstinline!H : u = v!. Its output is an isomorphism \lstinline! M[u] ---> M[v]!. Here the proof is of type
 \begin{lstlisting}
Sorts_map (Nat ~~> Nat) = Sorts_map Nat ~~> Sorts_map Nat 
 \end{lstlisting}
and \textsf{Coq} is able to figure out the proof itself. 
We expand on this kind of modules in \autoref{sec:digression_coq_equality}
The diagram for application uses the product of module morphisms, denoted by an infixed \lstinline!X!:
\begin{form}[Commutative Diagram for Application Arity]
\label{code:pcf_app_diag}
\begin{lstlisting}
Program Definition app_hom' := forall u v,
    app u v;; f [( _ )] ;; IsoPF 
          ==
    (f [(u ~~> v)] ;; Fib_eq_RMod _ (HArrow _ _);; IsoPF ) 
         X
    (f [(u)] ;; IsoPF ) ;; 
       IsoXP ;; f ** (app _ _ ).
\end{lstlisting}
\end{form}

\noindent
In addition to the already encountered isomorphism \lstinline!IsoPF! we have to insert an isomorphism \lstinline!IsoXP! 
which permutes pullback and product (cf.\ \autoref{lem:rel_pb_prod}).
As a last example, we present the property for the abstraction:
\begin{form}[Commutative Diagram for Abstraction Arity]
\label{code:pcf_abs_diag}
\begin{lstlisting}
Program Definition abs_hom' := forall u v,
    abs u v ;; f [( _ )]
          ==
    DerFib_RMod_Hom _ _ _ ;;  IsoPF ;;
    f ** (abs (_ u) (_ v)) ;; IsoFP ;;
    Fib_eq_RMod _ (eq_sym (HArrow _ _ )) .
\end{lstlisting}
\end{form}

\noindent
Here the module morphism \lstinline!DerFib_RMod_Hom f u v! corresponds to the morphism $\dom(\Abs(u,v),f) = \fibre{f^u}{v}$, and 
\lstinline!IsoFP! permutes fibre with pullback, just like its sibling \lstinline!IsoPF!, but the other way round.

We bundle all those properties into a type class:
\begin{lstlisting}
Class PCFPO_rep_Hom_struct := {
  CondB_hom : CondB_hom' ;
  CondN_hom : CondN_hom' ;
  Pred_hom : Pred_hom' ;
  Zero_hom : Zero_hom' ;
  Succ_hom : Succ_hom' ;
  fff_hom : fff_hom' ;
  ttt_hom : ttt_hom' ;
  bottom_hom : bottom_hom' ;
  nats_hom : nats_hom' ;
  app_hom : app_hom' ;
  rec_hom : rec_hom' ;
  abs_hom : abs_hom' }.
\end{lstlisting}

\noindent
Similarly to what we did for representations, we abstract over the section variables and define a record type of 
morphisms of representations from \lstinline!P! to \lstinline!R! :
\begin{lstlisting}
Record PCFPO_rep_Hom := {
  Sorts_map : Sorts P -> Sorts R ;
  HArrow : forall u v, Sorts_map (u ~~> v) = Sorts_map u ~~> Sorts_map v;
  HNat : Sorts_map (Nat _ ) = Nat R ;
  HBool : Sorts_map (Bool _ ) = Bool R ;
  rep_Hom_monad :> colax_RMonad_Hom P R (RT_NT Sorts_map);
  rep_colax_Hom_monad_struct :> PCFPO_rep_Hom_struct
                 HArrow HBool HNat rep_Hom_monad }.
\end{lstlisting}

\section{Digression on Equal Fibre Modules  in \textsf{Coq}} \label{sec:digression_coq_equality}

  Suppose $Q$ is a relative monad on some functor $F : \C\to \D$ and $M$ is a $Q$--module with codomain $\TP{T}$.
 Let $u,t \in T$ and suppose given a proof $H$ of the proposition $u = t$. We can now prove $\fibre{M}{u} = \fibre{M}{t}$,
but unfortunately this is not sufficient for composing a morphism with 
 codomain $\fibre{M}{u}$ with one whose domain is $\fibre{M}{t}$ in \textsf{Coq} (cf.\ \autoref{subsec:dep_hom_types}).
 Indeed, the problem we encounter here is even worse than that of permutation of pullback with fibre, derivation and products 
 (see e.g.\ \autoref{subsec:sts_formal_reps}), since not even
 the carriers of $\fibre{M}{u}$ and $\fibre{M}{t}$ are convertible. This means that the isomorphism we have to insert
  does not even allow for an underlying family of \emph{identity} maps as carriers, but instead is a transport of the 
form \lstinline!eq_rect!.

  In more detail, the carrier of $M$ is a map from the objects of $\C$ to $\TP{T}$, that is, for each $c\in \C$, its
  image $Mc \in \TP{T}$ is basically a dependent type (with some structure). 
  The fibre is then simply computed by application.
  The carrier of a module morphism $\rho:\fibre{M}{u} \to \fibre{M}{t}$ thus consists of a family of maps of sets
   indexed by objects $c\in \C$,
   \[ \rho_c : M(c)(u) \to M(c)(t) \enspace . \]
 In intensional type theory, we have an \emph{explicit} cast operator \lstinline!eq_rect! which allows the definition of precisely 
 such a map:
\begin{lstlisting}
Check eq_rect.
   eq_rect
     : forall (A : Type) (x : A) (P : A -> Type),
       P x -> forall y : A, x = y -> P y
\end{lstlisting}
 Note that this operator is equivalent to the operator $\mathrm{J}$ in Hofmann's PhD thesis \cite{hofmann_phd}, 
   whose typing rule is called \textsc{Id-Elim-J}.

Here we instantiate \lstinline!A! by set of object types $T$ and the dependent type \lstinline!P! by $M(c)$,
 allowing us to define a map \lstinline!transport! from $M(c)(u)$ to $M(c)(t)$:
\begin{lstlisting}
Variable T : Type.
Variables u t : T.
Variable M : RMOD Q (IPO T).
Hypothesis H : u = t.
Definition transport (c : C) : M c u -> M c t :=
   fun (s : M c u) => eq_rect u (fun t : T => (M c) t) s t H.
\end{lstlisting}

\noindent
Fortunately it is possible to get rid of the transport via a computation rule equivalent to a rule named \textsc{Id-Comp} in Hofmann's thesis. 
In \textsf{Coq} this rule says that the term
\begin{lstlisting}
eq_rect u P a a eq_refl 
\end{lstlisting}
reduces to --- and thus in particular is provably equal to --- the term \lstinline!a! itself.
Thus a considerable part of proof code in the following is about elimination of explicit casts. Indeed, the scheme is as follows:
we start with a goal

\begin{lstlisting}
 ______________________________________(1/1)
G
\end{lstlisting}
such that \lstinline!G! contains a subterm \lstinline!eq_rect u P a b H!, i.e.\ with \lstinline!H : a = b!.
We then generalize \lstinline!H!, yielding the goal
\begin{lstlisting}
 ______________________________________(1/1)
forall H : a = b, G
\end{lstlisting}
Now rewriting with a proof of \lstinline!a = b! (using a copy of \lstinline!H!) turns the goal into
\begin{lstlisting}
 ______________________________________(1/1)
forall H : a = a, G
\end{lstlisting}
After introducing \lstinline!H!, we can rewrite \lstinline!H! in the goal into \lstinline!eq_refl! using the 
 axiom \lstinline!UIP_refl! which says that any proof of \lstinline!a = a! is equal to \lstinline!eq_refl!.
 Thus the goal \lstinline!G! contains the subterm \lstinline!eq_rect u P a a eq_refl!,
  which simplifies to \lstinline!a! --- the transport has disappeared.
Note that for the rewrite of \lstinline!forall H : a = b! into \lstinline!forall H : a = a! in the goal,
 many other terms from the context have to be generalized, as well as structures broken into their constituent pieces,
 in order to obtain sufficient flexibility in the goal for the rewrite to result in a well--typed term.

\section{Equality of Morphisms, Category of Representations}

We have already seen how some definitions that are trivial in informal mathematics, turn into something awful 
in intensional type theory. Equality of morphisms of representations is another such definition.
Informally, two such morphisms $a, c : P \to R$ of representations are equal if 
\begin{enumerate}
 \item their map of object types $f_a$ and $f_c$ (\lstinline!Sorts_map!) are equal and
 \item their underlying colax morphism of monads --- also called $a$ and $c$ --- are equal.
\end{enumerate}
In our formalization, the second condition is not even directly expressable, since these monad morphisms 
  do not have the same type: 
 we have, for a context $V \in \TS{P}$,
\[ a_V : \retyping{f_a}(PV) \to R(\retyping{f_a}V) \]
and 
\[ c_V : \retyping{f_c}(PV) \to R(\retyping{f_c}V)  \enspace . \]
where $\TS{P}$ is a notation for contexts typed over the set of object types the representation $P$ comes with, 
formally the type \lstinline!Sorts P!.
We can only compare $a_V$ to $c_V$ by composing each of them with a suitable transport \lstinline!transp! again, yielding morphisms
\[ \comp{a_V}{R(\text{\lstinline!transp!})} : \retyping{f_a}(PV) \to R(\retyping{f_a}V) \to R(\retyping{f_c}V)\]
and
\[ \comp{\text{\lstinline!transp'!}}{c_V}   : \retyping{f_a}(PV) \to \retyping{f_c}(PV) \to R(\retyping{f_c}V)  \enspace . \]
As before, for equal fibres $\fibre{M}{u}$ and $\fibre{M}{t}$ with $u = t$, the carriers of those transports 
\lstinline!transp! and \lstinline!transp'!
are terms of the form \lstinline!eq_rect _ _ _ H!, where \lstinline!H! is a proof term which depends on 
the proof of 
\begin{lstlisting}
forall x : Sorts P, Sorts_map c x = Sorts_map a x
\end{lstlisting}
of the first condition.
Altogether, the definition of equality of morphisms of representations is given by the following inductive proposition:
\begin{lstlisting}
Inductive eq_Rep (P R : PCFPO_rep) : relation (PCFPO_rep_Hom P R) :=
 | eq_rep : forall (a c : PCFPO_rep_Hom P R), 
            forall H : (forall t, Sorts_map c t = Sorts_map a t),
            (forall V, a V ;; rlift R (Transp H V)
                                    == 
                      Transp_ord H (P V) ;; c V ) -> eq_Rep a c.
\end{lstlisting}
The formal proof that the relation thus defined is an equivalence is inadequately long when compared to 
its mathematical complexity, due to the transport elimination.

Composition of representations is done by composing the underlying maps of sorts, as well as composing
the underlying monad morphisms pointwise. Again, this operation, which is trivial from a mathematical point
of view, yields a difficulty in the formalization, due to the fact that in the formalization
\[\retyping{g}(\retyping{f}V) \not\equiv \retyping{(\comp{f}{g})}V \enspace . \]
More precisely, suppose given two morphisms of representations $a : P \to Q$ and $b : Q \to R$,
given by families of morphisms indexed by $V$ resp.\ $W$,
\begin{align*} a_V &: \widetilde{PV}^a \to Q(\widetilde{V}^a)  \quad \text{and} \\
        b_W &: \widetilde{QW}^b \to R(\widetilde{W}^b) \enspace ,
\end{align*}
where we write $\widetilde{V}^a$ for $\retyping{f_a}V$.
The monad morphism underlying the composite morphism of representations is given by the following definition:

\[
 \begin{xy}
    \xymatrix{     \widetilde{PV}^{\comp{a}{b}} \ar[rr]^{\comp{a}{b}_V} \ar[d]_{\text{\lstinline!match!}} & &  R(\widetilde{V}^{\comp{a}{b}}) \\
                        PV  \ar[d]_{\text{\lstinline!ctype!}}                     &  &      R\left(\widetilde{\widetilde{V}^a}^b\right)   \ar[u]_{R(\cong)}  \\
                        \widetilde{PV}^a  \ar[r]_{a_V}        & Q(\widetilde{V}^a) \ar[r]_{\text{\lstinline!ctype!}}& \widetilde{Q(\widetilde{V}^a)}^b \ar[u]_{b_{\widetilde{V}^a}}
       }
 \end{xy}
\]
or, in \textsf{Coq} code,
\begin{lstlisting}
Definition comp_rep_car : (forall c : ITYPE U,
        RETYPE (fun t => f' (f t)) (P c) --->
     R ((RETYPE (fun t => f' (f t))) c)) :=
  fun (V : ITYPE U) t (y : retype (fun t => f' (f t)) (P V) t) =>
    match y with ctype _ z =>
      lift (M:=R) (double_retype_1 (f:=f) (f':=f') (V:=V)) _
          (b _ _ (ctype (fun t => f' t)
               (a _ _ (ctype (fun t => f t) z ))))
     end.
\end{lstlisting}
where \lstinline!double_retype_1! denotes the isomorphism in the upper right corner.
The proof of the commutative diagrams for the composite monad morphism is lengthy due to the number of arities of the signature of \PCF.
Definition of the identity morphisms is routine, and in the end we define the category of representations of semantic \PCF:
\begin{lstlisting}
Program Instance REP_s :
     Cat_struct (obj := PCFPO_rep) (PCFPO_rep_Hom) := {
  mor_oid P R := eq_Rep_oid P R ;
  id R := Rep_id R ;
  comp a b c f g := Rep_comp f g }.
\end{lstlisting}

\section{One Particular Representation}

We define a particular representation, which we later prove to be initial.
First of all, the set of object types of \PCF~is given as follows:
\begin{lstlisting}
Module PCF.
Inductive Sorts := 
  | Nat : Sorts
  | Bool : Sorts
  | Arrow : Sorts -> Sorts -> Sorts.
End PCF.
\end{lstlisting}

\noindent
For this section we introduce some notations:
\begin{lstlisting}
Notation "'TY'" := PCF.Sorts.
Notation "'Bool'" := PCF.Bool.
Notation "'Nat'" := PCF.Nat.
Notation "'IT'" := (ITYPE TY).
Notation "a '~>' b" := (PCF.Arrow a b) (at level 69, right associativity).
\end{lstlisting}

\noindent
We specify the set of \PCF~constants through the following inductive type, indexed by
the sorts of \PCF:
\begin{lstlisting}
Inductive Consts : TY -> Type :=
 | Nats : nat -> Consts Nat
 | ttt : Consts Bool
 | fff : Consts Bool
 | succ : Consts (Nat ~> Nat)
 | preds : Consts (Nat ~> Nat)
 | zero : Consts (Nat ~> Bool)
 | condN: Consts (Bool ~> Nat ~> Nat ~> Nat)
 | condB: Consts (Bool ~> Bool ~> Bool ~> Bool).
\end{lstlisting}
The set family of terms of \PCF~is given by an inductive family, parametrized by a context
\lstinline!V! and indexed by object types: 
\begin{lstlisting}
Inductive PCF (V: TY -> Type) : TY -> Type:=
 | Bottom: forall t, PCF V t
 | Const : forall t, Consts t -> PCF V t
 | Var : forall t, V t -> PCF V t
 | App : forall t s, PCF V (s ~> t) -> PCF V s -> PCF V t
 | Lam : forall t s, PCF (opt t V) s -> PCF V (t ~> s)
 | Rec : forall t, PCF V (t ~> t) -> PCF V t.
Notation "a @ b" := (App a b)(at level 43, left associativity).
Notation "M '" := (Const _ M) (at level 15).
\end{lstlisting}
Monadic substitution is defined recursively on terms:
\lstset{mathescape=false}
\begin{lstlisting}
Fixpoint subst (V W: TY -> Type)(f: forall t, V t -> PCF W t)
           (t : TY)(v : PCF V t) : PCF W t :=
    match v with
    | Bottom t => Bottom W t
    | c ' => c '
    | Var t v => f t v
    | u @ v => u >>= f @ v >>= f
    | Lam t s u => Lam (u >>= shift f)
    | Rec t u => Rec (u >>= f)
    end
where "y >>= f" := (@subst _ _ f _ y).
\end{lstlisting}

\noindent
Here \lstinline!shift f! is the substitution map \lstinline!f! extended to account for 
an extended context under the binder \lstinline!Lam!.
It is equal to the shifted map of \autoref{def:rel_module_deriv}.

\lstset{mathescape=true}
Finally, we define a relation on the terms of type \lstinline!PCF! via the inductive definition

\begin{form}[Reduction Rules for $\PCF$] \label{code:pcf_eval}
\begin{lstlisting}
Inductive eval (V : IT): forall t, relation (PCF V t) :=
 | app_abs : forall (s t:TY) (M: PCF (opt s V) t) N, 
               eval (Lam M @ N) (M [*:= N])
 | condN_t: forall n m, eval (condN ' @ ttt ' @ n @ m) n 
 | condN_f: forall n m, eval (condN ' @ fff ' @ n @ m) m 
 | condB_t: forall u v, eval (condB ' @ ttt ' @ u @ v) u 
 | condB_f: forall u v, eval (condB ' @ fff ' @ u @ v) v
 | succ_red: forall n, eval (succ ' @ Nats n ') (Nats (S n) ')
 | zero_t: eval ( zero ' @ Nats 0 ') (ttt ')
 | zero_f: forall n, eval (zero ' @ Nats (S n)') (fff ')
 | pred_Succ: forall n, eval (preds ' @ (succ ' @ Nats n ')) (Nats n ')
 | pred_z: eval (preds ' @ Nats 0 ') (Nats 0 ')
 | rec_a : forall t g, eval (Rec g) (g @ (Rec (t:=t) g)).
\end{lstlisting}
\end{form}
\noindent
which we then propagate into subterms (cf.\ \autoref{code:pcf_propag}) and close with respect to transitivity and reflexivity:
\begin{form}[Propagation of Reductions into Subterms]\label{code:pcf_propag}
 \begin{lstlisting}
Reserved Notation "x :> y" (at level 70).
Variable rel : forall (V:IT) t, relation (PCF V t).
Inductive propag (V: IT) : forall t, relation (PCF V t) :=
| relorig : forall t (v v': PCF V t), rel v v' -> v :> v'
| relApp1: forall s t (M M' : PCF V (s ~> t)) N, M :> M' -> M @ N :> M' @ N
| relApp2: forall s t (M : PCF V (s ~> t)) N N', N :> N' -> M @ N :> M @ N'
| relLam: forall s t (M M':PCF (opt s V) t), M :> M' -> Lam M :> Lam M'
| relRec: forall t (M M' : PCF V (t ~> t)), M :> M' -> Rec M :> Rec M'
where "x :> y" := (@propag _ _ x y).
 \end{lstlisting}
\end{form}

\noindent
The data thus defined constitutes a relative monad \lstinline!PCFEM! on the functor $\TDelta{T_\PCF}$ (\lstinline!IDelta TY!). 
We omit the details.

Now we need to define a suitable morphism (resp.\ family of morphisms) of \lstinline!PCFEM!--modules for any arity
(of higher degree).
Let $\alpha$ be any such arity, for instance the arity $\App$. We need to verify two things:
\begin{enumerate}
 \item we show that the constructor of \lstinline!PCF! which corresponds to $\alpha$ is monotone with respect to 
        the order on \lstinline!PCFEM!. For instance, we show that for any two terms \lstinline!r s:TY! and any 
          \lstinline!V : IDelta TY!, 
         the function 
         \begin{lstlisting}
fun y => App (fst y) (snd y): PCFEM V (r~>s) x PCFEM V r -> PCFEM V s 
         \end{lstlisting}
is monotone.
 \item We show that the monadic substitution defined above distributes over the constructor in the sense of \autoref{ex:ulc_mod_mor_kl},
    i.e.\ we prove that the constructor is the carrier of a \emph{module} morphism.
\end{enumerate}

\noindent
All of these are very straightforward proofs, resulting in a representation \lstinline!PCFE_rep! of semantic \PCF:
\begin{lstlisting}
Program Instance PCFE_rep_struct :
       PCFPO_rep_struct PCFEM PCF.arrow PCF.Bool PCF.Nat := {
  app r s := PCFApp r s;
  abs r s := PCFAbs r s;
  rec t := PCFRec t ;
  tttt := PCFconsts ttt ;
  ffff := PCFconsts fff;
  Succ := PCFconsts succ;
  Pred := PCFconsts preds;
  CondN := PCFconsts condN;
  CondB := PCFconsts condB;
  Zero := PCFconsts zero ;
  nats m := PCFconsts (Nats m);
  bottom t := PCFbottom t }.
Definition PCFE_rep : PCFPO_rep := Build_PCFPO_rep  PCFE_rep_struct.
\end{lstlisting}
Note that in the instance declaration \lstinline!PCFE_rep_struct!, the \lstinline!Program! framework proves 
automatically the properties of \autoref{code:rpcf_beta}, \ref{code:rpcf_rec}, \ref{code:rpcf_cond} and
 \ref{code:rpcf_arith}.

\section{Initiality}\label{sec:comp_sem_formal_initiality}

In this section we define a morphism of representations from \lstinline!PCFE_rep! to any representation \lstinline!R : PCFPO_rep!.
At first we need to define a map between the underlying sorts, that is, a map \lstinline!Sorts PCFE_rep -> Sorts R!. 
In short, each \PCF~type goes to its representation in \lstinline!R!:
\begin{lstlisting}
Fixpoint Init_Sorts_map (t : Sorts PCFE_rep) : Sorts R := 
    match t with
    | PCF.Nat => Nat R 
    | PCF.Bool => Bool R
    | u ~> v => (Init_Sorts_map u) ~~> (Init_Sorts_map v)
    end.
\end{lstlisting}

\noindent
The function \lstinline!init! is the carrier of what will later be proved to be the initial morphism to 
the representation \lstinline!R!. It maps each constructor of \PCF~recursively to its counterpart in the
representation \lstinline!R!:

\begin{lstlisting}
Fixpoint init V t (v : PCF V t) :
    R (retype (fun t0 => Init_Sorts_map t0) V) (Init_Sorts_map t) :=
  match v with
  | Var t v => rweta R _ _ (ctype _ v)
  | u @ v => app _ _ _ (init u, init v)
  | Lam _ _ v => abs _ _ _ (rlift R
         (@der_comm TY (Sorts R) (fun t => Init_Sorts_map t) _ V ) _ (init v))
  | Rec _ v => rec _ _ (init v)
  | Bottom _ => bottom _ _ tt
  | y ' => match y in Consts t1 return
              R (retype (fun t2 => Init_Sorts_map t2) V) (Init_Sorts_map t1) with
                 | Nats m => nats m _ tt
                 | succ => Succ _ tt
                 | condN => CondN _ tt
                 | condB => CondB _ tt
                 | zero => Zero _ tt
                 | ttt => tttt _ tt
                 | fff => ffff _ tt
                 | preds => Pred _ tt
                 end
  end.
\end{lstlisting}

\noindent
We write $i_V$ for \lstinline!init V! and $g$ for \lstinline!Init_Sorts_map!. 
Note that $i_V : \PCF(V)\to g^*\left(R(\retyping{g}V)\right)$ really is \emph{the image of the initial morphism under
the adjunction $\varphi$} of \autoref{rem:retyping_adjunction_kan}.
Intuitively, passing from \lstinline!init V!$= i_V$ to its adjunct $\varphi^{-1}(i_V)$ is done by 
precomposing with pattern matching on the constructor \lstinline!ctype! (cf.\ \autoref{rem:about_coproduct_matching}).
We informally denote $\varphi^{-1}(i_V)$ by $\comp{\text{\lstinline!match!}}{\text{\lstinline!init V!}}$.

The map \lstinline!init! is compatible with renaming and substitution in \lstinline!PCF! and
\lstinline!R!, respectively, in a sense made precise by the following two lemmas.
The first lemma states that, for any morphism $f : V \to W$ in $\TS{T_\PCF}$, the  following
diagram commutes:
\[
 \begin{xy}
  \xymatrix @C=3.5pc{
        **[l]\PCF(V) \ar[d]_{\text{\lstinline!init V!}} \ar[r]^{\PCF(f)}    &  **[r]\PCF(W) \ar[d]^{\text{\lstinline!init W!}} \\
        **[l]g^*R(\retyping{g}V) \ar[r]_{g^*R(g^*f)}& **[r] g^*R(\retyping{g}W).
}
 \end{xy}
\]

\begin{lstlisting}
Lemma init_lift (V : IT) t (y : PCF V t) W (f : V ---> W) : 
   init (y //- f) = rlift R (retype_map f) _ (init y).
\end{lstlisting}

\noindent
The next commutative diagram concerns substitution; for any $f : V \to \PCF(W)$, 
the diagram obtained by applying $\varphi$ to the diagram given in \autoref{eq:comp_sem_monad_mon_mor_diag}
 --- i.e.\ the diagram corresponding to \autoref{eq:p_chap5_2} ---, commutes:

\[
 \begin{xy}
  \xymatrix @C=8pc {
        **[l]\PCF(V) \ar[d]_{\text{\lstinline!init V!}} \ar[r]^{\kl[\PCF]{f}}    &  **[r]\PCF(W) \ar[d]^{\text{\lstinline!init W!}} \\
        **[l] g^*R(\tilde{V}) \ar[r]_{g^*\kl[R]{\comp{(g^*f)}{\varphi^{-1}(\text{\lstinline!init W!})}}}& **[r] g^*R(\tilde{W}).
}
 \end{xy}
\]
In \textsf{Coq} the lemma \lstinline!init_subst! proves commutativity of this latter diagram:
\begin{lstlisting}
Lemma init_subst V t (y : PCF V t) W (f : IDelta _ V ---> PCFE W):
  init (y >>= f) =
    rkleisli (RMonad_struct := R)
        (SM_ind (V:= retype (fun t => _ t) V)
                (W:= R (retype (fun t => _ t) W))
                (fun t v => match v with ctype t p => init (f t p) end)) 
         _ (init y).
\end{lstlisting}

\noindent
This latter lemma establishes almost the commutative diagram for the family $\varphi^{-1}(i_V)$  to constitute a 
(colax) \emph{monad} morphism, which reads as follows:

\begin{equation}\label{eq:init_subst_monadic}
 \begin{xy}
  \xymatrix @C=5.5pc{
       **[l] \retyping{g}\left(\PCF(V)\right) \ar[d]_{\comp{\text{ \lstinline!match!}}{\text{\lstinline!init V! }}} \ar[r]^{\retyping{g}\left(\kl[\PCF]{f}\right)}    
                           & **[r]\retyping{g}\left({\PCF(W)}\right) \ar[d]^{\comp{\text{ \lstinline!match!}}{\text{\lstinline!init W! }}} \\
        **[l]R(\retyping{g}V) \ar[r]_{\kl[R]{\comp{\comp{(\retyping{g}f)}{\text{ \lstinline!match! }}}{\text{\lstinline!init! }}}}& **[r]R(\retyping{g}{W}) .
}
 \end{xy}
\end{equation}


\noindent
Before we can actually build a monad morphism with carrier map 
  $\comp{\text{\lstinline!match!}}{\text{\lstinline!init V!}}$, we need to verify that \lstinline!init! --- and thus its adjunct --- is monotone.
We do this in 3 steps, corresponding to the 3 steps in which we built up the preorder on the terms
of \PCF:
\begin{packenum}
 \item \lstinline!init! monotone with respect to the relation \lstinline!eval! (cf.\ \autoref{code:pcf_eval}):
\begin{lstlisting}
Lemma init_eval V t (v v' : PCF V t) : eval v v' -> init v <<< init v'.
\end{lstlisting}
 \item \lstinline!init! monotone with respect to the propagation into subterms of \lstinline!eval!;
\begin{lstlisting}
Lemma init_eval_star V t (y z : PCF V t) : eval_star y z -> init y <<< init z.
\end{lstlisting}
 \item \lstinline!init! monotone with respect to reflexive and transitive closure of above relation.
\begin{lstlisting}
Lemma init_mono c t (y z : PCFE c t) : y <<< z -> init y <<< init z.
\end{lstlisting}
\end{packenum}

\noindent
We now have all the ingredients to define the initial morphism from \PCF~to \lstinline!R!.
As already indicated by the diagram \autoref{eq:init_subst_monadic}, its carrier 
is not given by just the map \lstinline!init!, since this map does not have the right type:
its domain is given, for any context $V\in \TS{T_\PCF}$, by $\PCF(V)$ and not, as needed, by $\retyping{g}\left(\PCF(V)\right)$.
We thus precompose with pattern matching in order to pass to its adjunct: 
for any context $V$, the carrier of the initial morphism is given by

\begin{lstlisting}
fun t y => match y with
     | ctype _ p => init p
     end
 : retype _ (PCF V) ---> R (retype _ W)
\end{lstlisting}

\noindent
We recall that the constructor \lstinline!ctype! is the carrier of the natural transformation
of the same name of \autoref{def:retyping_functor}, and that precomposing with pattern matching 
corresponds to specifying maps on a coproduct via its universal property.

\noindent
Putting the pieces together, we obtain a morphism of representions of semantic \PCF:
\begin{lstlisting}
Definition initR : PCFPO_rep_Hom PCFE_rep R :=
        Build_PCFPO_rep_Hom initR_s.
\end{lstlisting}

\noindent
Uniqueness is proved in the following lemma:
\begin{lstlisting}
Lemma initR_unique : forall g : PCFE_rep ---> R, g == initR.
\end{lstlisting}
The proof consists of two steps: first, one has to show that the translation of \emph{sorts}
coincide. Since the source of this translation is an inductive type --- the initial representation of
 the signature of \autoref{ex:type_PCF} --- this proof is done by induction.
Afterwards the translations of terms are proved to be equal. The proof is done by induction on terms of \PCF. 
It makes essentially use of the commutative diagrams (cf.\ \autoref{def:rel_mor_of_reps_typed}) which we exemplarily presented for 
the arities of successor 
(\autoref{code:pcf_succ_diag}), application (\autoref{code:pcf_app_diag}) and abstraction (\autoref{code:pcf_abs_diag}).
Finally we can declare an instance of \lstinline!Initial! for the category \lstinline!REP! of representations:
\begin{lstlisting}
Instance PCF_initial : Initial REP := {
  Init := PCFE_rep ;
  InitMor R := initR R ;
  InitMorUnique R := @initR_unique R }.
\end{lstlisting}
Checking the axioms used for the proof of initiality (and its dependencies) yields the use of 
non--dependent functional extensionality (applied to the translations of sorts) and 
uniqueness of identity proofs, which in the \textsf{Coq} standard library is implemented as a consequence of 
another --- logically equivalent --- axiom \lstinline!eq_rect_eq!:
\begin{lstlisting}
Print Assumptions PCF_initial.
Axioms:
CatSem.AXIOMS.functional_extensionality.functional_extensionality : 
    forall (A B : Type) (f g : A -> B),
                            (forall x : A, f x = g x) -> f = g
Eq_rect_eq.eq_rect_eq : forall (U : Type) (p : U) (Q : U -> Type) 
                          (x : Q p) (h : p = p), x = eq_rect p Q x p h
\end{lstlisting}

\section{A Representation of \texorpdfstring{\PCF}{PCF} in the Untyped Lambda Calculus}

We use the iteration principle explained in \autoref{rem:comp_sem_iteration} in order to 
specify a translation from $\PCF$ to the untyped lambda calculus which is compatible with
reduction in the source and target.
According to the principle, it
is sufficient to define a representation of $\PCF$ in the relative 
monad of the lambda calculus 
(cf.\ \autorefs{ex:ulc_def} and \ref{ex:ulcbeta})
and to verify that this representation satisfies the inequations of \autoref{eq:pcf_reductions},
formalized in the \textsf{Coq} code snippets \ref{code:rpcf_beta}, \ref{code:rpcf_rec}, \ref{code:rpcf_cond} 
and \ref{code:rpcf_arith}.
The first task, specifying a representation of the types of $\PCF$, in the singleton set of types of $\LC$,
is trivial. We furthermore specify representations of the term arities of $\PCF$, presented in 
\autoref{code:rpcf_1-sig}, by giving an instance of the corresponding type class.

\begin{lstlisting}
Program Instance PCF_ULC_rep_s :
 PCFPO_rep_struct (Sorts:=unit) ULCBETAM (fun _ _ => tt) tt tt := {
  app r s := ulc_app r s;
  abs r s := ulc_abs r s;
  rec t := ulc_rec t ;
  tttt := ulc_ttt ;
  ffff := ulc_fff ;
  nats m := ulc_N m ;
  Succ := ulc_succ ;
  CondB := ulc_condb ;
  CondN := ulc_condn ;
  bottom t := ulc_bottom t ;
  Zero := ulc_zero ;
  Pred := ulc_pred }.
\end{lstlisting}

\noindent
Before taking a closer look at the module morphisms we specify in order to represent the arities of $\PCF$,
we note that in the above instance declaration, we have not given the proofs corresponding to code snippets
\ref{code:rpcf_beta} to \ref{code:rpcf_arith}. 
In the terms of \autoref{rem:comp_sem_iteration}, we have not completed the third task, the verification that
the given representation satisfies the inequations. 
The \lstinline!Program! feature we use during the above instance declaration is able to detect that the fields
called \lstinline!beta_red!, \lstinline!rec_A!, etc., are missing, and enters into interactive proof mode to allow us
to fill in each of the missing fields.

We now take a look at some of the lambda terms representing arities of \PCF.
The carrier of the representations \lstinline!ulc_app! is the application of lambda calculus, of course,
 and similar for \lstinline!ulc_abs!. Here the parameters \lstinline!r! and \lstinline!s! vary over
  terms of type \lstinline!unit!, the type of sorts underlying this representation.
We use an infixed application and a de Bruijn notation instead of the more abstract notation of nested data types:
\begin{lstlisting}
Notation "a @ b" := (App a b) (at level 42, left associativity).
Notation "'1'" := (Var None) (at level 33).
Notation "'2'" := (Var (Some None)) (at level 24). 
\end{lstlisting}
 
\noindent
The truth values \True~and \False~are represented by
\begin{lstlisting}
Eval compute in ULC_True. 
    = Abs (Abs 2)
Eval compute in ULC_False. 
    = Abs (Abs 1)
\end{lstlisting}

\noindent
Natural numbers are given in Church style, the successor function is given by the term $\lambda nfx. f(n~f~x)$.
The predecessor is represented by the constant 
\[\lambda nfx.n~(\lambda gh.h (g~f)) (\lambda u.x) (\lambda u.u), \]
and the test for zero is represented by $\lambda n. n (\lambda x.F) T$, where $F$ and $T$ are the lambda terms representing \False~and \True, respectively.
\begin{lstlisting}
Eval compute in ULC_Nat 0.
    = Abs (Abs 1)
Eval compute in ULC_Nat 2.
    = Abs (Abs (2 @ (Abs (Abs (2 @ (Abs (Abs 1) @ 2 @ 1))) @ 2 @ 1)))
Eval compute in ULCsucc.
    = Abs (Abs (Abs (2 @ (3 @ 2 @ 1))))
Eval compute in ULC_pred.
    = Abs (Abs (Abs (3 @ Abs (Abs (1 @ (2 @ 4))) @ Abs 2 @ Abs 1)))
Eval compute in ULC_zero.
    = Abs (1 @ Abs (Abs (Abs 1)) @ Abs (Abs 2))
\end{lstlisting}

\noindent
The conditional is represented by the lambda term $\lambda p a b. p~a~b$:
\begin{lstlisting}
Eval compute in ULC_cond.
    = Abs (Abs (Abs (3 @ 2 @ 1)))
\end{lstlisting}

\noindent
The constant arity $\bot_A$ is represented by $\Omega$:
\begin{lstlisting}
Eval compute in ULC_omega.
    = Abs (1 @ 1) @ Abs (1 @ 1)
\end{lstlisting}

\noindent
The fixed point operator \PCFFix~(\lstinline!rec!) is represented by the \emph{Turing} fixed--point combinator, that is, the lambda term
\begin{lstlisting}
Eval compute in ULC_theta.
    = Abs (Abs (1 @ (2 @ 2 @ 1))) @ Abs (Abs (1 @ (2 @ 2 @ 1)))
\end{lstlisting}

\noindent
The reason why we use the Turing operator instead of, say, the combinator $\mathbf{Y}$,
\begin{lstlisting}
Eval compute in ULC_Y.
    = Abs (Abs (2 @ (1 @ 1)) @ Abs (2 @ (1 @ 1)))
\end{lstlisting}
is that the latter does not have a property that is crucial for us:
It is 
\[  \Theta(f) \rightsquigarrow^* f\left(\Theta(f)\right) \]
but only
\[  \mathbf{Y}(f) \stackrel{*}{\leftrightsquigarrow} f\left(\mathbf{Y}(f)\right) \]
via a common reduct. 
Thus if we would attempt to represent the arity \lstinline!rec! by the fixed--point combinator $\mathbf{Y}$,
we would not be able to prove the condition expressed in \autoref{code:rpcf_rec}.
A way to allow for the use of $\mathbf{Y}$ as representation of \lstinline!rec! would by
to consider \emph{symmetric} relations on terms, e.g., relative monads into a category of setoids.

As a final remark, we emphasize that while reduction is given as a relation in our formalization, 
and as such is not computable, the obtained translation from \PCF~to the untyped lambda calculus
is executable in \textsf{Coq}.
For instance, we can translate the \PCF~term negating boolean terms as follows:
\begin{form}\label{code:translation_pcf_ulc_example}
\noindent
\begin{lstlisting}
Eval compute in 
  (PCF_ULC_c ((fun t => False)) tt (ctype _        
   (Lam (condB ' @@ x_bool @@ fff ' @@ ttt ')))).
   = Abs (Abs (Abs (Abs (3 @ 2 @ 1))) @ 1 @ Abs (Abs 1) @ Abs (Abs 2))
\end{lstlisting}
\end{form}
\noindent
Here we use infixed ``\lstinline!@@!'' to denote application of \PCF, and \lstinline!x_bool! is
simply a notation for a de Bruijn variable of type \lstinline!Bool! of the lowest level, i.e.\ a variable
that is bound by the \lstinline!Lam! binder of \PCF~in above term.

\chapter{Conclusions and Further Work}

We summarize the contributions of this thesis and discuss further work.

\section{Contributions}

We have proved an initiality result for \emph{simply--typed syntax} equipped with 
\emph{reduction rules}.
The category--theoretic iteration principle obtained through the universal property of initiality
is sufficiently general to allow for the specification of translations from the term representation to
languages typed over \emph{different} sets of sorts.

We have characterized binding syntax with a reduction relation --- for instance the lambda calculus with beta reduction 
--- as a \emph{relative monad}
over the functor $\Delta$ (cf.\ \autoref{ex:ulcbeta}), encoding not only
commutativity properties of substitution, 
but also its \emph{monotonicity} in the first--order argument.
By a suitable strengthening of the definition of relative monad in a 2--categorical context, an additional
monotonicity property for the higher--order argument of substitution can be assured, cf.\ \autoref{rem:about_substitution}.
We have also carried the definition of \emph{module over a monad} and 
several constructions of modules over to 
modules over \emph{relative} monads.

We then have proved several theorems in the proof assistant \textsf{Coq}:
firstly, we implemented Zsid\'o's initiality theorem \cite[Chap.~6]{ju_phd}, 
summed up in this work as a reference in \autoref{sec:sts_ju}.
Secondly, we have proved the initiality theorem of \autoref{sec:prop_arities}, yielding a tool, which, 
when fed with a 2--signature $(S,A)$, provides the syntax associated to $S$ equipped with the reduction relation generated by 
the inequations of $A$.
Thirdly, we have proved an instance of our main theorem, \autoref{thm:init_w_ineq_typed} of \autoref{chap:comp_types_sem}, for the particular 
2--signature of the programming language \PCF~equipped with reduction rules as in \autoref{eq:pcf_reductions}.
The representation of the signature of \PCF~in the monad of the untyped lambda calculus with beta reduction results in 
an executable translation from \PCF~to $\LC$ which is certified to be compatible with substitution and reduction in the source and target languages.

\section{Further Work}\label{sec:further_work}

In the future, we hope to prove and implement initiality theorems for richer type systems. In particular, \emph{dependent} types and
\emph{polymorphism}, two important steps towards certified programs and code reusability, respectively, should be accounted for.

Furthermore, the modelling of semantics should be improved to allow reasoning about important properties such as \emph{termination}.

As mentioned before, the implementation of initiality results in a proof assistant may serve as a framework for 
research about programming languages and logics. For this reason we envisage the implementation in a proof assistant of \autoref{thm:init_w_ineq_typed}
in its full generality.


We present these points in detail:

\begin{description}

\item [Fine--grained modelling of reduction]

For a given 2--signature (a signature together with a set of inequations), models of this 2--signature so far 
were basically functors which associate, to any set ``of variables'', a preordered set ---
 intuitively a model of ``terms'' over the set of variables\footnote{%
We ignore the typed case for the moment, which is analogous.}.
The preorder $\leq$ on such a model corresponds to the reduction relation on the term model, i.e.\ the 
``term'' $t$ reduces to $t'$ if and only if $t \leq t'$.

The modelling of reductions via \emph{preorders} may be considered too coarse in several aspects:
\begin{itemize}
 \item different reductions might lead from one term to another. However, the use of preorders to model reduction
                   does not allow to distinguish two reductions with the same source and target.
 \item The hard--coded reflexivity rule makes reasoning about \emph{normalization} --- in particular \emph{termination} --- difficult.
\end{itemize}

Instead of considering \emph{preordered} sets (indexed by sets of free variables) as models of a 2--signature, 
it would thus be interesting to consider a structure 
which allows for more fine--grained treatment of reduction, such as 
 graphs or categories.
In other words, we might build models of 2--signatures from relative monads into the category of \emph{graphs} or (small) \emph{categories}.
Using this new definition of model, one might then envisage to prove an initiality theorem analogous to the one already proven,
and to use the additional structure obtained by switching to graphs or categories to reason about 
the aforementioned properties.

\item [Inequations, Syntactically]
  Fiore and Hur \cite{DBLP:conf/csl/FioreH10} develop a syntactic theory of equations over a 
  higher--order signature, allowing for 
  proofs of soundness and completeness with respect to the models of the signature and the equations.
  Similar techniques should allow for a syntactic presentation of our \emph{in}equations.
  Apart from the obvious goal of soundness and completeness, such a syntactic presentation would also 
  facilitate the specification of reductions in the computer implementation in \textsf{Coq}:
  in particular, it would make it possible to specify reductions without any knowledge about 
  category--theoretic concepts.
  
   A minimal goal would be to have a data type --- dependent on a 1--signature --- which 
  allows to specify the usual half--equations, mainly obtained from substitution and 
  from composition of arities, e.g., $\comp{(\abs\times\id)}{\app}$.
  To a term of this data type, on could associate a family of morphisms of modules which
  constitutes the carrier of a half--equation: the algebraic properties (being a 
  morphism of modules, which corresponds to the compatibility of substitution with
  meta--substitution in \cite{DBLP:conf/csl/FioreH10}, could be proved once and
  for all by induction.

\item [More sophisticated type systems]

New programming languages tend to be equipped with more and more sophisticated type systems: 
\emph{dependent types} allow to ensure properties of function output and thus 
secure plugging together of functions. 
\emph{Polymorphism} allows for the reuse of code in various situations.
An algebraic characterization of such sophisticated type systems with variable binding 
via a universal property is still missing.
We hope to extend initiality results to encompass these type systems.

\item [A wider class of arities]

 The present initiality theorems encompass arities, i.e.\ term constructors, of quite simple
 nature: the only operations considered are product --- for constructors with \emph{multiple} arguments --- 
 and context extension, for modelling variable binding.

 It would be desirable to consider more general term formers. 
Hirschowitz and Maggesi \cite{hirschowitz_maggesi_fics2012} 
have introduced a notion of strengthened arity which allows, for instance, to treat
 a term former of explicit flattening
  $\mu : \comp{T}{T} \to T$.
 Ultimately, we hope to find a very general \emph{simple} criterion 
for arities and signatures for which an initial model can be provided.

\item [A certified research tool] 

The obtained results should --- as we have already done for untyped syntax with reductions --- 
be implemented in a theorem prover such as \textsf{Coq}. In this way,
an initiality theorem may be used as a practical tool for easily experimenting with different languages.
Changing a language would be done by simply changing its specifying signature, whereas all necessary data and 
properties such as certified substitution and iteration, but also reductions, would be provided by the system.
For this computer implementation and suitable reduction rules, it would also be desirable to obtain automatically a reduction \emph{function} $r$ in addition 
to the reduction relation. This reduction function might be validated against the relation
 in the sense that one may prove that for any term $t$, one has $t \leq r(t)$.

\end{description}

\appendix

\chapter{Syntax and Semantics of Lambda Calculus and \texorpdfstring{\PCF}{PCF}} \label{chap:syntax_semantics_pcf_ulc}

The following section informally introduces the syntax and semantics of $\PCF$ and $\LC$, as it might be introduced
in some computer science textbook. 
Our presentation of the lambda calculus is inspired by Barendregt and Barendsen's course \cite{barendregt_barendsen}, 
and that of $\PCF$ by Hyland and Ong's paper \cite{Hyland00onfull}.

\section{Syntax of Lambda Calculus and \texorpdfstring{\PCF}{PCF}} \label{subsec:intro_pcf_lc_syntax}

Let $V$ be a countably infinite set (of variables).
The syntax of $\LC$ is given by
\[ \Lambda \enspace ::= \quad v \mid \Lambda@\Lambda \mid \lambda v. \Lambda \enspace ,\]
where $v\in V$ varies over variables.

The programming language \PCF~is a \emph{typed} language, more precisely a \emph{simply--typed} language.
It is given by 
 \begin{packitem}
  \item a set of \emph{sorts},
  \item a set of \emph{terms} and
  \item a \emph{typing} map associating a sort to any term.
 \end{packitem}

We take the presentation of \PCF~from Hyland and Ong's paper on full abstraction \cite{Hyland00onfull}.
The sorts of \PCF~are constructed from two base sorts and a function type constructor: 
 \[ T_{\PCF} \enspace ::= \enspace  \Nat \mid \Bool \mid T_\PCF \PCFar T_\PCF \enspace . \]

The \emph{terms} of \PCF~are defined in two steps: at first, we define a set of \emph{raw} terms,
  which actually contains more elements than we want. Afterwards, we define a \emph{welltypedness}
  predicate on those raw terms. The terms of \PCF~then are the well--typed raw terms.
The raw terms of $\PCF$ are given by the grammar of Fig.\ \ref{fig:pcf_grammar}.

\begin{figure}[tbh]
\centering
\fbox{%
 \begin{minipage}{8cm}

\centering

\vspace{1ex}

\begin{tabular}{l c l l}
     $s$ & ::= & $\bot_A$ & undefined \\
                   & |  & $c_A$ & constant \\
                  &  | & $x_A$ & variable \\
                  & | &  $s@s$ & application\\
                  & | & $\lambda x:A.s$ & abstraction \\
                  & | & $\PCFFix_A (s)$ & fixed point operator
\end{tabular}

\vspace{1ex}

\end{minipage}
}
  \caption{Grammar of \PCF}\label{fig:pcf_grammar}
\end{figure}

Note that we use the same infix notation $\_@\_$ for application in \PCF~and $\ULC$. We also write $f(x)$ for $f@x$ when no 
 confusion can arise.
The constants $c_A$ of sort $A$ are the basic constants from logic and arithmetic, i.e. booleans \True~and \False, natural numbers
$n$, successor and predecessor as well as test for zero, and conditionals. They are listed in \autoref{fig:pcf_constants}.
\begin{figure}[hbt]
 \centering

\fbox{%
 \begin{minipage}{11cm}

\vspace{2ex}

\centering

\begin{tabular}{r c l l}
     $n$ & : & $\Nat$ & naturals (for $n\in \NN$)\\
   $\True, \False$  & :  & $\Bool$ & boolean constants \\
    $\mathbf{S}$ &  : & $\Nat \PCFar \Nat$ & successor \\
    $\pred$ & : & $\Nat \PCFar \Nat $  & predecessor \\
    \zeroqu& : & $\Nat \PCFar \Bool$ & test on zero \\
   \condn & : & $\Bool \PCFar \Nat \PCFar \Nat \PCFar\Nat$& conditional for naturals \\
   \condb & : & $\Bool \PCFar \Bool \PCFar \Bool \PCFar\Bool$& conditional for booleans 
\end{tabular}

\vspace{2ex}

\end{minipage}
}

 \caption{Constants of \PCF} \label{fig:pcf_constants}
\end{figure}

Instead of all raw terms from the definition of Fig.\ \ref{fig:pcf_grammar} we only consider \emph{well--typed} terms, that is,
 those raw  terms that are typable according to the typing judgements of \autoref{fig:pcf_typing}.

\begin{figure}[htb]

 \centering

\fbox{%
 \begin{minipage}{13cm}

\centering

{
\renewcommand{\arraystretch}{1.8}
\begin{tabular}{c c c}
 $c_A : A$   & $\bot_A : A$ &  \\   
\AxiomC{$M:A\PCFar A$}\UnaryInfC{$\PCFFix_A(M) : A$}\DisplayProof  
                                 & \AxiomC{$M:A_2$}\UnaryInfC{$\lambda x: A_1.M : A_1 \PCFar A_2$}\DisplayProof 
                                &\AxiomC{$M:A_1\PCFar A_2$} \AxiomC{$N:A_1$}\BinaryInfC{$M@N : A_2$}\DisplayProof
\end{tabular}
}

\vspace{2ex}
\end{minipage}
}

\caption{Typing rules of \PCF}\label{fig:pcf_typing}
\end{figure}

\section{\texorpdfstring{Semantics of Lambda Calculus and \PCF}{Semantics of Lambda Calculus and PCF}}\label{subsec:semantic_ulc_pcf}

Functional programming languages such as $\PCF$ and $\LC$ allow for \emph{computation}
by \emph{reduction}, as explained in \autoref{subsec:adding_semantics}.
The prime example of \emph{reduction rule} is the \emph{beta rule} of $\LC$,
\begin{equation}     (\lambda x.M)N \enspace \rightsquigarrow_{\beta} \enspace M[x:=N] \enspace , \label{eq:beta} 
\end{equation}
where $M[x:=N]$ denotes the term $M$ where free occurrences of the variable $x$ have been replaced by $N$ in a capture--avoiding 
manner.

The above rule may be considered to ``generate'' beta reduction in the sense that we also consider 
\begin{packenum}
 \item reductions in \emph{sub}terms such as in $ \lambda x.(\lambda y. M)N$ and
 \item \emph{chains of reductions}, that is, reductions consisting of multiple steps.
\end{packenum}
Thus, to be more precise, what is usually called ``beta reduction'', is in fact the closure of the 
relation specified by the rule given in \autoref{eq:beta}
under propagation into subterms as well as transitivity and reflexivity, denoted by $\twoheadrightarrow_{\beta}$ in 
Barendregt and Barendsen's course \cite{barendregt_barendsen}.
In general we associate three different relations to any set of reduction rules, see \autoref{subsec:adding_semantics}.


Reduction in $\PCF$ is given by a beta rule similar to \autoref{eq:beta} and
several additional reduction rules concerning the fixed point operator and the 
logical and arithmetic constants. We list them using a small--step semantics as given in
\cite{Hyland00onfull} or in Pitts' lecture notes on denotational semantics \cite{pitts_dens}.
Analogously to the lambda calculus with beta reduction, we denote by ``$\twoheadrightarrow_{\PCF}$'' the reduction relation obtained as closure under 
propagation into subterms as well as reflexivity and transitivity.
\begin{figure}[hbt]

\centering

\fbox{%
 \begin{minipage}{8cm}
\centering
\begin{align}
       \lambda x : A. M (N) & \rightsquigarrow M [x:= N] \notag\\
       \PCFFix(g) &\rightsquigarrow g(\PCFFix(g)) \notag\\
       \mathbf{S} (n) &\rightsquigarrow n + 1 \notag\\
       \text{pred}(0) &\rightsquigarrow 0 \notag\\
       \text{pred}(\textbf{S}(n)) &\rightsquigarrow n \notag\\ 
       \zeroqu(0) &\rightsquigarrow \True \notag\\
       \zeroqu(\mathbf{S}(n)) &\rightsquigarrow \False \notag\\
       \PCFcond{\sigma}(\True)(M)(N) &\rightsquigarrow M \quad (\sigma\in\{\Bool,\Nat\})\notag\\
       \PCFcond{\sigma}(\False)(M)(N) &\rightsquigarrow N \quad (\sigma\in\{\Bool,\Nat\})\notag \\
               \notag
\end{align}
\end{minipage}
}

\caption{Reduction rules of \PCF}\label{eq:pcf_reductions}
\end{figure}

\backmatter


\bibliographystyle{alpha}

\bibliography{literature}

\end{document}